\documentclass[11pt]{article}
\usepackage[utf8]{inputenc}
\usepackage[authoryear]{natbib}
\usepackage{bibunits}
\defaultbibliographystyle{apalike}
\defaultbibliography{ref}
\usepackage[a4paper]{geometry}
\usepackage[title]{appendix}
\usepackage{xcolor}
\usepackage{setspace}
\usepackage{algorithm}
\usepackage{algpseudocode}
\algrenewcommand\algorithmicrequire{\textbf{Input:}}
\algrenewcommand\algorithmicensure{\textbf{Output:}}
\usepackage{amsmath}

\usepackage{amsfonts}
\usepackage{bbm}
\usepackage{graphicx}
\usepackage{booktabs, multirow}
\usepackage{makecell}
\usepackage{amsthm}
\theoremstyle{plain}
\newtheorem{lemma}{Lemma}
\newtheorem{remark}{Remark}
\newtheorem{proposition}{Proposition}
\usepackage{url}
\usepackage{comment}
\usepackage{subcaption}
\numberwithin{equation}{section}

\title{Stabilised weighted data subsampling for accelerated inference in models with recursive likelihoods}
\author{Matias Quiroz\thanks{School of Mathematical and Physical Sciences, University of Technology Sydney, Sydney, NSW, Australia. \hspace*{1.5em} Corresponding author Matias Quiroz. Email: quiroz.matias@gmail.com.}$\;\,^\ddagger$ \and Aishwarya Bhaskaran\thanks{School of Mathematics and Statistics, University of New South Wales, Sydney, NSW, Australia.} \and Zixuan Wang$^*$  \and Thomas Goodwin\thanks{Human Technology Institute, University of Technology Sydney.}}
\date{}

\begin{document}
\begin{bibunit}

\maketitle

\begin{abstract}
Inference for models with recursively defined likelihoods is computationally demanding, limiting scalability to large datasets. We propose a stabilised weighted subsampling methodology for accelerated inference based on an unbiased estimator of the log-likelihood. By assigning higher sampling probabilities to early observations, the method reduces the effective depth of recursive likelihood evaluations and hence computational cost. However, sampling probabilities that decay too slowly yield limited savings, while overly aggressive decay can substantially inflate estimator variance. We develop a stabilisation framework, supported by theory, that restricts the decay to avoid both computational and variance pathologies through principled hyperparameter tuning. We also derive an unbiased subsampling estimator of the log-likelihood gradient, enabling gradient-based inference. The methodology can be embedded within a range of inferential frameworks. We illustrate its use in variational Bayes and subsampling Markov chain Monte Carlo for conditional volatility models, including leverage effects. Empirical results show substantial computational speed-ups relative to full-data methods while maintaining inferential accuracy. We also compare with recent stochastic gradient MCMC and divide-and-conquer MCMC methods for temporally dependent data, observing favourable empirical performance.
     \\ 
 \textbf{Keywords}: Log-likelihood estimation; variance control; computational efficiency.
 
\end{abstract}
%\newpage
\section{Introduction}
Inference for models with recursively defined likelihoods is computationally demanding, since evaluating the likelihood requires sequential propagation of recursively updated quantities across the entire dataset. This recursive structure limits scalability in large samples and settings requiring repeated likelihood or gradient evaluations, such as optimisation-based and Monte Carlo inference methods.

Methods to estimate the log-likelihood via data subsampling have been proposed both in the classical paradigm \citep{Wang2018subsampling, yao2021review, ai2021optimal} and the Bayesian paradigm \citep{korattikara2014austerity, bardenet2014towards, bardenet2017markov, quiroz2019speeding}. However, for these methods to offer a significant speed-up over using the full dataset, the computation on the subsample must be significantly faster. This is difficult to achieve for models with recursively defined likelihoods: when an observation is subsampled, all preceding time steps must still be computed. Consequently, if an observation late in the sequence is subsampled, there is little computational gain, as nearly the entire dataset must still be evaluated, even with a single subsampled observation. As a result, uniform sampling offers limited computational benefit in such settings. We address this issue by proposing a weighted subsampling scheme that samples observations early in the series with a higher probability, while maintaining unbiasedness. Although such decay can substantially reduce expected computational cost by shortening the effective depth of recursive likelihood evaluations, overly aggressive decay may lead to severe variance inflation. Conversely, insufficient decay approaches uniform sampling and inherits its limited computational efficiency in this setting. This approach introduces several challenges, since the decay rate simultaneously affects estimator variance and expected computational cost. We develop theoretical results and practical guidelines to characterise, understand, and stabilise these effects. 

We illustrate the proposed methodology using a prominent class of conditional volatility models with recursively defined likelihoods. Generalised autoregressive conditional heteroskedasticity (GARCH) models \citep{bollerslev1986generalized} provide a flexible framework for modelling and forecasting time-varying volatility in financial time series, capturing key features such as volatility clustering and persistence that are commonly observed in asset returns. GARCH models pose significant computational challenges due to the recursive structure of the conditional variance equation. Likelihood-based inference thus involves recursive evaluations that must be repeated multiple times during optimisation to find the maximum likelihood estimate in the classical paradigm. The inference process is even more costly when using sampling algorithms for posterior inference in the Bayesian paradigm, since these typically require many more iterations than optimisation.

Several approaches for estimating GARCH models have been proposed. In the original paper, \cite{bollerslev1986generalized} considers an iterative procedure for maximum likelihood estimation assuming Gaussian innovation distributions. More generally, the Gaussian likelihood may be used as a quasi-likelihood for unknown innovation distributions and results in a well-behaved estimator when the fourth moment is finite \citep{huang2008estimating}. \cite{Winker2006stochGARCH} study the convergence of a gradient-free stochastic optimisation method that updates a randomly chosen parameter by adding noise and employs a threshold-accepting heuristic based on the increment in the log-likelihood function. In the Bayesian inference literature, exact inference for GARCH models has been carried out using algorithms such as the griddy-Gibbs sampler \citep{Bauwens1998Bayesian}, and sequential Monte Carlo methods \citep{Li2021GARCHSMC}, while approximate inference has been approached using stochastic variational Bayes methods \citep{magris2023variational, xuan2024stochastic}. All these methods share the requirement of evaluating the recursive log-likelihood, or the gradient thereof, for the full dataset, which constitutes a major computational bottleneck arising from the recursive structure of the model. This makes GARCH models a natural and challenging setting for illustrating the weighted data subsampling methodology proposed in this paper. \cite{salomone2020spectral} propose a (unweighted) subsampling approach for stationary time series by utilising the asymptotic independence of Fourier transformations at different frequencies to form the so-called Whittle likelihood \citep{whittle1953estimation}; see \cite{Villani2024spectral} for an extension to multivariate time series. Thus, the data are subsampled in the frequency domain (where they are independent) as opposed to the time domain (where they are dependent). This approach would avoid the recursive computation, however, it is not applicable to GARCH models since the Whittle likelihood has well-known issues in this setting \citep{mikosch2002whittle}. \cite{aicher2025stochastic} propose a stochastic gradient Markov chain Monte Carlo method \citep{chen2014stochastic,nemeth2021stochastic} for general state space models that extends this algorithm beyond the hidden Markov model assumption in \cite{ma2017stochastic}. This extended class includes GARCH models with noise; we compare with this approach in the supplementary material. However, both \cite{aicher2025stochastic, ma2017stochastic} use buffered estimators of the gradient of the log-likelihood that truncate the data and induce a controllable bias. In contrast, our approach provides an unbiased estimator of both the log-likelihood and its gradient. We illustrate the latter using a doubly stochastic variational Bayes \citep{Titsias2014doubly} algorithm for GARCH models. \cite{xuan2024stochastic} exclude stochastic variational inference via data subsampling for GARCH models: ``$\dots$ there is no clear and unbiased way to subsample time series data for GARCH models $\dots$'' (p.\ 2). Our contribution thus enables stochastic variational inference through data subsampling, since stochastic optimisation requires an unbiased gradient estimator \citep{robbins1951stochastic}.

Our paper makes the following contributions. We propose data subsampling-based estimators of the log-likelihood and its gradient that are well suited for likelihood functions with recursive components. The proposed data subsampling scheme samples the time series observations with truncated decaying sampling probabilities governed by tuning parameters. We develop tuning guidelines for the hyperparameters of the method, informed by expected computational cost as well as the variance of the log-likelihood estimator. The guidelines are underpinned by a collection of theoretical results that analyse the implications of decaying sampling probabilities for both estimator variance and expected computational cost, highlight potential instability under overly aggressive or insufficient decay, and justify a simple and effective stabilisation safeguard. In addition, we derive general recursive expressions for gradients and Hessians of GARCH-type log-likelihoods of arbitrary order, including threshold extensions, using a fully vectorised matrix-calculus formulation, together with a reparameterisation that facilitates inference in unconstrained parameter spaces. Empirically, we show that embedding the proposed weighted subsampling methodology within variational Bayes and Markov chain Monte Carlo algorithms yields substantial computational speed-ups relative to their full-data counterparts while retaining comparable inferential accuracy, and outperforms the uniform subsampling scheme. We further compare our approach, embedded within subsampling MCMC \citep{quiroz2019speeding}, with stochastic gradient MCMC \citep{aicher2025stochastic} and a divide-and-conquer approach for temporally dependent data \citep{ou2025scalable}, and observe favourable and stable empirical performance.

The rest of the paper is organised as follows. Section \ref{sec:methodology} presents our novel stabilised weighted subsampling methodology. Section \ref{sec:VB_application} presents a subsampling variational Bayes application for a GARCH model. Section \ref{sec:conclusion_and_future_research} concludes and outlines future research. The paper includes an appendix and supplementary material. Appendix \ref{sec:VolatilityModels} describes the conditional volatility models that serve as our working examples. Appendix \ref{app:derivatives_GARCH} collects the derivatives required for implementing our method. Their derivations are provided in the supplementary material, Section \ref{supp:derivations}, while proofs of the lemmas are collected in Section \ref{supp:Proofs}. The supplementary material also contains additional results for subsampling Markov chain Monte Carlo posterior inference, including an out-of-sample model evaluation exercise involving GARCH models and a threshold extension thereof in Section \ref{sec:MCMC_application}. Sections \ref{sec:comparison_SG_MCMC} and \ref{sec:comparison_divide-and-conquer_MCMC} contain comparisons with, respectively, stochastic gradient MCMC and a divide-and-conquer approach. The code used to reproduce all experiments and other figures is publicly available\footnote{Code available at \url{https://github.com/matiasq/stabilised-weighted-subsampling}.}. We refer to equations, sections, lemmas, and related results in the main paper as (1.1), Section 1, Lemma 1, and so on. Corresponding items in the appendix and supplementary material are denoted by (A.1) and (S1.1), Appendix A and Section S1, and Lemma S1, respectively. Unless stated otherwise, all proofs are provided in the supplementary material.

\section{Methodology}\label{sec:methodology}

\subsection{A recursive likelihood example: General conditional volatility models}\label{subsec:general_model}
As illustrative working examples, we consider deterministic models for the conditional variance, known as conditional heteroscedasticity models, in which the conditional variance is modelled as a function of historical quantities, such as past (squared) returns. These models typically consist of two components: 
\begin{itemize}
    \item[(i)] a measurement equation that models the (random) time series via a distribution on the error term, and
    \item[(ii)] a conditional variance equation that models its (deterministic) dynamics.
\end{itemize}
We defer details of the error distribution and the volatility model to Appendix \ref{sec:VolatilityModels} and now turn to our methodological framework, which we illustrate using the following general conditional volatility model and notation.

Let $\{y_t\}_{t=1}^T$ be a time series whose conditional variance is modelled by an equation governed by a variance parameter vector $\boldsymbol{\theta}_v$. Consider the general model
\begin{align}\label{eq:measurement_plus_deterministic_state}
    y_t & = \mu + z_t, \,\, z_t = \sigma_t \varepsilon_t, \,\, \varepsilon_t \overset{\mathrm{ind}}{\sim} \mathcal{D}(0, 1),  \,\, \mu \in \mathbb{R}, \,\, \sigma_t > 0, \nonumber \\
    \sigma^2_t & = g(\boldsymbol{\theta}_v;\mathcal{F}_{t-1}),
\end{align}
where $z_t$ denotes the innovation (or shock), and $\varepsilon_t$ the standardised innovation (or standardised shock). In \eqref{eq:measurement_plus_deterministic_state}, $\mu$ and $\sigma_t^2$ are the conditional mean $\mathbb{E}(y_t | \mathcal{F}_{t-1})$ and variance $\mathbb{V}(y_t | \mathcal{F}_{t-1})$, respectively, $\mathcal{D}(0, 1)$ is a distribution with zero mean and unit variance, $g:\mathbb{R}^{\dim_{\boldsymbol{\theta}}}\rightarrow (0,\infty)$, and $\mathcal{F}_{t-1}$ is the information set at time $t-1$. The information set may include, for example, past returns and conditional variances. The variance parameter vector $\boldsymbol{\theta}_v$ includes, amongst other parameters, $\mu$, and we assume without loss of generality that this is the first element in $\boldsymbol{\theta}_v$. Note that $\sigma^2_t$, $t = 1,\dots, T$ are not included in the variance parameter vector as they follow from $\boldsymbol{\theta}_v$ and $\sigma^2_s$ for $s < 1$, where the latter are assumed known. It is also possible to include $\sigma^2_s$, $s < 1$, in $\boldsymbol{\theta}_v$ and estimate them from the data; however, we will not pursue this here. 

Let $\boldsymbol{\theta} = (\boldsymbol{\theta}_v^\top, \theta_\varepsilon)^\top \in \boldsymbol{\Theta} \subseteq \mathbb{R}^{\dim_{\boldsymbol{\theta}}}$ include the variance parameter $\boldsymbol{\theta}_v$ and, possibly, an additional parameter $\theta_\varepsilon$ governing the error distribution. The parameter space $\boldsymbol{\Theta}$ is typically constrained, as $\boldsymbol{\theta}_v$ must ensure a positive variance process in \eqref{eq:measurement_plus_deterministic_state}, and $\theta_\varepsilon$ is usually constrained (for example, $\theta_\varepsilon=\nu>2$ for a Student-$t$ error term, where $\nu$ denotes the degrees of freedom). We refer to $\boldsymbol{\theta}$ as the original parameter vector, for which, for example, priors and stationary constraints are imposed. For optimisation or sampling, however, it is often more convenient to work on an unconstrained parameter space, obtained through the elementwise reparameterisation 
$$\boldsymbol{\phi}=(\boldsymbol{\phi}_v^\top,\phi_\varepsilon)^\top=h(\boldsymbol{\theta}),\,\,h\colon \boldsymbol{\Theta}\rightarrow\mathbb{R}^{\dim_{\boldsymbol{\phi}}},\,\,\dim_{\boldsymbol{\phi}}=\dim_{\boldsymbol{\theta}},$$
where $h$ is a one-to-one mapping with inverse $h^{-1}$. We refer to $\boldsymbol{\phi}$ as the reparameterised parameter vector. The log-density of the model $\ell_t(\boldsymbol{\theta})$, together with its gradient and Hessian, and their reparameterisation, all formally defined in Appendix \ref{sec:VolatilityModels}, are denoted by $\ell_t(\boldsymbol{\phi})$, $\nabla_{\boldsymbol{\phi}}\ell_t(\boldsymbol{\phi})$, and $\nabla^2_{\boldsymbol{\phi}}\ell_t(\boldsymbol{\phi})$. These are notational shorthands for $\ell_t(h^{-1}(\boldsymbol{\phi}))$, $\nabla_{\boldsymbol{\phi}}\ell_t(h^{-1}(\boldsymbol{\phi}))$, $\nabla^2_{\boldsymbol{\phi}}\ell_t(h^{-1}(\boldsymbol{\phi}))$.

In what follows, we illustrate the subsampling methodology by expressing all quantities in terms of the reparameterised parameter $\boldsymbol{\phi}$; results derived in the original parameterisation $\boldsymbol{\theta}$ are mapped to $\boldsymbol{\phi}$ via the chain rule.

\subsection{The log-likelihood and its gradient}\label{subsec:likelihood}
We now introduce the main quantities we aim to estimate based on a subsample, namely the log-likelihood and its gradient, using the conditional volatility model in Section \ref{subsec:general_model} as a working example.
The observations $y_1, \dots, y_T$ are not independent since their conditional variances $\sigma^2_t$ depend on $\mathcal{F}_{t-1}$, which includes the history of returns up to time $t-1$. However, conditional on $\mathcal{F}_{t-1}$, the shocks $z_t$ in \eqref{eq:measurement_plus_deterministic_state} are independent, because their distribution depends only on the standardised shock $\varepsilon_t$, which is assumed independent across time. This implies
\begin{align}\label{eq:likelihood}
    p(y_1, \dots, y_T | \boldsymbol{\phi}) & = \prod_{t=1}^T p(y_t|\mathcal{F}_0, y_1, \dots, y_{t-1} , \boldsymbol{\phi}) = \prod_{t=1}^T p(y_t| \mathcal{F}_{t-1} , \boldsymbol{\phi}),
\end{align}
where $p(y_t| \mathcal{F}_{t-1} , \boldsymbol{\phi})$ denotes the conditional density of observation $y_t$. The log-likelihood, denoted $\ell(\boldsymbol{\phi})$, and its gradient $\nabla_{\boldsymbol{\phi}}\ell(\boldsymbol{\phi})$, thus decompose as
\begin{align}    \ell(\boldsymbol{\phi}) & = \sum_{t = 1}^T \ell_t(\boldsymbol{\phi}), \,\,  \ell_t(\boldsymbol{\phi}) = \log p(y_t|\mathcal{F}_{t-1}, \boldsymbol{\phi})\label{eq:log-likelihood} \\
\nabla_{\boldsymbol{\phi}}\ell(\boldsymbol{\phi}) & = \sum_{t = 1}^T \nabla_{\boldsymbol{\phi}}\ell_t(\boldsymbol{\phi}), \,\, \nabla_{\boldsymbol{\phi}}\ell_t(\boldsymbol{\phi}) = \nabla_{\boldsymbol{\phi}}\log p(y_t|\mathcal{F}_{t-1}, \boldsymbol{\phi}),\label{eq:grad-log-likelihood}
\end{align}
where each contribution $\ell_t(\boldsymbol{\phi})$ is a function of the conditional variance $\sigma^2_t$. The latter follows a specified volatility model (GARCH, TGARCH), described in Appendix \ref{subsec:conditional_variance}, and can be recursively evaluated. Thus, $\ell_t(\boldsymbol{\phi})$ and $\nabla_{\boldsymbol{\phi}}\ell_t(\boldsymbol{\phi})$ become increasingly costly with $t$ as a consequence of this recursive structure, and we propose a subsampling-based method to estimate \eqref{eq:log-likelihood} and \eqref{eq:grad-log-likelihood} unbiasedly with two key features. First, it promotes observations with a small $t$ to be included in the subsample by assigning a larger probability to such $t$. Second, the estimator employs control variates $q_t(\boldsymbol{\phi})$ and $\nabla_{\boldsymbol{\phi}}q_t(\boldsymbol{\phi})$, where $q_t(\boldsymbol{\phi})$ is a second-order Taylor approximation to $\ell_t(\boldsymbol{\phi})$ around a central value $\boldsymbol{\phi}^\star$, and $\nabla_{\boldsymbol{\phi}}q_t(\boldsymbol{\phi})$ is the corresponding first-order approximation to $\nabla_{\boldsymbol{\phi}}\ell_t(\boldsymbol{\phi})$. These approximations are used to reduce the variance of the estimators. This explains why the gradient and Hessian of $\ell_t(\boldsymbol{\phi})$ are needed; see Appendices \ref{sec:VolatilityModels} and \ref{app:derivatives_GARCH} for expressions.

For GARCH-family models, naively evaluating the log densities separately for all $t$, or their derivatives, incurs unnecessary repeated recursive computations; see Appendix \ref{subsec:computational_considerations}, and in particular Algorithm \ref{alg:online_eval}, for an efficient evaluation scheme. 

\subsection{Preliminaries for subsampling-based estimation}
We can view the estimation of the log-likelihood in \eqref{eq:log-likelihood}, or its gradient in \eqref{eq:grad-log-likelihood}, based on a data subsample as a finite population sampling problem, which has been long studied in the survey sampling literature; see \cite{sarndal2003model} and references therein. In its simplest formulation, the problem addresses how the population sum
\begin{align}\label{eq:pop_total}
    S & = \sum_{t=1}^T s_t, 
\end{align}
can be unbiasedly estimated when only a subset of the $s_t$ can be evaluated. The $s_t$ (and $S$) can be scalar or vector valued; $\ell_t(\boldsymbol{\phi})$ or $\nabla_{\boldsymbol{\phi}}\ell_t(\boldsymbol{\phi})$ ($\ell(\boldsymbol{\phi})$ or $\nabla_{\boldsymbol{\phi}}\ell(\boldsymbol{\phi})$). In optimisation, it is common to define the loss function as the (negative) average log-likelihood, in which case the problem is that of estimating a population mean
\begin{align}\label{eq:pop_mean}
    \overline{S} & = \frac{1}{T}S=\frac{1}{T}\sum_{t=1}^T s_t.
\end{align}
Our exposition only considers estimating \eqref{eq:pop_total}, since \eqref{eq:pop_mean} follows immediately: if $\widehat{S}$ is an unbiased estimator of $S$, then $(1/T)\widehat{S}$ is an unbiased estimator of $\overline{S}$. 

There are two features in our problem that are not commonly encountered in survey sampling: i) the $s_t$  (and S) are not fixed (depend on $\boldsymbol{\phi}$), and ii) $s_t$ becomes more expensive to evaluate as $t$ increases. Our estimators leverage both features, with novelty arising from their ability to accommodate feature ii), which is absent in previous subsampling-based log-likelihood estimation work \citep{quiroz2019speeding,bardenet2017markov}.  %, and iii) the computation of $s_t$ provide $s_{v}$, for all $v < t$, at a minor additional computational cost. We note that ii) and iii) are features not encountered in previous subsampling-based log-likelihood estimation work \citep{quiroz2019speeding,bardenet2017markov}, and we propose estimators that can leverage on these features. %{[\color{red}MQ: For now, the estimator utilising (iii) is in this paper, but will later be moved to the paper with Winston.]}

\subsection{Weighted subsampling-based estimator}\label{subsec:WDE}
Our estimator is a modified version of the difference estimator proposed in \cite{quiroz2019speeding} that incorporates weighted sampling probabilities. Let $u_1, \dots, u_m$ be independent subsampling indices, $u_i \in \{1, \dots, T \}$, with $\Pr(u_i = t) = p_t>0$ for $t = 1, \dots, T$. The weighted difference estimator (WDE) is defined as
\begin{align}\label{eq:weighted_diff_estimator}
    \widehat{\ell}_{\mathrm{WDE}}(\boldsymbol{\phi}) = \sum_{t=1}^T q_t(\boldsymbol{\phi}) + \frac{1}{m}\sum_{i=1}^m \frac{\ell_{u_i}(\boldsymbol{\phi}) - q_{u_i}(\boldsymbol{\phi})}{p_{u_i}},
\end{align}
where $q_t(\boldsymbol{\phi})$ is a control variate for $\ell_t(\boldsymbol{\phi})$. It is well known that, under uniform sampling probabilities $p_t=1/T$, estimating \eqref{eq:pop_total} is more efficient when $s_t$ are homogeneous across $t$, which is achieved by $q_t(\boldsymbol{\phi}) \approx \ell_t(\boldsymbol{\phi})$. 
\begin{lemma}\label{lem:expectation_variance_WDE}
For the estimator in \eqref{eq:weighted_diff_estimator},
\begin{enumerate}
    \item[(i)] $\mathbb{E}\left(\widehat{\ell}_{\mathrm{WDE}}(\boldsymbol{\phi})\right) = \ell(\boldsymbol{\phi})$, with $\ell(\boldsymbol{\phi})$ in \eqref{eq:log-likelihood}.
    \item[(ii)] With $e_t = \ell_t(\boldsymbol{\phi})-q_t(\boldsymbol{\phi})$ and $e=\sum_{t=1}^T e_t$, $$\mathbb{V}\left(\widehat{\ell}_{\mathrm{WDE}}(\boldsymbol{\phi})\right) =  \frac{1}{m}\sum_{t=1}^T\left(\frac{e_t}{p_t} - e\right)^2 p_t.$$
\end{enumerate}
\end{lemma}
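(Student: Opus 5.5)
The estimator is a deterministic sum plus an average of i.i.d. terms, so both parts follow from a single-draw calculation. The only randomness in \eqref{eq:weighted_diff_estimator} is in the indices $u_1,\dots,u_m$. The first term $\sum_{t=1}^T q_t(\boldsymbol{\phi})$ is therefore a constant for fixed $\boldsymbol{\phi}$. Define
\[
\zeta_i = \frac{e_{u_i}}{p_{u_i}}, \qquad e_t = \ell_t(\boldsymbol{\phi}) - q_t(\boldsymbol{\phi}).
\]
Since $u_1,\dots,u_m$ are independent with common law $\Pr(u_i=t)=p_t$, the $\zeta_i$ are i.i.d. Hence $\widehat{\ell}_{\mathrm{WDE}}(\boldsymbol{\phi}) = \sum_t q_t(\boldsymbol{\phi}) + \frac{1}{m}\sum_{i=1}^m \zeta_i$, and both claims reduce to computing the mean and variance of a single $\zeta_1$.

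For (i), I would compute $\mathbb{E}(\zeta_1)=\sum_{t=1}^T \frac{e_t}{p_t}\,p_t=\sum_{t=1}^T e_t = e$. This step uses $p_t>0$ for every $t$, so that each term is well defined and no index is missed. Linearity then gives
\[
\mathbb{E}\bigl(\widehat{\ell}_{\mathrm{WDE}}(\boldsymbol{\phi})\bigr)=\sum_t q_t(\boldsymbol{\phi}) + e = \sum_t \ell_t(\boldsymbol{\phi}) = \ell(\boldsymbol{\phi}),
\]
using the decomposition \eqref{eq:log-likelihood}.

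For (ii), the additive constant does not affect the variance. Independence of the $\zeta_i$ then gives $\mathbb{V}\bigl(\frac{1}{m}\sum_i \zeta_i\bigr) = \frac{1}{m}\mathbb{V}(\zeta_1)$. By definition of the variance of a discrete random variable taking value $e_t/p_t$ with probability $p_t$, and using the mean $e$ from part (i),
\[
\mathbb{V}(\zeta_1)=\sum_{t=1}^T\Bigl(\frac{e_t}{p_t}-e\Bigr)^2 p_t,
\]
which yields the stated formula. There is no real obstacle here. The only points needing care are that $\boldsymbol{\phi}$ is held fixed, so the $q_t$ and $\ell_t$ are non-random, and that the sampling is with replacement, so the independence needed for the $1/m$ factor holds.
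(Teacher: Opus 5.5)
Your proof is correct and follows the paper's approach. The paper simply cites the standard with-replacement (Hansen--Hurwitz) result from S\"arndal et al.\ without writing it out, and your reduction to the mean and variance of the single i.i.d.\ draw $e_{u_1}/p_{u_1}$ is the calculation behind that citation.
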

\begin{proof}
Since $u_i$, $i=1,\dots,m$ are independent, these are standard textbook results. See, for example, \citep[p.52]{sarndal2003model}.
\end{proof}

We follow \cite{bardenet2017markov} and set the control variate as a second-order Taylor expansion around a central value $\boldsymbol{\phi}^\star$, i.e.\
\begin{align}\label{eq:2nd_order_cv}
    q_t(\boldsymbol{\phi}) = \ell_t(\boldsymbol{\phi}^\star) 
+ \nabla^\top_{\boldsymbol{\phi}} \ell_t(\boldsymbol{\phi}^\star)(\boldsymbol{\phi} - \boldsymbol{\phi}^\star) + \frac{1}{2}(\boldsymbol{\phi} - \boldsymbol{\phi}^\star)^\top\nabla^2_{\boldsymbol{\phi}} \ell_t(\boldsymbol{\phi}^\star) (\boldsymbol{\phi} -\boldsymbol{\phi}^\star),
\end{align}
so, after a one-off pass through the data to precompute certain required quantities, $\sum_{t=1}^T q_t(\boldsymbol{\phi})$ in \eqref{eq:weighted_diff_estimator} can be computed at a cost of $\mathcal{O}(1)$ in $T$. From an asymptotic error perspective, as $T\rightarrow\infty$, the order of the control variate residual $e_t(\boldsymbol{\phi}) = \ell_t(\boldsymbol{\phi})-q_t(\boldsymbol{\phi})$ is determined by the remainder term of the second-order Taylor expansion. The following lemma characterises the bound of the remainder term, which underlies all subsequent results on estimator variance and its stability.
\begin{lemma}\label{lem:taylor_remainder}
Assume that there exist constants $r >0$ and $K>0$ such that $$\mathop{\max\vphantom{\sup}}\limits_{1 \le t \le T}\;
\mathop{\sup\vphantom{\max}}\limits_{\|\boldsymbol{\phi} -\boldsymbol{\phi}^\star\|\leq r}\| \nabla^3_{\boldsymbol{\phi}}\ell_t(\boldsymbol{\phi})\| \leq K,$$
where $\nabla^3_{\boldsymbol{\phi}}\ell_t(\boldsymbol{\phi})$ denotes the third derivative of $\ell_t$, and its norm is defined as
$$\|\nabla^3_{\boldsymbol{\phi}}\ell_t(\boldsymbol{\phi})\|=\sup_{\| \mathbf{u}\|=\| \mathbf{v}\|=\| \mathbf{w}\|=1}|D^3\ell_t(\boldsymbol{\phi})(\mathbf{u},\mathbf{v},\mathbf{w})|,$$
with $D^3\ell_t(\boldsymbol{\phi})(\mathbf{u},\mathbf{v},\mathbf{w})$ denoting the third-order directional derivative of $\ell_t$ at $\boldsymbol{\phi}$ in directions $\mathbf{u}, \mathbf{v}, \mathbf{w}\in \mathbb{R}^{\dim_{\boldsymbol{\phi}}}$. 
Let $C_0>0$ and define the ball $$\mathcal{B}_T=\{\boldsymbol{\phi}: \| \boldsymbol{\phi} - \boldsymbol{\phi}^\star\|\leq C_0T^{-1/2} \},$$
such that $\mathcal{B}_T\subseteq  \{\boldsymbol{\phi}: \| \boldsymbol{\phi} - \boldsymbol{\phi}^\star\|\leq r \}$ for all sufficiently large $T$. Then the control variate residual $e_t(\boldsymbol{\phi})=\ell_t(\boldsymbol{\phi})-q_t(\boldsymbol{\phi})$ satisfies
\begin{align}\label{eq:taylor_remainder_bound}
\mathop{\max\vphantom{\sup}}\limits_{1 \le t \le T}\;
\mathop{\sup\vphantom{\max}}\limits_{\boldsymbol{\phi}\in \mathcal{B}_T}
|e_t(\boldsymbol{\phi})|
= \mathcal{O}(T^{-3/2}).
\end{align}
\end{lemma}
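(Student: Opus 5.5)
The plan is to apply Taylor's theorem with an integral (or Lagrange) remainder along the segment joining $\boldsymbol{\phi}^\star$ and $\boldsymbol{\phi}$, separately for each $t$. The resulting bound will be uniform in $t$ and in $\boldsymbol{\phi}\in\mathcal{B}_T$.

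\textbf{Reduce to one dimension.} Fix $t$ and $\boldsymbol{\phi}\in\mathcal{B}_T$, and write $\mathbf{h}=\boldsymbol{\phi}-\boldsymbol{\phi}^\star$. Define the scalar function $g(s)=\ell_t(\boldsymbol{\phi}^\star+s\mathbf{h})$ for $s\in[0,1]$. For all sufficiently large $T$ we have $\mathcal{B}_T\subseteq\{\|\boldsymbol{\phi}-\boldsymbol{\phi}^\star\|\le r\}$, and this ball is convex. Hence the whole segment $\{\boldsymbol{\phi}^\star+s\mathbf{h}: s\in[0,1]\}$ lies in the region where the third-derivative bound $K$ holds. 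I will implicitly assume $\ell_t$ is $C^3$ there, which is needed for $\nabla^3_{\boldsymbol{\phi}}\ell_t$ in the hypothesis to make sense.

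\textbf{Identify the control variate and bound the remainder.} By the chain rule,
\begin{itemize}
\item $g(0)=\ell_t(\boldsymbol{\phi}^\star)$,
\item $g'(0)=\nabla^\top_{\boldsymbol{\phi}}\ell_t(\boldsymbol{\phi}^\star)\mathbf{h}$,
\item $g''(0)=\mathbf{h}^\top\nabla^2_{\boldsymbol{\phi}}\ell_t(\boldsymbol{\phi}^\star)\mathbf{h}$,
\item $g'''(s)=D^3\ell_t(\boldsymbol{\phi}^\star+s\mathbf{h})(\mathbf{h},\mathbf{h},\mathbf{h})$.
\end{itemize}
The first three lines show that $q_t(\boldsymbol{\phi})$ in \eqref{eq:2nd_order_cv} is exactly $g(0)+g'(0)+\tfrac12 g''(0)$. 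One-dimensional Taylor's theorem with Lagrange remainder then gives
$$e_t(\boldsymbol{\phi})=g(1)-g(0)-g'(0)-\tfrac12 g''(0)=\tfrac16 g'''(\xi)$$
for some $\xi\in(0,1)$. Since the trilinear form is homogeneous in each argument,
$$|D^3\ell_t(\cdot)(\mathbf{h},\mathbf{h},\mathbf{h})|\le \|\nabla^3_{\boldsymbol{\phi}}\ell_t(\cdot)\|\,\|\mathbf{h}\|^3 .$$
This follows from the definition of the norm in the statement, applied with $\mathbf{u}=\mathbf{v}=\mathbf{w}=\mathbf{h}/\|\mathbf{h}\|$; the case $\mathbf{h}=\mathbf{0}$ is trivial. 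Combining these,
$$|e_t(\boldsymbol{\phi})|\le \frac{K}{6}\|\boldsymbol{\phi}-\boldsymbol{\phi}^\star\|^3\le \frac{K C_0^3}{6}T^{-3/2}.$$

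\textbf{Uniformity.} The constant $KC_0^3/6$ depends on neither $t$ nor $\boldsymbol{\phi}$. Taking $\sup_{\boldsymbol{\phi}\in\mathcal{B}_T}$ and then $\max_{1\le t\le T}$ therefore gives \eqref{eq:taylor_remainder_bound} for all sufficiently large $T$.

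The only delicate point is that the bound $K$ must hold uniformly over $t\le T$ as $T$ grows. The argument uses this uniformity, but it is assumed directly in the statement rather than proved.
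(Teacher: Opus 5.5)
Your proposal is correct and takes essentially the same route as the paper. Both use Taylor's theorem with a Lagrange remainder, note that the intermediate point lies on the segment from $\boldsymbol{\phi}^\star$ to $\boldsymbol{\phi}$ and hence in the $r$-ball for large $T$, and obtain the uniform bound $|e_t(\boldsymbol{\phi})|\le \frac{K}{6}C_0^3T^{-3/2}$; your reduction to the scalar function $g(s)$ and the homogeneity argument for the trilinear form just make explicit the multivariate remainder bound that the paper quotes directly.
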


\begin{remark}\label{rem:why_not_quiroz}
Related results in \cite{quiroz2019speeding} also imply that $e_t=\mathcal{O}(T^{-3/2})$, but under substantially stronger assumptions that entail a Bernstein--von Mises framework as in \cite{chen1985asymptotic}. These stronger assumptions are needed in \cite{quiroz2019speeding} to bound the total variation distance between the perturbed posterior and the exact posterior.
\end{remark}

In our methodology, a central objective is to establish the asymptotic order of the variance of the proposed estimators. For this purpose, it is sufficient to verify the assumption in Lemma \ref{lem:taylor_remainder}, without appealing to the stronger Bernstein-von Mises assumptions in \cite{chen1985asymptotic}; see Section \ref{supp:verify_Lemma2_ass} for a verification of the assumption in a standard GARCH($1,1$) model, which is the application considered in Section \ref{sec:VB_application}. The variance expressions in Lemma \ref{lem:expectation_variance_WDE} and below are taken with respect to the subsampling randomness, conditional on $\boldsymbol{\phi}$. Lemma \ref{lem:taylor_remainder} establishes the bound in \eqref{eq:taylor_remainder_bound}, which holds uniformly on the shrinking neighbourhood $\mathcal{B}_T$. In what follows, we write $e_t=\mathcal{O}(T^{-3/2})$ with this local interpretation.

When $p_t = 1/T$, known as simple random sampling (uniform sampling), \eqref{eq:weighted_diff_estimator} reduces to the difference estimator (DE) in \cite{quiroz2019speeding},
\begin{align}\label{eq:diff_estimator}
    \widehat{\ell}_{\mathrm{DE}}(\boldsymbol{\phi}) = \sum_{t=1}^T q_t(\boldsymbol{\phi}) + \frac{T}{m}\sum_{i=1}^m (\ell_{u_i}(\boldsymbol{\phi}) - q_{u_i}(\boldsymbol{\phi})).
\end{align}
This estimator becomes computationally inefficient for log-densities that require recursive computations because, under uniform sampling, large time indices are likely to be sampled. When this occurs, the history up to $t$ must be evaluated. To mitigate this cost, we favour early indices by assigning them larger sampling probabilities, which we turn to next.

%\newpage
\subsection{Truncated power-law decaying sampling probabilities}\label{subsec:TPD_probs}
To motivate the final sampling scheme, we first examine two decaying probability schemes, whose shortcomings will guide our final design. Note that under uniform sampling, the probability of sampling the most computationally expensive observation behaves as $p_T=\mathcal{O}(T^{-1})$.
\begin{itemize}
    \item Power-law decaying (PD) sampling probabilities: 
    \begin{align}\label{eq:PD_prob}
        p^{\mathrm{PD}}_t \propto t^{-\lambda}, \,\, \lambda > 1,\,\, t = 1, \dots, T,
    \end{align}
    with $p_T=\mathcal{O}(T^{-\lambda}) \ll \mathcal{O}(T^{-1})$ under this scheme (decaying much faster than uniform sampling).
    \item Exponentially decaying (ED) sampling probabilities:
    \begin{align}\label{eq:ED_prob}
     p^{\mathrm{ED}}_t \propto \exp(-\kappa t), \,\, \kappa > 0, \,\, t = 1, \dots, T,
    \end{align}
    with $p_T=\mathcal{O}(\exp(-\kappa T))\ll \mathcal{O}(T^{-1})$ under this scheme.
\end{itemize}
Note that we require $\kappa>0$ or $\lambda > 1$ to ensure that corresponding $p_T$ probabilities decay faster to zero than under the simple random sampling scheme (which is $\mathcal{O}(T^{-1})$). Moreover, for any $\kappa > 0$ and $\lambda > 1$, $\mathcal{O}(\exp(-\kappa T)) \ll \mathcal{O}(T^{-\lambda})$. The next lemma shows the asymptotic behaviour for large $T$ of the variance of the estimators under the different sampling schemes.
\begin{lemma}\label{lem:asymtotic_variance_estimators}
Suppose that $e_t = \mathcal{O}(T^{-3/2})$. Then
\begin{enumerate}
    \item[(i)] \citep{quiroz2019speeding}. For the difference estimator in \eqref{eq:diff_estimator}, with uniform sampling,
    $$\mathbb{V}\left(\widehat{\ell}_{\mathrm{DE}}(\boldsymbol{\phi})\right) = \frac{1}{m}\mathcal{O}(T^{-1}).$$
    \item[(ii)]
    For the weighted difference estimator in \eqref{eq:weighted_diff_estimator},  
    $$\mathbb{V}\left(\widehat{\ell}_{\mathrm{WDE}}(\boldsymbol{\phi})\right) = \begin{cases}
        \frac{1}{m}\mathcal{O}\left(T^{\lambda - 2}\right), & \text{ with } p_t \text{ in } \eqref{eq:PD_prob} \\
        \frac{1}{m}\mathcal{O}\left(\exp(\kappa T) T^{-3}\right),& \text{ with } p_t \text{ in } \eqref{eq:ED_prob}.
    \end{cases}$$ 
\end{enumerate}
\end{lemma}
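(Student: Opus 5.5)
The plan is to start from the exact variance formula in Lemma \ref{lem:expectation_variance_WDE}(ii) and bound it by its second moment. Expanding the square and using $\sum_t p_t=1$ gives
\begin{align*}
m\,\mathbb{V}\left(\widehat{\ell}_{\mathrm{WDE}}(\boldsymbol{\phi})\right)=\sum_{t=1}^T \frac{e_t^2}{p_t}-e^2\le \sum_{t=1}^T \frac{e_t^2}{p_t}.
\end{align*}
The assumption $e_t=\mathcal{O}(T^{-3/2})$, interpreted uniformly in $t$ as in Lemma \ref{lem:taylor_remainder}, then gives $e_t^2=\mathcal{O}(T^{-3})$ uniformly. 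It therefore remains only to bound $\sum_{t=1}^T p_t^{-1}$ for each sampling scheme, and the variance will be $\frac{1}{m}\mathcal{O}\bigl(T^{-3}\sum_{t=1}^T p_t^{-1}\bigr)$.

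For (i), I take $p_t=1/T$, so that $\sum_t p_t^{-1}=T^2$. This gives $\frac{1}{m}\mathcal{O}(T^{-1})$, which recovers the result of \cite{quiroz2019speeding} and serves as a sanity check on the method.

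For (ii) with PD probabilities, I write $p_t=t^{-\lambda}/Z_\lambda$ with $Z_\lambda=\sum_{t=1}^T t^{-\lambda}$. Since $\lambda>1$, $Z_\lambda\le\zeta(\lambda)<\infty$, and $Z_\lambda\ge 1$ from the $t=1$ term. Then $\sum_t p_t^{-1}=Z_\lambda\sum_{t=1}^T t^{\lambda}\le \zeta(\lambda)\,T^{\lambda+1}$, bounding each term by $T^\lambda$ or comparing with an integral. Multiplying by $T^{-3}$ gives $\mathcal{O}(T^{\lambda-2})$.

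For ED probabilities, I write $p_t=e^{-\kappa t}/Z_\kappa$ with $Z_\kappa=\sum_{t=1}^T e^{-\kappa t}\le e^{-\kappa}/(1-e^{-\kappa})$, which is bounded in $T$. Then $\sum_t p_t^{-1}=Z_\kappa\sum_{t=1}^T e^{\kappa t}$, and the geometric sum equals $e^{\kappa}(e^{\kappa T}-1)/(e^{\kappa}-1)=\mathcal{O}(e^{\kappa T})$. Multiplying by $T^{-3}$ gives $\mathcal{O}(e^{\kappa T}T^{-3})$.

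No step looks hard. The main care point is that the $\mathcal{O}(T^{-3/2})$ bound on $e_t$ must hold uniformly over $t\le T$, which is exactly what Lemma \ref{lem:taylor_remainder} provides through its maximum over $t$. Without uniformity, the weights $p_t^{-1}$, which grow like $T^\lambda$ or $e^{\kappa T}$, could be paired with the largest residuals. A secondary point is that the normalising constants are bounded uniformly in $T$, which is where $\lambda>1$ and $\kappa>0$ enter. Since the dropped term $-e^2$ is nonpositive, the results are upper bounds, consistent with the $\mathcal{O}$ statement.
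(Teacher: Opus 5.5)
Your proof is correct and follows essentially the paper's route. Both arguments rest on the same ingredients: uniformly bounded normalising constants (via $\zeta(\lambda)$ and the geometric series), $\sum_{t=1}^T t^{\lambda}=\mathcal{O}(T^{\lambda+1})$, and $\sum_{t=1}^T \exp(\kappa t)=\mathcal{O}(\exp(\kappa T))$, combined with $e_t^2=\mathcal{O}(T^{-3})$. The only difference is bookkeeping: the paper expands $(e_t/p_t-T\overline{e})^2p_t$ into three terms and reads the subtracted $\mathcal{O}$-terms as upper bounds, whereas you use the exact identity $\sum_{t=1}^T (e_t/p_t-e)^2p_t=\sum_{t=1}^T e_t^2/p_t-e^2$ and drop the nonpositive $-e^2$, as the paper itself does in the proof of Lemma \ref{lem:asymptotic_variance_estimators_truncated}, which makes the upper-bound reading fully rigorous.
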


Lemma \ref{lem:asymtotic_variance_estimators}(i) shows that without weighting, i.e.\ $p_t = 1/T$ the variance decays to zero at the rate $1/T$ for a fixed subsample size $m$. Note that the control variates are crucial for this behaviour; without control variates, $e_t=\mathcal{O}(1)$, and the variance becomes $\mathcal{O}(T^2)$ for fixed $m$. Lemma \ref{lem:asymtotic_variance_estimators}(ii) shows that, with power-law decaying weights, the variance goes to zero only when $1 < \lambda < 2$, is bounded when $\lambda =2$, and becomes unbounded when $\lambda > 2$. With exponentially decaying weights, the variance grows exponentially for any $\kappa > 0$. 

The discussion above shows that on the one hand, we want the probabilities to decay faster, since this reduces the probability of sampling late, computationally expensive observations. On the other hand, faster decaying probabilities may result in a prohibitively larger variance. This motivates our truncated decaying sampling probability strategy, in which the probabilities decay up to a chosen cut-off $t^\star$ and are subsequently set to be uniform. Although both power-law and exponential decays fit our framework, we focus on power-law decay, as exponential decay is typically overly sharp. Let $\gamma$ denote the decay rate of the sampling probabilities and let $\varepsilon(\gamma) \in (0,1)$ be the total probability mass of the tail (typically small) $$\mathcal{T}=\{t^\star +1, \dots, T\},$$ and $1-\varepsilon(\gamma)$ be that of the head $$\mathcal{H}=\{1,\dots, t^\star \}.$$ Our proposed truncated power-law decaying (TPD) weights are
\begin{equation}\label{eq:trunc_probs_TPD}
  p^{\mathrm{TPD}}_t =
  \begin{cases}
    \displaystyle (1-\varepsilon(\gamma))\,\frac{(t+b)^{-\gamma}}{\sum_{j=1}^{t^\star} (j+b)^{-\gamma}},
    & t \in \mathcal{H},\quad \gamma \geq 0,\ b \ge 0,\\
    \displaystyle \frac{\varepsilon(\gamma)}{T - t^\star},
    & t \in \mathcal{T},
  \end{cases}
\end{equation}
where the offset parameter $b$ controls the onset of decay, with larger $b$ delaying and thereby flattening the head. We determine $\varepsilon(\gamma)$ by requiring that $p_{t^\star}$ (the last head probability) equals $p_{t^\star+1}$ (the probability of each tail element), which gives
\begin{equation}\label{eq:eps_gamma}
\varepsilon(\gamma) =
\frac{(t^\star+b)^{-\gamma}(T-t^\star)}
{(t^\star+b)^{-\gamma}(T-t^\star) + \sum_{j=1}^{t^\star}(j+b)^{-\gamma}}.
\end{equation}
Figure \ref{fig:TPD_probabilities} illustrates our truncated power-law decaying probability scheme. 

\begin{figure}[h!]
    \centering
    \includegraphics[width=0.90\textwidth]{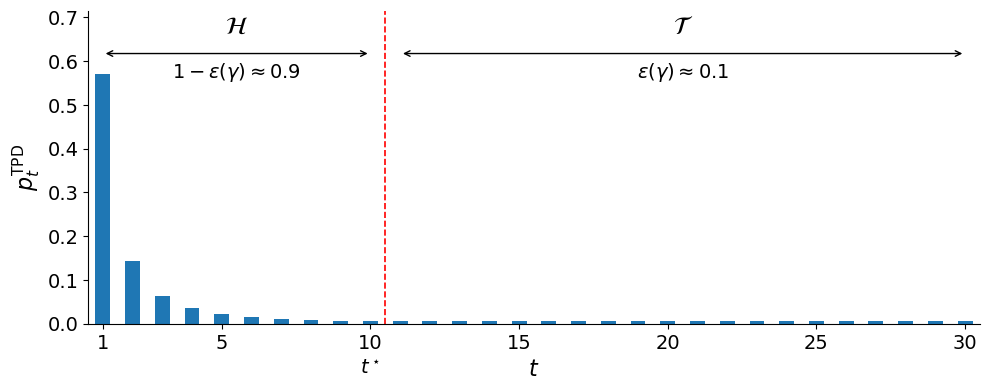}
    \caption{Truncated power-law decaying (TPD) sampling probabilities in \eqref{eq:trunc_probs_TPD} with $T=30$, $t^\star = 10$, $b=0$, and $\gamma$ such that the tail mass is $\varepsilon(\gamma)\approx 0.1$. The vertical line indicates the transition from the head $\mathcal{H}$ $(t\leq t^\star)$ to the tail $\mathcal{T}$ $(t>t^\star)$.}
    \label{fig:TPD_probabilities}
\end{figure}

\begin{remark}
  When $\gamma=0$ and $\varepsilon(0)=\frac{T-t^\star}{T}$, the probabilities in \eqref{eq:trunc_probs_TPD} are exactly uniform, i.e.\ $p_t=1/T$ for all $t$. As discussed in the introduction, this uniform sampling regime offers little computational benefit for recursively defined likelihoods. Section \ref{subsec:computational_cost} analyses the implications for the expected computational cost as $\gamma$ decreases towards zero.
\end{remark}

The next lemma establishes several basic properties of the truncated power-law decaying sampling scheme. In particular, it shows the monotonicity of $\varepsilon(\gamma)$, that each tail element has probability no larger than that of uniform sampling, and that each head probability is at least as large as the per-element tail probability.
\begin{lemma}\label{lem:tail_smaller}
For the truncated sampling scheme in \eqref{eq:trunc_probs_TPD},
\begin{enumerate}
    \item[(i.)] $\varepsilon(\gamma)$ is non-increasing in $\gamma \geq 0$, and strictly decreasing if $t^\star \geq 2$.
    \item[(ii.)] The per-element tail probabilities obey $$\frac{\varepsilon(\gamma)}{T-t^\star} \leq \frac{1}{T} \quad \text{with equality if and only if } \gamma=0\,\,\text{(uniform sampling)}.$$
    \item[(iii.)] The  per-element head probabilities obey
    \begin{align*}
        p_t \ge \frac{\varepsilon(\gamma)}{T-t^\star},
    \,\, t \in \mathcal{H}.        
    \end{align*}
\end{enumerate}
\end{lemma}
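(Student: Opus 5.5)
The plan is to rewrite \eqref{eq:eps_gamma} in a form where the dependence on $\gamma$ sits in a single ratio. Set $n = T - t^\star \geq 1$ and define
$$R(\gamma) = \sum_{j=1}^{t^\star}\left(\frac{j+b}{t^\star+b}\right)^{-\gamma}.$$
Dividing the numerator and denominator of \eqref{eq:eps_gamma} by $(t^\star+b)^{-\gamma}$ gives
$$\varepsilon(\gamma) = \frac{n}{n + R(\gamma)}.$$
All three parts will follow from elementary properties of $R$.

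\emph{Part (i).} Each ratio $\rho_j = (j+b)/(t^\star+b)$ lies in $(0,1]$. It equals $1$ only when $j=t^\star$, and is strictly below $1$ for $j<t^\star$. Hence $\gamma\mapsto\rho_j^{-\gamma}$ is non-decreasing, and strictly increasing when $\rho_j<1$. Consequently $R(\gamma)$ is non-decreasing in $\gamma\geq 0$, and strictly increasing as soon as some $j<t^\star$ exists, that is, when $t^\star\geq 2$. Since $x\mapsto n/(n+x)$ is strictly decreasing on $x>0$, part (i) follows. In the degenerate case $t^\star=1$ we have $R\equiv 1$, so $\varepsilon$ is constant.

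\emph{Part (ii).} At $\gamma=0$ we have $R(0)=t^\star$, so $\varepsilon(0)=n/T$ and the per-element tail probability equals $\varepsilon(0)/n = 1/T$. For $\gamma>0$, every $\rho_j^{-\gamma}\geq 1$, so $R(\gamma)\geq t^\star$ and therefore $\varepsilon(\gamma)/n = 1/(n+R(\gamma)) \leq 1/T$. Equality requires $R(\gamma)=t^\star$, which by the strict monotonicity in (i) forces $\gamma=0$ whenever $t^\star\geq 2$.

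Here lies the main subtlety: when $t^\star=1$, equality holds for every $\gamma$. I would note this edge case and state the ``only if'' direction under $t^\star\geq2$, matching the hypothesis in (i). This is the only real obstacle.

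\emph{Part (iii).} By construction, $p_{t^\star}=p_{t^\star+1}=\varepsilon(\gamma)/n$; this is in fact the defining condition used to derive \eqref{eq:eps_gamma}, and it can be checked directly by substitution. For $t\in\mathcal{H}$, the head probability $p_t\propto (t+b)^{-\gamma}$ is non-increasing in $t$ because $\gamma\geq0$. Therefore $p_t\geq p_{t^\star}=\varepsilon(\gamma)/(T-t^\star)$, which proves (iii).
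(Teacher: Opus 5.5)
Your proposal is correct and follows essentially the same route as the paper: rewriting $\varepsilon(\gamma)$ as a decreasing function of a sum of ratios $\bigl((t^\star+b)/(j+b)\bigr)^{\gamma}$ for (i)--(ii), and using $p_t \ge p_{t^\star} = \varepsilon(\gamma)/(T-t^\star)$ for (iii). Your caveat that the ``only if'' direction in (ii) needs $t^\star \ge 2$ is a correct sharpening: the paper's strict inequality $\sum_{j=1}^{t^\star}(j+b)^{-\gamma} > t^\star (t^\star+b)^{-\gamma}$ silently assumes it, and for $t^\star = 1$ one has $\varepsilon(\gamma) = (T-1)/T$ for every $\gamma$.
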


While decaying sampling probabilities can provide computational benefits by concentrating mass on the head, thereby favouring smaller $t$, Lemma \ref{lem:tail_smaller}(ii) shows that the probability assigned to each tail element is always smaller than that of uniform sampling. This reduction in tail probabilities, however, comes at a cost; under a homogeneity assumption on the control variate residuals, the next lemma shows that increasing $\gamma$, which shrinks the tail mass $\varepsilon(\gamma)$ (Lemma \ref{lem:tail_smaller}(i)), increases the contribution of the tail indices to the variance of the estimator.
\begin{lemma}\label{lem:tail_smaller_higher_variance}
    Consider the estimator $\widehat{\ell}_{\mathrm{WDE}}$ in \eqref{eq:weighted_diff_estimator} under homogeneous control variate residuals, $$e_t(\boldsymbol{\phi}) = \ell_t(\boldsymbol{\phi}) - q_t(\boldsymbol{\phi}) = \overline{e}(\boldsymbol{\phi}), \quad t= 1, \dots, T,$$
    where $\overline{e}(\boldsymbol{\phi})$ denotes the common residual value.
    Let $\mathbb{V}_\mathcal{T}(\gamma)$ denote the contribution of the tail $\mathcal{T}$ to the variance of $\widehat{\ell}_{\mathrm{WDE}}$ under the truncated sampling scheme in \eqref{eq:trunc_probs_TPD} with decay parameter $\gamma$. Then, for any $0 \leq \gamma_1 < \gamma_2$,
    $$\mathbb{V}_{\mathcal{T}}(\gamma_1) < \mathbb{V}_{\mathcal{T}}(\gamma_2).$$
\end{lemma}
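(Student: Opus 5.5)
The plan is to make the tail contribution explicit using the variance formula in Lemma \ref{lem:expectation_variance_WDE}(ii), and then reduce the claim to the monotonicity of a one-variable function of the per-element tail probability. Under the homogeneity assumption, $e_t=\overline{e}$ for all $t$, so $e=\sum_{t=1}^T e_t=T\overline{e}$. Each summand of
$$\frac{1}{m}\sum_{t=1}^T\left(\frac{e_t}{p_t}-e\right)^2p_t$$
is attached to one index $t$. I will therefore define $\mathbb{V}_{\mathcal{T}}(\gamma)$ as the sum of the summands over $t\in\mathcal{T}$. Write $\pi(\gamma)=\varepsilon(\gamma)/(T-t^\star)$ for the common tail probability in \eqref{eq:trunc_probs_TPD}. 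Then
$$\mathbb{V}_{\mathcal{T}}(\gamma)=\frac{T-t^\star}{m}\,\overline{e}^{\,2}\left(\frac{1}{\pi(\gamma)}-T\right)^2\pi(\gamma)=\frac{T-t^\star}{m}\,\overline{e}^{\,2}\,f(\pi(\gamma)),\qquad f(\pi)=\frac{(1-T\pi)^2}{\pi}.$$

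The second step is to show that $f$ is strictly decreasing on $(0,1/T]$. Differentiating gives
$$f'(\pi)=\frac{-2T\pi(1-T\pi)-(1-T\pi)^2}{\pi^2}=-\frac{(1-T\pi)(1+T\pi)}{\pi^2}.$$
This is strictly negative for $0<\pi<1/T$, and since $f$ is continuous on $(0,1/T]$, it is strictly decreasing there. The range is the right one: Lemma \ref{lem:tail_smaller}(ii) gives $\pi(\gamma)\le 1/T$, with equality only at $\gamma=0$, where $f(1/T)=0$.

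The third step is to compose the two monotonicities. By Lemma \ref{lem:tail_smaller}(i), $\varepsilon(\gamma)$, and hence $\pi(\gamma)$, is strictly decreasing in $\gamma$ when $t^\star\ge 2$. So $0\le\gamma_1<\gamma_2$ implies $\pi(\gamma_2)<\pi(\gamma_1)\le 1/T$, and therefore $f(\pi(\gamma_2))>f(\pi(\gamma_1))$. Multiplying by the positive factor $(T-t^\star)\overline{e}^{\,2}/m$ gives the claim.

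The main obstacle is not computational but concerns the hypotheses needed for strict inequality. The argument requires $\overline{e}\neq 0$; otherwise both sides vanish. It also requires $t^\star\ge 2$ (and $t^\star<T$ so the tail is nonempty); otherwise $\varepsilon(\gamma)$ does not depend on $\gamma$ and only equality holds. I would state these as implicit standing assumptions, following Lemma \ref{lem:tail_smaller}(i). The case $\gamma_1=0$ needs no separate treatment, since it simply gives $\mathbb{V}_{\mathcal{T}}(0)=0<\mathbb{V}_{\mathcal{T}}(\gamma_2)$.
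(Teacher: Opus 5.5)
Your proof is correct and follows the paper's argument essentially step for step. It uses the same tail sum $\frac{\overline{e}^{\,2}}{m}\sum_{t\in\mathcal{T}}(1-Tp_t)^2/p_t$, the same function $f$ with $f'(p)=T^2-1/p^2<0$ on $(0,1/T)$, and the same composition with Lemma~\ref{lem:tail_smaller}(i)--(ii). Your explicit use of the \emph{strict} decrease of $\varepsilon(\gamma)$ (hence $t^\star\ge 2$), of $\overline{e}\neq 0$, and of continuity at the endpoint $p=1/T$ also tightens the paper's version, which infers the strict inequality from only the non-strict bound $p_t(\gamma_1)\ge p_t(\gamma_2)$.
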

\begin{remark}
   The homogeneity assumption corresponds to an idealised setting with a perfect control variate. In practical applications, an effective control variate yields residuals $e_t(\boldsymbol{\phi})$ that vary little with $t$. Under the idealised setting, Lemma \ref{lem:tail_smaller_higher_variance} shows that shrinking tail probabilities increases variability, motivating the safeguard introduced below.
\end{remark}

Lemma \ref{lem:tail_smaller_higher_variance} shows that one must be cautious not to let the decay be too aggressive, as this increases the tail contribution to the variance of the estimator. In extreme cases, this may result in prohibitively large variance. To safeguard against variance instability and stabilise the estimator, we impose a per-element tail probability floor
\begin{align}\label{eq:safe_guard_floor}
    \frac{\varepsilon(\gamma)}{T-t^\star} & \geq c\frac{1}{T}, \quad 0 < c \leq 1,
\end{align}
where $c$ is a tail floor fraction controlling the minimum per-element tail probability relative to uniform sampling. This condition is equivalent to imposing the tail mass floor 
$$\varepsilon(\gamma) \geq c\frac{T-t^\star}{T}.$$ 
The case $c=1$ corresponds to uniform sampling over all indices. There is a direct relationship between the tail floor fraction $c$ and the decay rate $\gamma$; imposing a larger tail floor restricts the admissible decay rate. This relationship is exploited in Section \ref{subsec:balancing_variance_and_computational_cost} to tune $c$ rather than $\gamma$. Through its effect on $\gamma$, $c$ also controls the tail mass $\varepsilon(\gamma)$, which plays a central role in the expected computational cost analysed in Section \ref{subsec:computational_cost}.

We now quantify the variance implications of imposing the safeguard. The next lemma provides a worst-case bound on how much the variance of the weighted difference estimator can be inflated relative to the baseline difference estimator (with uniform weights), expressed as a function of the tail floor fraction $c$.
\begin{lemma}\label{lem:upper_bound_variance}
Let $0< c \leq 1$ and suppose that $p_t \geq \frac{c}{T}$ for all $t$ in the estimator $\widehat{\ell}_\mathrm{WDE}$ defined in \eqref{eq:weighted_diff_estimator},  with $e_t = \ell_t(\boldsymbol{\phi})-q_t(\boldsymbol{\phi})$ and $e=\sum_{t=1}^T e_t$. Then, relative to $\widehat{\ell}_\mathrm{DE}$ in \eqref{eq:diff_estimator},
\begin{equation}\label{eq:variance_ratio_bound}
    \frac{\mathbb{V}(\widehat{\ell}_\mathrm{WDE})}{\mathbb{V}(\widehat{\ell}_\mathrm{DE})} \leq \frac{\frac{1}{c}-\rho}{1-\rho}, \quad \text{ where } \rho = \frac{e^2}{T\sum_{t=1}^Te^2_t} \in (0, 1).
\end{equation}
\end{lemma}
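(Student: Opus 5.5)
We start from the variance expression in Lemma \ref{lem:expectation_variance_WDE}(ii) and expand the square:
\begin{align*}
m\,\mathbb{V}(\widehat{\ell}_\mathrm{WDE}) = \sum_{t=1}^T\left(\frac{e_t^2}{p_t} - 2 e e_t + e^2 p_t\right) = \sum_{t=1}^T \frac{e_t^2}{p_t} - e^2,
\end{align*}
using $\sum_t e_t = e$ and $\sum_t p_t = 1$. Setting $p_t = 1/T$ in the same identity gives the uniform case,
\begin{align*}
m\,\mathbb{V}(\widehat{\ell}_\mathrm{DE}) = T\sum_{t=1}^T e_t^2 - e^2 = T\sum_{t=1}^T e_t^2\,(1-\rho),
\end{align*}
with $\rho$ as in \eqref{eq:variance_ratio_bound}. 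Because both variances carry the same factor $1/m$, the ratio in question is a ratio of these two expressions.

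For the numerator we use only the floor $p_t \ge c/T$, which gives $1/p_t \le T/c$ for every $t$, and therefore $\sum_t e_t^2/p_t \le (T/c)\sum_t e_t^2$. Hence
\begin{align*}
m\,\mathbb{V}(\widehat{\ell}_\mathrm{WDE}) \le \frac{T}{c}\sum_{t=1}^T e_t^2 - e^2 = T\sum_{t=1}^T e_t^2\left(\frac{1}{c} - \rho\right).
\end{align*}
Dividing by $T\sum_t e_t^2(1-\rho)$ produces exactly the bound $(\frac{1}{c}-\rho)/(1-\rho)$. The positive quantities $\sum_t e_t^2$ and $T$ cancel cleanly, so no asymptotics from Lemma \ref{lem:taylor_remainder} are needed; the result is purely algebraic.

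The only delicate point is the claim $\rho\in(0,1)$, which also guarantees that the denominator is nonzero. By Cauchy--Schwarz, $e^2 = (\sum_t e_t)^2 \le T\sum_t e_t^2$, so $\rho\le 1$. Equality holds precisely when all $e_t$ are equal, and in that case $\mathbb{V}(\widehat{\ell}_\mathrm{DE})=0$, so the ratio is undefined. Positivity requires $e\neq 0$. We therefore state the implicit nondegeneracy assumptions: the residuals are not all equal, and $e\ne 0$. If $e=0$, then $\rho=0$ and the bound $1/c$ still holds, so we will remark that the inequality remains valid on $[0,1)$. We expect this bookkeeping of the degenerate cases to be the only obstacle. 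The main inequality follows from the single termwise bound $1/p_t\le T/c$.
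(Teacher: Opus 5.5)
Your proposal is correct and follows essentially the same route as the paper: write both variances as $\frac{1}{m}\bigl(\sum_t e_t^2/p_t - e^2\bigr)$, bound $1/p_t \le T/c$ termwise, take the ratio, and use Cauchy--Schwarz for $\rho$. Your treatment of the degenerate cases ($e=0$ gives $\rho=0$; all $e_t$ equal gives $\rho=1$ and a vanishing denominator) is more careful than the paper's. The paper's Cauchy--Schwarz step on its own only yields $\rho\in[0,1]$.
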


In practice, the quantity $\rho$ is typically small, and the worst-case variance inflation bound in Lemma \ref{lem:upper_bound_variance} is therefore well approximated by $1/c$. Moreover, as demonstrated empirically in our applications, i) the variance ratio in \eqref{eq:variance_ratio_bound} closely tracks $1/c$, i.e.\ the bound provides a tight approximation, and ii) the ratio is largely insensitive to the values of $\boldsymbol{\phi}$ explored by the algorithm. We leverage these observations to introduce a design parameter $R_{\max}=1/c_{\min}$, which governs the maximum allowable variance inflation of a truncated power-law decaying scheme relative to uniform sampling; see the tuning approach in Section \ref{subsec:balancing_variance_and_computational_cost}. This induces a lower bound $c\geq c_{\min}$ on the admissible tail floor fraction. Thus, $c_{\min}$ provides a direct stabilisation guarantee on the variance inflation induced by the truncated decaying sampling probabilities. At the same time, it prevents the tail probability $\varepsilon$ from becoming too small, which would otherwise hinder exploration of tail indices when embedded in an iterative algorithm (e.g.\ doubly stochastic variational Bayes) under a fixed computational budget.
\begin{remark}
The approximation by $1/c$ relies on the approximation errors $e_t$ being close to zero-centred, implying that $\rho$ is small; see page~\pageref{rho_small_discussion}  for a formal justification.
\end{remark}
\begin{remark}\label{rem:empirical_insensitivity}
The empirical insensitivity of the variance ratio in \eqref{eq:variance_ratio_bound} to $\boldsymbol{\phi}$ follows from the fact that both estimators depend on the same control variate residuals $e_t(\boldsymbol{\phi})$ and differ only through weights $p_t$ that do not depend on $\boldsymbol{\phi}$. This leads to cancellation and a ratio that is approximately invariant across $\boldsymbol{\phi}$.
\end{remark}

The next lemma shows that the truncated decaying probability scheme, together with the safeguard in \eqref{eq:safe_guard_floor}, has the same asymptotic variance order as the difference estimator in \eqref{eq:weighted_diff_estimator}, in contrast to the non-truncated weights considered in Lemma \ref{lem:asymtotic_variance_estimators}(ii).
\begin{lemma}\label{lem:asymptotic_variance_estimators_truncated}
Suppose that $e_t = \mathcal{O}(T^{-3/2})$. Then, for the weighted difference estimator in \eqref{eq:weighted_diff_estimator}, with $p_t$ given by \eqref{eq:trunc_probs_TPD} and the safeguard in \eqref{eq:safe_guard_floor},
$$\mathbb{V}\left(\widehat{\ell}_{\mathrm{WDE}}(\boldsymbol{\phi})\right) = \frac{1}{m}\mathcal{O}\left(T^{-1}\right).$$ 
\end{lemma}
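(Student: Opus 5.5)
The plan is to work directly from the variance formula of Lemma \ref{lem:expectation_variance_WDE}(ii) and to use only the lower bound $p_t \geq c/T$ for every $t$. Expanding the square gives
\begin{align*}
m\,\mathbb{V}\left(\widehat{\ell}_{\mathrm{WDE}}(\boldsymbol{\phi})\right) = \sum_{t=1}^T \frac{e_t^2}{p_t} - 2e\sum_{t=1}^T e_t + e^2\sum_{t=1}^T p_t = \sum_{t=1}^T \frac{e_t^2}{p_t} - e^2 \leq \sum_{t=1}^T \frac{e_t^2}{p_t},
\end{align*}
so it suffices to show that $\sum_t e_t^2/p_t = \mathcal{O}(T^{-1})$.

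The first step is to verify that the floor holds on every index, not only on the tail. On the tail $\mathcal{T}$ this is exactly the safeguard \eqref{eq:safe_guard_floor}, since there $p_t = \varepsilon(\gamma)/(T-t^\star) \geq c/T$. On the head $\mathcal{H}$, Lemma \ref{lem:tail_smaller}(iii) gives $p_t \geq \varepsilon(\gamma)/(T-t^\star)$, and the same safeguard then yields $p_t \geq c/T$. Hence $1/p_t \leq T/c$ for all $t=1,\dots,T$.

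The second step inserts the assumption $e_t = \mathcal{O}(T^{-3/2})$, understood uniformly in $t$ in the local sense of Lemma \ref{lem:taylor_remainder}. This gives $\max_t e_t^2 = \mathcal{O}(T^{-3})$, and therefore
\[
\sum_{t=1}^T \frac{e_t^2}{p_t} \leq \frac{T}{c}\cdot T \cdot \mathcal{O}(T^{-3}) = \mathcal{O}(T^{-1}),
\]
for fixed $c\in(0,1]$. Dividing by $m$ gives the claim. As an alternative route to the same rate, one could combine Lemma \ref{lem:upper_bound_variance} with Lemma \ref{lem:asymtotic_variance_estimators}(i). That would require $\rho$ to stay bounded away from $1$, which the direct bound above avoids, so I prefer the direct argument.

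The only point needing care, and the main obstacle, is the uniformity in $t$ of the residual bound and the fact that $c$ (and hence the floor) does not shrink with $T$. The constant $c$ is a fixed design parameter, and $t^\star$, $b$ and $\gamma$ enter only through the floor. The bound is therefore independent of how $t^\star$ scales with $T$, provided the safeguard is enforced for each $T$. I would state this explicitly so that the $\mathcal{O}$ constant is $K'/c$, where $K'$ is the constant in \eqref{eq:taylor_remainder_bound}.
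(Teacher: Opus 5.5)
Your proposal is correct and follows essentially the same argument as the paper. Both proofs reduce the variance to $\frac{1}{m}\bigl(\sum_t e_t^2/p_t - e^2\bigr) \le \frac{1}{m}\sum_t e_t^2/p_t$, obtain $p_t \ge c/T$ on all indices from the safeguard together with Lemma \ref{lem:tail_smaller}(iii), and conclude that $\frac{T}{c}\sum_t e_t^2 = \frac{T}{c}\,\mathcal{O}(T^{-2}) = \mathcal{O}(T^{-1})$ for fixed $c$.
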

\begin{remark}
Some applications require a variance estimate based on the subsample, for example the subsampling Markov chain Monte Carlo approach in Section \ref{sec:MCMC_application}. An unbiased variance estimator is obtained by plugging the TPD sampling probabilities into the sample variance analogue of Lemma \ref{lem:expectation_variance_WDE}(ii). However, TPD sampling is designed to reduce computational cost by concentrating mass on the head, not to capture tail behaviour, where infrequently sampled observations may contribute disproportionately to the variance. Consequently, variance estimates based on a single subsample $u_1,\dots,u_m$ should be interpreted with care, as they may substantially underestimate the true variance (see Figure \ref{fig:bulk_empirical1}), while being correct on average.
\end{remark}

\subsection{Expected computational cost and a theoretical caveat}\label{subsec:computational_cost}
A naive implementation of \eqref{eq:weighted_diff_estimator} evaluates $\ell_{u_1}, \dots, \ell_{u_m}$ separately, thereby unnecessarily repeating recursive computations. By defining $u_{\max} = \max \{ u_1, \dots, u_m\}$, we can sequentially evaluate $\ell_1, \ell_2, \dots, \ell_{u_{\max}}$, where each additional $\ell_j$ is updated online; see Algorithm \ref{alg:online_eval} in Appendix \ref{subsec:computational_considerations}. There is no need to store $\ell_t$ for $t\notin \{u_1, \dots u_m\}$.
\begin{remark}
Algorithm \ref{alg:online_eval} is presented for the full data, with the recursion run up to $T$. In the subsampling setting, the same recursive evaluation is employed with $T$ replaced by $u_{\max}$, the largest sampled time index.
\end{remark}

Appendix \ref{subsec:computational_considerations} shows that the cost of evaluating the likelihood over $T$ steps simplifies to $\mathcal{O}(T)$. Under subsampling, we set $T=u_{\max}$ in Algorithm \ref{alg:online_eval}, so the cost scales linearly with $u_{\max}$. Since $u_{\max}$ is random, we refer to $\mathbb{E}(u_{\max})$ as the expected computational cost.

Lemma \ref{lem:asymptotic_variance_estimators_truncated} shows that the variance of the log-likelihood estimator under truncated decaying sampling probabilities exhibits the same asymptotic order as that of uniform sampling (Lemma \ref{lem:asymtotic_variance_estimators}(i)). At the same time, the truncation introduces a controlled increase in variance relative to uniform sampling, which is explicitly bounded in terms of the tail floor fraction $c$ in Lemma \ref{lem:upper_bound_variance}. We now turn to the computational implications of the tail floor fraction $c$, which are governed by the same tail mass $\varepsilon(\gamma)$ that also controls estimator variance, as explained in Section \ref{subsec:TPD_probs}. The next lemma establishes several properties of $\mathbb{E}(u_{\max})$.
\begin{lemma}\label{lem:average_cost}
Let $u_1, \dots, u_m$ be independent with $u_i \in \{1, \dots, T\}$ and $\Pr(u_i=t)=p_t$. Define $u_{\max} = \max \{ u_1, \dots, u_m\}$. Then the following hold.
\begin{enumerate}
    \item[(i)] $\mathbb{E}(u_{\max})= T -\sum_{k=1}^T \left( \sum_{t=1}^{k-1} p_t \right)^m$.
    \item[(ii)] $\mathbb{E}(u_{\max}) \leq T$.
    \item[(iii)] Let $p_t$ be the truncated decaying probabilities in \eqref{eq:trunc_probs_TPD}, and denote the tail mass by $\varepsilon=\sum_{t\in\mathcal{T}}p_t=\varepsilon(\gamma)$. Then
    \begin{align}\label{eq:exact_compute_bound_TPD}
    \mathbb{E}(u_{\max}) & \leq t^\star + (T-t^\star)(1-(1-\varepsilon)^m).
    \end{align}
     \item[(iv)] The upper bound in (iii) is strictly increasing in $\varepsilon \in (0,1)$ for any $m \geq 1$.
     \item[(v)] If $\gamma=0$ (corresponding to uniform sampling), so that $\varepsilon=(T-t^\star)/T$, then 
     \begin{align}\label{eq:exact_compute_bound_uniform}
     \mathbb{E}(u_{\max}) & \leq t^\star + (T-t^\star)\left(1-\left(\frac{t^\star}{T}\right)^m\right).    
     \end{align}
     Moreover, for any $\gamma>0$, the bound in \eqref{eq:exact_compute_bound_TPD} is strictly smaller than the bound in \eqref{eq:exact_compute_bound_uniform}.
     \item[(vi)]  For the truncated decaying probabilities in \eqref{eq:trunc_probs_TPD}, $\mathbb{E}(u_{\max})$ is strictly increasing in $m$.
\end{enumerate}
\end{lemma}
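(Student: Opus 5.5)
For part (i) I will use the tail-sum formula for a positive integer random variable, $\mathbb{E}(u_{\max})=\sum_{k=1}^T\Pr(u_{\max}\ge k)=\sum_{k=1}^T\bigl(1-\Pr(u_{\max}\le k-1)\bigr)$. Independence of $u_1,\dots,u_m$ gives $\Pr(u_{\max}\le k-1)=\bigl(\sum_{t=1}^{k-1}p_t\bigr)^m$, with the empty sum equal to zero at $k=1$, and (i) follows. Part (ii) is then immediate, since every subtracted term is non-negative; equivalently, $u_{\max}\le T$ almost surely.

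For (iii) I will split the sum in (i) at $t^\star$. For $k\le t^\star$ I use the trivial bound $\Pr(u_{\max}\ge k)\le 1$, which contributes at most $t^\star$. For $k\ge t^\star+1$ the cumulative probability satisfies $\sum_{t=1}^{k-1}p_t\ge\sum_{t\in\mathcal{H}}p_t=1-\varepsilon$. Hence $\Pr(u_{\max}\ge k)\le 1-(1-\varepsilon)^m$, and summing over the $T-t^\star$ tail indices yields \eqref{eq:exact_compute_bound_TPD}. Part (iv) follows because $1-(1-\varepsilon)^m$ has derivative $m(1-\varepsilon)^{m-1}>0$ on $(0,1)$ and $T-t^\star>0$.

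For (v) I substitute $\varepsilon(0)=(T-t^\star)/T$, which is the value given in the Remark following \eqref{eq:eps_gamma}; then $1-\varepsilon=t^\star/T$. For the strict comparison I combine Lemma~\ref{lem:tail_smaller}(ii) with (iv). Lemma~\ref{lem:tail_smaller}(ii) says $\varepsilon(\gamma)/(T-t^\star)<1/T$ for every $\gamma>0$, i.e.\ $\varepsilon(\gamma)<\varepsilon(0)$, and by (iv) the bound in \eqref{eq:exact_compute_bound_TPD} is strictly increasing in $\varepsilon$. Lemma~\ref{lem:tail_smaller}(ii) gives strictness directly, so I do not need the $t^\star\ge2$ caveat from Lemma~\ref{lem:tail_smaller}(i).

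Part (vi) is where the real work lies. From (i), each term is $\Pr(u_{\max}\ge k)=1-F_{k-1}^m$ with $F_{k-1}=\sum_{t<k}p_t\in[0,1]$. This is non-decreasing in $m$, and it is strictly increasing exactly when $0<F_{k-1}<1$. So I need to show that at least one $k$ has $F_{k-1}\in(0,1)$. The TPD probabilities are all strictly positive: the tail probability is $\varepsilon(\gamma)/(T-t^\star)>0$ by \eqref{eq:eps_gamma}, and the head probabilities are positive because $(t+b)^{-\gamma}>0$. Taking $k=2$ gives $F_1=p_1\in(0,1)$, provided $T\ge2$ (which holds since both $\mathcal{H}$ and $\mathcal{T}$ are nonempty). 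Therefore $\mathbb{E}(u_{\max})$ is strictly increasing in $m$.

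The only delicate points are the boundary conventions for the empty sum and confirming that the truncated scheme assigns strictly positive mass everywhere. Both follow from the definitions in \eqref{eq:trunc_probs_TPD}–\eqref{eq:eps_gamma}.
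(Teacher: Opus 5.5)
Your proof is correct, and for (i), (ii), (iv) it matches the paper's. In (iii) you get the same bound termwise from the tail-sum formula, whereas the paper takes expectations of the pointwise inequality $u_{\max}\le t^\star+(T-t^\star)\mathbbm{1}(u_{\max}>t^\star)$; the two amount to the same thing.

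The genuine difference is in (vi). The paper argues by coupling. With $M_m=\max\{u_1,\dots,u_m\}$ one has $M_{m+1}\ge M_m$ and $\{M_m\le t^\star\}\cap\{u_{m+1}\in\mathcal{T}\}\subseteq\{M_{m+1}>M_m\}$, so $\Pr(M_{m+1}>M_m)\ge(1-\varepsilon)^m\varepsilon>0$. You instead read monotonicity directly off (i). With $F_{k-1}=\sum_{t<k}p_t$, the increment is $\mathbb{E}(M_{m+1})-\mathbb{E}(M_m)=\sum_{k=1}^T F_{k-1}^m(1-F_{k-1})$, which is positive as soon as a single $F_{k-1}$ lies in $(0,1)$.

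Your route needs no coupling. It shows that strict monotonicity holds for every sampling distribution that is not a point mass, with the TPD structure entering only through positivity of the $p_t$. The paper's route ties the increment explicitly to the tail mass $\varepsilon$, which is the quantity the surrounding cost discussion is about. The two are consistent: the $k=t^\star+1$ term of your sum is exactly the paper's lower bound $(1-\varepsilon)^m\varepsilon$.

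One correction in (v). Your claim that invoking Lemma~\ref{lem:tail_smaller}(ii) removes the need for $t^\star\ge 2$ is not right. The strictness in that lemma rests on $\sum_{j=1}^{t^\star}(j+b)^{-\gamma}>t^\star(t^\star+b)^{-\gamma}$, which is an equality when $t^\star=1$. In that case $\varepsilon(\gamma)=(T-1)/T$ for every $\gamma$, the TPD scheme is exactly uniform, and the two bounds in (v) coincide. So the strict comparison genuinely requires $t^\star\ge 2$, as the paper's proof records, and your argument is valid under that assumption.
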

Since $\varepsilon=\varepsilon(\gamma)$ is strictly decreasing in $\gamma$ (for $t^\star \geq 2$) by Lemma \ref{lem:tail_smaller}(i), and the bound in Lemma \ref{lem:average_cost}(iii) is strictly increasing in $\varepsilon$ by Lemma \ref{lem:average_cost}(iv), the upper bound on $\mathbb{E}(u_{\max})$ is maximised over $\gamma$ at $\gamma=0$, corresponding to uniform sampling (equivalently $c=1$). Consequently, for any $c<1$ (equivalently, $\gamma>0$), the truncated decaying sampling scheme admits a strictly smaller upper bound on the expected computational cost than uniform sampling, a result that is also made explicit in Lemma \ref{lem:average_cost}(v). Thus, the tail floor fraction $c$ plays a dual stabilising role by construction: it bounds the variance inflation relative to uniform sampling (Lemma \ref{lem:upper_bound_variance}), while simultaneously controlling an upper bound on the expected computational cost (Lemma \ref{lem:average_cost}(iii)), which is maximised under uniform sampling ($c=1$). In this way, the tail floor fraction $c$ acts as a single stabilisation parameter that jointly governs estimator variance and an upper bound on the expected computational cost. Lemma \ref{lem:average_cost}(iii) and (v) further imply that, for fixed $c$, $m$, and $t^\star$, both truncated decaying and uniform sampling schemes admit $\mathcal{O}(T)$ upper bounds on $\mathbb{E}(u_{\max})$. Moreover, uniform sampling corresponds to the largest possible constant in this $\mathcal{O}(T)$ upper bound, whereas any $c<1$ yields a strictly smaller constant. The asymptotics with fixed $c$ are interpreted as a $T$-sequence of sampling schemes $p_{t,T}(c)$, where the induced tail mass $\varepsilon_T$ and decay rate $\gamma_T$ may vary with $T$, while the probabilities remain normalised for each $T$. Finally, we note that parts (iii) and (v) mirror the variance analysis in Section \ref{subsec:TPD_probs}, where bounds are first expressed in terms of the tail-floor fraction $c$ (corresponding to $\varepsilon$) in Lemma \ref{lem:upper_bound_variance}, and then asymptotically in $T$ in Lemma \ref{lem:asymptotic_variance_estimators_truncated}. These asymptotics are likewise interpreted with fixed $c$.

As established above, truncated power-law decaying sampling yields a provably smaller constant in the $\mathcal{O}(T)$ upper bound on $\mathbb{E}(u_{\max})$ than uniform sampling. In practice, such schemes also yield a substantially smaller expected computational cost. Figure \ref{fig:E_umax_TPD} illustrates this by showing how $\mathbb{E}(u_{\max})$ varies with $m$ (both normalised by $T$) for uniform sampling and truncated power-law decaying probabilities for several $\varepsilon$ values and two values of $T$.

\begin{figure}[h]
    \centering
    \includegraphics[width=0.48\textwidth]{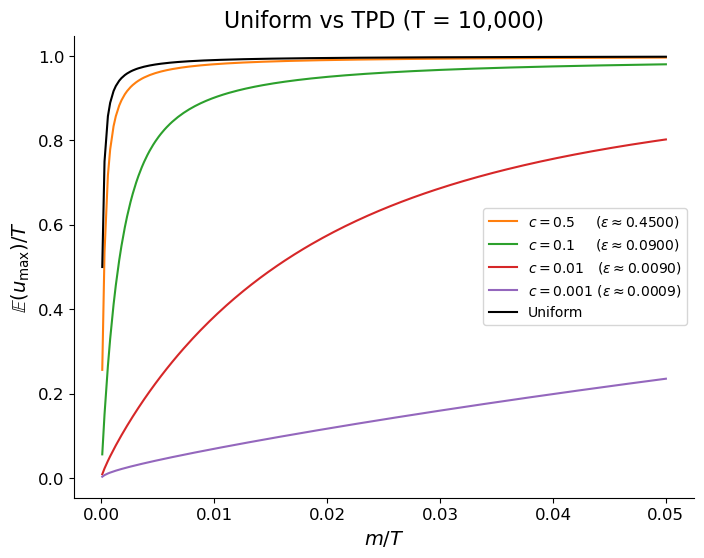}
    \hfill
    \includegraphics[width=0.48\textwidth]{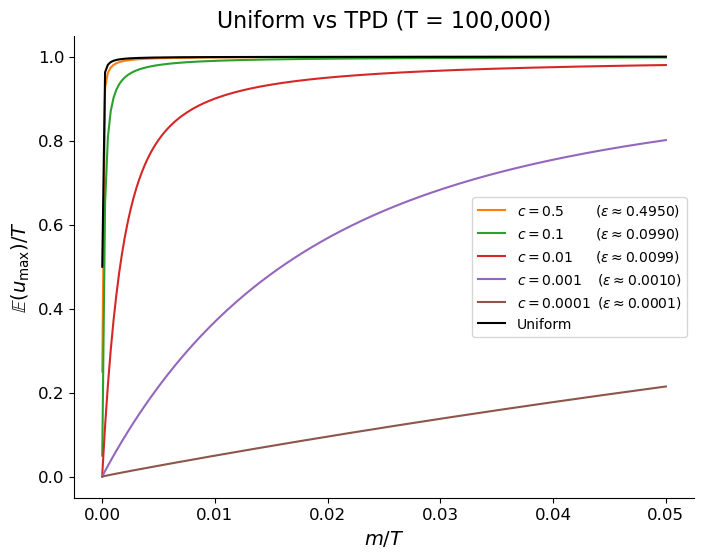}
    \caption{$\mathbb{E}(u_{\max})$ in Lemma \ref{lem:average_cost}(i) as a function of $m$ (both normalised by $T$) for $T=10{,}000$ (left panel) and $T=100{,}000$ (right panel) with the hyperparameters $t^\star = 1{,}000$ and $b=100$ taking the same values as in the applications. The figure shows the results for several $\varepsilon$ obtained using different $c$; see the legend.} \label{fig:E_umax_TPD}
\end{figure}

The $\varepsilon$ candidates are obtained from several tail floor fraction values $c$ that determine the corresponding tail mass floor by finding $\gamma_{\max}(c)$ using a root-finding algorithm as described in Section \ref{subsec:balancing_variance_and_computational_cost}. The figure demonstrates that the truncated decaying sampling probability scheme yields a lower expected computational cost than uniform sampling probabilities, especially for small $\varepsilon$. Moreover, when increasing $T$ from $10{,}000$ to $100{,}000$, the figure suggests that maintaining comparable normalised expected computational cost requires a smaller tail mass $\varepsilon$. This behaviour is consistent with a heuristic approximation provided on page~\pageref{page:heuristic_scaling}, which indicates that the expected computational cost is dominated by a term of order $\mathcal{O}(T)m\varepsilon$, highlighting the role of the tail probability. Thus, while the fixed-$c$ regime considered above yields an $\mathcal{O}(T)$ expected computational cost with a smaller constant than uniform sampling, keeping the expected computational cost bounded as $T$ grows would require $\varepsilon=\mathcal{O}(T^{-1})$. This scaling is reflected in Figure \ref{fig:E_umax_TPD}: any comparable curve in the right panel uses an $\varepsilon$ approximately one tenth as large as its left-panel counterpart. An alternative view is provided in Figure \ref{fig:E_umax_c_m}, which shows the expected computational cost as a function of $(c,m)$, with $m$ evaluated on a continuous grid for visualisation. 

\begin{figure}[H]
    \centering
    \includegraphics[width=1.00
    \textwidth]{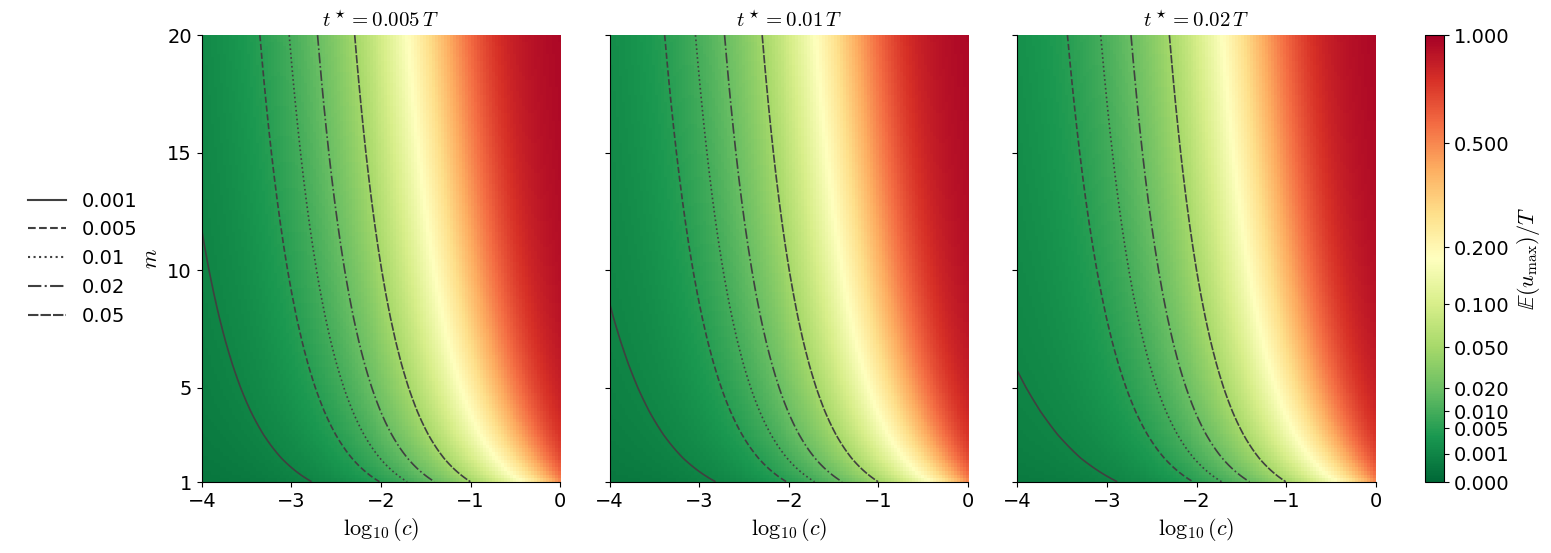}
    \caption{$\mathbb{E}(u_{\max})$ in Lemma \ref{lem:average_cost}(i) (normalised by $T$, with $T=100{,}000$) as a function of $c$ and $m$ (where $m$ is treated as continuous for visualisation), shown for three values of $t^\star$ (see panel titles), with $b=100$. The values of $T$ and $b$ correspond to those in the applications, as does $t^\star$ in the middle panel. The colour scale (right) represents the expected computational cost ratio, where red indicates no compute savings relative to the full dataset of size $T$, and green indicates increasing compute savings. Superimposed level curves (left) correspond to fixed values of $\mathbb{E}(u_{\max})/T$; for each curve, hyperparameter combinations lying below it yield at least as much compute savings.} \label{fig:E_umax_c_m}
\end{figure}

The figure also highlights regions of the tuning parameter space associated with substantial compute savings; see the caption for details. From the figure, we observe that the expected computational cost landscape is largely robust to the choice of $t^\star$, although the most extreme level sets are affected. Moreover, the largest compute savings correspond to corner solutions in the lower-left region of the figure. However, recall that aggressive decay inflates estimator variance, and in practice admissible hyperparameter values are therefore further constrained by a variance tolerance and the lower bound $c\geq c_{\min}$, both of which restrict how aggressively the sampling probabilities may decay. These constraints are incorporated into the tuning procedure developed in Section \ref{subsec:balancing_variance_and_computational_cost}, and the resulting behaviour is examined further in the applications in Sections \ref{sec:VB_application} and \ref{sec:MCMC_application}.

While the discussions above motivate choosing a small tail probability $\varepsilon$, this choice introduces an important theoretical caveat, which we now discuss. Although a small tail probability $\varepsilon(\gamma)$ leads to a substantially reduced expected computational cost, it also increases the variance of the estimator, as shown in Lemma \ref{lem:tail_smaller_higher_variance}, with worst-case bounds relative to uniform sampling derived in Lemma \ref{lem:upper_bound_variance}. This motivates the introduction of a lower bound $c_{\min}$ in Section \ref{subsec:TPD_probs}, through the design parameter $R_{\max}=1/c_{\min}$ specified by the user, which provides a safeguard against excessive variance inflation. However, if $R_{\max}$ is specified too large (equivalently, $c_{\min}$ too small), the tuning procedure may select a value of $\varepsilon$ (through $c$) corresponding to overly aggressive decay, in which case the variance can become so large that the estimator may no longer be practically usable. The range of estimator variance that remains acceptable depends on the inferential algorithm in which the estimator is embedded.  Practical guidelines for choosing $\varepsilon$ (through $c$) within a stable operating regime are discussed in Section \ref{subsec:balancing_variance_and_computational_cost}. In our subsampling variational Bayes and subsampling Markov chain Monte Carlo applications, we found values of $\varepsilon$ that maintained acceptable estimator variance (through tuning $c$ and $m$) while providing significant computational speed-ups, suggesting that the theoretical caveat did not pose a practical obstacle in the settings considered.

\subsection{Cost-variance considerations and tuning guidelines }\label{subsec:balancing_variance_and_computational_cost}
Sections \ref{subsec:TPD_probs} and \ref{subsec:computational_cost} show that the tail floor fraction $c$ provides a single stabilisation mechanism that simultaneously controls estimator variance and induces an explicit upper bound on the expected computational cost through its effect on the tail mass $\varepsilon(\gamma)$. We now develop practical tuning guidelines for selecting $c$ and the subsample size $m$, with the safeguard $c$ constrained to satisfy $c\geq c_{\min}$. These guidelines reflect the design principle that decaying weights are introduced to reduce expected computational cost, with variance inflation arising as an unavoidable consequence that must be controlled through a variance tolerance and a lower bound on the safeguard $c$.

A natural first attempt to balance computational effort and statistical efficiency is to consider a scalarised cost-variance product,
\begin{align} \label{eq:cost_variance}
\mathrm{CV} = \mathbb{E}(u_{\max})  \mathbb{V}(\widehat{\ell}_{\rm{WDE}}).
\end{align}
However, in our setting, this single-metric scalarisation often fails to capture practical benefits. In particular, it typically favours $c=1$ (corresponding to $\gamma=0$, uniform sampling), since the reduction in expected computational cost achieved when $c<1$ ($\gamma>0$) is not sufficient to offset the associated variance increase in the product in \eqref{eq:cost_variance}. In contrast, the overall procedure in which the estimator is embedded may still gain from the compute savings despite the higher variance, provided its noise tolerance is sufficient. More fundamentally, this behaviour reflects a mismatch between the cost-variance product and the design principle underlying the decaying weights: expected computational cost is the quantity we seek to reduce, while variance is an unavoidable by-product that must be controlled. This leads naturally to a constrained optimisation formulation in which expected computational cost is minimised subject to a variance tolerance $V$ and the lower bound $c \geq c_{\min}$. This avoids fixing an arbitrary relative weight between expected computational cost and variance, as any single-metric scalarisation necessarily does. 

Our tuning strategy treats $t^\star$ and $b$ as fixed scheme parameters that set the head-tail boundary and the onset of decay, respectively, and tunes the tail floor fraction $c$ and subsample size $m$ via the constrained optimisation formulation described below. Treating $t^\star$ as fixed avoids introducing an additional tuning dimension and is justified by the weak sensitivity of the expected computational cost landscape to its choice, as illustrated in Figure \ref{fig:E_umax_c_m}. Likewise, we treat $b$ as fixed since it provides an additional shape adjustment whose effect can often be reproduced through the decay parameter $\gamma$, making separate tuning of $b$ of limited practical benefit. 

Given $c$, we set $\gamma=\gamma_{\max}(c)$, the largest decay rate compatible with the floor $c(T-t^\star)/T$, which in turn determines $\varepsilon$. Specifically, as described in Section \ref{subsec:TPD_probs}, we impose a per-element tail floor to prevent excessively small tail probabilities, which adversely affect the variance as shown in Lemma \ref{lem:tail_smaller_higher_variance}. This is enforced by requiring 
$$\frac{\varepsilon(\gamma)}{T-t^\star} \geq \frac{c}{T}, \quad \text{equivalently, } \varepsilon(\gamma) \geq c\frac{T-t^\star}{T}.$$
Because $\varepsilon(\gamma)$ is continuous and strictly decreasing in $\gamma$ by Lemma \ref{lem:tail_smaller}(i), there exists a unique threshold $\gamma_{\max}(c) \geq 0$ satisfying 
$$\varepsilon(\gamma_{\max}(c)) = c\frac{T-t^\star}{T}.$$
Then, any $\gamma \in [0, \gamma_{\max}(c)]$ satisfies the tail mass floor, i.e.\
$$\varepsilon(\gamma) \geq \varepsilon(\gamma_{\max}(c)) = c\frac{T-t^\star}{T}.$$
We can compute $\gamma_{\max}(c)$ using a simple root-finding algorithm for candidate values of $c$. Recasting the hyperparameter from $\gamma$ to $c$ has two advantages. First,
$c$ is directly interpretable as a tail-mass fraction relative to uniform sampling. Second, by Lemma \ref{lem:upper_bound_variance}, $c$ admits an interpretation as a variance-inflation bound relative to uniform sampling: choosing $c=0.01$ implies a worst-case variance inflation of approximately $1/c = 100$. We now describe the general tuning framework based on our proposed constrained formulation.

Our tuning framework assigns distinct roles to the hyperparameters $c$ and $m$. The tail floor fraction $c$ fixes the sampling probabilities through $\gamma_{\max}(c)$ and the induced tail mass, thereby determining the estimator variance and the expected computational cost as functions of $m$ for a given $c$. Conditional on a given candidate value of $c$, the subsample size $m$ is then chosen to satisfy the relevant variance tolerance constraint. Thus, $c$ fixes the sampling probabilities, while $m$ enforces the constraint conditional on that choice of $c$. We now describe the resulting tuning procedure.

We minimise expected computational cost subject to a variance tolerance constraint and a lower bound on the safeguard $c$,
\begin{align}\label{eq:miminises_compute_subject_variance}
    \min_{c\in [c_{\min},1],\, m\in\{1,\dots,T\}} \mathbb{E}(u_{\max};c,m) \quad \text{subject to } \mathbb{V}\left(\widehat{\ell}_{\mathrm{WDE}}(\boldsymbol{\phi});c,m\right) \leq V,
\end{align}
for a given variance tolerance $V > 0$. The constraint $c\geq c_{\min}$ ensures that excessively aggressive decay is excluded from the feasible set. A candidate $c$ determines $\gamma_{\max}(c)$ and hence $\varepsilon$, so the sampling probabilities depend on $c$ and we write $p_t(c)$. In terms of $c$ and $m$, we can write Lemma \ref{lem:expectation_variance_WDE}(ii) as 
\begin{equation} \label{eq:var_WDE_c_m}
    \mathbb{V}\left(\widehat{\ell}_{\mathrm{WDE}}(\boldsymbol{\phi});c,m\right)=\frac{1}{m}\sigma^2(c;\boldsymbol{\phi}), \quad \text{with } \sigma^2(c;\boldsymbol{\phi}) = \sum_{t=1}^T \left(\frac{e_t(\boldsymbol{\phi})}{p_t(c)} - e(\boldsymbol{\phi}) \right)^2p_t(c),
\end{equation}
where this restatement makes explicit the dependence of the control variate residuals on $\boldsymbol{\phi}$ and of the sampling probabilities on $c$. In practice, for a given dataset and corresponding control variates, the variance constraint is evaluated at a tuning reference value $\boldsymbol{\phi}^\dagger$, obtained from a short pilot run as described in the final paragraph of this subsection. For notational simplicity, we write $\sigma^2(c)=\sigma^2(c;\boldsymbol{\phi}^\dagger)$.  Since  the objective function (expected computational cost) increases in $m$ (Lemma \ref{lem:average_cost}(vi)) while the variance term in the constraint decreases in $m$, i.e.\
$$\mathbb{E}(u_{\max};c,m) \quad \text{increases in } m, \qquad \mathbb{V}\left(\widehat{\ell}_{\mathrm{WDE}}(\boldsymbol{\phi}^{\dagger});c,m\right) = \frac{\sigma^2(c)}{m} \quad \text{decreases in } m, $$
any solution that does not have a binding\footnote{Since $m$ is discrete, ``binding'' does not mean equality; it simply means there is no smaller feasible $m$.} constraint can be improved by decreasing $m$. Therefore, for any fixed feasible $c$, the variance constraint is binding in the optimisation over $m$, and the two-dimensional problem in $(c,m)$ reduces to a one-dimensional optimisation in $c$ as follows. For any fixed candidate value of $c$, define the smallest feasible subsample size
$$m_V(c)=\min\left\{m \in \{1,\dots, T\}:  \frac{1}{m}\sigma^2(c) \leq V \right\} = \left \lceil \frac{\sigma^2(c)}{V}\right \rceil.$$
If $\mathbb{V}\left(\widehat{\ell}_{\mathrm{WDE}}(\boldsymbol{\phi}^{\dagger});c,T\right) = \sigma^2(c)/T > V$ then no feasible $m$ exists for that value of $c$. This induces the feasible set 
    $$\mathcal{C}_V = \left \{ c \in [c_{\min},1]:  \sigma^2(c) \leq V T \right\} \subseteq [c_{\min},1].$$
Substituting this binding choice $m_V(c)$ into the expected computational cost yields the reduced one-dimensional optimisation problem
\begin{align} \label{eq:miminises_compute_subject_variance_1dim}
    c^\star = \arg \min_{c\in \mathcal{C}_V}  \mathbb{E}\left(u_{\max};c,m_{V}(c)\right) ,
\end{align}
with corresponding optimal subsample size $m^\star=m_V(c^\star)$. The reduced optimisation problem in \eqref{eq:miminises_compute_subject_variance_1dim} can be solved numerically, for example by evaluating the objective on a grid over $\mathcal{C}_V$, or by applying a standard bounded one-dimensional optimisation algorithm. Note that, unlike the optimisation over $m$, the optimal value of $c$ is selected to minimise expected computational cost over the feasible set and therefore need not lie on its boundary.
\begin{remark}
    Recall that the TPD scheme with $c=1$ ($\gamma=0$) corresponds to uniform subsampling, i.e. $p_t(1)=1/T$ for all $t$. Hence, when tuning the estimator based on uniform subsampling in our applications, the same construction yields $m_V(1)= \left \lceil \frac{\sigma^2(1)}{V}\right \rceil$,
    provided that $\sigma^2(1)\leq TV$. If $\sigma^2(1)>TV$, then even $m=T$ (a full-data sized subsample) violates the variance constraint.
\end{remark}

It remains to specify the variance tolerance $V$ for the truncated decaying sampling scheme, which we calibrate using the design parameter $R_{\max}=1/c_{\min}>1$ introduced in Section \ref{subsec:TPD_probs}. Given a value of $R_{\max}$ and a set of candidate parameter values $\{\boldsymbol{\phi}^{(j)}\}_{j=1}^{N_{\mathrm{tune}}}$, where $N_{\mathrm{tune}}$ denotes the number of tuning draws, obtained from a short pilot Markov chain Monte Carlo run using the full data, we compute candidate values
\begin{align}\label{eq:scaled_intrinsic_variability}
    V^{(j)} & = R_{\max}\sigma^2(1;\boldsymbol{\phi}^{(j)}).
\end{align}
The quantity $\sigma^2(1;\boldsymbol{\phi}^{(j)})$ corresponds to the intrinsic variance of a single subsample contribution under uniform sampling, i.e.\ $\ell_u(\boldsymbol{\phi}^{(j)})-q_u(\boldsymbol{\phi}^{(j)})$, and thus reflects the accuracy of the control variates. Hence, \eqref{eq:scaled_intrinsic_variability} represents the intrinsic variability under uniform sampling, scaled by the maximum allowable variance inflation factor $R_{\max}$. Note that the estimator variance is $\sigma^2(1;\boldsymbol{\phi}^{(j)})/m$ under uniform and $\sigma^2(c;\boldsymbol{\phi}^{(j)})/m$ under the truncated decaying sampling scheme. Remark \ref{rem:empirical_insensitivity} justifies using a single value of $R_{\max}$ across the pilot samples, since the approximate invariance with respect to $\boldsymbol{\phi}$ implies that the variance inflation induced by the truncated decaying sampling scheme is approximately constant across the samples. The quantities $V^{(j)}$ capture the variation across the pilot run, and we summarise them into a single variance tolerance via the median, 
$$V=\mathrm{median}\left(V^{(1)}, \dots , V^{(N_{\mathrm{tune}})}\right).$$
We then construct the reference value used for tuning as  
\begin{align}\label{eq:phi_tune}
  \boldsymbol{\phi}^\dagger = \boldsymbol{\phi}^{(j^\star)}, \,\, j^\star =  \arg \min_{1 \leq j \leq N_{\mathrm{tune}}} \left|V^{(j)} - V\right|.  
\end{align}
For uniform sampling (corresponding to $c=1$), the same procedure applies with $R_{\max}=1$, so no variance inflation relative to the intrinsic variability under uniform weights is permitted.

\subsection{Weighted subsampling-based gradient estimator}\label{subsec:grad_WDE}
The weighted subsampling methodology developed above extends directly to estimation of the gradient of the log-likelihood in \eqref{eq:grad-log-likelihood}, which is required in gradient-based algorithms such as variational Bayes.

The weighted difference estimator of the gradient follows directly from the weighted difference estimator of the log-likelihood in \eqref{eq:weighted_diff_estimator},  
\begin{align}\label{eq:weighted_diff_estimator_grad}
    \nabla_{\boldsymbol{\phi}}\widehat{\ell}_{\mathrm{WDE}}(\boldsymbol{\phi}) = \sum_{t=1}^T \nabla_{\boldsymbol{\phi}} q_t(\boldsymbol{\phi}) + \frac{1}{m}\sum_{i=1}^m \frac{\nabla_{\boldsymbol{\phi}}\ell_{u_i}(\boldsymbol{\phi}) - \nabla_{\boldsymbol{\phi}}q_{u_i}(\boldsymbol{\phi})}{p_{u_i}},
\end{align}
where $\nabla_{\boldsymbol{\phi}}q_t(\boldsymbol{\phi})$ is a control variate for the gradient contribution $\nabla_{\boldsymbol{\phi}}\ell_t(\boldsymbol{\phi})$. Differentiating the second order Taylor control variate in \eqref{eq:2nd_order_cv} gives
\begin{align*}
    \nabla_{\boldsymbol{\phi}} q_t(\boldsymbol{\phi}) = \nabla_{\boldsymbol{\phi}}\ell_t(\boldsymbol{\phi}^\star) + \nabla^2_{\boldsymbol{\phi}} \ell_t(\boldsymbol{\phi}^\star) (\boldsymbol{\phi} -\boldsymbol{\phi}^\star).
\end{align*}
The next lemma establishes unbiasedness and gives the gradient estimator's covariance matrix.
\begin{lemma}\label{lem:expectation_variance_WDE_grad}
For the estimator in \eqref{eq:weighted_diff_estimator_grad},
\begin{enumerate}
    \item[(i)] $\mathbb{E}\left(\nabla_{\boldsymbol{\phi}}\widehat{\ell}_{\mathrm{WDE}}(\boldsymbol{\phi})\right) = \nabla_{\boldsymbol{\phi}}\ell(\boldsymbol{\phi})$, with $\nabla_{\boldsymbol{\phi}}\ell(\boldsymbol{\phi})$ in \eqref{eq:grad-log-likelihood}.
    \item[(ii)] With $\mathbf{e}_t = \nabla_{\boldsymbol{\phi}}\ell_t(\boldsymbol{\phi})-\nabla_{\boldsymbol{\phi}}q_t(\boldsymbol{\phi})\in \mathbb{R}^{\dim_{\boldsymbol{\phi}}}$ and $\mathbf{e}=\sum_{t=1}^T \mathbf{e}_t$, $$\mathbb{V}\left(\nabla_{\boldsymbol{\phi}}\widehat{\ell}_{\mathrm{WDE}}(\boldsymbol{\phi})\right) =  \frac{1}{m}\sum_{t=1}^T\left(\frac{\mathbf{e}_t}{p_t} - \mathbf{e}\right) \left(\frac{\mathbf{e}_t}{p_t} - \mathbf{e}\right)^\top p_t.$$
\end{enumerate}
\end{lemma}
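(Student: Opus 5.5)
The plan is to treat \eqref{eq:weighted_diff_estimator_grad} exactly as the scalar estimator in Lemma \ref{lem:expectation_variance_WDE}, working componentwise where convenient. The deterministic part $\sum_{t=1}^T \nabla_{\boldsymbol{\phi}} q_t(\boldsymbol{\phi})$ does not depend on the subsampling indices. So all randomness sits in the average of the i.i.d.\ random vectors
$$\mathbf{Z}_i = \frac{\mathbf{e}_{u_i}}{p_{u_i}}, \quad i=1,\dots,m,$$
where $\mathbf{e}_t=\nabla_{\boldsymbol{\phi}}\ell_t(\boldsymbol{\phi})-\nabla_{\boldsymbol{\phi}}q_t(\boldsymbol{\phi})$ is fixed given $\boldsymbol{\phi}$. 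The $\mathbf{Z}_i$ are i.i.d.\ because the $u_i$ are independent and identically distributed with $\Pr(u_i=t)=p_t>0$.

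For part (i), I will compute the expectation by summing over the finite support:
$$\mathbb{E}(\mathbf{Z}_i)=\sum_{t=1}^T p_t\,\frac{\mathbf{e}_t}{p_t}=\mathbf{e}.$$
This step uses only $p_t>0$ for all $t$, so that every index is reachable and there is no division by zero. Linearity of expectation then gives
$$\mathbb{E}\left(\nabla_{\boldsymbol{\phi}}\widehat{\ell}_{\mathrm{WDE}}\right)=\sum_t \nabla_{\boldsymbol{\phi}} q_t+\mathbf{e}=\sum_t\nabla_{\boldsymbol{\phi}}\ell_t=\nabla_{\boldsymbol{\phi}}\ell(\boldsymbol{\phi}),$$
using \eqref{eq:grad-log-likelihood}.

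For part (ii), the deterministic shift does not affect the covariance. Independence of the $\mathbf{Z}_i$ makes the cross-covariances vanish, so
$$\mathbb{V}\left(\frac{1}{m}\sum_i\mathbf{Z}_i\right)=\frac{1}{m}\mathbb{V}(\mathbf{Z}_1).$$
I will then compute $\mathbb{V}(\mathbf{Z}_1)$ directly as $\mathbb{E}\big[(\mathbf{Z}_1-\mathbf{e})(\mathbf{Z}_1-\mathbf{e})^\top\big]$. Again summing over the support, this equals
$$\sum_{t=1}^T \left(\frac{\mathbf{e}_t}{p_t}-\mathbf{e}\right)\left(\frac{\mathbf{e}_t}{p_t}-\mathbf{e}\right)^\top p_t,$$
which is the stated expression.

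Alternatively, the whole argument follows from Lemma \ref{lem:expectation_variance_WDE} applied to each coordinate and to each sum of two coordinates, recovering the off-diagonal terms by polarisation. The direct vector computation is cleaner, so I will present that.

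No step is a real obstacle; the only care needed is noting that the residuals $\mathbf{e}_t$ are nonrandom given $\boldsymbol{\phi}$. The expectation and covariance are taken only over the subsampling indices, conditional on $\boldsymbol{\phi}$, consistent with the convention stated after Lemma \ref{lem:taylor_remainder}. This is the standard Hansen--Hurwitz with-replacement result in vector form (cf.\ \citep[p.52]{sarndal2003model}).
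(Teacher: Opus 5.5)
Your proof is correct and is essentially the paper's argument. The paper simply defers to Lemma~\ref{lem:expectation_variance_WDE}, obtaining (i) via linearity of expectation and of the gradient and (ii) by replacing scalar residuals with vector-valued ones; your explicit Hansen--Hurwitz computation with the i.i.d.\ vectors $\mathbf{Z}_i=\mathbf{e}_{u_i}/p_{u_i}$, including the vanishing cross-covariances that give the factor $1/m$, just spells out that ``straightforward extension''.
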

In our methodology, the sampling probabilities are chosen to minimise the expected computational cost subject to a variance tolerance constraint on the log-likelihood estimator and a lower bound on the safeguard $c \geq c_{\min}$ (see Section \ref{subsec:balancing_variance_and_computational_cost}), and subsamples drawn according to these probabilities are then reused to estimate the associated gradient.

\section{Application: Subsampling variational Bayes inference for a GARCH($1,1$) model}\label{sec:VB_application}

\subsection{Background, model, and data}
We showcase our stabilised weighted subsampling methodology by considering a doubly stochastic variational inference approach \citep{Titsias2014doubly} for a GARCH($1,1$) model with Gaussian measurement errors, given by
\begin{align*}
    y_t &= \mu + z_t, \,\, z_t = \sigma_t \varepsilon_t, \,\, \varepsilon_t \overset{\mathrm{ind}}{\sim} N(0,1), \,\, \mu \in \mathbb{R}, \,\, \sigma_t > 0, \\
    \sigma_t^2 &= \omega + \alpha z_{t-1}^2 + \beta \sigma_{t-1}^2,
\end{align*}
where $\omega>0$, $\alpha\geq 0$, and $\beta\geq 0$. The GARCH($1,1$) specification is a widely used benchmark model for conditional volatility in financial time series \citep{hansen2005GARCH}. Moreover, under the parameterisation described in Section \ref{subsec:spec_settings_App1}, the assumption underlying Lemma \ref{lem:taylor_remainder} is verified for this model in Section \ref{supp:verify_Lemma2_ass}, making it a natural and theoretically well-supported benchmark for our methodology.

Recall from the introduction that \cite{xuan2024stochastic} rule out subsampling-based variational algorithms for GARCH models due to the difficulty of achieving the required unbiasedness. We demonstrate that this obstacle can be overcome by enabling doubly stochastic variational inference for GARCH models. Our approach builds on the subsampling variational Bayes framework for independent data in \cite{gunawan2017fast}, where uniform sampling is feasible, and extends it to recursively defined likelihoods by utilising truncated power-law decaying sampling probabilities. These stabilised sampling weights accommodate the temporal recursion in GARCH models while remaining computationally efficient, and yield unbiased stochastic gradient estimators with controlled variance.

We fit variational posteriors to a GARCH($1,1$) model whose parameters are reparameterised to an unrestricted space, using log-returns of the Dow Jones Industrial Average, which comprises 30 large US firms. Figure \ref{fig:DowJones} displays the data, with further details provided in the figure caption. 

\begin{figure}[h]
    \centering
    \includegraphics[width=0.65\textwidth]{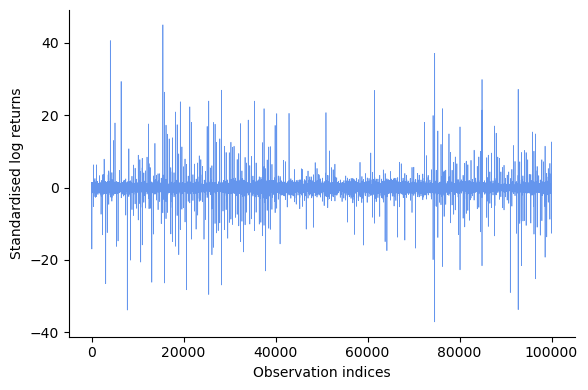}
    \caption{Observations $y_t$ for the Dow Jones 30 index, constructed as one-minute log-returns over the period 2023-05-15 to 2024-06-28 ($T=100{,}000$) and rescaled to have unit sample standard deviation.}
    \label{fig:DowJones}
\end{figure}

The figure shows that the log-returns can exhibit deviations of 30-40 unconditional standard deviations from their unconditional mean, which is difficult to reconcile with a homoskedastic Gaussian specification. Clear volatility clustering and potential heavy-tailed behaviour are also evident, motivating the use of GARCH dynamics and more flexible measurement error distributions.

\subsection{Gaussian variational approximation}\label{subsec:GVA}
We consider variational approximations to the posterior distribution
\begin{align}\label{eq:posterior}
    \pi_T(\boldsymbol{\phi}) & = \frac{p(\mathbf{y}_{1:T}|\boldsymbol{\phi})\pi(\boldsymbol{\phi})}{p(\mathbf{y}_{1:T})}, \,\, p(\mathbf{y}_{1:T}) = \int_{\boldsymbol{\phi}}p(\mathbf{y}_{1:T}|\boldsymbol{\phi})\pi(\boldsymbol{\phi})d\boldsymbol{\phi},
\end{align}
where $\pi(\boldsymbol{\phi})$ denotes the prior distribution, $p(\mathbf{y}_{1:T}|\boldsymbol{\phi})$ is the likelihood function in \eqref{eq:likelihood} conditional on the observed data $\mathbf{y}_{1:T}=(y_1,\dots,y_T)^\top$, and $p(\mathbf{y}_{1:T})$ is the marginal likelihood. Specifically, we approximate the posterior distribution in \eqref{eq:posterior} using a Gaussian variational approximation with density
$$\mathcal{Q}_{\boldsymbol{\lambda}}(\boldsymbol{\phi}) = \mathcal{N}(\boldsymbol{\phi}|\boldsymbol{\mu}_{\mathrm{VB}},\boldsymbol{\Sigma}_{\mathrm{VB}}),$$
where the variational mean $\boldsymbol{\mu}_{\mathrm{VB}}$ and covariance matrix $\boldsymbol{\Sigma}_{\mathrm{VB}}$ constitute the variational parameters, which we collectively denote by
$\boldsymbol{\lambda}$. 

Gaussian variational approximation aims to find the variational parameters that minimise the Kullback-Leibler divergence from the variational posterior $\mathcal{Q}_{\boldsymbol{\lambda}}(\boldsymbol{\phi})$ to the posterior $\pi_T(\boldsymbol{\phi})$. Equivalently, the problem can be formulated as maximising the so-called evidence lower bound (ELBO); see, for example, \cite{blei2017variational}. The ELBO is defined as
\begin{align}\label{eq:lower_bound}
\mathrm{ELBO}(\boldsymbol{\lambda})=\mathbb{E}_{\mathcal{Q}_{\boldsymbol{\lambda}}}\left(\log p(\mathbf{y}_{1:T}|\boldsymbol{\phi}) + \log \pi(\boldsymbol{\phi})-\log \mathcal{Q}_{\boldsymbol{\lambda}}(\boldsymbol{\phi})\right),
\end{align}
where the expectation is taken with respect to the variational distribution with fixed variational parameters. The ELBO satistifies the inequality $$\log p(\mathbf{y}_{1:T})\geq \mathrm{ELBO}(\boldsymbol{\lambda}),$$
for all $\boldsymbol{\lambda}$, with the marginal likelihood $p(\mathbf{y}_{1:T})$ defined in \eqref{eq:posterior}. The optimal variational approximation is given by $\mathcal{Q}_{\boldsymbol{\lambda}^\star}(\boldsymbol{\phi})$, where
\begin{align}\label{eq:optim_lower_bound}
    \boldsymbol{\lambda}^\star =  \arg \max_{\boldsymbol{\lambda}} \mathrm{ELBO}(\boldsymbol{\lambda}).
\end{align}
There are several challenges with maximising \eqref{eq:optim_lower_bound} in a Gaussian variational approximation setting, and more generally. First, the number of variational parameters collected in $\boldsymbol{\lambda}$ grows quadratically with the dimension of $\boldsymbol{\phi}$ due to the covariance matrix. Parsimonious covariance matrix parameterisations have been proposed to address this issue \citep{ong2018gaussian,quiroz2023gauss}. Second, the expectations appearing in the ELBO and its gradient typically cannot be evaluated analytically and must be estimated by Monte Carlo methods. In practice, this is often done using the so-called reparameterisation trick \citep{Kingma2013}, which is commonly employed for variance reduction; see, for example, \cite{xu2019variance}. Third, even after applying the reparameterisation trick, repeated evaluation of the log-likelihood $\log p(\mathbf{y}_{1:T}|\boldsymbol{\phi})$ may be prohibitively expensive for large sample sizes and can dominate the computational cost. To accelerate computation, \cite{Titsias2014doubly} propose estimating both the log-likelihood and its gradient
by uniformly subsampling the data, alongside a reparameterised Monte Carlo estimator for the expectation in the ELBO. In addition, we follow \cite{gunawan2017fast}, who restrict the variational covariance matrix to a rank-1 plus isotropic structure, yielding a tractable expression for the natural gradient \citep{amari1998natural} of the ELBO. Optimisation using the natural gradient, rather than the usual Euclidean gradient, can be interpreted as a second-order method, as it rescales the gradient using the local curvature induced by the Fisher information metric \citep{martens2020insights}. 

\subsection{Stabilised weighted subsampling for scalable variational inference}
To alleviate the computational burden arising from repeated evaluation of the log-likelihood and its gradient within variational inference, we incorporate our stabilised weighted subsampling strategy into the doubly stochastic framework of \cite{gunawan2017fast}. Their approach, developed for independent data, employs uniform sampling weights, which are appropriate in that setting. However, as argued in our paper, uniform sampling is not computationally efficient for recursive likelihood models. We therefore replace the uniform sampling weights with truncated decaying sampling probabilities and construct corresponding stabilised weighted subsampling estimators of the log-likelihood and its gradient. We compare our approach both against uniform sampling (corresponding to the original implementation in \citealt{gunawan2017fast}) as well as against a variational implementation based on full-data evaluation of the log-likelihood and its gradient. In the latter case, the only source of stochasticity is the Monte Carlo expectation estimated using the reparameterisation trick, which we refer to as a singly stochastic approach. 

\subsection{Specifications and settings}\label{subsec:spec_settings_App1}
To improve the Gaussian variational approximation and to facilitate the ELBO optimisation, we work with an unconstrained parameterisation 
$$\boldsymbol{\phi}=(\phi_\mu,\phi_\omega,\phi_{\alpha_1} ,\phi_{\beta_1})^\top,$$
where
\begin{equation}\label{eq:transformations_main_paper}
\phi_\mu=\mu, \,\, \phi_{\omega}=\log(\omega), \,\, \phi_{\alpha_1}=\log(\alpha_1), \,\, \phi_{\beta_1}=\log(\beta_1).
\end{equation}
These transformations map the original parameters $\mu\in \mathbb{R},\omega > 0, \alpha_1 \geq 0$, and $\beta_1 \geq 0$, to an unrestricted space. 

For interpretability, the prior distribution is imposed on the original parameter vector, $$\boldsymbol{\theta} =(\mu,\omega,\alpha_1,\beta_1)^\top.$$
Specifically, we assign independent marginal priors,
$$\mu \sim \mathcal{N}(0,10^2),\,\, \omega \sim \text{Half-Normal}(1),\,\,\alpha_1 \sim \text{Half-Normal}(0.2),\,\, \beta_1 \sim \text{Half-Normal}(0.8),$$
where \text{Half-Normal} denotes the half-normal distribution parameterised in terms of its scale (standard deviation). The joint prior is supported on the stationary region
$\alpha_1 + \beta_1 < 1$ and assigns zero density outside this region. The implied prior distribution on the reparameterised parameter vector $\boldsymbol{\phi}$ is induced via the transformations in \eqref{eq:transformations_main_paper}.

We follow \cite{gunawan2017fast} and parameterise the variational covariance matrix using a rank-1 plus isotropic structure,
$$\boldsymbol{\Sigma}_{\mathrm{VB}}=\mathbf{B}\mathbf{B}^\top + d^2\mathbf{I}, \,\, \mathbf{B} \in \mathbb{R}^{\dim_{\boldsymbol{\phi}}\times 1}, \,\, d\in \mathbb{R},$$
where $\mathbf{I}\in\mathbb{R}^{\dim_{\boldsymbol{\phi}}\times \dim_{\boldsymbol{\phi}}}$ is the identity matrix. For optimisation of the ELBO in \eqref{eq:lower_bound}, the expectation appearing in its gradient is reparameterised using the reparameterisation trick and estimated via Monte Carlo sampling with 1 draw per iteration. We run all optimisers for $5{,}000$ iterations. The variational mean $\boldsymbol{\mu}_{\mathrm{VB}}$ is initialised at the maximum a posteriori (MAP) estimate, obtained using the basin-hopping global optimisation algorithm \citep{wales1997global} with five iterations to reduce the risk of converging to poor local modes. The variational covariance parameters are initialised with $\mathbf{B}=\mathbf{0}$ and $d^2$ is obtained from a down-scaled version of the diagonal of the posterior covariance estimate from a Laplace approximation. For stability, the first $500$ iterations use the Euclidean gradient together with the Adam optimiser \citep{KingmaBa2015Adam} with its default hyperparameter settings, while the remaining iterations employ the natural gradient derived in \cite{gunawan2017fast}. To monitor optimisation progress, the reparameterised ELBO is estimated every 25 iterations using 5 Monte Carlo draws. The final variational parameter values are computed using Polyak averaging \citep{polyak1992acceleration} over the last 500 iterations.

For our stabilised weighted subsampling methodology, we construct the control variates using $\boldsymbol{\phi}^\star$ equal to the MAP estimate obtained above, and include the additional central processing unit (CPU) time required for their construction in the total computational cost. Moreover, we apply the tuning guidelines described in Section \ref{subsec:balancing_variance_and_computational_cost}. We experiment with three values of $R_{\max}=1/c_{\min}$, corresponding to $c_{\min}=0.001, 0.01, 0.1$. The algorithm is tuned at a parameter value $\boldsymbol{\phi}^{\dagger}$ in \eqref{eq:phi_tune} using $N_{\mathrm{tune}}=20$ samples obtained by thinning a pilot full-data Markov chain Monte Carlo run of length 100. The computational cost of this pilot run and the associated tuning is included in the total cost of the method. At the resulting optimal values, the variance tolerances are approximately $V \approx 250, 25, 2.5$ for $R_{\max}=1000,100,10$, respectively. Figure \ref{fig:expected_cost_var_constraint} displays the optimisation objective $\mathbb{E}(u_{\max})$, normalised by $T$, as a function of the tail-floor fraction $c$ and subsample size $m$, where $m$ is treated as continuous for visualisation. 

\begin{figure}[h]
    \centering
    \includegraphics[width=1.0\textwidth]{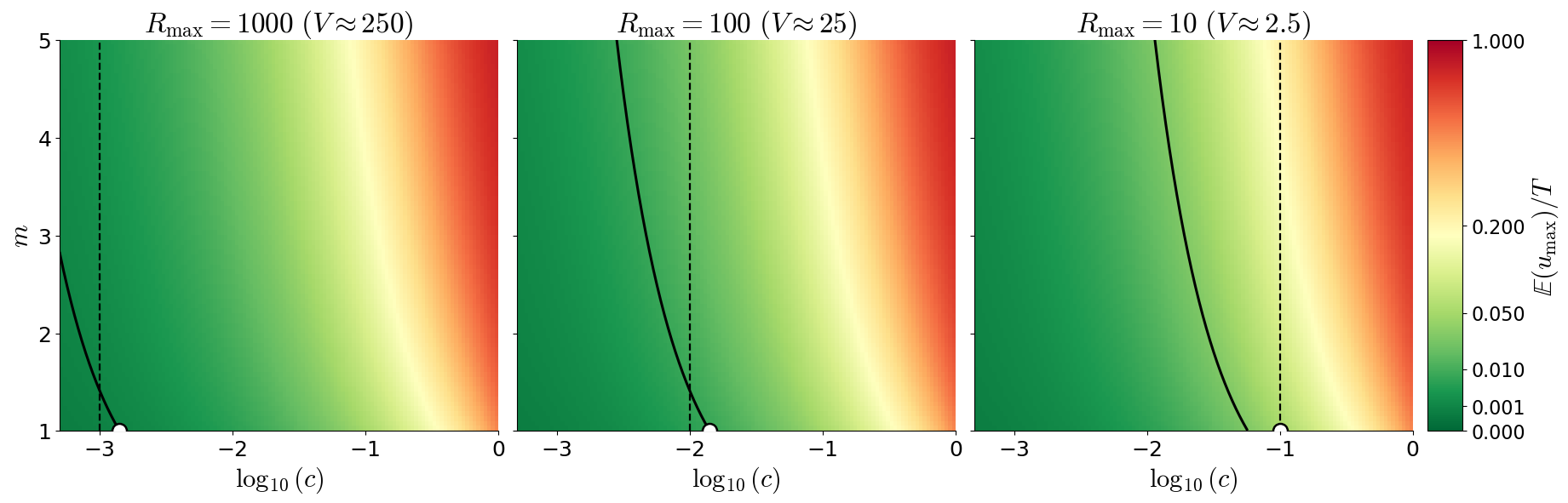}\caption{Constrained objective function $\mathbb{E}(u_{\max})$ defined in \eqref{eq:miminises_compute_subject_variance} (normalised by $T$, with $T=100{,}000$) and described in Section \ref{subsec:balancing_variance_and_computational_cost}, shown as a function of $c$ and $m$ (where $m$ is treated as continuous for visualisation). The black curve marks the variance constraint, and the dashed vertical line marks the safeguard lower bound constraint. Results are shown for three choices of $R_{\max}=1/c_{\min}$ (see panel titles), with $b=100$ and $t^\star=1{,}000$. The white circles indicate the optima for each setting. The colour scale represents the expected computational cost ratio, where red indicates no compute savings relative to the full dataset of size $T$, and green indicates increasing compute savings.}
    \label{fig:expected_cost_var_constraint}
\end{figure}

The figure extends Figure \ref{fig:E_umax_c_m} by additionally showing the variance tolerance constraint (black curve), induced by the variance tolerances $V \approx 250, 25, 2.5$, obtained via variance inflation factors $R_{\max}=1000,100,10$ corresponding to $c_{\min}=0.001, 0.01, 0.1$, as well as the safeguard lower bound $c_{\min}$ (dashed vertical line). The feasibility region corresponds to the set of $(c,m)$ values lying above the variance constraint curve and to the right of the safeguard lower bound. The optimal tuning parameters (white circles) lie on the boundary of the feasibility region (either the variance constraint or the safeguard lower bound), and the expected computational cost decreases as the variance tolerance increases and the safeguard lower bound decreases. The variance constraint is binding at the optimum for $V \approx 250, 25$, whereas for $V \approx 2.5$ the safeguard lower bound binds first, resulting in a variance of $1.39$ at the optimum. These findings are consistent with the theoretical results in Section \ref{subsec:balancing_variance_and_computational_cost}. At the resulting optimal values, the expected computational costs are approximately $5{,}038$, $735$, and $94$ for $V\approx 2.5,25,250$, respectively, which are substantially lower than the cost of processing the full dataset ($T=100{,}000$). Note that in general the optimal $m$ need not lie at the boundary $m=1$; its location depends on the interaction between the expected computational cost and the variance constraint, the latter depending on the control variates and the parameter value $\boldsymbol{\phi}^\dagger$ at which the algorithm is tuned.

We run doubly stochastic variational optimisers targeting the three variance tolerances above (by tuning $c,m$, which determine the sampling scheme). In addition, we run a doubly stochastic variational optimiser with uniform sampling weights, corresponding to the implementation in \cite{gunawan2017fast}. Applying the tuning guidelines for uniform sampling in Section \ref{subsec:balancing_variance_and_computational_cost} yields $m^{\star}=1$ for this optimiser under a variance tolerance of $\approx 0.25$ (obtained using $R_{\max}=1$). The variance at the optimum is $0.14$, and the variance tolerance constraint cannot bind since $m$ cannot be smaller than $1$. Finally, as a reference, we run a full-data singly stochastic variational optimiser. Note that for all subsampling-based variational configurations considered, the estimator of the gradient of the log-likelihood is unbiased. Consequently, the estimator of the ELBO gradient is also unbiased, and the optimisation falls within the Robbins–Monro stochastic approximation framework \citep{robbins1951stochastic}. We run a full-data Markov chain Monte Carlo with an adaptive proposal \citep{haario2001adaptive}, initialised with a scaled covariance matrix from a Laplace approximation, and obtain $10{,}000$ post-burn-in samples after discarding $2{,}000$ iterations. This chain serves as ground truth for the posterior distribution.

All experiments are conducted on a Linux workstation equipped with an Intel Core Ultra 7 268V processor and 32 gigabytes of random-access memory (RAM).

\subsection{Results}\label{subsec:results_VB_app1}

We first provide some empirical results concerning the theoretical upper bound introduced in Section \ref{subsec:TPD_probs}. The approximate invariance of $R_{\max}$ across $\boldsymbol{\phi}$ is confirmed by the empirical variance inflation ratios (TPD variance with $c_{\min}$ relative to uniform sampling) evaluated over the $N_{\mathrm{tune}}=20$ pilot samples. The resulting ranges are $[9.89,10.11]$, $[98.87,101.27]$, $[988.69,1012.78]$ for $R_{\max}=10,100,1000$, respectively, which are very close to their targeted values. Moreover, the tightness of the approximate bound $1/c$ is supported by the empirical results. In particular, across the pilot samples, the variance inflation ratios (TPD variance with $c^{\star}(\boldsymbol{\phi})$ relative to uniform sampling) closely track the corresponding bounds $1/c^{\star}(\boldsymbol{\phi})$, with median ratios within approximately $0.2\%$ of the corresponding bound in all three cases and maximum absolute deviations of 0.11, 1.25, and 12.62 for $R_{\max}=10,100,1000$, respectively.

We now turn to the stochastic optimisation results. To account for variability in the stochastic optimisation procedures, we repeat the optimisation $n_{\mathrm{rep}}=6$ times with different random seeds, with all runs using the same initialisation described in Section \ref{subsec:spec_settings_App1}. The ELBOs (estimated using the full-data log-posterior at the variational optima), CPU times, and compute fraction (CF) are summarised in Table \ref{tab:VB_app_results}; see the caption for details. 

\begin{table}[h]
\centering
\small
\caption{Results for the GARCH($1,1$) model with normal errors fitted to the Dow Jones data. We compare subsampling variational Bayes (VB) under different sampling schemes, tuned via the safeguard parameter $c_{\min}$ (equivalently $R_{\max}=1/c_{\min}$), which determines the implied variance tolerance $V$. Variational optima are obtained using $5{,}000$ optimisation iterations with Polyak averaging over the final $500$ iterations. For each method, \(n_{\mathrm{rep}}=6\) independent stochastic optimisations are performed from a common initialisation. All ELBO values are estimated via the reparameterisation trick using $100$ Monte Carlo draws from the fitted variational posterior (obtained via either full-data VB or subsampling VB), evaluating the full-data log-likelihood at each draw. Best ELBO denotes the largest ELBO across the $n_{\mathrm{rep}}$ runs, with Monte Carlo standard error of the ELBO estimator (based on the reparameterisation trick) shown in parentheses. Median CPU is the median wall-clock optimisation time (seconds) across runs. Compute fraction (CF) accounts for log-density evaluations (used for ELBO monitoring), gradient evaluations, and tuning overhead for the subsampling algorithms, and is normalised so that full-data VB equals 1. TPD denotes truncated power-law decaying weights. Uniform corresponds to constant (non-decaying) weights (TPD with $\gamma=0$), i.e.\ uniform sampling, and coincides with the original algorithm in \cite{gunawan2017fast}.}
\begin{tabular}{l c c c c r r}
\toprule
Method
& $R_{\max}$
& $V$
& \makecell[c]{Best \\ ELBO}
& \makecell[c]{Median \\ ELBO}
& \makecell[c]{Median \\ CPU}
& \makecell[c]{Median \\ CF} \\
\midrule

Full-data VB
& --
& --
& \makecell[c]{-114663.6 \\ (0.39)}
& -114664.9
& 73375
& 1.00000 \\
\midrule

Subsampling VB
& & & & & & \\
\midrule

\quad Uniform
& $1$
& $\approx 0.25$
& \makecell[c]{-114664.0 \\ (0.35)}
& -114667.1
& 23548
& 0.51993 \\
\addlinespace[4pt]

\quad TPD
& $10$
& $\approx 2.5$
& \makecell[c]{-114663.4 \\ (0.31)}
& -114664.8
& 2677
& 0.06630 \\
\addlinespace[4pt]

\quad TPD
& $100$
& $\approx 25$
& \makecell[c]{-114663.6 \\ (0.29)}
& -114664.0
& 678
& 0.02393 \\
\addlinespace[4pt]

\quad TPD
& $1000$
& $\approx 250$
& \makecell[c]{-114663.3 \\ (0.26)}
& -114663.6
& 368
& 0.01737 \\
\bottomrule
\end{tabular}\label{tab:VB_app_results}
\end{table}

The table shows that subsampling VB with truncated power-law decaying weights yields substantial computational speed-ups relative to both full-data VB and subsampling VB under uniform sampling. In addition, the median ELBO improves as $R_{\max}$ increases, indicating more stable optimisation outcomes in typical runs (see the discussion below). Relative to the full-data method, the TPD variants are approximately 27, 108, and 200 times faster for $R_{\max}=10,100,1000$, respectively, while using only $1.7$--$6.6\%$ of the full-data computation. This speed-up is attributed to the decay mechanism, which produces extremely small realised values of the largest subsample index $u_{\max}$ on average; see Figure \ref{fig:u_max} for an illustration.

Figure \ref{fig:App1_VB_delta_elbo} assesses the robustness of the variational optimisers. It displays boxplots of
\begin{align}\label{eq:delta_ELBO}
\Delta \mathrm{ELBO} & = \mathrm{ELBO} - \operatorname{median}\bigl(\mathrm{ELBO}_{\text{Full-data}}\bigr),
\end{align}
where the median is taken over $n_{\mathrm{rep}}=6$ full-data variational Bayes runs. 

\begin{figure}[h]
    \centering    \includegraphics[width=1\textwidth]{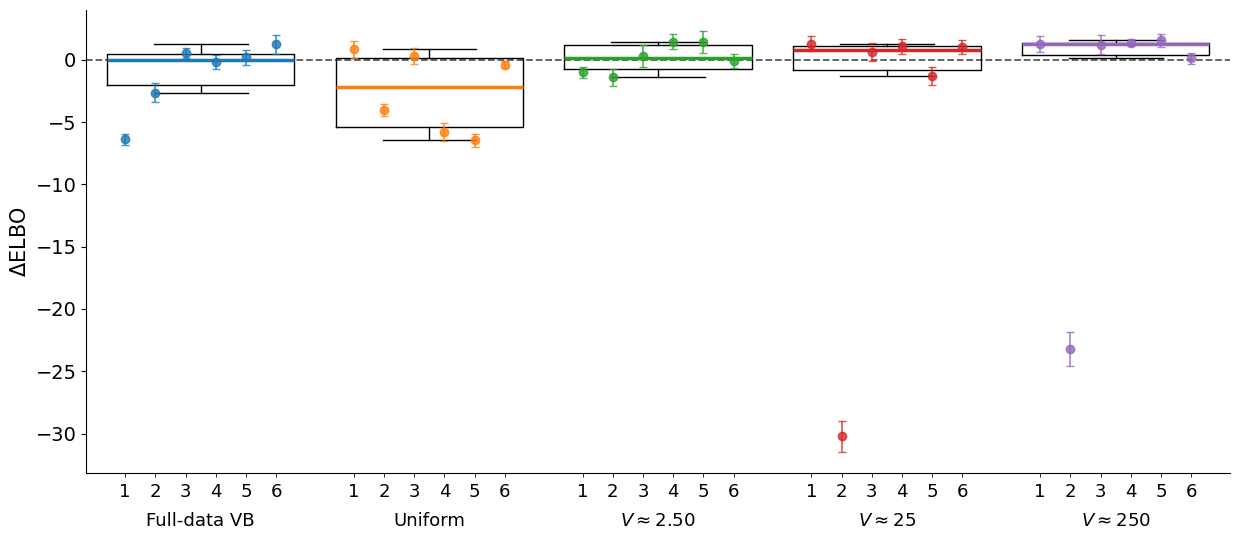}
    \caption{Results for the GARCH($1,1$) model with normal errors fitted to the Dow Jones data. The figure shows boxplots of the evidence lower bound (ELBO) differences $\Delta \mathrm{ELBO}$ defined in \eqref{eq:delta_ELBO}, where positive values indicate ELBO values higher than the median full-data VB ELBO, which is marked with a dashed horizontal line. Subsampling variational Bayes (VB) results are shown for different sampling schemes, with full-data VB as the reference. Points correspond to the $n_{\mathrm{rep}}=6$ optimisation runs, with vertical bars indicating approximate 95\% confidence intervals for the ELBO estimates.}
    \label{fig:App1_VB_delta_elbo}
\end{figure}

The individual replication estimates \ref{fig:App1_VB_delta_elbo} are overlaid together with approximate $95\%$ confidence intervals for the ELBO estimates based on asymptotic normality. Positive values of $\Delta \mathrm{ELBO}$ correspond to ELBO values higher than the median of the full-data variational Bayes runs, whereas negative values correspond to lower ELBO values. We first consider the full-data singly stochastic VB case, where the only source of noise in the ELBO estimator arises from the Monte Carlo estimation using the reparameterisation trick. The figure shows that the six optimisation runs converge to several distinct ELBO levels, indicating that the stochasticity in the optimisation procedure can lead to convergence to different local solutions. This behaviour is consistent with the non-concave optimisation landscape induced by the combination of a rank-1 plus isotropic covariance structure in the Gaussian variational approximation and the ridge-like geometry of the GARCH($1,1$) posterior. For uniform sampling, additional stochasticity is introduced through the subsampling-based gradient estimator. However, the resulting optimisation outcomes are both more variable and systematically worse than those of full-data VB, with some runs converging to lower ELBO values. It is worth noting that lower variance of the stochastic gradient estimator does not necessarily imply better optimisation outcomes in this non-convex setting. As illustrated in Figure \ref{fig:App1_VB_delta_elbo}, moderate additional stochasticity can help the optimiser escape poorer local optima, whereas excessive stochasticity may instead increase the risk of convergence to substantially poorer solutions. Consistent with this interpretation, the TPD schemes exhibit a clear shift in the distribution of optimisation outcomes as $R_{\max}$ increases (which corresponds to a higher variance tolerance). The median $\Delta$ELBO improves and the bulk of the distribution becomes more concentrated, indicating improved typical optimisation performance. At the same time, the tendency for extreme outcomes also increases, reflected in a heavier lower tail of $\Delta$ELBO for $R_{\max}=100,1000$. This behaviour is examined further in Table \ref{tab:vb_robustness}, which summarises the results over 50 independent runs ($n_{\mathrm{rep}}=50$). 

\begin{table}[h]
\centering
\caption{Robustness summary over 50 independent runs for the two larger variance settings of the truncated power-law decaying (TPD) sampling scheme, complementing the results in Figure \ref{fig:App1_VB_delta_elbo}. The table reports the median of $\Delta$ELBO, defined in \eqref{eq:delta_ELBO}, together with empirical probabilities of $\Delta$ELBO falling below selected thresholds.}
\begin{tabular}{lrrrr}
\toprule
Method & $\Delta$ELBO & $\Pr(\Delta \mathrm{ELBO} < -5)$ & $\Pr(\Delta \mathrm{ELBO} < -10)$ & $\Pr(\Delta \mathrm{ELBO} < -15)$ \\
\midrule
$V \approx 25$  & 1.16 & 0.08 & 0.06 & 0.06 \\
$V \approx 250$ & 0.93 & 0.10 & 0.08 & 0.08 \\
\bottomrule
\end{tabular}\label{tab:vb_robustness}
\end{table}

The table reports the median $\Delta$ELBO together with estimated probabilities of lower-tail events. The median $\Delta$ELBO is positive for both settings, indicating improved typical optimisation outcomes relative to the full-data median baseline. However, there is a non-negligible probability of lower-tail outcomes, which increases with $R_{\max}$. This reflects a trade-off: smaller safeguard values $c_{\min}$ (equivalently, larger $R_{\max}$, implying higher variance tolerance) lead to substantial computational gains, but also increase the risk of occasional poor optimisation outcomes. 

In the following posterior accuracy assessment, we select a representative run corresponding to the upper median ELBO across the $n_{\mathrm{rep}}=6$ repetitions. Figure \ref{fig:App1_VB_normal_kde} shows marginal kernel density estimates of the posterior distribution in the original parameter space $\boldsymbol{\theta}$, obtained from full-data Markov chain Monte Carlo (MCMC), which serves as ground truth, and from the variational approximations.

\begin{figure}[H]
    \centering    \includegraphics[width=0.9\textwidth]{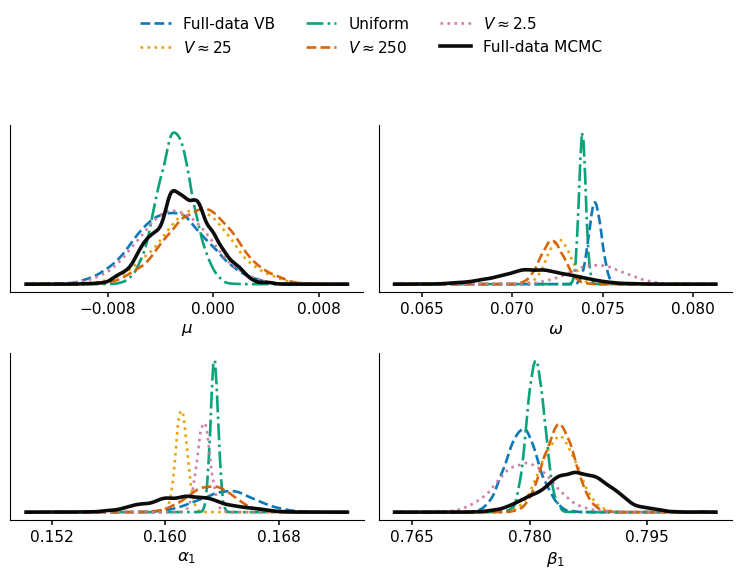}
    \caption{Posterior marginal distributions under the original parameterisation $\boldsymbol{\theta}$ for the GARCH($1,1$) model fitted to the Dow Jones data. The subsampling variational Bayes (VB) method with truncated decaying sampling probabilities is shown for $R_{\max}=10, 100, 1000$, corresponding to variance tolerances $V\approx 2.5, 25, 250$. The uniform sampling case corresponds to $R_{\max}=1$ with $V \approx 0.25$. The figure also shows the full-data VB approximate posterior, obtained using full-data gradients in the optimisation. The solid black curves represent the full-data Markov chain Monte Carlo (MCMC) posterior for reference.}
    \label{fig:App1_VB_normal_kde}
\end{figure}

We first note that the full-data VB and uniform subsampling approaches provide relatively poor fits compared to the full-data MCMC posterior. This is consistent with the optimisation results in Table \ref{tab:VB_app_results}, where these methods attain lower median ELBO values. In contrast, the subsampling VB methods with truncated power-law decaying weights yield posterior approximations that are more closely aligned with the full-data MCMC posterior, although some discrepancies remain. These discrepancies are likely due to the restricted covariance structure imposed by the rank-1 plus isotropic variational family. Using a more flexible covariance structure could potentially improve the approximation and is left for future research. We note that the subsampling Markov chain Monte Carlo approach using truncated power-law decaying weights, considered in Sections \ref{sec:MCMC_application}-\ref{sec:comparison_divide-and-conquer_MCMC}, provides more accurate uncertainty quantification and should be preferred when accurate posterior inference is the primary objective; see for example Figures \ref{fig:App2_MCMC_GARCH_11_norm_kde}--\ref{fig:App2_MCMC_TGARCH_11_Student-t_kde},  \ref{fig:kde_SG_MCMC_subsampling_full_data}, and \ref{fig:kde_DC_BATS_subsampling_full_data}. The purpose of the current example is to demonstrate that our methodology enables doubly stochastic variational inference for GARCH models and can substantially
accelerate variational optimisation relative to singly stochastic variational inference. Our experiments show that this approach often leads to improved typical optimisation outcomes, at the cost of an increased risk of occasional poor optimisation outcomes. Table \ref{tab:vb_robustness} shows that such lower-tail events occur with relatively low probability, indicating that the downside risk, while present, is moderate. Overall, we find that $R_{\max}=100$ provides a good balance between improved typical performance and downside risk, and may serve as a useful default in practice.

Finally, Section \ref{subsec:additional_results_App1} contains additional results for this application, including ELBO traces, $u_{\max}$ over iterations, and an extension with Student-$t$ errors.

\section{Conclusion and future research}\label{sec:conclusion_and_future_research}
We propose a data subsampling methodology for accelerating estimation in models with recursively defined likelihoods, based on an unbiased log-likelihood estimator that samples earlier observations with higher probability. Central to our approach is a truncated power-law decaying sampling scheme, in which the tail floor fraction governs the estimator variance and expected computational cost. Under fixed tuning parameters, the expected computational cost of the estimator grows linearly in $T$, with a provably smaller constant in the $\mathcal{O}(T)$ upper bound than that of uniform subsampling, which also exhibits linear scaling in $T$. We characterise additional implications of decaying sampling probabilities through a series of theoretical results, which lead to practical tuning guidelines that minimise expected computational cost subject to a variance tolerance constraint. 

We demonstrate the methodology within a subsampling-based variational inference framework applied to a GARCH model, enabling unbiased log-likelihood estimation in a setting that has been argued in the literature to be infeasible. The proposed method substantially outperforms both full-data variational inference and subsampling-based variational inference using uniform sampling probabilities. The supplement further includes an extensive subsampling Markov chain Monte Carlo application to GARCH models and a threshold extension thereof, covering both error specifications and out-of-sample model selection. We observe speed-ups of around $45$--$80\times$ in CPU time, while only using $1.5$--$2.6\%$ of the log-density evaluations required by full-data MCMC, with comparable accuracy. When embedded within subsampling MCMC, our approach compares favourably with stochastic gradient MCMC (Section \ref{sec:comparison_SG_MCMC}) and divide-and-conquer (Section \ref{sec:comparison_divide-and-conquer_MCMC}) approaches for temporally dependent data, exhibiting more stable and reliable inference while maintaining substantial computational gains in the settings considered. Finally, our analysis suggests that, as $T$ grows, the tail probability should scale as $\mathcal{O}(T^{-1})$ to keep the expected computational cost bounded. At the same time, our theoretical results show that smaller tail probabilities increase estimator variance. In the regimes considered in our paper, including applications with $T=100{,}000$, this effect is not observed to pose practical limitations.

The likelihood estimators developed above can be viewed as generic building blocks for subsampling-based inference with recursive likelihoods. While they are straightforward to plug into optimisation routines and inherit the usual stochastic optimisation convergence guarantees \citep{robbins1951stochastic}, the frequentist properties of the resulting parameter estimator remain a challenging direction for future research. Addressing this requires replacing the sample-conditional bounds derived in this paper with probabilistic bounds that jointly account for randomness in both the data and the subsample. Establishing consistency, rates of convergence, asymptotic normality, and valid confidence intervals under truncated decaying-weight sampling probabilities remains an important direction for future work. Another important direction is to extend our approach to models with latent state processes, thereby approaching the level of generality in \cite{aicher2025stochastic}.

%\section*{Use of artificial intelligence tools}

%ChatGPT (OpenAI, GPT-5.5) was used during manuscript preparation to assist with language editing, wording suggestions, document formatting, and software implementation. All scientific content, methodology, analyses, theoretical results, and conclusions were developed and verified by the authors, who take full responsibility for the final manuscript.

\putbib
\end{bibunit}

% Uncomment old references
%\bibliographystyle{apalike}
%\addcontentsline{toc}{section}{\refname}\bibliography{ref}

\clearpage
\appendix

\setcounter{page}{1}
\renewcommand{\thepage}{A\arabic{page}}

\begin{bibunit}

% For appendix: prefix with A
% Section = A, B, C, ...
\renewcommand{\thesection}{\Alph{section}}

% Subsection = A.1, A.2, ...
\renewcommand{\thesubsection}{\thesection.\arabic{subsection}}

% For equations: prefix with S<section>.<equation>
\renewcommand{\theequation}{A\arabic{section}.\arabic{equation}}

\renewcommand{\theproposition}{A\arabic{proposition}}

\renewcommand{\theremark}{A\arabic{remark}}

\renewcommand{\thelemma}{A\arabic{lemma}}

\setcounter{section}{0}
\setcounter{equation}{0}
\setcounter{lemma}{0}
\setcounter{remark}{0}
\setcounter{proposition}{0}

\section{Recursive likelihood examples: the GARCH family}\label{sec:VolatilityModels}
\subsection{Preliminaries}
Recall our working example, the general volatility model from Section \ref{subsec:general_model},
\begin{align}\label{eq:measurement_plus_deterministic_state_restated}
    y_t & = \mu + z_t, \,\, z_t = \sigma_t \varepsilon_t, \,\, \varepsilon_t \overset{\mathrm{ind}}{\sim} \mathcal{D}(0, 1),  \,\, \mu \in \mathbb{R}, \,\, \sigma_t > 0, \nonumber \\
    \sigma^2_t & = g(\boldsymbol{\theta}_v;\mathcal{F}_{t-1}).
\end{align}
Note that the conditional variance $\sigma^2_t$ in \eqref{eq:measurement_plus_deterministic_state_restated} is modelled deterministically (given $\mathcal{F}_{t-1}$); only the measurement equation has random errors $\varepsilon_t$. However, not conditioning on $\mathcal{F}_{t-1}$, we can think of the variance process $\sigma^2_t$ as random, where the randomness comes from past random errors $\varepsilon_s$, $s = 1, \dots t-1$. We denote the expectation of this variance process as $\mathbb{E}(\sigma^2_t)$, which can be derived assuming it does not depend on $t$. Moreover, it follows from the law of total expectation and variance that the unconditional mean $\mathbb{E}(y_t)=\mu$ and the unconditional variance $\mathbb{V}(y_t)=\mathbb{E}(\sigma^2_t)$ for $y_t$ in \eqref{eq:measurement_plus_deterministic_state_restated}. 

Below we present the conditional variance models used as working examples and derive the corresponding gradient and Hessian with respect to the original variance parameter vector $\boldsymbol{\theta}_v$, as well as the reparameterised $\boldsymbol{\phi}_v$. We also derive the associated gradients and Hessians of the log-densities under Gaussian and Student-$t$ innovations. These quantities are key components of our methodology for constructing the control variates. 

\cite{xuan2024stochastic} derive the gradient of the log-likelihood under Gaussian and Student-$t$ innovations, but assume a standard GARCH($1,1$) conditional variance structure and exclude a conditional mean. In addition to the gradient, \cite{fiorentini1996derivatives} derive the Hessian of the log-likelihood in a linear regression model with Gaussian errors following a standard GARCH($p,q$) specification. In contrast, we consider more complex GARCH processes, such as TGARCH($p,q$). We also note that automatic differentiation (see \citealp{baydin2018automatic} for a survey), when applied naively to GARCH-type log-densities, does not automatically exploit the recursive structure of the conditional variance updates, leading to unnecessary recomputation and increased computational cost.

\subsection{Conditional variance models}\label{subsec:conditional_variance}
\subsubsection{ARCH and GARCH}
In the autoregressive conditional heteroscedasticity (ARCH) \citep{engle1982autoregressive}, the conditional variance in \eqref{eq:measurement_plus_deterministic_state_restated} depends on past squared innovations $z_t$, 
\begin{align}\label{eq:ARCH}
    \sigma^2_t(\boldsymbol{\theta}_v) & = \omega + \sum_{i=1}^p \alpha_i z^2_{t-i} \nonumber \\
       & = \omega + \sum_{i=1}^p \alpha_i (y_{t-i} - \mu)^2,\,\,  \omega > 0, \,\, \alpha_i \geq 0,
\end{align}
denoted as ARCH($p$). Note that the conditional variance in \eqref{eq:ARCH} is not recursive and requires only $p$ lags of the series to evaluate for any $t$. Since $\mathbb{E}\left((y_{t-i} - \mu)^2 \right)=\mathbb{V}\left(y_{t-i}\right)=\mathbb{E}(\sigma^2_{t-i})$, and assuming $$\mathbb{E}(\sigma^2_t(\boldsymbol{\theta}_v)) = \mathbb{E}(\sigma^2_{t-1}(\boldsymbol{\theta}_v)) = \dots =\mathbb{E}(\sigma^2_{1}(\boldsymbol{\theta}_v)),$$
the unconditional variance is
$$\mathbb{E}(\sigma^2_{t}(\boldsymbol{\theta}_v)) = \frac{\omega}{1 - \sum_{i = 1}^p \alpha_i},$$
provided that $\sum_{i=1}^p \alpha_i < 1$.

The ARCH($p$)  may require a large $p$ to capture long-term volatility dynamics. \cite{bollerslev1986generalized} proposes adding lagged conditional variances in the ARCH structure. Specifically, with $q$ lags of the conditional variance, the GARCH($p,q$) model is defined as
\begin{align}\label{eq:GARCH}
    \sigma^2_t(\boldsymbol{\theta}_v) & = \omega + \sum_{i=1}^p \alpha_i (y_{t-i} - \mu)^2 +  \sum_{j=1}^q \beta_j \sigma^2_{t-j}, \,\,  \omega > 0, \,\, \alpha_i \geq 0, \,\, \beta_j \geq 0,
\end{align}
 and the variance parameters are $\boldsymbol{\theta}_v = (\mu, \omega, \boldsymbol{\alpha}^\top, \boldsymbol{\beta}^\top)^\top$, with $\boldsymbol{\alpha} = (\alpha_1, \dots, \alpha_p)^\top$ and $\boldsymbol{\beta} = (\beta_1, \dots, \beta_q)^\top$. Similarly to the ARCH case, if the unconditional variance is constant, it follows that 
 \begin{align}\label{eq:GARCH_uncond_var}
     \mathbb{E}(\sigma^2_{t}(\boldsymbol{\theta}_v)) = \frac{\omega}{1 - \sum_{i = 1}^p \alpha_i - \sum_{j = 1}^q \beta_j},
 \end{align}
 provided that $\sum_{i=1}^p \alpha_i + \sum_{j=1}^q \beta_j < 1$, which defines the stationary region of the GARCH($p,q$) model. As argued in Section \ref{subsec:likelihood}, a recursive structure of the conditional variance, such as that in \eqref{eq:GARCH}, implies that evaluating the log-density of an observation $y_t$, which contributes a single term in the log-likelihood, requires access to $y_1, \dots, y_t$, leading to a growing computational cost as $t$ increases.

Proposition \ref{prop:grad_Hess_GARCH} in Appendix \ref{app:GARCH_grad_Hess} provides the recursive gradient and Hessian computation for the GARCH model.

\subsubsection{TGARCH}\label{subsec:TGARCH}
The GARCH($p,q$) model treats positive and negative shocks $z_t = y_t - \mu$ symmetrically: the square in \eqref{eq:GARCH} removes the sign of the shock. In real applications, negative shocks (for example, bad news) tend to increase volatility more than positive shocks (good news). 

One approach to capture asymmetric effects of shocks on volatility, and in particular emphasise the effects of a negative shock, is the threshold GARCH model \citep{glosten1993relation, zakoian1994threshold}. The TGARCH($p,q$) model is defined as
\begin{align}\label{eq:TGARCH}
    \sigma^2_t(\boldsymbol{\theta}_v) & = \omega + \sum_{i=1}^p \left\{ \alpha_i (y_{t-i} - \mu)^2 + \gamma_i (y_{t-i} - \mu)^2\mathbbm{1}(y_{t-i} < \mu)  \right\}  + \sum_{j=1}^q \beta_j \sigma^2_{t-j},
\end{align}
where $\boldsymbol{\theta}_v = (\mu, \omega, \boldsymbol{\alpha}^\top, \boldsymbol{\gamma}^\top, \boldsymbol{\beta}^\top)^\top$, with $\boldsymbol{\alpha} = (\alpha_1, \dots, \alpha_p)^\top$, $\boldsymbol{\gamma} = (\gamma_1, \dots, \gamma_p)^\top$,
and $\boldsymbol{\beta} = (\beta_1, \dots, \beta_q)^\top$, with $\omega > 0$, $\alpha_i, \gamma_i \geq 0$, and $\beta_j \geq 0$, and $\mathbbm{1}(A)$ is 1 if the event $A$ occurs, and 0 otherwise. In \eqref{eq:TGARCH}, $\gamma_i$ controls how much a negative shock (defined as the observation being below its overall mean) at period $t-i$ affects the present volatility.

Unlike the standard GARCH model, deriving the unconditional variance for the TGARCH process is less straightforward, since the indicator function in the volatility process introduces an additional nonlinear component. \cite{zakoian1994threshold} derives this result for a general setting. In our case, the derivation is simplified because the distribution $\mathcal{D}(0, 1)$ in \eqref{eq:measurement_plus_deterministic_state_restated} is symmetric. Then, as shown in Section \ref{supp:TGARCH_unconditional_var}, the unconditional variance is given by
\begin{align}
\mathbb{E}(\sigma^2_t(\boldsymbol{\theta}_v))
= \frac{\omega}{1 - \sum_{i = 1}^p \alpha_i - \frac{1}{2} \sum_{i = 1}^p \gamma_i - \sum_{j = 1}^q \beta_j},    
\end{align}
under the stationary condition $ \sum_{i = 1}^p \alpha_i + \frac{1}{2} \sum_{i = 1}^p \gamma_i +\sum_{j = 1}^q \beta_j <1$.

Proposition \ref{prop:grad_Hess_TGARCH} in Appendix \ref{app:TGARCH_grad_Hess} provides the recursive gradient and Hessian computation for the threshold GARCH model.

\subsubsection{Reparameterised gradients and Hessians of the variance process}\label{sec:grad_hess_sigma2_reparam}

Propositions \ref{prop:grad_Hess_GARCH} and \ref{prop:grad_Hess_TGARCH} give recursive formulas for computing the gradients and Hessians of the GARCH and TGARCH conditional variance processes with respect to the original parameter $\boldsymbol{\theta}_v$. Lemma \ref{lem:reparam_grad_Hess_variance} in Appendix \ref{app:reparam_grad_Hess} shows how to obtain the corresponding gradients and Hessians with respect to the reparameterised parameter $\boldsymbol{\phi}_v$. Section \ref{supp:reparam_ex_GARCH} provides an example of the reparameterised expressions under commonly used transformations for the GARCH process, including those employed in the application in Section \ref{sec:VB_application}.

\subsection{Conditional log-density of the measurements}\label{subsec:grad_Hess_log_dens}

The distribution of the error term $\varepsilon_t$ determines the conditional distribution of the measurements in \eqref{eq:measurement_plus_deterministic_state_restated}, denoted $p(y_t|\mathcal{F}_{t-1}, \boldsymbol{\theta})$. Recall that the original parameter vector is $$\boldsymbol{\theta} = (\boldsymbol{\theta}_v^\top, \theta_\varepsilon)^\top \in \boldsymbol{\Theta} \subseteq \mathbb{R}^{\dim_{\boldsymbol{\theta}}},$$ where $\boldsymbol{\theta}_v$ collects the original parameters entering the conditional variance equation, and $\theta_\varepsilon$ denotes a possible additional original measurement parameter (for example the degrees of freedom in Student-$t$; absent in the Gaussian case). 

Define the conditional log-density at time $t$ as
\begin{align}\label{eq:log_dens_general}
    \ell_t(\boldsymbol{\theta}) = \log p(y_t|\mathcal{F}_{t-1}, \boldsymbol{\theta}).
\end{align}
As shown below, $\ell_t$ in \eqref{eq:log_dens_general} depends on the conditional mean $\mu$ and the conditional variance $\sigma^2_t$ in Appendix \ref{subsec:conditional_variance}, as well as on $\theta_{\varepsilon}=\nu$ when a Student-$t$ density is employed, where $\nu$ denotes the degrees of freedom.

\subsubsection{Gaussian model}\label{eq:Gaussian_model}
For the standard normal distribution, $\varepsilon_t \sim \mathcal{N}(0, 1)$, $\boldsymbol{\theta} = \boldsymbol{\theta}_v$,
\begin{align*}
  p(y_t|\mathcal{F}_{t-1}, \boldsymbol{\theta}) = \frac{1}{\sqrt{2\pi\sigma^2_t}}\exp\left( -\frac{1}{2\sigma_t^2}(y_t - \mu)^2\right),
\end{align*}
or equivalently,
\begin{align}\label{eq:Gauss_log_density}
  \ell_t(\sigma^2_t, \mu) = \log p(y_t|\mathcal{F}_{t-1}, \boldsymbol{\theta}) = -\frac{1}{2}\log(2\pi) - \frac{1}{2}\log(\sigma^2_t) - \frac{1}{2\sigma^2_t}(y_t - \mu)^2.
\end{align}
Proposition \ref{supp:prop_partial_Gauss} in Appendix \ref{app:measurement_equation} gives the partial derivatives with respect to $\mu$ and $\sigma^2_t$.

\subsubsection{Student-$t$ model}
For the standard Student-$t$ distribution, $\varepsilon_t \sim t_\nu(0, 1)$, with degrees of freedom $\nu$, $\boldsymbol{\theta} = (\boldsymbol{\theta}_v,\nu)^\top$,
$$f_{\varepsilon_t}(x) = \frac{\Gamma\left(\frac{\nu + 1}{2}\right)}{\Gamma\left(\frac{\nu}{2}\right)\sqrt{\pi(\nu - 2)}}\left( 1 +  \frac{x^2}{\nu - 2} \right)^{-\frac{\nu + 1}{2}},$$
where $\Gamma()$ is the gamma function. Using a change of variables to standardise,
\begin{align*}
  p(y_t|\mathcal{F}_{t-1}, \boldsymbol{\theta}) & = \frac{1}{\sigma_t}f_{\varepsilon_t}\left(\frac{y_t - \mu}{\sigma_t} \right) = \frac{\Gamma\left(\frac{\nu + 1}{2}\right)}{\sigma_t\Gamma\left(\frac{\nu}{2}\right)\sqrt{\pi(\nu - 2)}}\left( 1 +  \frac{(y_t - \mu)^2}{\sigma_t^2(\nu - 2)} \right)^{-\frac{\nu + 1}{2}},
\end{align*}
or equivalently,
\begin{align}\label{eq:Studentt_log_density}
  \ell(\sigma_t^2, \mu, \nu) = \log p(y_t|\mathcal{F}_{t-1}, \boldsymbol{\theta}) & = 
  \log \Gamma\left(\frac{\nu + 1}{2}\right) - \log\Gamma\left(\frac{\nu}{2}\right) - \frac{1}{2}\log(\pi(\nu - 2)) \nonumber \\
  & \phantom{-} -\frac{1}{2}\log(\sigma^2_t) 
  -\frac{(\nu + 1)}{2}\log \left( 1 +  \frac{(y_t - \mu)^2}{\sigma_t^2(\nu - 2)} \right).
\end{align}
Proposition \ref{supp:prop_partial_Studentt} in Appendix \ref{app:measurement_equation} gives the partial derivatives with respect to $\mu$,  $\sigma^2_t$, and $\nu$.

\subsubsection{Gradient and Hessian of the log-density}
The gradient and Hessian of $\ell_t(\boldsymbol{\theta})$ depend on the choice of GARCH process and the assumed measurement error. The corresponding expressions are given in Proposition \ref{prop:grad_Hess_log_dens} in Appendix \ref{app:measurement_equation}. They are functions of the gradient and Hessian of the variance process with respect to the original parameters $\boldsymbol{\theta}_v$, which are computed recursively as in Propositions
\ref{prop:grad_Hess_GARCH} and \ref{prop:grad_Hess_TGARCH} in Appendix \ref{app:GARCH_grad_Hess} and Appendix \ref{app:TGARCH_grad_Hess}.

\subsubsection{Reparameterised gradients and Hessians of the log-density}
Lemma \ref{lem:reparam_grad_Hess_log_dens} in Appendix \ref{app:reparam_grad_Hess} shows how to obtain gradients and Hessians of the log-density with respect to the reparameterised parameter $\boldsymbol{\phi}$. These quantities depend, among other terms, on the gradient and Hessian of the reparameterised variance process, obtained by the reparameterisation mappings in Lemma \ref{lem:reparam_grad_Hess_variance}.

\subsection{Computational considerations}\label{subsec:computational_considerations}
Naively evaluating each log-density $\ell_t(\boldsymbol{\theta})$, or its gradient $\nabla_{\boldsymbol{\theta}}\ell_t(\boldsymbol{\theta})$ and Hessian $\nabla^2_{\boldsymbol{\theta}}\ell_t(\boldsymbol{\theta})$ separately for all $t$ incurs unnecessary repeated recursive computations. Algorithm \ref{alg:online_eval} outlines an online evaluation of the log-densities and their derivatives, where at each time point $t$ only a small number of recent values are required, thereby avoiding recomputation over the full history. The per-time-step computation in Algorithm \ref{alg:online_eval} is $\mathcal{O}(\max(p, q))$, accounting for the recursive evaluation of the conditional variance and its derivatives (cost $\mathcal{O}({\max(p, q)})$, together with the evaluation of the current log-likelihood contribution $\ell_t$ and its derivatives. The total cost over $T$ time steps is $\mathcal{O}(T\max(p, q ))$, which simplifies to $\mathcal{O}(T)$ under the assumption that $T\gg p, q$.

Finally, the control variates in our subsampling methodology include derivatives reparameterised with respect to the original parameter $\boldsymbol{\phi}$, evaluated at a central value $\boldsymbol{\phi}^\star$. These are constructed by applying the reparameterisation mappings in Lemma \ref{lem:reparam_grad_Hess_log_dens} to the stored gradient and Hessian values produced by Algorithm \ref{alg:online_eval}. 

\begin{algorithm}[h]
\caption{Evaluating $\ell(\boldsymbol{\theta})=\sum_{t=1}^T \ell_t(\boldsymbol{\theta})$ and its derivatives for a GARCH-type variance model of order $(p,q)$. The notation $\mathbf{x}_{a:b}$, with $a>b$, denotes the vector $(x_{a}, x_{a-1}, \dots, x_{b})^\top$.}
\label{alg:online_eval}
\begin{algorithmic}[1]
\Require Observations $\{y_t\}_{t=1}^T$, parameters $\boldsymbol{\theta}=(\boldsymbol{\theta}_v^\top,\theta_\varepsilon)^\top$, initial values $\mathbf{y}_{0:-p}$ and $\boldsymbol{\sigma}^2_{0:-q}$
\State Initialise $\ell \gets 0$, \quad $\nabla_{\boldsymbol{\theta}}\ell \gets \mathbf{0}_{d}$, \quad $\nabla^2_{\boldsymbol{\theta}}\ell \gets \mathbf{0}_{d\times d}$
\Comment{$d=\dim_{\boldsymbol{\theta}}$}
\State Initialise storage $\{\ell_t\}_{t=1}^T$, $\{\nabla_{\boldsymbol{\theta}}\ell_t\}_{t=1}^T$ and $\{\nabla^2_{\boldsymbol{\theta}}\ell_t\}_{t=1}^T$
\Comment{For control variates}
\For{$j=1$ to $q$}
    \State Initialise $\nabla_{\boldsymbol{\theta}_v}\sigma^2_{1-j} \gets \mathbf{0}_{d_v}$ and
    $\nabla^2_{\boldsymbol{\theta}_v}\sigma^2_{1-j} \gets \mathbf{0}_{d_v\times d_v}$
\EndFor
\Comment{$d_v=\dim_{\boldsymbol{\theta}_v}$}
\For{$t = 1$ to $T$}
    \State $\sigma_t^2 \gets f_{\sigma_t^2}\!\left(\boldsymbol{\theta}_v, \mathbf{y}_{t-1:t-p}, \boldsymbol{\sigma}^{2}_{t-1:t-q}\right)$
    \Comment{\eqref{eq:GARCH} or \eqref{eq:TGARCH}}
    \State $\nabla_{\boldsymbol{\theta}_v}\sigma_t^2 \gets 
    f_{\nabla\sigma_t^2}\!\left(\boldsymbol{\theta}_v, \mathbf{y}_{t-1:t-p}, \boldsymbol{\sigma}^{2}_{t-1:t-q}, \nabla_{\boldsymbol{\theta}_v}\boldsymbol{\sigma}^{2}_{t-1:t-q}\right)$
    \Comment{Prop. \ref{prop:grad_Hess_GARCH}--\ref{prop:grad_Hess_TGARCH}}
    \State $\nabla^2_{\boldsymbol{\theta}_v}\sigma_t^2 \gets 
    f_{\nabla^2\sigma_t^2}\!\left(\boldsymbol{\theta}_v, \mathbf{y}_{t-1:t-p}, \boldsymbol{\sigma}^{2}_{t-1:t-q}, \nabla_{\boldsymbol{\theta}_v}\boldsymbol{\sigma}^{2}_{t-1:t-q}, \nabla^2_{\boldsymbol{\theta}_v}\boldsymbol{\sigma}^{2}_{t-1:t-q}\right)$
    \Comment{Prop. \ref{prop:grad_Hess_GARCH}--\ref{prop:grad_Hess_TGARCH}}
    \State $\ell_t \gets f_{\ell_t}\!\left(y_t,\mu,\sigma_t^2,\theta_\varepsilon\right)$
    \Comment{\eqref{eq:Gauss_log_density} or \eqref{eq:Studentt_log_density}}
    \State $\ell \gets \ell + \ell_t$
    \State $\nabla_{\boldsymbol{\theta}}\ell_t \gets 
    f_{\nabla\ell_t}\!\left(\sigma_t^2, \nabla_{\boldsymbol{\theta}_v}\sigma_t^2\right)$
    \Comment{Prop. \ref{prop:grad_Hess_log_dens}}
    \State $\nabla^2_{\boldsymbol{\theta}}\ell_t \gets 
    f_{\nabla^2\ell_t}\!\left(\sigma_t^2, \nabla_{\boldsymbol{\theta}_v}\sigma_t^2, \nabla^2_{\boldsymbol{\theta}_v}\sigma_t^2\right)$
    \Comment{Prop. \ref{prop:grad_Hess_log_dens}}
    \State Store $\ell_t$, $\nabla_{\boldsymbol{\theta}}\ell_t$ and $\nabla^2_{\boldsymbol{\theta}}\ell_t$
    \State $\nabla_{\boldsymbol{\theta}}\ell \gets \nabla_{\boldsymbol{\theta}}\ell + \nabla_{\boldsymbol{\theta}}\ell_t$
    \State $\nabla^2_{\boldsymbol{\theta}}\ell \gets \nabla^2_{\boldsymbol{\theta}}\ell + \nabla^2_{\boldsymbol{\theta}}\ell_t$
\EndFor
\State \Return $\ell$, $\nabla_{\boldsymbol{\theta}}\ell$, $\nabla^2_{\boldsymbol{\theta}}\ell$, $\{\ell_t\}_{t=1}^T$, $\{\nabla_{\boldsymbol{\theta}}\ell_t\}_{t=1}^T$, $\{\nabla^2_{\boldsymbol{\theta}}\ell_t\}_{t=1}^T$
\end{algorithmic}
\end{algorithm}

\section{Derivatives for implementation}\label{app:derivatives_GARCH}
\subsection{GARCH}\label{app:GARCH_grad_Hess}
Rewrite \eqref{eq:GARCH} in vector form,
\begin{align}\label{eq:GARCH_vectorised_mainpaper}
    \sigma^2_t(\boldsymbol{\theta}_v) & = \omega + {\mathbf{z}^{2^\top}_{t-1:t-p}}\boldsymbol{\alpha} +  {\boldsymbol{\sigma}_{t-1:t-q}^{2^\top}} \boldsymbol{\beta},
\end{align}
where
$$\mathbf{z}_{t-1:t-p} = \mathbf{y}_{t-1:t-p} -\mathbf{1}_p \mu,$$
where $\mathbf{1}_p$ denotes a unit column vector of length $p$ and for a vector $\mathbf{x}_{a:b}$, with $a>b$, $(x_{a}, x_{a-1}, \dots, x_{b})^\top$.

\begin{proposition}[Recursive evaluation of gradients and Hessians for the GARCH($p,q$) process]\label{prop:grad_Hess_GARCH}
Consider the conditional variance $\sigma_t^2(\boldsymbol{\theta}_v)$ in \eqref{eq:GARCH_vectorised_mainpaper}.
\begin{enumerate}
\item[(i)] The gradient with respect to
$\boldsymbol{\theta}_v$ is evaluated recursively as
\begin{align*}
\nabla_{\boldsymbol{\theta}_v} \sigma_t^2(\boldsymbol{\theta}_v) = \begin{pmatrix} -2\boldsymbol{\alpha}^\top\mathbf{z}_{t-1:t-p} \\ 1 \\ \mathbf{z}^{2}_{t-1:t-p} \\ {\boldsymbol{\sigma}_{t-1:t-q}^{2}} \end{pmatrix} +  \sum_{j=1}^q  \beta_j \nabla_{\boldsymbol{\theta}_v} \sigma_{t-j }^2(\boldsymbol{\theta}_v),
\end{align*}
where, for $t=1$, $$\nabla_{\boldsymbol{\theta}_v}\sigma_{1-j}^2(\boldsymbol{\theta}_v)=\mathbf{0}_d, \,\,  \,\,j=1,\dots,q.$$

\item[(ii)] The Hessian with respect to
$\boldsymbol{\theta}_v$ is evaluated recursively as
\begin{align*}
     \nabla^2_{\boldsymbol{\theta}_v} \sigma_t^2(\boldsymbol{\theta}_v) = \left(\mathbf{A} + \sum_{j=1}^q \nabla_{\boldsymbol{\theta}_v}\sigma^2_{t-j}(\boldsymbol{\theta}_v) \otimes \mathbf{e}^\top_k \right) + \mathbf{B} + \sum_{j = 1}^q \beta_j \nabla^2_{\boldsymbol{\theta}_v}\sigma^2_{t-j}(\boldsymbol{\theta}_v),
\end{align*}
with 
\begin{align*}
    \mathbf{A} =  \begin{pmatrix} 2\boldsymbol{\alpha}^\top\mathbf{1}_p & 0 & -2\mathbf{z}^{\top}_{t-1:t-p} & \mathbf{0}_q^\top \\
    0 & 0 & \mathbf{0}_p^\top & \mathbf{0}_q^\top \\
    -2 \mathbf{z}_{t-1:t-p} & \mathbf{0}_p & \mathbf{0}_{p\times p} & \mathbf{0}_{p \times q} \\
    \mathbf{0}_{q} & \mathbf{0}_{q} & \mathbf{0}_{q\times p} & \mathbf{0}_{q \times q}\end{pmatrix} \text{ and } \mathbf{B} =  \begin{pmatrix} \mathbf{0}^\top_d\\\mathbf{0}^\top_d \\ \mathbf{0}_{p\times d} \\ \nabla^\top_{\boldsymbol{\theta}_v}\sigma^2_{t-1}(\boldsymbol{\theta}_v) \\
    \nabla^\top_{\boldsymbol{\theta}_v}\sigma^2_{t-2}(\boldsymbol{\theta}_v) \\
    \vdots\\
    \nabla^\top_{\boldsymbol{\theta}_v}\sigma^2_{t-q}(\boldsymbol{\theta}_v) 
    \end{pmatrix},
\end{align*}
where, for $t=1$, $$\nabla^2_{\boldsymbol{\theta}_v}\sigma_{1-j}^2(\boldsymbol{\theta}_v)=\mathbf{0}_{d\times d}, \,\,\,\, j=1,\dots,q,$$
and
$\mathbf{e}_k$ denotes the $k$th canonical basis vector, with $k=2+p+j$.
\end{enumerate}
\end{proposition}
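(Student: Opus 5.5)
The plan is to differentiate the vectorised recursion \eqref{eq:GARCH_vectorised_mainpaper} directly, treating $\sigma^2_t$ as a function of $\boldsymbol{\theta}_v=(\mu,\omega,\boldsymbol{\alpha}^\top,\boldsymbol{\beta}^\top)^\top$ both explicitly and implicitly through the lagged variances $\sigma^2_{t-j}(\boldsymbol{\theta}_v)$. The key observation is that $\sigma^2_t$ splits into an explicit part $\omega+\mathbf{z}^{2\top}_{t-1:t-p}\boldsymbol{\alpha}$, which depends only on the data and the parameters, and a recursive part $\sum_{j=1}^q\beta_j\sigma^2_{t-j}(\boldsymbol{\theta}_v)$. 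The latter is a product of a parameter coordinate and a parameter-dependent function. Both (i) and (ii) then follow by the product rule and the chain rule, with the initial conditions coming from the assumption that the pre-sample variances $\sigma^2_{1-j}$ are fixed constants not depending on $\boldsymbol{\theta}_v$.

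For (i), I will compute the partial derivatives of the explicit part block by block.
\begin{itemize}
\item Since $\mathbf{z}_{t-1:t-p}=\mathbf{y}_{t-1:t-p}-\mathbf{1}_p\mu$, we get $\partial/\partial\mu\,(\mathbf{z}^{2\top}\boldsymbol{\alpha})=-2\boldsymbol{\alpha}^\top\mathbf{z}_{t-1:t-p}$.
\item The $\omega$-derivative is $1$.
\item The $\boldsymbol{\alpha}$-gradient is $\mathbf{z}^2_{t-1:t-p}$.
\item The explicit part has no $\boldsymbol{\beta}$-dependence.
\end{itemize}
For the recursive part, the product rule gives $\nabla(\beta_j\sigma^2_{t-j})=\sigma^2_{t-j}\,\mathbf{e}_{2+p+j}+\beta_j\nabla\sigma^2_{t-j}$. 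Summing the first terms over $j$ yields the block $\boldsymbol{\sigma}^2_{t-1:t-q}$ in the last $q$ coordinates, and the second terms give the recursive sum. Stacking these in the order of $\boldsymbol{\theta}_v$ gives the displayed formula. The base case is immediate, since constant pre-sample values have zero gradient; an induction on $t$ then makes the recursion well defined.

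For (ii), I will differentiate the gradient from (i) once more, i.e.\ take the Jacobian of each term, and group the results as follows.
\begin{itemize}
\item The Jacobian of the explicit vector $(-2\boldsymbol{\alpha}^\top\mathbf{z},1,\mathbf{z}^{2\top},\mathbf{0}^\top)^\top$ gives the matrix $\mathbf{A}$. Its entries are: $\partial_\mu(-2\boldsymbol{\alpha}^\top\mathbf{z})=2\boldsymbol{\alpha}^\top\mathbf{1}_p$; $\partial_{\boldsymbol{\alpha}}(-2\boldsymbol{\alpha}^\top\mathbf{z})=-2\mathbf{z}^\top$; and $\partial_\mu\mathbf{z}^2=-2\mathbf{z}$. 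All other entries are zero.
\item The Jacobian of the block $\boldsymbol{\sigma}^2_{t-1:t-q}$, placed in rows $2+p+1,\dots,2+p+q$, has rows $\nabla^\top\sigma^2_{t-j}$. This gives $\mathbf{B}$.
\item Differentiating $\sum_j\beta_j\nabla\sigma^2_{t-j}$ produces two pieces. The first is $\beta_j\nabla^2\sigma^2_{t-j}$, which is the recursive Hessian term. The second, from differentiating the factor $\beta_j$, is $\nabla\sigma^2_{t-j}\,\mathbf{e}_k^\top$ with $k=2+p+j$, which is the stated $\nabla\sigma^2_{t-j}\otimes\mathbf{e}_k^\top$ term.
\end{itemize}
As a consistency check, this last term is the transpose of the $j$th row of $\mathbf{B}$, so the resulting Hessian is symmetric. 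The base case is again zero Hessians for the pre-sample values.

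The main obstacle is bookkeeping rather than analysis. I need to keep index placement consistent ($k=2+p+j$ depends on $j$ inside the sum), check the sign and the factor in the $(\mu,\mu)$ entry, and verify that rows and columns of the product-rule terms are oriented so that $\mathbf{B}$ and the outer-product term are mutual transposes. I will check this by writing out the scalar second derivative $\partial^2\sigma^2_t/\partial\beta_j\partial\theta_i=\partial\sigma^2_{t-j}/\partial\theta_i+\sum_{l}\beta_l\,\partial^2\sigma^2_{t-l}/\partial\beta_j\partial\theta_i$ and matching it entrywise against the matrix formula.
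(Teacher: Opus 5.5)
Your proposal is correct and follows essentially the same route as the paper. The paper phrases the argument with matrix differentials and the first identification theorem rather than explicit partial derivatives and Jacobians. It splits the Hessian into exactly your three pieces: $\mathbf{A}$ from the explicit terms, $\mathbf{B}$ from the lagged-variance block, and the product rule on $\beta_j\nabla_{\boldsymbol{\theta}_v}\sigma^2_{t-j}(\boldsymbol{\theta}_v)$, which yields $\nabla_{\boldsymbol{\theta}_v}\sigma^2_{t-j}(\boldsymbol{\theta}_v)\otimes\mathbf{e}_k^\top$ plus the recursive Hessian term. Your observation that the summed Kronecker terms equal $\mathbf{B}^\top$, so the Hessian is symmetric, is a useful sanity check that the paper does not state explicitly.
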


\subsection{TGARCH}\label{app:TGARCH_grad_Hess}

Rewrite \eqref{eq:TGARCH} in vector form,
\begin{align}\label{eq:TGARCH_vectorised_mainpaper}
    \sigma^2_t(\boldsymbol{\theta}_v) & = \omega + {\mathbf{z}^{2^\top}_{t-1:t-p}}\boldsymbol{\alpha} + \boldsymbol{\gamma}^{\top} \mathrm{diag}({\mathbf{z}^{2}_{t-1:t-p}})\mathbf{G}(\mu) +  {\boldsymbol{\sigma}_{t-1:t-q}^{2^\top}} \boldsymbol{\beta},
\end{align}
where
\begin{align*}
    \mathbf{G}(\mu) = (H(\mu - y_{t-1}), \dots, H(\mu - y_{t-p}))^{\top},
\end{align*}
with $H(x)$ being the Heaviside function, defined by 
\begin{equation*}
H(x) =
\begin{cases}
1, & x > 0,\\
0, & \text{otherwise}.
\end{cases}
\end{equation*}

\begin{proposition}[Recursive evaluation of gradients and Hessians for the TGARCH($p,q$) process]\label{prop:grad_Hess_TGARCH}
Consider the conditional variance $\sigma_t^2(\boldsymbol{\theta}_v)$ in \eqref{eq:TGARCH_vectorised_mainpaper}.
\begin{enumerate}
\item[(i)] The gradient with respect to
$\boldsymbol{\theta}_v$ is evaluated recursively as
\begin{align*}
\nabla_{\boldsymbol{\theta}_v} \sigma_t^2(\boldsymbol{\theta}_v) &= \begin{pmatrix} -2\boldsymbol{\alpha}^\top\mathbf{z}_{t-1:t-p} - \boldsymbol{\gamma}^{\top}\{2\mathrm{diag}(\mathbf{z}_{t-1:t-p})\mathbf{G}(\mu) - \mathrm{diag}(\mathbf{z}^{2}_{t-1:t-p})\mathbf{G}'(\mu)\}\\ 1 \\ \mathbf{z}^{2}_{t-1:t-p} \\ 
\mathrm{diag}(\mathbf{z}^{2}_{t-1:t-p})\mathbf{G}(\mu)\\
{\boldsymbol{\sigma}_{t-1:t-q}^{2}} \end{pmatrix} \nonumber\\
&\quad+  \sum_{j=1}^q  \beta_j \nabla_{\boldsymbol{\theta}_v} \sigma_{t-j }^2(\boldsymbol{\theta}_v),
\end{align*}
where, for $t=1$, $$\nabla_{\boldsymbol{\theta}_v}\sigma_{1-j}^2(\boldsymbol{\theta}_v)=\mathbf{0}_d, \,\,  \,\,j=1,\dots,q,$$
and

    $$\mathbf{G}'(\mu) = (\delta(\mu - y_{t-1}), \dots, \delta(\mu - y_{t-p}))^{\top},$$
where $\delta(\cdot)$ denotes the Dirac delta distribution.

\item[(ii)] The Hessian with respect to
$\boldsymbol{\theta}_v$ is evaluated recursively as
\begin{align*}
     \nabla^2_{\boldsymbol{\theta}_v} \sigma_t^2(\boldsymbol{\theta}_v) = \left(\mathbf{A} + \sum_{j=1}^q \nabla_{\boldsymbol{\theta}_v}\sigma^2_{t-j}(\boldsymbol{\theta}_v) \otimes \mathbf{e}^\top_k \right) + \mathbf{B} + \sum_{j = 1}^q \beta_j \nabla^2_{\boldsymbol{\theta}_v}\sigma^2_{t-j}(\boldsymbol{\theta}_v),
\end{align*}
with 
\begin{align*}
    \mathbf{A} =  \begin{pmatrix} \mathbf{A}_{\mu\mu} & 0 & -2\mathbf{z}^{\top}_{t-1:t-p} & \mathbf{A}_{\mu\boldsymbol{\gamma}}  & \mathbf{0}_q^\top \\
    0 & 0 & \mathbf{0}_p^\top & \mathbf{0}_p^\top & \mathbf{0}_q^\top \\
    -2 \mathbf{z}_{t-1:t-p} & \mathbf{0}_p & \mathbf{0}_{p\times p} & \mathbf{0}_{p\times p} &  \mathbf{0}_{p \times q} \\ \mathbf{A}_{\boldsymbol{\gamma}\mu}
    & \mathbf{0}_p & \mathbf{0}_{p\times p} & \mathbf{0}_{p\times p} &  \mathbf{0}_{p \times q}\\ \mathbf{0}_{q} &
    \mathbf{0}_{q} & \mathbf{0}_{q\times p} & \mathbf{0}_{q\times p} & \mathbf{0}_{q \times q}\end{pmatrix} \text{ and }    \mathbf{B} =  \begin{pmatrix} \mathbf{0}^\top_d\\\mathbf{0}^\top_d \\ \mathbf{0}_{p\times d} \\ 
    \mathbf{0}_{p\times d} \\\nabla^\top_{\boldsymbol{\theta}_v}\sigma^2_{t-1}(\boldsymbol{\theta}_v) \\
    \nabla^\top_{\boldsymbol{\theta}_v}\sigma^2_{t-2}(\boldsymbol{\theta}_v) \\
    \vdots\\
    \nabla^\top_{\boldsymbol{\theta}_v}\sigma^2_{t-q}(\boldsymbol{\theta}_v) 
    \end{pmatrix},
\end{align*}
and
\begin{align*}
    \mathbf{A}_{\mu\mu} = 2\boldsymbol{\alpha}^\top\mathbf{1}_p + \boldsymbol{\gamma}^\top \{2\mathbf{G}(\mu) - 4\mathrm{diag}(\mathbf{z}_{t-1:t-p})\mathbf{G}'(\mu) + \mathrm{diag}(\mathbf{z}^{2}_{t-1:t-p})\mathbf{G}''(\mu)\},
\end{align*}
\begin{align*}
     \mathbf{A}_{\mu\boldsymbol{\gamma}} = (\mathbf{A}_{\boldsymbol{\gamma}\mu})^\top= -\{2\mathrm{diag}(\mathbf{z}_{t-1:t-p})\mathbf{G}(\mu) + \mathrm{diag}(\mathbf{z}^{2}_{t-1:t-p})\mathbf{G}'(\mu)\}^\top,
\end{align*}
\begin{align*}
\mathbf{G}''(\mu) &= (\delta'(\mu - y_{t-1}), \dots, \delta'(\mu - y_{t-p}))^{\top},
\end{align*}
where, for $t=1$, $$\nabla^2_{\boldsymbol{\theta}_v}\sigma_{1-j}^2(\boldsymbol{\theta}_v)=\mathbf{0}_{d\times d}, \,\,\,\, j=1,\dots,q,$$
and
$\mathbf{e}_k$ denotes the $k$th canonical basis vector, with $k=2+2p+j$.
\end{enumerate}
\end{proposition}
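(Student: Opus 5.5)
The statement is Proposition~\ref{prop:grad_Hess_TGARCH}, and I would prove it by direct differentiation of the vectorised recursion \eqref{eq:TGARCH_vectorised_mainpaper}, followed by induction on $t$. The argument mirrors that of Proposition~\ref{prop:grad_Hess_GARCH}, with the extra threshold term $\boldsymbol{\gamma}^\top\mathrm{diag}(\mathbf{z}^2_{t-1:t-p})\mathbf{G}(\mu)$ handled separately.

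\textbf{Setup.} I order the parameter vector as $\boldsymbol{\theta}_v=(\mu,\omega,\boldsymbol{\alpha}^\top,\boldsymbol{\gamma}^\top,\boldsymbol{\beta}^\top)^\top$. I then split $\sigma^2_t$ into two parts:
\begin{itemize}
\item the \emph{explicit part} $\omega+\mathbf{z}^{2\top}\boldsymbol{\alpha}+\boldsymbol{\gamma}^\top\mathrm{diag}(\mathbf{z}^2)\mathbf{G}(\mu)+\boldsymbol{\sigma}^{2\top}_{t-1:t-q}\boldsymbol{\beta}$, with the lagged variances held fixed;
\item the \emph{implicit part}, which arises because each $\sigma^2_{t-j}$ itself depends on $\boldsymbol{\theta}_v$.
\end{itemize}
The chain rule gives $\nabla\sigma^2_t=\partial_{\text{explicit}}\sigma^2_t+\sum_j\beta_j\nabla\sigma^2_{t-j}$. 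The base case is $\nabla\sigma^2_{1-j}=\mathbf{0}$, since presample variances are fixed known constants.

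\textbf{Part (i), gradient.} I compute the explicit partials block by block.
\begin{itemize}
\item $\mu$: since $\partial\mathbf{z}/\partial\mu=-\mathbf{1}_p$, the $\boldsymbol{\alpha}$ term contributes $-2\boldsymbol{\alpha}^\top\mathbf{z}$. The product rule on $\mathrm{diag}(\mathbf{z}^2)\mathbf{G}(\mu)$ gives $-2\mathrm{diag}(\mathbf{z})\mathbf{G}+\mathrm{diag}(\mathbf{z}^2)\mathbf{G}'$, using $\frac{d}{d\mu}H(\mu-y)=\delta(\mu-y)$ in the distributional sense. This yields the stated first entry.
\item $\omega$: the partial is $1$.
\item $\boldsymbol{\alpha}$: the partial is $\mathbf{z}^2$.
\item $\boldsymbol{\gamma}$: the partial is $\mathrm{diag}(\mathbf{z}^2)\mathbf{G}$.
\item $\boldsymbol{\beta}$: the partial is $\boldsymbol{\sigma}^2_{t-1:t-q}$.
\end{itemize}
Induction on $t$ then completes (i).

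\textbf{Part (ii), Hessian.} I differentiate the gradient recursion once more with respect to $\boldsymbol{\theta}_v^\top$. This produces three groups of terms.
\begin{enumerate}
\item The Jacobian of the explicit vector, keeping only the $\boldsymbol{\theta}_v$-dependence that is not through lagged $\sigma^2$. This is $\mathbf{A}$.
    \begin{itemize}
    \item $\mathbf{A}_{\mu\mu}$ comes from differentiating the first entry in $\mu$: $2\boldsymbol{\alpha}^\top\mathbf{1}_p$ plus $\boldsymbol{\gamma}^\top\{2\mathbf{G}-4\mathrm{diag}(\mathbf{z})\mathbf{G}'+\mathrm{diag}(\mathbf{z}^2)\mathbf{G}''\}$. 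The factor $4$ collects two cross terms $2\mathrm{diag}(\mathbf{z})\mathbf{G}'$.
    \item The $\mu$--$\boldsymbol{\alpha}$ block is $-2\mathbf{z}$.
    \item The $\mu$--$\boldsymbol{\gamma}$ block comes from differentiating $\mathrm{diag}(\mathbf{z}^2)\mathbf{G}$ in $\mu$.
    \end{itemize}
\item The dependence of the $\boldsymbol{\beta}$ block $\boldsymbol{\sigma}^2_{t-1:t-q}$ on $\boldsymbol{\theta}_v$. Its rows are $\nabla^\top\sigma^2_{t-j}$, which gives $\mathbf{B}$.
\item The product rule on $\beta_j\nabla\sigma^2_{t-j}$. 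This gives $\beta_j\nabla^2\sigma^2_{t-j}$ plus $\nabla\sigma^2_{t-j}\,\mathbf{e}_k^\top$, where $k=2+2p+j$ is the position of $\beta_j$ (written with $\otimes$ as in the statement). That term is exactly the transpose of the corresponding rows of $\mathbf{B}$, so the full Hessian is symmetric.
\end{enumerate}
The base case $\nabla^2\sigma^2_{1-j}=\mathbf{0}$ and induction finish the proof.

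\textbf{Main obstacle.} The delicate step is the treatment of the nonsmooth indicator. $\mathbf{G}'$ and $\mathbf{G}''$ exist only as distributions, so I would state that the identities hold in the distributional sense. Equivalently, they hold classically for almost every $\mu$, with $\mathbf{G}'=\mathbf{G}''=\mathbf{0}$ whenever $\mu\neq y_{t-i}$, where the Heaviside function is locally constant. A secondary point is the sign in the $\mu$--$\boldsymbol{\gamma}$ entry. Differentiating $z^2H(\mu-y)$ gives $-2zH+z^2\delta$, so I would recheck the sign of the $\mathbf{G}'$ term in $\mathbf{A}_{\mu\boldsymbol{\gamma}}$ against part (i). This discrepancy vanishes almost everywhere, since $z^2\delta(\mu-y)=0$ there.
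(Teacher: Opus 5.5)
Your proposal is correct and is essentially the paper's own argument. The paper differentiates the threshold term $\boldsymbol{\gamma}^\top\mathrm{diag}(\mathbf{z}^{2}_{t-1:t-p})\mathbf{G}(\mu)$ using $dH(\mu-y_{t-i})=\delta(\mu-y_{t-i})\,d\mu$, adds the result to the GARCH differential, reads off the gradient via the first identification theorem, and then splits the differential of the gradient into the $\mathbf{A}$, $\mathbf{B}$ and $d(\beta_j\nabla_{\boldsymbol{\theta}_v}\sigma^2_{t-j})$ pieces, exactly as you do with block-wise partials and induction. Your sign check is warranted: the paper's own intermediate step gives $d\{\mathrm{diag}(\mathbf{z}^{2}_{t-1:t-p})\mathbf{G}(\mu)\}=\{-2\mathrm{diag}(\mathbf{z}_{t-1:t-p})\mathbf{G}(\mu)+\mathrm{diag}(\mathbf{z}^{2}_{t-1:t-p})\mathbf{G}'(\mu)\}\,d\mu$, so consistency with part (i) requires $\mathbf{A}_{\mu\boldsymbol{\gamma}}=-\{2\mathrm{diag}(\mathbf{z}_{t-1:t-p})\mathbf{G}(\mu)-\mathrm{diag}(\mathbf{z}^{2}_{t-1:t-p})\mathbf{G}'(\mu)\}^\top$, a correction that is immaterial in the Dirac interpretation (where $z_{t-i}\,\delta(\mu-y_{t-i})$, $z_{t-i}^2\,\delta(\mu-y_{t-i})$ and $z_{t-i}^2\,\delta'(\mu-y_{t-i})$ all vanish as distributions, since $(\mu-y_{t-i})_+^2$ is $C^1$ with second derivative $2H(\mu-y_{t-i})$) but does matter under the logistic smoothing of Remark~\ref{rem:computing_dirac}.
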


\begin{remark}
For numerical implementation, we replace $G(\mu)$ by a smooth logistic approximation; see Remark \ref{rem:computing_dirac} in Section \ref{supp:TGARCH_gradient}.
\end{remark}

\subsection{Log-density of the measurements}\label{app:measurement_equation}

The following propositions give the partial derivatives of the Gaussian and Student-$t$ models.

\begin{proposition}[partial derivatives of the Gaussian model] \label{supp:prop_partial_Gauss} The partial derivates of the Gaussian model in \eqref{eq:Gauss_log_density} are
\begin{align*}%\label{eq:partial_derivatives_Gauss}
    \frac{\partial \ell_t(\sigma^2_t, \mu)}{\partial \mu} & = \frac{y_t - \mu}{\sigma^2_t} & \frac{\partial^2 \ell_t(\sigma^2_t, \mu)}{\partial \mu^2}  & = -\frac{1}{\sigma_t^2} \nonumber \\
\frac{\partial \ell_t(\sigma^2_t, \mu)}{\partial \sigma_t^2} & = -\frac{1}{2\sigma^2_t} + \frac{(y_t - \mu)^2}{2(\sigma^2_t)^2}    & \frac{\partial^2 \ell_t(\sigma^2_t, \mu)}{\partial (\sigma_t^2)^2} & = \frac{1}{2(\sigma^2_t)^2} - \frac{(y_t - \mu)^2}{(\sigma^2_t)^3} \nonumber \\
\frac{\partial^2 \ell_t(\sigma^2_t, \mu)}{\partial \sigma_t^2\partial \mu} & = -\frac{y_t - \mu}{(\sigma^2_t)^2} .   &  & 
\end{align*}
\end{proposition}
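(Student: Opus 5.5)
Differentiating the Gaussian log-density \eqref{eq:Gauss_log_density} directly proves the statement. We treat $\ell_t(\sigma^2_t,\mu)$ as a function of two scalar arguments $(\sigma^2_t,\mu)$, with $y_t$ fixed, and compute the five partial derivatives term by term. The constant $-\frac{1}{2}\log(2\pi)$ contributes nothing. The term $-\frac{1}{2}\log(\sigma^2_t)$ depends only on $\sigma^2_t$. The quadratic term $-\frac{1}{2\sigma^2_t}(y_t-\mu)^2$ depends on both arguments and factorises as $-\frac{1}{2}(\sigma^2_t)^{-1}\cdot(y_t-\mu)^2$, so its partial derivatives follow from the power rule in each factor separately.

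The steps are as follows.

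\emph{First derivative in $\mu$.} Since $\partial(y_t-\mu)^2/\partial\mu=-2(y_t-\mu)$, we obtain $(y_t-\mu)/\sigma^2_t$.

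\emph{Second derivative in $\mu$.} Differentiating $(y_t-\mu)/\sigma^2_t$ once more in $\mu$ gives $-1/\sigma^2_t$.

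\emph{First derivative in $\sigma^2_t$.} The log term gives $-\frac{1}{2\sigma^2_t}$. The quadratic term gives $+\frac{(y_t-\mu)^2}{2(\sigma^2_t)^2}$, using $\partial (\sigma^2_t)^{-1}/\partial\sigma^2_t=-(\sigma^2_t)^{-2}$.

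\emph{Second derivative in $\sigma^2_t$.} Differentiating the previous result again gives $\frac{1}{2(\sigma^2_t)^2}-\frac{(y_t-\mu)^2}{(\sigma^2_t)^3}$, since $\partial(\sigma^2_t)^{-2}/\partial\sigma^2_t=-2(\sigma^2_t)^{-3}$.

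\emph{Mixed derivative.} Differentiating $\frac{(y_t-\mu)^2}{2(\sigma^2_t)^2}$ in $\mu$ gives $-\frac{y_t-\mu}{(\sigma^2_t)^2}$. Differentiating $\partial\ell_t/\partial\mu$ in $\sigma^2_t$ gives the same expression, which serves as a consistency check (Schwarz symmetry holds because $\ell_t$ is smooth on $\sigma^2_t>0$).

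There is no real obstacle here. The only points requiring care are bookkeeping of signs and the factor $\frac{1}{2}$ when differentiating $(\sigma^2_t)^{-1}$ twice. One should also note that these are partial derivatives with $\sigma^2_t$ treated as an independent argument. The dependence of $\sigma^2_t$ on $\boldsymbol{\theta}_v$, including on $\mu$ through the lagged innovations, is handled separately via the chain rule in Proposition \ref{prop:grad_Hess_log_dens}, and does not enter this computation.
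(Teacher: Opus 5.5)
Your proposal is correct and is essentially the paper's approach: the paper states these derivatives without a separate proof, as routine term-by-term differentiation of \eqref{eq:Gauss_log_density} with $\sigma^2_t$ and $\mu$ treated as independent arguments, and all five of your expressions check out. Your remark that the dependence of $\sigma^2_t$ on $\boldsymbol{\theta}_v$ is handled separately by the chain rule in Proposition \ref{prop:grad_Hess_log_dens} also matches how the paper is organised.
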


\begin{proposition}[partial derivatives of the Student-$t$ model] \label{supp:prop_partial_Studentt}
The partial derivatives for the Student-$t$ model in \eqref{eq:Studentt_log_density} are 
\begin{align*}%\label{eq:partial_derivatives_Student_t}
    \frac{\partial \ell_t(\sigma^2_t, \mu, \nu)}{\partial \mu} & = \frac{(\nu + 1)(y_t - \mu)}{\sigma_t^2(\nu - 2) + (y_t - \mu)^2} \nonumber\\
    \frac{\partial^2 \ell_t(\sigma^2_t, \mu, \nu)}{\partial \mu^2}  & =  (v+1)\Bigg[\frac{2(y_t - \mu)^2}{\{\sigma_t^2(\nu - 2) + (y_t - \mu)^2\}^2} - \frac{1}{\sigma_t^2(\nu - 2) + (y_t - \mu)^2}\Bigg] \nonumber \\
    \frac{\partial \ell_t(\sigma^2_t, \mu, \nu)}{\partial \sigma_t^2} & =  -\frac{1}{2\sigma_t^2} + \frac{(\nu + 1)}{2}\Bigg[\frac{(y_t- \mu)^2}{\sigma_t^2\{\sigma_t^2(\nu - 2) + (y_t - \mu)^2\}}\Bigg] \nonumber\\
    \frac{\partial^2 \ell_t(\sigma^2_t, \mu, \nu)}{\partial (\sigma_t^2)^2} & = \frac{1}{2(\sigma_t^2)^2} - \frac{(\nu+1)}{2}\Bigg[\frac{(y_t - \mu)^2}{(\sigma_t^2)^2\{\sigma_t^2(\nu - 2) + (y_t - \mu)^2\}} + \frac{(\nu-2)(y_t-\mu)^2}{\sigma_t^2\{\sigma_t^2(\nu - 2) + (y_t - \mu)^2\}^2}\Bigg] \nonumber \\
    \frac{\partial \ell_t(\sigma^2_t, \mu, \nu)}{\partial \nu} & = \psi(\nu) - \psi\bigg(\frac{\nu}{2}\bigg) - \ln(2) - \frac{1}{2(\nu-2)} - \frac{1}{2}\log\Bigg(1+ \frac{(y_t- \mu)^2}{\sigma_t^2(\nu-2)}\Bigg) \nonumber\\
    &\qquad + \frac{(\nu + 1)}{2}\Bigg[\frac{(y_t - \mu)^2(\sigma_t^2)}{\sigma_t^2(\nu-2)\{\sigma_t^2(\nu - 2) + (y_t - \mu)^2\}}\Bigg]\nonumber     \end{align*}

    \begin{align*}
    \frac{\partial^2 \ell_t(\sigma^2_t, \mu, \nu)}{\partial \nu^2} & = \psi_1(\nu) - \frac{1}{2}\psi_1\bigg(\frac{\nu}{2}\bigg) + \frac{1}{2(\nu-2)^2} + \frac{(y_t-\mu)^2}{(\nu-2)\{\sigma_t^2(\nu - 2) + (y_t - \mu)^2\}}\nonumber \\
    &\qquad - \frac{(\nu+1)}{2}\Bigg(\frac{2\sigma_t^2(\nu-2)(y_t-\mu)^2 + (y_t-\mu)^4}{[(\nu-2)\{\sigma_t^2(\nu - 2) + (y_t - \mu)^2\}]^2}\Bigg) \nonumber\\
    \frac{\partial^2 \ell_t(\sigma^2_t, \mu, \nu)}{\partial \sigma_t^2\partial \mu} & = (\nu+1) \Bigg[ \frac{(y_t-\mu)^3)}{\sigma_t^2\{\sigma_t^2(\nu - 2) + (y_t - \mu)^2\}^2} - \frac{(y_t-\mu)}{\sigma_t^2\{\sigma_t^2(\nu - 2) + (y_t - \mu)^2\}} \Bigg]\nonumber \\
    \frac{\partial^2 \ell_t(\sigma^2_t, \mu, \nu)}{\partial \sigma_t^2\partial \nu} & = \frac{1}{2}\Bigg[\frac{(y_t-\mu)^2}{\sigma_t^2\{\sigma_t^2(\nu - 2) + (y_t - \mu)^2\}}\Bigg] - \frac{(\nu+1)}{2}\Bigg[\frac{(y_t-\mu)^2}{\{\sigma_t^2(\nu - 2) + (y_t - \mu)^2\}^2}\Bigg] \nonumber \\
    \frac{\partial^2 \ell_t(\sigma^2_t, \mu, \nu)}{\partial \mu \partial \nu} & = \frac{y_t-\mu}{\sigma_t^2(\nu - 2) + (y_t - \mu)^2} - \frac{\sigma_t^2(\nu+1)(y_t-\mu)}{\{\sigma_t^2(\nu - 2) + (y_t - \mu)^2\}^2}.
\end{align*}
where $\psi(z) = \frac{d}{dz}\ln \Gamma(z)$ and $\psi_1(z) = \frac{d}{dz}\psi(z)$.
\end{proposition}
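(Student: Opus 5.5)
This is a direct calculus computation. Start from the Student-$t$ log-density in \eqref{eq:Studentt_log_density}. Write $D_t=\sigma_t^2(\nu-2)+(y_t-\mu)^2$ and $z=y_t-\mu$, so that
\[
\log\Bigl(1+\tfrac{z^2}{\sigma_t^2(\nu-2)}\Bigr)=\log D_t-\log\sigma_t^2-\log(\nu-2).
\]
This splitting is the first step. It turns every derivative of the last term of \eqref{eq:Studentt_log_density} into derivatives of $\log D_t$ plus elementary logarithms. Its partial derivatives are
\[
\partial_\mu D_t=-2z,\qquad \partial_{\sigma_t^2}D_t=\nu-2,\qquad \partial_\nu D_t=\sigma_t^2 .
\]
Throughout, $\mu$, $\sigma_t^2$ and $\nu$ are treated as independent arguments of $\ell_t$. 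The chain rule through $\sigma_t^2(\boldsymbol{\theta}_v)$ is handled separately, in Proposition \ref{prop:grad_Hess_log_dens}.

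The first-order derivatives come next.
\begin{itemize}
\item[(a)] For $\partial_\mu$ we get $-\frac{\nu+1}{2}\cdot\frac{-2z}{D_t}$, which is the stated expression.
\item[(b)] For $\partial_{\sigma_t^2}$ we get $-\frac{1}{2\sigma_t^2}-\frac{\nu+1}{2}\bigl(\frac{\nu-2}{D_t}-\frac{1}{\sigma_t^2}\bigr)$. Combining over the common denominator $\sigma_t^2D_t$ gives $\frac{\nu-2}{D_t}-\frac{1}{\sigma_t^2}=-\frac{z^2}{\sigma_t^2D_t}$, which produces the bracketed term.
\item[(c)] For $\partial_\nu$ we differentiate the log-gamma terms, giving $\frac12\psi(\frac{\nu+1}{2})-\frac12\psi(\frac{\nu}{2})$, and the normalising term $-\frac{1}{2(\nu-2)}$. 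We then add $-\frac12\log(1+\cdot)$ and $-\frac{\nu+1}{2}\partial_\nu\log(1+\cdot)$, where
\[
\partial_\nu\log(1+\cdot)=\frac{\sigma_t^2}{D_t}-\frac{1}{\nu-2}=-\frac{z^2}{(\nu-2)D_t}.
\]
\end{itemize}
I expect to have to reconcile the digamma terms with the printed form $\psi(\nu)-\psi(\nu/2)-\ln 2$. The duplication formula $\psi(\nu)=\tfrac12\psi(\tfrac{\nu}{2})+\tfrac12\psi(\tfrac{\nu+1}{2})+\ln 2$ gives
\[
\tfrac12\psi\bigl(\tfrac{\nu+1}{2}\bigr)-\tfrac12\psi\bigl(\tfrac{\nu}{2}\bigr)=\psi(\nu)-\psi\bigl(\tfrac{\nu}{2}\bigr)-\ln 2,
\]
which matches. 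The $\psi_1$ terms in $\partial^2_\nu$ are checked by differentiating this identity.

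The second derivatives are obtained by differentiating each first derivative once more with the quotient rule, again using the derivatives of $D_t$ listed above.
\begin{itemize}
\item[(a)] For $\partial^2_\mu$, differentiating $(\nu+1)z/D_t$ gives $(\nu+1)\bigl[-1/D_t+2z^2/D_t^2\bigr]$.
\item[(b)] For $\partial^2_{\sigma_t^2}$, differentiating $z^2/(\sigma_t^2D_t)$ produces the two bracketed terms, one from $\sigma_t^2$ and one carrying $(\nu-2)$ from $D_t$.
\item[(c)] The mixed derivatives $\partial_{\sigma_t^2}\partial_\mu$, $\partial_{\sigma_t^2}\partial_\nu$ and $\partial_\mu\partial_\nu$ follow by differentiating the simpler first derivative in each pair. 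For example, $\partial_\nu[(\nu+1)z/D_t]=z/D_t-(\nu+1)z\sigma_t^2/D_t^2$.
\end{itemize}

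The main obstacle is $\partial^2_\nu$. Differentiating $-\frac{\nu+1}{2}\cdot\frac{-z^2}{(\nu-2)D_t}$ requires $\partial_\nu[(\nu-2)D_t]=D_t+(\nu-2)\sigma_t^2$, and the result has to be rearranged into the printed form $\frac{2\sigma_t^2(\nu-2)z^2+z^4}{[(\nu-2)D_t]^2}$. I would verify this by putting everything over $[(\nu-2)D_t]^2$ and expanding $D_t$. I would also cross-check all entries symbolically, for instance with finite differences at a test point. This check matters because typographical slips (e.g.\ $v$ versus $\nu$, a stray parenthesis in the cube term) should be treated as notational and not as errors in the derivation.
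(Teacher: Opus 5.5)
Your proposal is correct and is essentially the paper's route. The paper states these partials without a separate written derivation, so direct differentiation of \eqref{eq:Studentt_log_density} is the intended argument; with your $\log D_t$ splitting and the duplication identity $\psi(\nu)=\tfrac12\psi(\tfrac{\nu}{2})+\tfrac12\psi(\tfrac{\nu+1}{2})+\ln 2$ (used to match the printed digamma and trigamma terms), it reproduces every entry, and in $\partial^2_\nu\ell_t$ the unhalved term $z^2/\{(\nu-2)D_t\}$ arises as two equal halves, one from $-\tfrac12\log(1+\cdot)$ and one from the product rule on $\tfrac{\nu+1}{2}$, while $z^2\{D_t+(\nu-2)\sigma_t^2\}=2\sigma_t^2(\nu-2)z^2+z^4$ gives the remaining fraction directly.
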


The following proposition gives the gradient and Hessian of $\ell_t(\boldsymbol{\theta})$.

\begin{proposition}[Gradients and Hessians for the log-density]\label{prop:grad_Hess_log_dens}
Let $\sigma_t^2(\boldsymbol{\theta}_v)$ be a conditional variance process with gradient $\nabla_{\boldsymbol{\theta}_v}\sigma_t^2(\boldsymbol{\theta}_v)$ and Hessian $\nabla^2_{\boldsymbol{\theta}_v}\sigma_t^2(\boldsymbol{\theta}_v)$, and let $\ell_t(\boldsymbol{\theta})=\log p(y_t|\mathcal{F}_{t-1}, \boldsymbol{\theta})$, with $$\boldsymbol{\theta}=(\boldsymbol{\theta}^\top_v, \theta_\varepsilon)^\top.$$
\begin{enumerate}
\item[(i)] The gradient of $\ell_t$ with respect to
$\boldsymbol{\theta}$ is
\begin{align*}
\nabla_{\boldsymbol{\theta}}\ell_t(\boldsymbol{\theta}) & = \left(\nabla_{\boldsymbol{\theta}_v}\ell_t(\boldsymbol{\theta})^\top, \frac{\partial \ell_t(\boldsymbol{\theta})}{\partial\theta_\varepsilon}\right)^\top,      
\end{align*}
with
\begin{align*}
\nabla_{\boldsymbol{\theta}_v}\ell_t(\boldsymbol{\theta})  & = \frac{\partial \ell_t}{\partial\sigma^2_t}\nabla_{\boldsymbol{\theta}_v}\sigma^2_t(\boldsymbol{\theta}_v) + \frac{\partial \ell_t}{\partial\mu} \mathbf{e}_1, 
\end{align*}
where $\mathbf{e}_1$ is the first canonical basis vector.
\item[(ii)] The Hessian of $\ell_t$ with respect to
$\boldsymbol{\theta}$ is
\begin{align*}
\nabla^2_{\boldsymbol{\theta}}\ell_t(\boldsymbol{\theta}) = & 
\begin{bmatrix}
\nabla^2_{\boldsymbol{\theta}_v}\ell_t(\boldsymbol{\theta}) & \nabla_{\boldsymbol{\theta}_v}\frac{\partial \ell_t(\boldsymbol{\theta})}{\partial\theta_\varepsilon} \\
\left(\nabla_{\boldsymbol{\theta}_v}\frac{\partial \ell_t(\boldsymbol{\theta})}{\partial\theta_\varepsilon}\right)^\top & \frac{\partial^2 \ell_t(\boldsymbol{\theta})}{\partial\theta_\varepsilon^2}
\end{bmatrix},
\end{align*}
with
\begin{align*}
\nabla_{\boldsymbol{\theta}_v}^2\ell_t(\boldsymbol{\theta}) & = \frac{\partial^2 \ell_t}{\partial(\sigma^2_t)^2}\nabla_{\boldsymbol{\theta}_v}\sigma^2_t(\boldsymbol{\theta}_v)\nabla_{\boldsymbol{\theta}_v}\sigma^2_t(\boldsymbol{\theta}_v)^\top + \frac{\partial^2 \ell_t}{\partial\sigma^2_t\partial\mu}\left( \nabla_{\boldsymbol{\theta}_v}\sigma^2_t(\boldsymbol{\theta}_v)\mathbf{e}_1^\top + \mathbf{e}_1\nabla_{\boldsymbol{\theta}_v}\sigma^2_t(\boldsymbol{\theta}_v)^\top \right) \nonumber   \\
 & +  \frac{\partial \ell_t}{\partial\sigma^2_t} \nabla^2_{\boldsymbol{\theta}_v}\sigma^2_t(\boldsymbol{\theta}_v) + \frac{\partial^2 \ell_t}{\partial\mu^2} \mathbf{e}_1\mathbf{e}_1^\top,
\end{align*}
and
\begin{align*}
\nabla_{\boldsymbol{\theta}_v}\frac{\partial\ell_t(\boldsymbol{\theta})}{\partial \theta_\varepsilon} & = \frac{\partial^2 \ell_t}{\partial\sigma^2_t\partial\theta_\varepsilon}\nabla_{\boldsymbol{\theta}_v}\sigma^2_t(\boldsymbol{\theta}_v) + \frac{\partial^2 \ell_t }{\partial\mu\partial \theta_\varepsilon}  \mathbf{e}_1.
\end{align*}
\end{enumerate}
\end{proposition}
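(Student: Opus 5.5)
The proof is a direct chain-rule computation, organised in the same block form as the statement. Since the error-distribution parameter $\theta_\varepsilon$ does not enter the conditional variance, the conditional log-density depends on $\boldsymbol{\theta}$ only through three scalar arguments:
$$\ell_t(\boldsymbol{\theta}) = \ell_t\bigl(\sigma^2_t(\boldsymbol{\theta}_v), \mu, \theta_\varepsilon\bigr),$$
where $\mu$ is the first coordinate of $\boldsymbol{\theta}_v$. For the Gaussian case the last argument is simply absent. So we view $\ell_t$ as the composition of the smooth map $\boldsymbol{\theta}\mapsto(\sigma_t^2(\boldsymbol{\theta}_v),\mathbf{e}_1^\top\boldsymbol{\theta}_v,\theta_\varepsilon)$ with the three-argument function whose partials are given in Propositions \ref{supp:prop_partial_Gauss} and \ref{supp:prop_partial_Studentt}.

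For part (i), the multivariate chain rule applied to the $\boldsymbol{\theta}_v$-block gives the contribution $\frac{\partial \ell_t}{\partial\sigma^2_t}\nabla_{\boldsymbol{\theta}_v}\sigma_t^2$. Adding the direct dependence on $\mu$ gives $\frac{\partial \ell_t}{\partial\mu}\nabla_{\boldsymbol{\theta}_v}\mu = \frac{\partial \ell_t}{\partial\mu}\mathbf{e}_1$. The $\theta_\varepsilon$-component is just the partial derivative itself, because $\sigma^2_t$ and $\mu$ do not depend on $\theta_\varepsilon$. Stacking these pieces yields (i).

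For part (ii), I differentiate the expression in (i) once more with respect to $\boldsymbol{\theta}_v$, using the product rule on $\frac{\partial \ell_t}{\partial\sigma^2_t}\nabla_{\boldsymbol{\theta}_v}\sigma^2_t$.

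1. Differentiating the scalar factor $\frac{\partial \ell_t}{\partial\sigma^2_t}$ produces
$$\Bigl(\tfrac{\partial^2\ell_t}{\partial(\sigma_t^2)^2}\nabla\sigma_t^2 + \tfrac{\partial^2\ell_t}{\partial\sigma^2_t\partial\mu}\mathbf{e}_1\Bigr)^{\!\top},$$
which multiplies $\nabla\sigma_t^2$ on the left as an outer product.
2. Differentiating $\nabla\sigma^2_t$ produces $\frac{\partial \ell_t}{\partial\sigma^2_t}\nabla^2\sigma_t^2$.
3. Differentiating $\frac{\partial \ell_t}{\partial\mu}\mathbf{e}_1$ produces
$$\mathbf{e}_1\Bigl(\tfrac{\partial^2\ell_t}{\partial\mu\partial\sigma_t^2}\nabla\sigma^2_t + \tfrac{\partial^2\ell_t}{\partial\mu^2}\mathbf{e}_1\Bigr)^{\!\top}.$$

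Collecting terms, and using equality of mixed partials $\partial^2\ell_t/\partial\sigma_t^2\partial\mu = \partial^2\ell_t/\partial\mu\partial\sigma_t^2$ for the smooth densities considered, gives the symmetric cross term $\nabla\sigma_t^2\mathbf{e}_1^\top+\mathbf{e}_1\nabla\sigma_t^{2\top}$ and hence the stated $\nabla^2_{\boldsymbol{\theta}_v}\ell_t$. The off-diagonal block is obtained by differentiating $\partial\ell_t/\partial\theta_\varepsilon$ with respect to $\boldsymbol{\theta}_v$, again by the chain rule through $\sigma_t^2$ and $\mu$. The corner entry is the pure second partial in $\theta_\varepsilon$. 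Symmetry of the full Hessian gives the transposed lower-left block.

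There is no real obstacle; the only point needing care is bookkeeping. In particular, I need to keep the implicit $\mu$-dependence inside $\sigma^2_t(\boldsymbol{\theta}_v)$ (through $z_{t-i}=y_{t-i}-\mu$) separate from the explicit $\mu$ argument of $\ell_t$, so that neither contribution is double counted nor dropped. Writing $\ell_t$ as the composition above handles this cleanly, since the implicit dependence is then fully captured by $\nabla_{\boldsymbol{\theta}_v}\sigma^2_t$ as computed in Propositions \ref{prop:grad_Hess_GARCH} and \ref{prop:grad_Hess_TGARCH}.
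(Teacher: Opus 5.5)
Your proposal is correct and follows essentially the same route as the paper. Like the paper, you write $\ell_t=\ell_t(\sigma_t^2(\boldsymbol{\theta}_v),\mu,\theta_\varepsilon)$, use the chain rule for the gradient, then apply the product rule together with the chain rule to $\partial\ell_t/\partial\sigma_t^2$ and $\partial\ell_t/\partial\mu$, and collect terms using equality of mixed partials. The only cosmetic difference is that you write derivatives of scalar-times-vector terms in Jacobian orientation (e.g.\ $\mathbf{e}_1(\cdot)^\top$), whereas the paper uses the transpose; this is immaterial because the assembled Hessian is symmetric.
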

\begin{remark}
Proposition \ref{prop:grad_Hess_log_dens} separates the contribution of the conditional variance dynamics from that of the measurement error. For a given choice of $p(y_t |\mathcal{F}_{t-1}, \boldsymbol{\theta})$, the gradient and Hessian are obtained by substituting the corresponding first and second derivatives of $\ell_t$ with respect to $\sigma_t^2$, $\mu$, and $\theta_\varepsilon$; see Propositions \ref{supp:prop_partial_Gauss} and \ref{supp:prop_partial_Studentt}. In the Gaussian case, the parameter $\theta_\varepsilon$ is absent, and the corresponding components of the gradient and Hessian are therefore not present. In particular, all terms involving $\theta_\varepsilon$ vanish, and the expressions reduce to the subblocks associated with $\boldsymbol{\theta}_v$ and $\mu$.
\end{remark}

\subsubsection{Reparameterised gradients and Hessians}\label{app:reparam_grad_Hess}

Lemma \ref{lem:reparam_grad_Hess_variance} provides the gradient and Hessian of the conditional variance process with respect to the reparameterised variance parameters. These expressions enter the derivatives of the log-density
contributions, whose reparameterised forms are given in
Lemma \ref{lem:reparam_grad_Hess_log_dens}.

\begin{lemma}\label{lem:reparam_grad_Hess_variance}
Consider a conditional variance process $\sigma_t^2(\boldsymbol{\theta}_v)$ with gradient $\nabla_{\boldsymbol{\theta}_v}\sigma_t^2(\boldsymbol{\theta}_v)$ and Hessian $\nabla^2_{\boldsymbol{\theta}_v}\sigma_t^2(\boldsymbol{\theta}_v)$, and let $$\boldsymbol{\phi}_v=h(\boldsymbol{\theta}_v)$$ 
denote a one-to-one, elementwise transformation to an unrestricted parameterisation, with inverse
$$\boldsymbol{\theta}_v=h^{-1}(\boldsymbol{\phi}_v).$$
\begin{enumerate}
\item[(i)] The gradient with respect to
$\boldsymbol{\phi}_v$ is
\begin{align*}
\nabla_{\boldsymbol{\phi}_v}\sigma^2_t(\boldsymbol{\theta}_v)  & =J^{\boldsymbol{\phi}_v\top} \nabla_{\boldsymbol{\theta}_v}\sigma^2_t(\boldsymbol{\theta}_v),   
\end{align*}
with
$$J^{\boldsymbol{\phi}_v} = \mathrm{diag}\left(h^{-1\, '}_1(\phi_{v1}), \dots, h^{-1\, '}_{d_v}(\phi_{vd_v})\right).$$

\item[(ii)] The Hessian with respect to $\boldsymbol{\phi}_v$ is
\begin{align*}
\nabla^2_{\boldsymbol{\phi}_v}\sigma^2_t(\boldsymbol{\theta}_v) & = J^{\boldsymbol{\phi}_v\top}\nabla^2_{\boldsymbol{\theta}_v}\sigma^2_t(\boldsymbol{\theta}_v)J^{\boldsymbol{\phi}_v} + \sum_{i=1}^{d_v} \left(\nabla_{\boldsymbol{\theta}_v}\sigma^2_t(\boldsymbol{\theta}_v)\right)_i  h^{-1\, ''}_i(\phi_{vi})\mathbf{e}_i\mathbf{e}^\top_i,
\end{align*}
where $\mathbf{e}_i$ denotes the $i$th canonical basis vector.
\end{enumerate}
\end{lemma}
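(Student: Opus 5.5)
The statement to prove is Lemma \ref{lem:reparam_grad_Hess_variance}. The plan is to apply the multivariate chain rule to the composition $\boldsymbol{\phi}_v \mapsto \boldsymbol{\theta}_v = h^{-1}(\boldsymbol{\phi}_v) \mapsto \sigma_t^2(\boldsymbol{\theta}_v)$, and to exploit that $h$ acts elementwise. That structure makes the Jacobian of $h^{-1}$ diagonal and its second derivatives act only on the diagonal.

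\textbf{Step 1: Jacobian.} Write $\theta_{vi} = h_i^{-1}(\phi_{vi})$, so that $\partial\theta_{vi}/\partial\phi_{vj} = h_i^{-1\,'}(\phi_{vi})\,\mathbb{1}(i=j)$. The Jacobian $\partial\boldsymbol{\theta}_v/\partial\boldsymbol{\phi}_v^\top$ is therefore exactly $J^{\boldsymbol{\phi}_v}$.

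\textbf{Step 2: Part (i).} The chain rule gives
\begin{equation*}
\frac{\partial\sigma_t^2}{\partial\phi_{vj}} = \sum_i \frac{\partial\sigma_t^2}{\partial\theta_{vi}}\,\frac{\partial\theta_{vi}}{\partial\phi_{vj}} = \bigl(\nabla_{\boldsymbol{\theta}_v}\sigma_t^2\bigr)_j\, h_j^{-1\,'}(\phi_{vj}).
\end{equation*}
In vector form this is $J^{\boldsymbol{\phi}_v\top}\nabla_{\boldsymbol{\theta}_v}\sigma_t^2$. Since $J^{\boldsymbol{\phi}_v}$ is diagonal, the transpose is immaterial, but keeping it matches the general form.

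\textbf{Step 3: Part (ii).} Differentiate the expression from Step 2 with respect to $\phi_{vk}$ using the product rule. This produces two terms.
\begin{itemize}
\item[(a)] Differentiating $(\nabla_{\boldsymbol{\theta}_v}\sigma_t^2)_j$ through $\boldsymbol{\theta}_v$ gives $\sum_l (\nabla^2_{\boldsymbol{\theta}_v}\sigma_t^2)_{jl}\, h_l^{-1\,'}(\phi_{vl})\,\mathbb{1}(l=k)$. Multiplying by $h_j^{-1\,'}(\phi_{vj})$ yields the $(j,k)$ entry of $J^{\boldsymbol{\phi}_v\top}\nabla^2_{\boldsymbol{\theta}_v}\sigma_t^2 J^{\boldsymbol{\phi}_v}$.
\item[(b)] Differentiating $h_j^{-1\,'}(\phi_{vj})$ with respect to $\phi_{vk}$ gives $h_j^{-1\,''}(\phi_{vj})\,\mathbb{1}(j=k)$. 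Multiplying by $(\nabla_{\boldsymbol{\theta}_v}\sigma_t^2)_j$ gives a diagonal matrix, which is precisely $\sum_i (\nabla_{\boldsymbol{\theta}_v}\sigma_t^2)_i\, h_i^{-1\,''}(\phi_{vi})\,\mathbf{e}_i\mathbf{e}_i^\top$.
\end{itemize}
Adding (a) and (b) gives the stated Hessian.

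\textbf{Main obstacle.} The only real subtlety is regularity. The argument assumes $h^{-1}$ is twice differentiable and $\sigma_t^2$ is twice differentiable in $\boldsymbol{\theta}_v$, which holds for the smooth transformations used here (log, and identity for $\mu$). For TGARCH, the Heaviside function makes the $\mu$-derivatives distributional; the lemma is therefore applied with the logistic smoothing of the Remark following Proposition \ref{prop:grad_Hess_TGARCH}, under which the computation is classical.
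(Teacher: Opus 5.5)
Your proof is correct and follows the same route as the paper: the chain rule gives part (i), and differentiating $J^{\boldsymbol{\phi}_v\top}\nabla_{\boldsymbol{\theta}_v}\sigma^2_t(\boldsymbol{\theta}_v)$ with the product rule gives part (ii), where the elementwise form of $h$ reduces the second-derivative term to a diagonal matrix. The difference is only in presentation: you work entrywise, whereas the paper uses matrix differentials and the first identification theorem, treating $d(J^{\boldsymbol{\phi}_v\top})$ as a tensor contraction before specialising to the elementwise case.
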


\begin{lemma}\label{lem:reparam_grad_Hess_log_dens}
Consider a log-density $\ell_t(\boldsymbol{\theta})$ with conditional variance process $\sigma_t^2(\boldsymbol{\theta}_v)$, having gradient $\nabla_{\boldsymbol{\theta}_v}\sigma_t^2(\boldsymbol{\theta}_v)$ and Hessian $\nabla^2_{\boldsymbol{\theta}_v}\sigma_t^2(\boldsymbol{\theta}_v)$. Let $$\boldsymbol{\phi}=h(\boldsymbol{\theta}), \quad \boldsymbol{\theta} = (\boldsymbol{\theta}_v^\top, \theta_\varepsilon)^\top, \,\,
\boldsymbol{\phi} = (\boldsymbol{\phi}_v^\top, \phi_\varepsilon)^\top,$$ 
denote a one-to-one, elementwise transformation to an unrestricted parameterisation, with inverse
$$\boldsymbol{\theta}=h^{-1}(\boldsymbol{\phi}),$$
and let $\nabla_{\boldsymbol{\phi}_v}\sigma_t^2(\boldsymbol{\theta}_v)$ and Hessian $\nabla^2_{\boldsymbol{\phi}_v}\sigma_t^2(\boldsymbol{\theta}_v)$ be the reparameterisations obtained via Lemma \ref{lem:reparam_grad_Hess_variance}.
\begin{enumerate}
\item[(i)] The gradient of $\ell_t$ with respect to
$\boldsymbol{\phi}$ is
\begin{align*}
\nabla_{\boldsymbol{\phi}}\ell_t(\boldsymbol{\theta}) & = \left(\nabla_{\boldsymbol{\phi}_v}\ell_t(\boldsymbol{\theta})^\top, \frac{\partial \ell_t(\boldsymbol{\theta})}{\partial\phi_\varepsilon}\right)^\top,      
\end{align*}
where 
\begin{align*}
\nabla_{\boldsymbol{\phi}_v}\ell_t(\boldsymbol{\theta})  & = \frac{\partial \ell_t}{\partial\sigma^2_t}\nabla_{\boldsymbol{\phi}_v}\sigma^2_t(\boldsymbol{\theta}_v) + \frac{\partial \ell_t}{\partial\mu} \mathbf{e}_1 h^{-1\, '}_1(\phi_1), 
\end{align*}
and $$\frac{\partial \ell_t(\boldsymbol{\theta})}{\partial \phi_\varepsilon}=\frac{\partial \ell_t}{\partial\theta_\varepsilon}\frac{\partial \theta_\varepsilon}{\partial \phi_\varepsilon}.$$
\item[(ii)] The Hessian of $\ell_t$ with respect to $\boldsymbol{\phi}$ is

\begin{align*}
\nabla^2_{\boldsymbol{\phi}}\ell_t(\boldsymbol{\theta}) = & 
\begin{bmatrix}
\nabla^2_{\boldsymbol{\phi}_v}\ell_t(\boldsymbol{\theta}) & \nabla_{\boldsymbol{\phi}_v}\frac{\partial \ell_t(\boldsymbol{\theta})}{\partial\phi_\varepsilon} \\
\left(\nabla_{\boldsymbol{\phi}_v}\frac{\partial \ell_t(\boldsymbol{\theta})}{\partial\phi_\varepsilon}\right)^\top & \frac{\partial^2 \ell_t(\boldsymbol{\theta})}{\partial\phi_\varepsilon^2}
\end{bmatrix},
\end{align*}
where
\begin{align*}
\nabla^2_{\boldsymbol{\phi}_v}\ell_t(\boldsymbol{\theta}) &= \frac{\partial^2\ell_t}{\partial(\sigma^2_t)^2}\nabla_{\boldsymbol{\phi}_v}\sigma^2_t(\boldsymbol{\theta}_v)\nabla_{\boldsymbol{\phi}_v}\sigma^2_t(\boldsymbol{\theta}_v)^\top + \frac{\partial \ell_t}{\partial \sigma^2_t}\nabla \boldsymbol{\phi}_v^2\sigma^2_t(\boldsymbol{\theta}_v)\\
&\quad + \frac{\partial^2 \ell_t}{\partial \sigma^2_t\partial \mu}\left(\mathbf{e}_1 \nabla_{\boldsymbol{\phi}_v}\sigma^2_t(\boldsymbol{\theta}_v) + \nabla_{\boldsymbol{\phi}_v}\sigma^2_t(\boldsymbol{\theta}_v) \mathbf{e}_1 ^\top \right) h^{-1\, '}_1(\phi_1) \\
&\quad + \frac{\partial^2 \ell_t}{\partial \mu^2} \mathbf{e}_1 \mathbf{e}_1 ^\top\left(h^{-1\, '}_1(\phi_1)\right)^2 + \frac{\partial \ell_t}{\partial \mu}\mathbf{e}_1\mathbf{e}_1^\top h^{-1\, ''}_1(\phi_1),
\end{align*}
\begin{align*}
    \nabla_{\boldsymbol{\phi}_v}\frac{\partial \ell_t(\boldsymbol{\theta})}{\partial\phi_\varepsilon} = \frac{\partial^2\ell_t}{\partial \sigma^2_t \partial \theta_\varepsilon}\frac{\partial\theta_\varepsilon}{\partial \phi_\varepsilon}(\nabla_{\boldsymbol{\phi}_v} \sigma^2_t\left(\boldsymbol{\theta}_v)\right)^\top + \frac{\partial^2\ell_t}{\partial \mu \partial \theta_\varepsilon}\frac{\partial\theta_\varepsilon}{\partial \phi_\varepsilon}\mathbf{e}_1^\top h^{-1\, '}_1(\phi_1),
\end{align*}
and
\begin{align*}
    \frac{\partial^2 \ell_t(\boldsymbol{\theta})}{\partial\phi_\varepsilon^2} = \frac{\partial^2\ell_t}{\partial \theta_\varepsilon^2}\frac{\partial\theta_\varepsilon}{\partial \phi_\varepsilon}\frac{\partial\theta_\varepsilon}{\partial \phi_\varepsilon}^\top + \frac{\partial \ell_t}{\partial \theta_\varepsilon}\frac{\partial^2\theta_\varepsilon}{\partial \phi_\varepsilon^2}.
\end{align*}
\end{enumerate}
\end{lemma}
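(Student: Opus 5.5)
The plan is to derive both parts by a direct chain rule. The map $\boldsymbol{\phi}\mapsto\boldsymbol{\theta}=h^{-1}(\boldsymbol{\phi})$ is elementwise, so it suffices to know the derivatives of $\ell_t$ with respect to $\boldsymbol{\theta}$ as a function of the intermediate quantities $(\sigma^2_t,\mu,\theta_\varepsilon)$. Write $\ell_t=f(\sigma^2_t(\boldsymbol{\theta}_v(\boldsymbol{\phi}_v)),\mu(\phi_1),\theta_\varepsilon(\phi_\varepsilon))$, where $f$ is the Gaussian or Student-$t$ log-density in \eqref{eq:Gauss_log_density} or \eqref{eq:Studentt_log_density}. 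Because $h^{-1}$ acts coordinatewise, three facts follow. First, $\mu=h_1^{-1}(\phi_1)$ depends only on $\phi_1$. Second, $\theta_\varepsilon$ depends only on $\phi_\varepsilon$. Third, $\sigma^2_t$ depends only on $\boldsymbol{\phi}_v$, and its reparameterised gradient and Hessian are already supplied by Lemma \ref{lem:reparam_grad_Hess_variance}. The whole proof therefore reduces to differentiating a composition of a three-argument function with three mutually separated inner maps.

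For part (i), I will apply the multivariate chain rule once. Differentiating with respect to $\boldsymbol{\phi}_v$ gives $\frac{\partial f}{\partial\sigma^2_t}\nabla_{\boldsymbol{\phi}_v}\sigma^2_t+\frac{\partial f}{\partial\mu}\nabla_{\boldsymbol{\phi}_v}\mu$. Since $\nabla_{\boldsymbol{\phi}_v}\mu=\mathbf{e}_1h^{-1\,'}_1(\phi_1)$, this is exactly the stated formula, because $\mu$ is the first coordinate by the convention of Section \ref{subsec:general_model}. For the $\phi_\varepsilon$ component, only $\theta_\varepsilon$ depends on $\phi_\varepsilon$, so the derivative is $\frac{\partial\ell_t}{\partial\theta_\varepsilon}\frac{\partial\theta_\varepsilon}{\partial\phi_\varepsilon}$.

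For part (ii), I will differentiate the gradient from (i) once more and treat the three blocks separately.

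The $\boldsymbol{\phi}_v\boldsymbol{\phi}_v$ block is the product rule applied to the two terms of (i).
\begin{itemize}
\item[(a)] Differentiating $\frac{\partial f}{\partial\sigma^2_t}$ produces $\frac{\partial^2 f}{\partial(\sigma^2_t)^2}\nabla\sigma^2_t+\frac{\partial^2 f}{\partial\sigma^2_t\partial\mu}\mathbf{e}_1h^{-1\,'}_1$. Multiplying by $\nabla\sigma^2_t$ gives the outer-product term and one half of the symmetric cross term.
\item[(b)] Differentiating $\nabla_{\boldsymbol{\phi}_v}\sigma^2_t$ produces $\frac{\partial f}{\partial\sigma^2_t}\nabla^2_{\boldsymbol{\phi}_v}\sigma^2_t$, which is taken from Lemma \ref{lem:reparam_grad_Hess_variance}(ii).
\item[(c)] Differentiating $\frac{\partial f}{\partial\mu}\mathbf{e}_1h^{-1\,'}_1(\phi_1)$ produces three pieces: the other half of the cross term, the term $\frac{\partial^2 f}{\partial\mu^2}\mathbf{e}_1\mathbf{e}_1^\top(h^{-1\,'}_1)^2$, and, from the product rule on $h^{-1\,'}_1$, the term $\frac{\partial f}{\partial\mu}\mathbf{e}_1\mathbf{e}_1^\top h^{-1\,''}_1$.
\end{itemize}

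The mixed $\boldsymbol{\phi}_v\phi_\varepsilon$ block comes from differentiating $\frac{\partial f}{\partial\theta_\varepsilon}\frac{\partial\theta_\varepsilon}{\partial\phi_\varepsilon}$ with respect to $\boldsymbol{\phi}_v$. The factor $\frac{\partial\theta_\varepsilon}{\partial\phi_\varepsilon}$ does not depend on $\boldsymbol{\phi}_v$, so it passes through as a constant. Only $\frac{\partial f}{\partial\theta_\varepsilon}$ is differentiated, through $\sigma^2_t$ and $\mu$.

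The $\phi_\varepsilon\phi_\varepsilon$ block is the one-dimensional second-order chain rule, $f''(\theta'_\varepsilon)^2+f'\theta''_\varepsilon$.

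There is no genuine analytic obstacle here; all the work is bookkeeping. The main point needing care is the $\mu$ coordinate. It enters $\ell_t$ both directly and through $\sigma^2_t$, so I must confirm the direct contribution is not double counted. The partials $\frac{\partial f}{\partial\mu}$ in Propositions \ref{supp:prop_partial_Gauss}--\ref{supp:prop_partial_Studentt} hold $\sigma^2_t$ fixed, while the dependence of $\sigma^2_t$ on $\mu$ is already contained in $\nabla_{\boldsymbol{\phi}_v}\sigma^2_t$ from Lemma \ref{lem:reparam_grad_Hess_variance}. With that split, the two routes through $\mu$ are counted exactly once each, and the result is consistent with Proposition \ref{prop:grad_Hess_log_dens}.

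As a final check, I will set $h$ to the identity. Then $h^{-1\,'}=1$ and $h^{-1\,''}=0$, and the formulas should reduce exactly to Proposition \ref{prop:grad_Hess_log_dens}. I will also verify that the Hessian blocks are symmetric, in particular that the cross term as written, which should read $\mathbf{e}_1\nabla^\top\sigma^2_t+\nabla\sigma^2_t\mathbf{e}_1^\top$, is indeed symmetric.
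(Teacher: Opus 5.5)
Your proposal is correct and follows essentially the same route as the paper. Part (i) is a single application of the multivariate chain rule, using that the elementwise map makes $\mu$ depend only on $\phi_1$ and $\theta_\varepsilon$ only on $\phi_\varepsilon$, and part (ii) applies the product rule blockwise to that gradient to get the $\boldsymbol{\phi}_v\boldsymbol{\phi}_v$, mixed, and $\phi_\varepsilon\phi_\varepsilon$ blocks; your remark that the cross term should read $\mathbf{e}_1\nabla^\top_{\boldsymbol{\phi}_v}\sigma^2_t+\nabla_{\boldsymbol{\phi}_v}\sigma^2_t\mathbf{e}_1^\top$ correctly flags a typographical slip in the stated formula, not a flaw in the argument.
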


\putbib
\end{bibunit}

\newpage

% For sections: prefix with S
\renewcommand{\thesection}{S\arabic{section}}

% For subsections (optional, remove if not needed)
\renewcommand{\thesubsection}{\thesection.\arabic{subsection}}

% For equations: prefix with S<section>.<equation>
\renewcommand{\theequation}{S\arabic{section}.\arabic{equation}}

% For figures and tables and lemmas
\setcounter{section}{0}
\setcounter{equation}{0}
\setcounter{lemma}{0}
\setcounter{remark}{0}
\setcounter{proposition}{0}
\setcounter{figure}{0}
\setcounter{table}{0}

\renewcommand{\thefigure}{S\arabic{figure}}
\renewcommand{\thetable}{S\arabic{table}}
\renewcommand{\thelemma}{S\arabic{lemma}}
\renewcommand{\theremark}{S\arabic{remark}}
\renewcommand{\theproposition}{S\arabic{proposition}}

\setcounter{page}{1}
\renewcommand{\thepage}{S\arabic{page}}

\begin{bibunit}

\section{Proofs}\label{supp:Proofs}

\begin{proof}[Proof of Lemma \ref{lem:taylor_remainder}]
For any $\boldsymbol{\phi}$ satisfying $$\|\boldsymbol{\phi} - \boldsymbol{\phi}^\star\|\leq r,$$ by Taylor's theorem with a Lagrange remainder, there exists a point $\widetilde{\boldsymbol{\phi}}$ on the line segment between $\boldsymbol{\phi}^\star$ and $\boldsymbol{\phi}$ such that
\begin{align*}
    \ell_t(\boldsymbol{\phi}) = \ell_t(\boldsymbol{\phi}^\star) 
+ \nabla^\top_{\boldsymbol{\phi}} \ell_t(\boldsymbol{\phi}^\star)(\boldsymbol{\phi} - \boldsymbol{\phi}^\star) + \frac{1}{2}(\boldsymbol{\phi} - \boldsymbol{\phi}^\star)^\top\nabla^2_{\boldsymbol{\phi}} \ell_t(\boldsymbol{\phi}^\star) (\boldsymbol{\phi} -\boldsymbol{\phi}^\star) + R_t(\boldsymbol{\phi}),
\end{align*}
where the remainder term satisfies
\begin{align}\label{eq:bounded_remainder_lagrange}
    |R_t(\boldsymbol{\phi})| & \leq \frac{1}{6}\| \nabla^3_{\boldsymbol{\phi}}\ell_t(\widetilde{\boldsymbol{\phi}})\|\|\boldsymbol{\phi} - \boldsymbol{\phi}^\star\|^3.
\end{align}
The norm $\| \nabla^3_{\boldsymbol{\phi}}\ell_t(\cdot)\|$, defined in Lemma \ref{lem:taylor_remainder}, measures the largest possible third-order directional derivative of $\ell_t$ at its argument over all unit directions in parameter space. This derivative is given by 
$$D^3\ell_t(\boldsymbol{\phi})(\mathbf{u},\mathbf{v},\mathbf{w})=\sum_{i,j,k}\frac{\partial^3\ell_t(\boldsymbol{\phi})}{\partial \phi_i \partial \phi_j \partial \phi_k}u_i v_j w_k.$$

Since $\widetilde{\boldsymbol{\phi}}$ lies on the line segment between $\boldsymbol{\phi}^\star$ and $\boldsymbol{\phi}$, there exists an $a\in[0,1]$ such that $\widetilde{\boldsymbol{\phi}}=\boldsymbol{\phi}^\star+a(\boldsymbol{\phi}-\boldsymbol{\phi}^\star)$. Hence,
$$\|\widetilde{\boldsymbol{\phi}} - \boldsymbol{\phi}^\star\| = a\|\boldsymbol{\phi} - \boldsymbol{\phi}^\star\| \leq \|\boldsymbol{\phi} - \boldsymbol{\phi}^\star\|.$$
Therefore, if $\boldsymbol{\phi} \in \mathcal{B}_T$, then also $\widetilde{\boldsymbol{\phi}} \in \mathcal{B}_T$. In particular, for sufficiently large $T$, $\widetilde{\boldsymbol{\phi}}$ lies in the region $\|\boldsymbol{\phi} - \boldsymbol{\phi}^\star\|\leq r$, where the assumed bound $\|\nabla_{\boldsymbol{\phi}}^3 \ell_t(\widetilde{\boldsymbol{\phi}})\|\leq K$ holds. Consequently, using \eqref{eq:bounded_remainder_lagrange}, 
$$|e_t(\boldsymbol{\phi})|=|R_t(\boldsymbol{\phi})| \leq \frac{K}{6}\| \boldsymbol{\phi} - \boldsymbol{\phi}^\star\|^3.$$
Then, for $\boldsymbol{\phi}\in \mathcal{B}_T$, we have
$$\|\boldsymbol{\phi}- \boldsymbol{\phi}^\star\| \leq C_0T^{-1/2},$$ and therefore
$$|e_t(\boldsymbol{\phi})| \leq \frac{K}{6}C_0^3T^{-3/2}.$$
Taking the supremum over $\boldsymbol{\phi}\in\mathcal{B}_T$ and the maximum over $1\leq t\leq T$ yields
\[
\mathop{\max\vphantom{\sup}}\limits_{1 \le t \le T}\;
\mathop{\sup\vphantom{\max}}\limits_{\boldsymbol{\phi}\in \mathcal{B}_T}
|e_t(\boldsymbol{\phi})|
= \mathcal{O}(T^{-3/2}).
\]
\end{proof}

\begin{proof}[Proof of Lemma \ref{lem:asymtotic_variance_estimators}] In what follows, expressions involving subtraction of order terms are interpreted as upper bounds, and the dominant order in $T$ is retained.
    For part (i), rewrite $$\frac{1}{m}\sum_{t=1}^T\left(\frac{e_t}{p_t} - e\right)^2 p_t = \frac{1}{m}\sum_{t=1}^T\left(\frac{e_t}{p_t} - T\overline{e}\right)^2 p_t,$$
    with $\overline{e} = e/T$ of the same order as $e_t$, i.e.\ $\mathcal{O}(T^{-3/2})$. Since 
    $$p_t = \frac{1/T}{\sum_{j=1}^T 1/T}, $$
    $p_t = \mathcal{O}(T^{-1})$, which gives
    \begin{align*}
        \frac{1}{m}\sum_{t=1}^T\left(\frac{e_t}{p_t} - T\overline{e}\right)^2 p_t & = \frac{1}{m}\sum_{t=1}^T\left(\mathcal{O}(T)\mathcal{O}(T^{-3/2}) - T\mathcal{O}(T^{-3/2})\right)^2 \mathcal{O}(T^{-1}) \\
        & = \frac{1}{m}T\left(\mathcal{O}(T^{-1/2})\right)^2 \mathcal{O}(T^{-1}) \\
        & = \frac{1}{m}\mathcal{O}(T^{-1}).
    \end{align*}
    For part (ii), under the power-law decaying probabilities, first note that the normalising constant is $\mathcal{O}(1)$ in $T$ since the denominator in
    $$p_t = \frac{1/t^\lambda}{\sum_{j=1}^T 1/j^\lambda} $$
    converges to Riemann's zeta function $\zeta(\lambda)$ as $T \rightarrow \infty$ for $\lambda>1$. Thus $p_t = \mathcal{O}(1)1/t^{\lambda}$, which gives
    \begin{align*}
        \frac{1}{m}\sum_{t=1}^T\left(\frac{e_t}{p_t} - T\overline{e}\right)^2 p_t & = \frac{1}{m}\sum_{t=1}^T\left(\mathcal{O}(1)t^{\lambda}\mathcal{O}(T^{-3/2}) - T\mathcal{O}(T^{-3/2})\right)^2 \mathcal{O}(1)t^{-\lambda} \\
        & = \frac{1}{m} \sum_{t=1}^T\left(t^{2\lambda}\mathcal{O}(T^{-3}) - t^\lambda \mathcal{O}(T^{-2}) + \mathcal{O}(T^{-1})\right)  t^{-\lambda} \\
        & = \frac{1}{m} \left( \mathcal{O}(T^{-3})\sum_{t=1}^T t^{\lambda} - \mathcal{O}(T^{-2})\sum_{t=1}^T 1 + \mathcal{O}(T^{-1})\sum_{t=1}^T t^{-\lambda}\right).
    \end{align*}
    From above $\sum_{t=1}^T t^{-\lambda}=\mathcal{O}(1)$, and by Faulhaber's formula (see, for example \citealp{knuth1993johann}) 
$$\sum_{t=1}^T t^\lambda = \mathcal{O}(T^{\lambda + 1}).$$
Thus
\begin{align}\label{eq:order_variance_PD}
        \frac{1}{m}\sum_{t=1}^T\left(\frac{e_t}{p_t} - T\overline{e}\right)^2 p_t & = \frac{1}{m} \left( \mathcal{O}(T^{\lambda - 2}) - \mathcal{O}(T^{-1}) + \mathcal{O}(T^{-1})\right) \nonumber \\
        & = \frac{1}{m} \mathcal{O}(T^{\lambda - 2}),
\end{align}
for $\lambda > 1$. Note that in \eqref{eq:order_variance_PD} the variance goes to zero when $1 < \lambda < 2$, is bounded when $\lambda = 2$, and grows when $\lambda > 2$.

For the exponentially decaying probabilities, note that the normalising constant is $\mathcal{O}(1)$ in $T$ since the denominator in
$$p_t = \frac{\exp(-\kappa t)}{\sum_{j=1}^T \exp(-\kappa j)},$$
is a geometric sum $$\sum_{j=1}^T \left(\exp(-\kappa)\right)^j, \quad  \text{ with } |\exp(-\kappa)| < 1,\,\,\kappa > 0,$$
and thus converges as $T \rightarrow \infty$. Hence $p_t = \mathcal{O}(1)\exp(-\kappa t)$, which gives
    \begin{align*}
        \frac{1}{m}\sum_{t=1}^T\left(\frac{e_t}{p_t} - T\overline{e}\right)^2 p_t & = \frac{1}{m}\sum_{t=1}^T\left(\mathcal{O}(1)\exp(\kappa t)\mathcal{O}(T^{-3/2}) - T\mathcal{O}(T^{-3/2})\right)^2 \mathcal{O}(1)\exp(-\kappa t) \\
        & = \frac{1}{m} \sum_{t=1}^T\left(\exp(2\kappa t)\mathcal{O}(T^{-3}) - \exp(\kappa t) \mathcal{O}(T^{-2}) + \mathcal{O}(T^{-1})\right) \exp(-\kappa t) \\
        & = \frac{1}{m} \left( \mathcal{O}(T^{-3})\sum_{t=1}^T \exp(\kappa t) - \mathcal{O}(T^{-2})\sum_{t=1}^T 1 + \mathcal{O}(T^{-1})\sum_{t=1}^T \exp(-\kappa t)\right).
    \end{align*}
Now, the geometric sum (for a finite $T$)
$$\sum_{t=1}^T \exp(\kappa t) = \exp(\kappa 1) + \exp(\kappa 2) + \cdots + \exp(\kappa T) = \frac{\exp(\kappa)(\exp(\kappa T) - 1)}{\exp(\kappa) - 1} = \mathcal{O}(\exp(\kappa T)),$$ thus
\begin{align*}
        \frac{1}{m}\sum_{t=1}^T\left(\frac{e_t}{p_t} - T\overline{e}\right)^2 p_t & = \frac{1}{m} \left( \mathcal{O}(T^{- 3}\exp(\kappa T)) - \mathcal{O}(T^{-1}) + \mathcal{O}(T^{-1})\right) \nonumber \\
        & = \frac{1}{m} \mathcal{O}(T^{- 3}\exp(\kappa T)).
\end{align*}
\end{proof}

\begin{proof}[Proof of Lemma \ref{lem:tail_smaller}]
To prove part (i), write
\[
\varepsilon(\gamma)
=
\frac{(t^\star+b)^{-\gamma}(T-t^\star)}
{(t^\star+b)^{-\gamma}(T-t^\star) + \sum_{j=1}^{t^\star}(j+b)^{-\gamma}}
=
\frac{1}{1+R(\gamma)},
\]
where
\[
R(\gamma)
=
\frac{\sum_{j=1}^{t^\star}(j+b)^{-\gamma}}{(t^\star+b)^{-\gamma}(T-t^\star)}
=
\frac{1}{T-t^\star}\sum_{j=1}^{t^\star}\left(\frac{t^\star+b}{j+b}\right)^{\gamma}.
\]
For each $j=1,\dots,t^\star$, we have $(t^\star+b)/(j+b)\ge 1$, so the function
$\left(\frac{t^\star+b}{j+b}\right)^{\gamma}$ is nondecreasing in $\gamma$.
Hence $R(\gamma)$ is nondecreasing in $\gamma$, and since $\varepsilon(\gamma)=1/(1+R(\gamma))$,
it follows that $\varepsilon(\gamma)$ is non-increasing in $\gamma$. If $t^\star \geq 2$, then for any $j\in \{1, \dots, t^\star - 1\}$ we have
$$\frac{t^{\star}+b}{j + b} > 1, $$
so the function $$\left(\frac{t^\star+b}{j+b}\right)^\gamma$$ is strictly increasing in $\gamma$ for such $j$. Since $R(\gamma)$ is a sum of non-decreasing terms with at least one strictly increasing term, $R(\gamma)$ is strictly increasing in $\gamma$, which implies that $\varepsilon(\gamma)$ is strictly decreasing in $\gamma$.

To prove part (ii), suppose $\gamma > 0$. Then, for each $j=1,\dots, t^\star$ ,
$$(j + b)^{-\gamma} \geq (t^\star + b)^{-\gamma} \quad \text{with equality only for } j=t^\star \text{},$$
and thus
$$\sum_{j=1}^{t^\star} (j + b)^{-\gamma} > \sum_{j=1}^{t^\star} (t^\star + b)^{-\gamma}=t^\star (t^\star + b)^{-\gamma}.$$
Now, 
\begin{align*}
    \varepsilon(\gamma) & =
\frac{(t^\star+b)^{-\gamma}(T-t^\star)}
{(t^\star+b)^{-\gamma}(T-t^\star) + \sum_{j=1}^{t^\star}(j+b)^{-\gamma}} \\
& < \frac{(t^\star+b)^{-\gamma}(T-t^\star)}
{(t^\star+b)^{-\gamma}(T-t^\star) + t^\star (t^\star + b)^{-\gamma}} \\
& < \frac{T-t^\star}
{T},
\end{align*}
which establishes the strict inequality for $\gamma > 0$. Evaluating $\varepsilon(0)$ gives $\varepsilon(0)=(T-t^\star)/T$, which proves the equality case.

To prove part (iii), recall that by \eqref{eq:trunc_probs_TPD}, for $t\in\mathcal{H}$,
\begin{align*}
p_t
&=(1-\varepsilon(\gamma))\,
\frac{(t+b)^{-\gamma}}{\sum_{j=1}^{t^\star}(j+b)^{-\gamma}}.
\end{align*}
Since $(t+b)^{-\gamma}$ is non-increasing in $t$ for $\gamma\ge 0$, the smallest
head probability occurs at $t=t^\star$, so $p_t \ge p_{t^\star}$ for all
$t\in\mathcal{H}$. Moreover, by \eqref{eq:eps_gamma}, it follows that
\begin{align*}
1-\varepsilon(\gamma)
& =
\frac{\sum_{j=1}^{t^\star}(j+b)^{-\gamma}}
{(t^\star+b)^{-\gamma}(T-t^\star)
+\sum_{j=1}^{t^\star}(j+b)^{-\gamma}}.
\end{align*}
Hence,
\begin{align*}
p_{t^\star}
&=
(1-\varepsilon(\gamma))\,
\frac{(t^\star+b)^{-\gamma}}
{\sum_{j=1}^{t^\star}(j+b)^{-\gamma}} \\
&=
\frac{(t^\star+b)^{-\gamma}}
{(t^\star+b)^{-\gamma}(T-t^\star)
+\sum_{j=1}^{t^\star}(j+b)^{-\gamma}} \\
&=
\frac{\varepsilon(\gamma)}{T-t^\star}.
\end{align*}
Therefore, for all $t\in\mathcal{H}$,
\begin{align*}
p_t \ge p_{t^\star}=\frac{\varepsilon(\gamma)}{T-t^\star}.
\end{align*}
\end{proof}

\begin{proof}[Proof of Lemma \ref{lem:tail_smaller_higher_variance}]
Under homogeneous control variate residuals, $$e(\boldsymbol{\phi})=\sum_{t=1}^Te_t(\boldsymbol{\phi})=T\overline{e}(\boldsymbol{\phi}).$$
Then, by Lemma \ref{lem:expectation_variance_WDE}(ii), and focusing only on the tail contributions,
\begin{align}\label{eq:variance_from_tail}
    \mathbb{V}_\mathcal{T}(\gamma)= \frac{\overline{e}(\boldsymbol{\phi})^2}{m}\sum_{t\in \mathcal{T}} \frac{(1-Tp_t(\gamma))^2}{p_t(\gamma)},
\end{align}
where $p_t(\gamma)$ denotes the truncated sampling probabilities in \eqref{eq:trunc_probs_TPD}. By Lemma \ref{lem:tail_smaller}(ii), $$p_t(\gamma)\leq \frac{1}{T}, \,\, t\in \mathcal{T}.$$
Each term in the sum in \eqref{eq:variance_from_tail} has functional form
$$f(p) = \frac{(1 - Tp)^2}{p}, \,\, 0 < p < \frac{1}{T},$$
which is strictly decreasing on $(0, 1/T)$ since $$f'(p) = -1/p^2 + T^2 < 0.$$ 
Moreover, Lemma \ref{lem:tail_smaller}(i) shows that $\varepsilon(\gamma)$ is non-increasing in $\gamma \geq 0$. Therefore, for any $0\leq \gamma_1 < \gamma_2$,
$$p_t(\gamma_1) = \frac{\varepsilon(\gamma_1)}{T-t^\star} \geq  \frac{\varepsilon(\gamma_2)}{T-t^\star}= p_t(\gamma_2),\,\, t\in  \mathcal{T}.$$
Since $f$ is strictly decreasing on $(0,1/T)$, it follows that $$f(p_t(\gamma_1)) < f(p_t(\gamma_2)),\,\, t \in \mathcal{T},$$ 
and summing over $t\in\mathcal{T}$ yields
$$\mathbb{V}_\mathcal{T}(\gamma_1) <     \mathbb{V}_\mathcal{T}(\gamma_2).$$
\end{proof}

\begin{proof}[Proof of Lemma \ref{lem:upper_bound_variance}]
   From Lemma \ref{lem:expectation_variance_WDE}(ii), $$\mathbb{V}\left(\widehat{\ell}_{\mathrm{WDE}}(\boldsymbol{\phi})\right) =  \frac{1}{m}\sum_{t=1}^T\left(\frac{e_t}{p_t} - e\right)^2 p_t = \frac{1}{m} \left(\sum_{t=1}^T\frac{e^2_t}{p_t} - e^2\right).$$
   For uniform weights $p_t=1/T$,
$$\mathbb{V}\left(\widehat{\ell}_{\mathrm{DE}}(\boldsymbol{\phi})\right) = \frac{1}{m} \left(\sum_{t=1}^TTe^2_t - e^2\right).$$
   Since $p_t \geq \frac{c}{T}$ for all $t=1,\dots, T$,
   $$\frac{1}{p_t} \leq \frac{T}{c} 
\;\;\implies\;\; \frac{e_t^2}{p_t} \leq \frac{T}{c}e_t^2
\;\;\implies\;\; \sum_{t=1}^T \frac{e_t^2}{p_t} \leq \frac{T}{c}\sum_{t=1}^T e_t^2.$$
   Thus,  
$$\mathbb{V}\left(\widehat{\ell}_{\mathrm{WDE}}(\boldsymbol{\phi})\right) = \frac{1}{m} \left(\sum_{t=1}^T\frac{e^2_t}{p_t} - e^2\right) \leq \frac{1}{m}\left( \frac{T}{c}\sum_{t=1}^T e^2_t - e^2\right).$$
Taking the ratio with the difference estimator (uniform weights) gives
   \begin{align*}
       \frac{\mathbb{V}(\widehat{\ell}_\mathrm{WDE})}{\mathbb{V}(\widehat{\ell}_\mathrm{DE})}  & \leq \frac{\frac{T}{c}\sum_{t=1}^T e^2_t - e^2}{T\sum_{t=1}^T e^2_t - e^2} =  \frac{\frac{1}{c} - \frac{e^2}{T\sum_{t=1}^T e^2_t}}{1 - \frac{e^2}{T\sum_{t=1}^T e^2_t} }.
   \end{align*}
   Finally, by the Cauchy-Schwarz inequality
   $$e^2=\left(\sum_{t=1}^T e_t1\right)^2 \leq \left(\sum_{t=1}^T e^2_t \right) \left(\sum_{t=1}^T 1^2 \right)= T\left(\sum_{t=1}^T e^2_t \right),$$
  which implies $\rho \in (0,1)$.  
\end{proof}

\paragraph{Justification of the approximation $\rho \approx 0$.}\label{rho_small_discussion}
To motivate the approximation by $1/c$ of the upper bound in Lemma \ref{lem:upper_bound_variance},
note that 
$$\rho = \frac{e^2}{T\sum_{t=1}^T e^2_t}= \frac{T\overline{e}^2}{\sum_{t=1}^T e^2_t}.$$
Expanding the denominator $\sum_{t=1}^T e^2_t = \sum_{t=1}^T (e_t - \overline{e} + \overline{e})^2 = \sum_{t=1}^T (e_t - \overline{e})^2 + T\overline{e}^2$. Hence,
$$\rho = \frac{\overline{e}^2}{s^2_e + \overline{e}^2}, \quad \text{with } s^2_e = \frac{1}{T}\sum_{t=1}^T (e_t - \overline{e})^2.$$
Since $q_t$ approximates $\ell_t$, the errors $e_t$ are near zero-centred, and $\overline{e}$ is typically small. Consequently, $\rho$ is small and the upper bound of the variance inflation is essentially governed by $1/c$.

\begin{proof}[Proof of Lemma \ref{lem:asymptotic_variance_estimators_truncated}]
First note that
$$\mathbb{V}\left(\widehat{\ell}_{\mathrm{WDE}}(\boldsymbol{\phi})\right) = \frac{1}{m} \left(\sum_{t=1}^T\frac{e^2_t}{p_t} - e^2\right) \leq \frac{1}{m} \sum_{t=1}^T\frac{e^2_t}{p_t}.$$
By the safeguard in \eqref{eq:safe_guard_floor} together with Lemma \ref{lem:tail_smaller}(iii), the sampling probabilities satisfy $p_t\geq c/T$ for $t=1,\dots, T$, and hence $1/p_t \leq T/c$. Therefore
$$\sum_{t=1}^T\frac{e^2_t}{p_t}\leq \frac{T}{c}\sum_{t=1}^Te^2_t.$$
Since $e_t = \mathcal{O}(T^{-3/2})$, it follows that $e^2_t= \mathcal{O}(T^{-3})$ and thus $\sum_{t=1}^Te^2_t=\mathcal{O}(T^{-2})$. This gives
$$\mathbb{V}\left(\widehat{\ell}_{\mathrm{WDE}}(\boldsymbol{\phi})\right) \leq \frac{1}{m}\frac{T}{c} \mathcal{O}(T^{-2})= \frac{1}{m}\mathcal{O}(T^{-1}).$$
\end{proof}

\begin{proof}[Proof of Lemma \ref{lem:average_cost}]
    For part (i), note that $u_{\max}$ is a non-negative discrete random variable with sample space $\{1, \dots, T \}$. The result follows by using the tail sum formula
    $$\mathbb{E}(u_{\max}) = \sum_{k=1}^T \Pr(u_{\max} \geq k),$$
    where
    \begin{align*}
        \Pr(u_{\max} \geq k) & = 1 - \Pr(u_{\max} \leq k-1) \\
                  & = 1 - \left( \sum_{t=1}^{k-1} p_t \right)^m.
    \end{align*} 
    
    Part (ii) follows immediately from part (i) since the term subtracted from $T$ is non-negative.

    For part (iii), note that $u_{\max}$ is bounded by
    $$u_{\max} \leq t^\star + (T-t^{\star})\mathbbm{1}(u_{\max} > t^{\star}),$$ 
    where $\mathbbm{1}(A)$ is 1 if the event $A$ occurs and zero otherwise. Taking expectations yields
    \begin{align*}
        \mathbb{E}(u_{\max}) & \leq t^\star + (T-t^{\star})\Pr(u_{\max} > t^{\star})\\
        & = t^\star + (T-t^{\star})(1-\Pr(u_{\max} \leq t^{\star})).
    \end{align*}
    Since $u_{\max}=\max\{u_1, \dots, u_m\}$, and the $u_i$ are independent with $\Pr(u_i\leq t^{\star})=1-\varepsilon$, it follows that
    $$\Pr(u_{\max}\leq t^\star) = \Pr(u_1\leq t^\star, \dots, u_m \leq t^\star)= (1-\varepsilon)^m.$$
    
    For part (iv), define $$B(\varepsilon) = t^\star + (T-t^\star)(1-(1-\varepsilon)^m).$$
    Differentiating gives
    $$B'(\varepsilon)=(T-t^\star)m(1-\varepsilon)^{m-1},$$
    with $B'(\varepsilon)>0$ for all $\varepsilon \in (0,1)$, for any $m\geq 1$, since $T>t^\star$.

    For part (v), the bound for the uniform sampling case follows by substituting $\varepsilon=(T-t^\star)/T$ in part (iii). Using part (iv), and that $\varepsilon(\gamma)$ is strictly decreasing in $\gamma$ (if $t^\star\geq 2$, Lemma \ref{lem:tail_smaller}), we conclude that the truncated probability scheme bound is strictly smaller than that of the uniform.

    For part (vi), define 
    $$M_m=\max\{u_1,\dots, u_m\}\leq T<\infty,\,\, M_{m+1} = \max\{M_m,u_{m+1}\}\leq T < \infty.$$
    To show that $\mathbb{E}(u_{\max})$ is strictly increasing in $m$, i.e.\
    \begin{equation}
    \mathbb{E}(M_{m+1}) > \mathbb{E}(M_{m}),    
    \end{equation}
    note that $M_{m+1}\geq M_m$, so
    it suffices to show $$\Pr(M_{m+1}>M_{m})>0.$$
    Since
    $$\{M_{m+1} > M_{m}\} \supseteq \{M_m \leq t^\star\}\cap \{u_{m+1} \in \mathcal{T}\},$$
    and $u_1, \dots,u_{m},u_{m+1}$ are independent,
    $$\Pr(M_{m+1} > M_{m}) \geq (1-\varepsilon(\gamma))^m \varepsilon(\gamma)>0$$
    since $\varepsilon(\gamma)>0$.
\end{proof}

\paragraph{Heuristic justification for the scaling $\varepsilon=\mathcal{O}(T^{-1})$.}\label{page:heuristic_scaling}

Section \ref{subsec:computational_cost} suggests $\varepsilon=\mathcal{O}(T^{-1})$ to preserve the order of the expected computational cost. This recommendation is motivated by a heuristic approximation indicating that the expected computational cost is dominated by a term of order $\mathcal{O}(T)m\varepsilon$. We now make this intuition explicit. 

Recall that $u_{\max}$ is random with sample space 
\begin{equation*}
  \{1, \dots, T\}= \mathcal{H} \cup \mathcal{T},
\end{equation*}
where $\mathcal{H}$ and $\mathcal{T}$ denote the head and tail defined in Section \ref{subsec:TPD_probs}. By the law of total expectation,
\begin{align}\label{eq:Eumax_total_expectation}
\mathbb{E}(u_{\max}) = \mathbb{E}(u_{\max}|u_{\max} \in \mathcal{H})\Pr(u_{\max}\in \mathcal{H}) + \mathbb{E}(u_{\max}|u_{\max} \in \mathcal{T})\Pr(u_{\max}\in \mathcal{T}).    
\end{align}
Recall the total probability mass $\varepsilon(\gamma)$ in \eqref{eq:eps_gamma} (uniformly distributed) of the tail $\mathcal{T}$ in the truncated decaying sampling scheme, with the remaining mass (decreasingly distributed) $1-\varepsilon(\gamma)$ assigned to the head $\mathcal{H}$. Note that $u_{\max}=\max \{u_1,\dots, u_m\}$ lies in the tail $\mathcal{T}$ if at least one of the $u_i$ is in $\mathcal{T}$. Since $\Pr(u_i\in\mathcal{T})=\varepsilon$ and the $u_i$ are independent,
$$\Pr(u_{\max}\in \mathcal{T})=1-(1-\varepsilon)^m.$$
A Maclaurin series expansion shows that, for a small tail probability $\varepsilon$, $$\Pr(u_{\max}\in \mathcal{T})\approx m\varepsilon \quad \text{(neglecting terms of order } \mathcal{O}(\varepsilon^2)).$$ The second term in \eqref{eq:Eumax_total_expectation} can therefore be approximated as 
\begin{equation}\label{eq:heuristic_conditional_expectation}
  \mathbb{E}(u_{\max}|u_{\max} \in \mathcal{T})\Pr(u_{\max}\in \mathcal{T})\approx \mathcal{O}(T)m\varepsilon , 
\end{equation}
where the factor $\mathcal{O}(T)$ follows from the fact that $u_{\max} \leq T$ whenever $u_{\max} \in \mathcal{T}$. In contrast, the first term (the head contribution) in \eqref{eq:Eumax_total_expectation} is bounded independently of $T$, since $u_{\max} \leq t^\star$ whenever $u_{\max} \in \mathcal{H}$. This heuristic suggests that, for truncated decaying sampling probabilities, $$\mathbb{E}(u_{\max}) \,\text{ is dominated by } \mathcal{O}(T)m\varepsilon,$$
motivating the scaling $\varepsilon=\mathcal{O}(T^{-1})$ to keep the expected computational cost bounded as $T$ grows.

\begin{proof}[Proof of Lemma \ref{lem:expectation_variance_WDE_grad}]
The proof of part (i) follows from Lemma \ref{lem:expectation_variance_WDE}(i) and the linearity of the expectation and the gradient. Part (ii) is a straightforward extension of Lemma \ref{lem:expectation_variance_WDE}(ii), replacing scalar residuals with vector-valued residuals.
\end{proof}

\begin{proof}[Proof of Lemma \ref{lem:reparam_grad_Hess_variance}]
First, to express the differential of $\boldsymbol{\theta}_v$ in terms of the differential of $\boldsymbol{\phi}_v$, note that
\begin{align}\label{eq:diff_theta_as_diff_phi}
d\boldsymbol{\theta}_v & = J^{\boldsymbol{\phi}_v} d\boldsymbol{\phi}_v,
\end{align}
where 
$$J^{\boldsymbol{\phi}_v} = \nabla_{\boldsymbol{\phi}_v} \boldsymbol{\theta}_v = \frac{\partial \boldsymbol{\theta}_v}{\partial\boldsymbol{\phi}_v^\top} =\frac{\partial h^{-1}(\boldsymbol{\phi}_v)}{\partial\boldsymbol{\phi}_v^\top}.$$
Combining this with the first identification theorem in \eqref{eq:first_identification_theorem_grad},
$$d\sigma^2_t(\boldsymbol{\theta}_v)= \nabla^\top_{\boldsymbol{\theta}_v}\sigma^2_t(\boldsymbol{\theta}_v)d\boldsymbol{\theta}_v = \nabla^\top_{\boldsymbol{\theta}_v}\sigma^2_t(\boldsymbol{\theta}_v)J^{\boldsymbol{\phi}_v} d\boldsymbol{\phi}_v.$$
We can now use the first identification theorem again to identify that the gradient with respect to $\boldsymbol{\phi}_v$ is
\begin{align}\label{eq:general_gradient_phi}
\nabla_{\boldsymbol{\phi}_v}\sigma^2_t(\boldsymbol{\theta}_v)  & = \nabla_{\boldsymbol{\phi}_v}\sigma^2_t(h^{-1}(\boldsymbol{\phi}_v))  = \left(\nabla^\top_{\boldsymbol{\theta}_v}\sigma^2_t(\boldsymbol{\theta}_v)J^{\boldsymbol{\phi}_v}\right)^\top=J^{\boldsymbol{\phi}_v\top} \nabla_{\boldsymbol{\theta}_v}\sigma^2_t(\boldsymbol{\theta}_v).    
\end{align}
Note that since the transformations are component-wise, i.e.\ $h^{-1}_i$ depend only on $\phi_{vi}$, the Jacobian is a diagonal matrix
$$J^{\boldsymbol{\phi}_v} = \mathrm{diag}\left(h^{-1\, '}_1(\phi_{v1}), \dots, h^{-1\, '}_{d_v}(\phi_{vd_v})\right).$$
To get the Hessian, we use the result that the differential of the gradient is the Hessian and the product rule, together with \eqref{eq:diff_theta_as_diff_phi},
\begin{align*}
d\nabla_{\boldsymbol{\phi}_v}\sigma^2_t(\boldsymbol{\theta}_v) & = d(J^{\boldsymbol{\phi}_v\top}) \nabla_{\boldsymbol{\theta}_v}\sigma^2_t(\boldsymbol{\theta}_v) + J^{\boldsymbol{\phi}_v\top}d\nabla_{\boldsymbol{\theta}_v}\sigma^2_t(\boldsymbol{\theta}_v) \\
& = d(J^{\boldsymbol{\phi}_v\top}) \nabla_{\boldsymbol{\theta}_v}\sigma^2_t(\boldsymbol{\theta}_v) + J^{\boldsymbol{\phi}_v\top}\nabla^2_{\boldsymbol{\theta}_v}\sigma^2_t(\boldsymbol{\theta}_v)d\boldsymbol{\theta}_v \\
& = d(J^{\boldsymbol{\phi}_v\top}) \nabla_{\boldsymbol{\theta}_v}\sigma^2_t(\boldsymbol{\theta}_v) + J^{\boldsymbol{\phi}_v\top}\nabla^2_{\boldsymbol{\theta}_v}\sigma^2_t(\boldsymbol{\theta}_v)J^{\boldsymbol{\phi}_v}d\boldsymbol{\phi}_v.
\end{align*}
To understand the differential $d(J^{\boldsymbol{\phi}_v\top})$, first note that $$J^{\boldsymbol{\phi}_v}_{ij} = \frac{\partial}{\partial\phi_{vj}}h_i^{-1}(\phi_{vi}),$$
and thus in the transposed matrix $J^{\boldsymbol{\phi}_v\top}$ the $i$th column contains the gradient $\nabla_{\boldsymbol{\phi}_v}h_i^{-1}(\phi_{vi})$, i.e.\
$$J^{\boldsymbol{\phi}_v\top} = \begin{pmatrix}
    \nabla_{\boldsymbol{\phi}_v}h_1^{-1}(\phi_{v1}) & \cdots & \nabla_{\boldsymbol{\phi}_v}h_{d_v}^{-1}(\phi_{vd_v}),
\end{pmatrix}$$
stacked as columns. The resulting $d(J^{\boldsymbol{\phi}_v\top})$ is a $d_v\times d_v \times d_v$ tensor, where each $$d \nabla_{\boldsymbol{\phi}_v} h_i^{-1}(\phi_{vi}) = \nabla^2_{\boldsymbol{\phi}_v} h_i^{-1}(\phi_{vi})d\boldsymbol{\phi}_v,$$
which gives the tensor-vector contraction
$$d(J^{\boldsymbol{\phi}_v\top}) \nabla_{\boldsymbol{\theta}_v}\sigma^2_t(\boldsymbol{\theta}_v) = \sum_{i=1}^{d_v} \left(\nabla_{\boldsymbol{\theta}_v}\sigma^2_t(\boldsymbol{\theta}_v)\right)_i d \nabla_{\boldsymbol{\phi}_v} h_i^{-1}(\phi_{vi}) = \sum_{i=1}^{d_v} \left(\nabla_{\boldsymbol{\theta}_v}\sigma^2_t(\boldsymbol{\theta}_v)\right)_i \nabla^2_{\boldsymbol{\phi}_v} h_i^{-1}(\phi_{vi})d\boldsymbol{\phi}_v.$$
The final expression becomes
$$d\nabla_{\boldsymbol{\phi}_v}\sigma^2_t(\boldsymbol{\theta}_v) = \sum_{i=1}^{d_v}\left(\nabla_{\boldsymbol{\theta}_v}\sigma^2_t(\boldsymbol{\theta}_v)\right)_i \nabla^2_{\boldsymbol{\phi}_v} h_i^{-1}(\phi_{vi})d\boldsymbol{\phi}_v + J^{\boldsymbol{\phi}_v\top}\nabla^2_{\boldsymbol{\theta}_v}\sigma^2_t(\boldsymbol{\theta}_v)J^{\boldsymbol{\phi}_v}d\boldsymbol{\phi}_v,$$
from which we get the Hessian matrix as
\begin{align}\label{eq:general_Hess_phi}
\nabla^2_{\boldsymbol{\phi}_v}\sigma^2_t(\boldsymbol{\theta}_v) & = J^{\boldsymbol{\phi}_v\top}\nabla^2_{\boldsymbol{\theta}_v}\sigma^2_t(\boldsymbol{\theta}_v)J^{\boldsymbol{\phi}_v} + \sum_{i=1}^{d_v} \left(\nabla_{\boldsymbol{\theta}_v}\sigma^2_t(\boldsymbol{\theta}_v)\right)_i \nabla^2_{\boldsymbol{\phi}_v} h_i^{-1}(\phi_{vi}) \nonumber \\
 & = J^{\boldsymbol{\phi}_v\top}\nabla^2_{\boldsymbol{\theta}_v}\sigma^2_t(\boldsymbol{\theta}_v)J^{\boldsymbol{\phi}_v} + \sum_{i=1}^{d_v} \left(\nabla_{\boldsymbol{\theta}_v}\sigma^2_t(\boldsymbol{\theta}_v)\right)_i  h^{-1\, ''}_i(\phi_{vi})\mathbf{e}_i\mathbf{e}^\top_i.
\end{align}
\end{proof}

\begin{proof}[Proof of Lemma \ref{lem:reparam_grad_Hess_log_dens}]
Suppose that $$\boldsymbol{\phi}=h(\boldsymbol{\theta}), \quad \boldsymbol{\theta} = (\boldsymbol{\theta}_v^\top, \theta_\varepsilon)^\top, \,\,
\boldsymbol{\phi} = (\boldsymbol{\phi}_v^\top, \phi_\varepsilon)^\top,$$ 
is a one-to-one, elementwise transformation to an unrestricted parameterisation, with inverse
$$\boldsymbol{\theta}=h^{-1}(\boldsymbol{\phi}),$$
and in particular $\mu=h^{-1}_1(\phi_1)$. Then, by the chain rule similarly to \eqref{eq:gradient_theta},
\begin{align}\label{eq:gradient_phi}
\nabla_{\boldsymbol{\phi}_v}\ell_t(\sigma^2_t(\boldsymbol{\theta}_v), \mu, \theta_\varepsilon) & = \frac{\partial \ell_t}{\partial\sigma^2_t}\nabla_{\boldsymbol{\phi}_v}\sigma^2_t(\boldsymbol{\theta}_v) + \frac{\partial \ell_t}{\partial\mu} \nabla_{\boldsymbol{\phi}_v}\mu \nonumber \\
 & = \frac{\partial \ell_t}{\partial\sigma^2_t}\nabla_{\boldsymbol{\phi}_v}\sigma^2_t(\boldsymbol{\theta}_v) + \frac{\partial \ell_t}{\partial\mu} \mathbf{e}_1 h^{-1\, '}_1(\phi_1), 
\end{align}
where $\nabla_{\boldsymbol{\phi}_v}\sigma^2_t(\boldsymbol{\theta}_v)$ is given in \eqref{eq:general_gradient_phi}. For the Student-$t$ density, similarly to \eqref{eq:gradient_theta}, the last component of the gradient with respect to the transformed parameters is
$$\frac{\partial \ell_t}{\phi_\varepsilon}=\frac{\partial \ell_t}{\partial\theta_\varepsilon}\frac{\partial \theta_\varepsilon}{\partial \phi_\varepsilon}, $$
where $\theta_\varepsilon = h^{-1}(\phi_\varepsilon)$. To obtain the Hessian, we again rely on the chain rule to account for the parameter transformation. In particular, the top-left component is given by
\begin{align*}
\nabla^2_{\boldsymbol{\phi}_v}\ell_t(\sigma^2_t(\boldsymbol{\theta}_v), \mu, \theta_\varepsilon) 
&= \nabla_{\boldsymbol{\phi}_v} \left(\frac{\partial \ell_t}{\partial\sigma^2_t}\nabla_{\boldsymbol{\phi}_v}\sigma^2_t(\boldsymbol{\theta}_v) + \frac{\partial \ell_t}{\partial\mu} \mathbf{e}_1 h^{-1\, '}_1(\phi_1)\right)\\
&= \left(\nabla_{\boldsymbol{\phi}_v}\frac{\partial \ell_t}{\partial \sigma^2_t}\right)\nabla_{\boldsymbol{\phi}_v}\sigma^2_t(\boldsymbol{\theta}_v)^\top + \frac{\partial \ell_t}{\partial \sigma^2_t}\nabla^2_{\boldsymbol{\phi}_v}\sigma^2_t(\boldsymbol{\theta}_v) + \left(\nabla_{\boldsymbol{\phi}_v}\frac{\partial \ell_t}{\partial \mu}\right)\mathbf{e}_1^\top h^{-1\, '}_1(\phi_1)\\
&\quad + \frac{\partial \ell_t}{\partial \mu}\left(\nabla_{\boldsymbol{\phi}_v}\mathbf{e}_1h^{-1\, '}_1(\phi_1)\right)\\
&= \frac{\partial^2\ell_t}{\partial(\sigma^2_t)^2}\nabla_{\boldsymbol{\phi}_v}\sigma^2_t(\boldsymbol{\theta}_v)\nabla_{\boldsymbol{\phi}_v}\sigma^2_t(\boldsymbol{\theta}_v)^\top + \frac{\partial \ell_t}{\partial \sigma^2_t}\nabla \boldsymbol{\phi}_v^2\sigma^2_t(\boldsymbol{\theta}_v)\\
&\quad + \frac{\partial^2 \ell_t}{\partial \sigma^2_t\partial \mu}\left(\mathbf{e}_1 \nabla_{\boldsymbol{\phi}_v}\sigma^2_t(\boldsymbol{\theta}_v) + \nabla_{\boldsymbol{\phi}_v}\sigma^2_t(\boldsymbol{\theta}_v) \mathbf{e}_1 ^\top \right) h^{-1\, '}_1(\phi_1) \\
&\quad + \frac{\partial^2 \ell_t}{\partial \mu^2} \mathbf{e}_1 \mathbf{e}_1 ^\top\left(h^{-1\, '}_1(\phi_1)\right)^2 + \frac{\partial \ell_t}{\partial \mu}\mathbf{e}_1\mathbf{e}_1^\top h^{-1\, ''}_1(\phi_1),
\end{align*}
where $\nabla_{\boldsymbol{\phi}_v}\sigma^2_t(\boldsymbol{\theta}_v)$ and $\nabla^2_{\boldsymbol{\phi}_v}\sigma^2_t(\boldsymbol{\theta}_v)$ are given as in \eqref{eq:general_gradient_phi} and \eqref{eq:general_Hess_phi} respectively. The cross-terms of the Hessian can be computed as
\begin{align*}  \nabla_{\boldsymbol{\phi}_v}\frac{\partial \ell_t}{\partial\phi_\varepsilon} 
&= \nabla_{\boldsymbol{\phi}_v}\left(\frac{\partial \ell_t}{\partial\theta_\varepsilon}\frac{\partial \theta_\varepsilon}{\partial \phi_\varepsilon}\right)\\
&= \frac{\partial^2\ell_t}{\partial \sigma^2_t \partial \theta_\varepsilon}\frac{\partial\theta_\varepsilon}{\partial \phi_\varepsilon}(\nabla_{\boldsymbol{\phi}_v} \sigma^2_t\left(\boldsymbol{\theta}_v)\right)^\top + \frac{\partial^2\ell_t}{\partial \mu \partial \theta_\varepsilon}\frac{\partial\theta_\varepsilon}{\partial \phi_\varepsilon}\mathbf{e}_1^\top h^{-1\, '}_1(\phi_1),
\end{align*}
and the final bottom-right component of the Hessian is
\begin{align*}
    \frac{\partial^2 \ell_t}{\partial\phi_\varepsilon^2} = \frac{\partial^2\ell_t}{\partial \theta_\varepsilon^2}\frac{\partial\theta_\varepsilon}{\partial \phi_\varepsilon}\frac{\partial\theta_\varepsilon}{\partial \phi_\varepsilon}^\top + \frac{\partial \ell_t}{\partial \theta_\varepsilon}\frac{\partial^2\theta_\varepsilon}{\partial \phi_\varepsilon^2},
\end{align*}
where $\theta_\varepsilon = h^{-1}(\phi_\varepsilon)$.
\end{proof}

\begin{proof}[Proof of Lemma \ref{lem:TGARCH_expectation_indicator}]
We begin by simplifying the expectation as follows
\begin{align*}
    \mathbb{E}(\gamma_i (y_{t-i} - \mu)^2\mathbbm{1}(y_{t-i} < \mu)) 
    &= \gamma_i \mathbb{E}( (y_{t-i} - \mu)^2\mathbbm{1}(y_{t-i} - \mu < 0))\\
    &= \gamma_i \mathbb{E}( \sigma_{t-i}^2 \varepsilon_{t-i}^2\mathbbm{1}(\sigma_{t-i} \varepsilon_{t-i}< 0))\\
    &= \gamma_i \mathbb{E}(\mathbb{E}( \sigma_{t-i}^2 \varepsilon_{t-i}^2\mathbbm{1}(\sigma_{t-i} \varepsilon_{t-i}< 0) | \mathcal{F}_{t-i-1}))\\
    &= \gamma_i \mathbb{E}( \sigma_{t-i}^2 \mathbb{E}( \varepsilon_{t-i}^2\mathbbm{1}(\sigma_{t-i} \varepsilon_{t-i}< 0) | \mathcal{F}_{t-i-1})).
\end{align*}

Since $\mathcal{D}$ is symmetric and $\sigma_{t-i}$ is known given $\mathcal{F}_{t-i-1}$, the inner expectation simplifies to

$$\mathbb{E}( \varepsilon_{t-i}^2\mathbbm{1}(\sigma_{t-i} \varepsilon_{t-i}< 0) | \mathcal{F}_{t-i-1})
= \frac{1}{2}\mathbb{E}(\varepsilon_{t-i}^2 | \mathcal{F}_{t-i-1}).
$$

Therefore, 

\begin{align*}
    \mathbb{E}(\gamma_i (y_{t-i} - \mu)^2\mathbbm{1}(y_{t-i} < \mu)) 
    &= \frac{1}{2} \gamma_i \mathbb{E}(\sigma_{t-i}^2 \mathbb{E}(\varepsilon_{t-i}^2 | \mathcal{F}_{t-i-1}))\\
    &= \frac{1}{2} \gamma_i \mathbb{E}(\sigma_{t-i}^2) \mathbb{E}(\varepsilon_{t-i}^2)\\
    &= \frac{1}{2}\gamma_i\mathbb{E}(\sigma^2_t(\boldsymbol{\theta}_v)).
\end{align*}

\end{proof}

\newpage

\section{Derivations, assumption verification, and additional results}\label{supp:derivations}

\subsection{Gradients and Hessians for the GARCH process}

We now derive the gradient and Hessian for the recursive process in \eqref{eq:GARCH} using matrix differential calculus \citep{magnus2019matrix}. The gradient $\nabla_{\boldsymbol{\theta}_v}\sigma^2_t(\boldsymbol{\theta}_v)$ relates to the differential $d\sigma_t^2(\boldsymbol{\theta}_v)$ via the so-called first identification theorem,
\begin{align}\label{eq:first_identification_theorem_grad}
    d\sigma^2_t(\boldsymbol{\theta}_v) & = \nabla_{\boldsymbol{\theta}_v}^\top \sigma_t^2(\boldsymbol{\theta}_v) d\boldsymbol{\theta}_v,
\end{align}
where $d\boldsymbol{\theta}_v = (d\theta_{v1}, \dots d \theta_{vd})^\top$ is the differential vector. Moreover, it can be shown that the differential of the gradient $\nabla_{\boldsymbol{\theta}_v}\sigma^2_t(\boldsymbol{\theta}_v)$ is the Hessian, i.e.\
\begin{align}\label{eq:first_identification_theorem_Hess}    d\nabla_{\boldsymbol{\theta}_v}\sigma^2_t(\boldsymbol{\theta}_v) & = \nabla_{\boldsymbol{\theta}_v}^2 \sigma_t^2(\boldsymbol{\theta}_v) d\boldsymbol{\theta}_v.
\end{align}
Rewrite \eqref{eq:GARCH} in vector form,
\begin{align}\label{eq:GARCH_vectorised}
    \sigma^2_t(\boldsymbol{\theta}_v) & = \omega + {\mathbf{z}^{2^\top}_{t-1:t-p}}\boldsymbol{\alpha} +  {\boldsymbol{\sigma}_{t-1:t-q}^{2^\top}} \boldsymbol{\beta},
\end{align}
where
$$\mathbf{z}_{t-1:t-p} = \mathbf{y}_{t-1:t-p} -\mathbf{1}_p \mu,$$
where $\mathbf{1}_p$ denotes a unit column vector of length $p$ and for a vector $\mathbf{x}_{a:b}$, with $a>b$, $(x_{a}, x_{a-1}, \dots, x_{b})^\top$. Then,
\begin{align}\label{eq:GARCH_differential}
    d\sigma^2_t(\boldsymbol{\theta}_v) & = d\omega + d\left({\mathbf{z}^{2^\top}_{t-1:t-p}}\boldsymbol{\alpha}\right) +  d\left({\boldsymbol{\sigma}_{t-1:t-q}^{2^\top}} \boldsymbol{\beta}\right) \nonumber \\
    & = d\omega + (d\mathbf{z}^{2}_{t-1:t-p})^\top\boldsymbol{\alpha} + {\mathbf{z}^{2^\top}_{t-1:t-p}}d\boldsymbol{\alpha} + \left(d{\boldsymbol{\sigma}_{t-1:t-q}^{2}}\right)^\top \boldsymbol{\beta}  + {\boldsymbol{\sigma}_{t-1:t-q}^{2^\top}} d\boldsymbol{\beta} \nonumber \\
    & = d\omega + \boldsymbol{\alpha}^\top d\mathbf{z}^{2}_{t-1:t-p} + {\mathbf{z}^{2^\top}_{t-1:t-p}}d\boldsymbol{\alpha} + \boldsymbol{\beta}^\top d{\boldsymbol{\sigma}_{t-1:t-q}^{2}}  + {\boldsymbol{\sigma}_{t-1:t-q}^{2^\top}} d\boldsymbol{\beta} \nonumber \\
    & = d\omega - \boldsymbol{\alpha}^\top 2\mathbf{z}_{t-1:t-p}d\mu
     + {\mathbf{z}^{2^\top}_{t-1:t-p}}d\boldsymbol{\alpha} + \boldsymbol{\beta}^\top d{\boldsymbol{\sigma}_{t-1:t-q}^{2}}  + {\boldsymbol{\sigma}_{t-1:t-q}^{2^\top}} d\boldsymbol{\beta} \nonumber \\
     & =\begin{pmatrix} -\boldsymbol{\alpha}^\top 2\mathbf{z}_{t-1:t-p} & 1 & \mathbf{z}^{2^\top}_{t-1:t-p} & {\boldsymbol{\sigma}_{t-1:t-q}^{2^\top}} \end{pmatrix} d\boldsymbol{\theta}_v +    \boldsymbol{\beta}^\top d{\boldsymbol{\sigma}_{t-1:t-q}^{2}} \nonumber\\
     & =\begin{pmatrix} -\boldsymbol{\alpha}^\top 2\mathbf{z}_{t-1:t-p} & 1 & \mathbf{z}^{2^\top}_{t-1:t-p} & {\boldsymbol{\sigma}_{t-1:t-q}^{2^\top}} \end{pmatrix} d\boldsymbol{\theta}_v +    \sum_{j=1}^q  \beta_j d\sigma^2_{t-j}(\boldsymbol{\theta}_v).
\end{align}
Using the first identification theorem, the recursive components $d \sigma^2_{t-j}(\boldsymbol{\theta}_v)$, $j=1,\dots,q$, can be written as 
$$d \sigma^2_{t-j}(\boldsymbol{\theta}_v) = \nabla^\top_{\boldsymbol{\theta}_v}\sigma^2_{t-j}(\boldsymbol{\theta}_v)d\boldsymbol{\theta}_v.$$
It then follows from \eqref{eq:GARCH_differential} and the first identification theorem that the gradient of \eqref{eq:GARCH} can be recursively evaluated as
\begin{align}\label{eq:GARCH_gradient}
\nabla_{\boldsymbol{\theta}_v} \sigma_t^2(\boldsymbol{\theta}_v) = \begin{pmatrix} -2\boldsymbol{\alpha}^\top\mathbf{z}_{t-1:t-p} \\ 1 \\ \mathbf{z}^{2}_{t-1:t-p} \\ {\boldsymbol{\sigma}_{t-1:t-q}^{2}} \end{pmatrix} +  \sum_{j=1}^q  \beta_j \nabla_{\boldsymbol{\theta}_v} \sigma_{t-j }^2(\boldsymbol{\theta}_v),
\end{align}
where, when $t=1$, $\nabla_{\boldsymbol{\theta}_v}\sigma_{1-j}^2(\boldsymbol{\theta}_v)=\mathbf{0}_d$ for $j=1,\dots,q$.

For the Hessian, from \eqref{eq:first_identification_theorem_Hess} we need to compute the differential of \eqref{eq:GARCH_gradient}. The differential of the first three terms of the first term in \eqref{eq:GARCH_gradient} can be written as a linear function $d\boldsymbol{\theta}_v$, whereas the last one is recursively defined. We include the terms independent of $d\boldsymbol{\theta}_v$ in a matrix $\mathbf{A} \in \mathbb{R}^{d\times d}$, and those that depend on $d\boldsymbol{\theta}_v$ in a matrix $\mathbf{B} \in \mathbb{R}^{d\times d}$. The differential of the gradient, i.e.\ the Hessian, has the structure
\begin{align}\label{eq:Hess_GARCH_struct}
    d \nabla_{\boldsymbol{\theta}_v} \sigma_t^2(\boldsymbol{\theta}_v) = \mathbf{A} d\boldsymbol{\theta}_v + \mathbf{B}d\boldsymbol{\theta}_v + \sum_{j = 1}^{q} d\left(\beta_j \nabla_{\boldsymbol{\theta}_v}\sigma^2_{t-j}(\boldsymbol{\theta}_v)\right).
\end{align}
Now,
\begin{align*}
    d(-2\boldsymbol{\alpha}^\top\mathbf{z}_{t-1:t-p}) & = -2 \mathbf{z}^\top_{t-1:t-p}d\boldsymbol{\alpha} + 2\boldsymbol{\alpha}^\top\mathbf{1}_p d\mu \\
    & =  \begin{pmatrix} 2\boldsymbol{\alpha}^\top\mathbf{1}_p & 0 & -2\mathbf{z}^{\top}_{t-1:t-p} & \mathbf{0}_q^\top\end{pmatrix} d\boldsymbol{\theta}_v,
\end{align*}
\begin{align*}
    d1  & =  \begin{pmatrix} 0 & 0 & \mathbf{0}_p^\top & \mathbf{0}_q^\top\end{pmatrix} d\boldsymbol{\theta}_v,
\end{align*}
\begin{align*}
    d\mathbf{z}^{2}_{t-1:t-p}  & =  \begin{pmatrix} -2 \mathbf{z}_{t-1:t-p} & \mathbf{0}_p & \mathbf{0}_{p\times p} & \mathbf{0}_{p \times q} \end{pmatrix} d\boldsymbol{\theta}_v,
\end{align*}
and thus
\begin{align}
    \mathbf{A} =  \begin{pmatrix} 2\boldsymbol{\alpha}^\top\mathbf{1}_p & 0 & -2\mathbf{z}^{\top}_{t-1:t-p} & \mathbf{0}_q^\top \\
    0 & 0 & \mathbf{0}_p^\top & \mathbf{0}_q^\top \\
    -2 \mathbf{z}_{t-1:t-p} & \mathbf{0}_p & \mathbf{0}_{p\times p} & \mathbf{0}_{p \times q} \\
    \mathbf{0}_{q} & \mathbf{0}_{q} & \mathbf{0}_{q\times p} & \mathbf{0}_{q \times q}\end{pmatrix}.
\end{align}
The terms that are recursively defined correspond to the differential of the vector $\boldsymbol{\sigma}^2_{t-1:t-q}$. Since $d\sigma^2_{t-1}(\boldsymbol{\theta}_v) = \nabla^\top_{\boldsymbol{\theta}_v} \sigma^2 (\boldsymbol{\theta}_v)d\boldsymbol{\theta}_v$, 
\begin{align}
    \mathbf{B} =  \begin{pmatrix} \mathbf{0}^\top_d\\\mathbf{0}^\top_d \\ \mathbf{0}_{p\times d} \\ \nabla^\top_{\boldsymbol{\theta}_v}\sigma^2_{t-1}(\boldsymbol{\theta}_v) \\
    \nabla^\top_{\boldsymbol{\theta}_v}\sigma^2_{t-2}(\boldsymbol{\theta}_v) \\
    \vdots\\
    \nabla^\top_{\boldsymbol{\theta}_v}\sigma^2_{t-q}(\boldsymbol{\theta}_v) 
    \end{pmatrix}.
\end{align}
Finally, to compute the last term in \eqref{eq:Hess_GARCH_struct}, we use that $\beta_j = \mathbf{e}^\top_{k}\boldsymbol{\theta}_v$, so that $d\beta_j = \mathbf{e}^\top_{k} d\boldsymbol{\theta}_v$, where $\mathbf{e}_k$ is the $k=(2+p+j)$th basis vector and that $$(\mathbf{b}^\top\mathbf{x})\mathbf{a} = (\mathbf{a} \otimes \mathbf{b}^\top)\mathbf{x}.$$
With this,
\begin{align*}
d(\beta_j \nabla_{\boldsymbol{\theta}_v}\sigma^2_{t-j}(\boldsymbol{\theta}_v)) & = (d\beta_j)\nabla_{\boldsymbol{\theta}_v}\sigma^2_{t-j}(\boldsymbol{\theta}_v) + \beta_j d \nabla_{\boldsymbol{\theta}_v}\sigma^2_{t-j}(\boldsymbol{\theta}_v) \\
& =  (\mathbf{e}^\top_{k} d\boldsymbol{\theta}_v)\nabla_{\boldsymbol{\theta}_v}\sigma^2_{t-j}(\boldsymbol{\theta}_v) + \beta_j \nabla^2_{\boldsymbol{\theta}_v}\sigma^2_{t-j}(\boldsymbol{\theta}_v)d\boldsymbol{\theta}_v \\
& = \left(\nabla_{\boldsymbol{\theta}_v}\sigma^2_{t-j}(\boldsymbol{\theta}_v) \otimes \mathbf{e}^\top_k\right)d\boldsymbol{\theta}_v + \beta_j \nabla^2_{\boldsymbol{\theta}_v}\sigma^2_{t-j}(\boldsymbol{\theta}_v)d\boldsymbol{\theta}_v .
\end{align*}
The final recursive expression for the Hessian of \eqref{eq:GARCH} is
\begin{align}\label{eq:Hess_GARCH}
     \nabla^2_{\boldsymbol{\theta}_v} \sigma_t^2(\boldsymbol{\theta}_v) = \left(\mathbf{A} + \sum_{j=1}^q \nabla_{\boldsymbol{\theta}_v}\sigma^2_{t-j}(\boldsymbol{\theta}_v) \otimes \mathbf{e}^\top_k \right) + \mathbf{B} + \sum_{j = 1}^q \beta_j \nabla^2_{\boldsymbol{\theta}_v}\sigma^2_{t-j}(\boldsymbol{\theta}_v),
\end{align}
where, when $t=1$, $\nabla^2_{\boldsymbol{\theta}_v}\sigma_{1-j}^2(\boldsymbol{\theta}_v)=\mathbf{0}_{d\times d}$ for $j=1,\dots,q$.

The gradient in \eqref{eq:GARCH_gradient} and Hessian in \eqref{eq:Hess_GARCH} appear in the gradient and Hessian of the log-likelihood, see Appendix \ref{subsec:grad_Hess_log_dens} and Section \ref{subsec:likelihood} for details.

\subsection{Gradients and Hessians for the TGARCH process}\label{supp:TGARCH_gradient}

To derive the gradient and Hessian for the TGARCH model, we replace the indicator $\mathbbm{1}(y_{t-i} < \mu)$ in \eqref{eq:TGARCH} with the Heaviside step function. Specifically, let $H(x)$ denote the Heaviside function, defined by 
\begin{equation*}
H(x) =
\begin{cases}
1, & x > 0,\\
0, & \text{otherwise}.
\end{cases}
\end{equation*}
The indicator can reformulated as $\mathbbm{1}(\mu - y_{t-i} > 0)$ and subsequently as $H(\mu - y_{t-i})$. Then, we rewrite \eqref{eq:TGARCH} in vector form as,
\begin{align}\label{eq:TGARCH_vectorised}
    \sigma^2_t(\boldsymbol{\theta}_v) & = \omega + {\mathbf{z}^{2^\top}_{t-1:t-p}}\boldsymbol{\alpha} + \boldsymbol{\gamma}^{\top} \mathrm{diag}({\mathbf{z}^{2}_{t-1:t-p}})\mathbf{G}(\mu) +  {\boldsymbol{\sigma}_{t-1:t-q}^{2^\top}} \boldsymbol{\beta},
\end{align}
where
\begin{align*}
    \mathbf{G}(\mu) = (H(\mu - y_{t-1}), \dots, H(\mu - y_{t-p}))^{\top}.
\end{align*}
At this point, to compute the gradient and Hessian, it is also necessary to define
\begin{align*}
    d H(\mu - y_{t-i}) = \delta(\mu - y_{t-i})d\mu \quad \text{and} \quad d \delta(\mu - y_{t-i}) = \delta'(\mu - y_{t-i})d\mu,
\end{align*}
where $\delta(\cdot)$ denotes the Dirac delta distribution and $\delta'(\cdot)$ represents its first derivative with respect to $\mu$. Hence, we can define,
\begin{align*}
    d\mathbf{G}(\mu) = \mathbf{G}'(\mu)d\mu \quad \text{and} \quad d\mathbf{G}'(\mu) = \mathbf{G}''(\mu)d\mu,
\end{align*}
where
\begin{align*}
    \mathbf{G}'(\mu) &= (\delta(\mu - y_{t-1}), \dots, \delta(\mu - y_{t-p}))^{\top},\\
    \mathbf{G}''(\mu) &= (\delta'(\mu - y_{t-1}), \dots, \delta'(\mu - y_{t-p}))^{\top}.
\end{align*}

We note that the only difference from the conditional variance formulation in the GARCH model arises from the term involving $\boldsymbol{\gamma}$ in \eqref{eq:TGARCH_vectorised}. Hence, when computing the differential $d\sigma^2_t(\boldsymbol{\theta}_v)$, it is sufficient to focus on
\begin{align*}
    d(\boldsymbol{\gamma}^{\top} \mathrm{diag}({\mathbf{z}^{2}_{t-1:t-p}})\mathbf{G}(\mu))
\end{align*}
and add it to the differential already obtained in the GARCH case. Proceeding with this step, we obtain,
\begin{align}\label{eq:TGARCH_differential_gamma_component}
   &d(\boldsymbol{\gamma}^{\top} \mathrm{diag}({\mathbf{z}^{2}_{t-1:t-p}})\mathbf{G}(\mu)) \nonumber \\
   & =  d(\boldsymbol{\gamma}^{\top}) \mathrm{diag}({\mathbf{z}^{2}_{t-1:t-p}})\mathbf{G}(\mu) + \boldsymbol{\gamma}^{\top} d(\mathrm{diag}({\mathbf{z}^{2}_{t-1:t-p}}))\mathbf{G}(\mu) +  \boldsymbol{\gamma}^{\top} \mathrm{diag}({\mathbf{z}^{2}_{t-1:t-p}})d\mathbf{G}(\mu) \nonumber \\
   &= (\mathrm{diag}(\mathbf{z}^{2}_{t-1:t-p})\mathbf{G}(\mu))^{\top}d \boldsymbol{\gamma} -2 \boldsymbol{\gamma}^{\top} d(\mathrm{diag}(\mathbf{z}_{t-1:t-p}d\mu))\mathbf{G}(\mu) +  \boldsymbol{\gamma}^{\top} \mathrm{diag}({\mathbf{z}^{2}_{t-1:t-p}})\mathbf{G}'(\mu)d\mu \nonumber\\
   &= \mathbf{G}(\mu)^{\top}\mathrm{diag}(\mathbf{z}^{2}_{t-1:t-p})d\boldsymbol{\gamma} - \boldsymbol{\gamma}^{\top}\{2\mathrm{diag}(\mathbf{z}_{t-1:t-p})\mathbf{G}(\mu) - \mathrm{diag}(\mathbf{z}^{2}_{t-1:t-p})\mathbf{G}'(\mu)\}d\mu.
\end{align}
Utilising \eqref{eq:GARCH_gradient}, \eqref{eq:TGARCH_differential_gamma_component} and the first identification theorem allows for the gradient of \eqref{eq:TGARCH} to be recursively computed as
\begin{align}\label{eq:TGARCH_gradient}
\nabla_{\boldsymbol{\theta}_v} \sigma_t^2(\boldsymbol{\theta}_v) &= \begin{pmatrix} -2\boldsymbol{\alpha}^\top\mathbf{z}_{t-1:t-p} - \boldsymbol{\gamma}^{\top}\{2\mathrm{diag}(\mathbf{z}_{t-1:t-p})\mathbf{G}(\mu) - \mathrm{diag}(\mathbf{z}^{2}_{t-1:t-p})\mathbf{G}'(\mu)\}\\ 1 \\ \mathbf{z}^{2}_{t-1:t-p} \\ 
\mathrm{diag}(\mathbf{z}^{2}_{t-1:t-p})\mathbf{G}(\mu)\\
{\boldsymbol{\sigma}_{t-1:t-q}^{2}} \end{pmatrix} \nonumber\\
&\quad+  \sum_{j=1}^q  \beta_j \nabla_{\boldsymbol{\theta}_v} \sigma_{t-j }^2(\boldsymbol{\theta}_v),
\end{align}
where, when $t=1$, $\nabla_{\boldsymbol{\theta}_v}\sigma_{1-j}^2(\boldsymbol{\theta}_v)=\mathbf{0}_d$ for $j=1,\dots,q$.

In a similar manner to the GARCH model, the Hessian retains the structure specified in \eqref{eq:Hess_GARCH_struct} and can be obtained by computing the differential of \eqref{eq:TGARCH_gradient}. Note 
\begin{align*}
    &d(-2\boldsymbol{\alpha}^\top\mathbf{z}_{t-1:t-p} - \boldsymbol{\gamma}^{\top}\{2\mathrm{diag}(\mathbf{z}_{t-1:t-p})\mathbf{G}(\mu) - \mathrm{diag}(\mathbf{z}^{2}_{t-1:t-p})\mathbf{G}'(\mu)\})\\
    & = -2 \mathbf{z}^\top_{t-1:t-p}d\boldsymbol{\alpha} + 2\boldsymbol{\alpha}^\top\mathbf{1}_p d\mu - \{2\mathrm{diag}(\mathbf{z}_{t-1:t-p})\mathbf{G}(\mu) + \mathrm{diag}(\mathbf{z}^{2}_{t-1:t-p})\mathbf{G}'(\mu)\}^\top d\boldsymbol{\gamma} \\
    &\quad + \boldsymbol{\gamma}^\top \{2\mathbf{G}(\mu) - 4\mathrm{diag}(\mathbf{z}_{t-1:t-p})\mathbf{G}'(\mu) + \mathrm{diag}(\mathbf{z}^{2}_{t-1:t-p})\mathbf{G}''(\mu)\} d\mu \\
    & =  \begin{pmatrix} 2\mathbf{1}_p^\top\boldsymbol{\alpha} +  \{2\mathbf{G}(\mu) - 4\mathrm{diag}(\mathbf{z}_{t-1:t-p})\mathbf{G}'(\mu) + \mathrm{diag}(\mathbf{z}^{2}_{t-1:t-p})\mathbf{G}''(\mu)\}^\top \boldsymbol{\gamma}\\ 0 \\ -2\mathbf{z}_{t-1:t-p} \\ -\{2\mathrm{diag}(\mathbf{z}_{t-1:t-p})\mathbf{G}(\mu) + \mathrm{diag}(\mathbf{z}^{2}_{t-1:t-p})\mathbf{G}'(\mu)\} \\
    \mathbf{0}_q\end{pmatrix}^\top d\boldsymbol{\theta}_v,
\end{align*}
\begin{align*}
    d1  & =  \begin{pmatrix} 0 & 0 & \mathbf{0}_p^\top & \mathbf{0}_p^\top &\mathbf{0}_q^\top\end{pmatrix} d\boldsymbol{\theta}_v,
\end{align*}
\begin{align*}
    d\mathbf{z}^{2}_{t-1:t-p}  & =  \begin{pmatrix} -2 \mathbf{z}_{t-1:t-p} & \mathbf{0}_p & \mathbf{0}_{p\times p} & \mathbf{0}_{p\times p} & \mathbf{0}_{p \times q} \end{pmatrix} d\boldsymbol{\theta}_v,
\end{align*}
and
\begin{align*}
&d(\mathrm{diag}(\mathbf{z}^{2}_{t-1:t-p})\mathbf{G}(\mu))\\ &= -2\mathrm{diag}(\mathbf{z}_{t-1:t-p})\mathbf{G}(\mu)d\mu + \mathrm{diag}(\mathbf{z}^{2}_{t-1:t-p})\mathbf{G}'(\mu)d\mu\\
& =  \begin{pmatrix}  -2\mathrm{diag}(\mathbf{z}_{t-1:t-p})\mathbf{G}(\mu) + \mathrm{diag}(\mathbf{z}^{2}_{t-1:t-p})\mathbf{G}'(\mu)& \mathbf{0}_p & \mathbf{0}_{p\times p} & \mathbf{0}_{p\times p} & \mathbf{0}_{p \times q} \end{pmatrix} d\boldsymbol{\theta}_v.
\end{align*}
Thus for the TGARCH model, the matrix $\mathbf{A}$, which collects the terms independent of $d\boldsymbol{\theta}_v$, is
\begin{align}\label{eq:TGARCH_A}
    \mathbf{A} =  \begin{pmatrix} \mathbf{A}_{\mu\mu} & 0 & -2\mathbf{z}^{\top}_{t-1:t-p} & \mathbf{A}_{\mu\boldsymbol{\gamma}}  & \mathbf{0}_q^\top \\
    0 & 0 & \mathbf{0}_p^\top & \mathbf{0}_p^\top & \mathbf{0}_q^\top \\
    -2 \mathbf{z}_{t-1:t-p} & \mathbf{0}_p & \mathbf{0}_{p\times p} & \mathbf{0}_{p\times p} &  \mathbf{0}_{p \times q} \\ \mathbf{A}_{\boldsymbol{\gamma}\mu}
    & \mathbf{0}_p & \mathbf{0}_{p\times p} & \mathbf{0}_{p\times p} &  \mathbf{0}_{p \times q}\\ \mathbf{0}_{q} &
    \mathbf{0}_{q} & \mathbf{0}_{q\times p} & \mathbf{0}_{q\times p} & \mathbf{0}_{q \times q}\end{pmatrix},
\end{align}
where
\begin{align*}
    \mathbf{A}_{\mu\mu} = 2\boldsymbol{\alpha}^\top\mathbf{1}_p + \boldsymbol{\gamma}^\top \{2\mathbf{G}(\mu) - 4\mathrm{diag}(\mathbf{z}_{t-1:t-p})\mathbf{G}'(\mu) + \mathrm{diag}(\mathbf{z}^{2}_{t-1:t-p})\mathbf{G}''(\mu)\},
\end{align*}
\begin{align*}
     \mathbf{A}_{\mu\boldsymbol{\gamma}} = (\mathbf{A}_{\boldsymbol{\gamma}\mu})^\top= -\{2\mathrm{diag}(\mathbf{z}_{t-1:t-p})\mathbf{G}(\mu) + \mathrm{diag}(\mathbf{z}^{2}_{t-1:t-p})\mathbf{G}'(\mu)\}^\top.
\end{align*}
The terms involving $d\boldsymbol{\theta}_v$ are collected in the matrix $\mathbf{B}$ as
\begin{align}\label{eq:TGARCH_B}
    \mathbf{B} =  \begin{pmatrix} \mathbf{0}^\top_d\\\mathbf{0}^\top_d \\ \mathbf{0}_{p\times d} \\ 
    \mathbf{0}_{p\times d} \\\nabla^\top_{\boldsymbol{\theta}_v}\sigma^2_{t-1}(\boldsymbol{\theta}_v) \\
    \nabla^\top_{\boldsymbol{\theta}_v}\sigma^2_{t-2}(\boldsymbol{\theta}_v) \\
    \vdots\\
    \nabla^\top_{\boldsymbol{\theta}_v}\sigma^2_{t-q}(\boldsymbol{\theta}_v) 
    \end{pmatrix}.
\end{align}

Then, following steps analogous to those in the GARCH model case, the final recursive expression for the Hessian of \eqref{eq:TGARCH} is
\begin{align}\label{eq:Hess_TGARCH}
     \nabla^2_{\boldsymbol{\theta}_v} \sigma_t^2(\boldsymbol{\theta}_v) = \left(\mathbf{A} + \sum_{j=1}^q \nabla_{\boldsymbol{\theta}_v}\sigma^2_{t-j}(\boldsymbol{\theta}_v) \otimes \mathbf{e}^\top_k \right) + \mathbf{B} + \sum_{j = 1}^q \beta_j \nabla^2_{\boldsymbol{\theta}_v}\sigma^2_{t-j}(\boldsymbol{\theta}_v),
\end{align}
where, when $t=1$, $\nabla^2_{\boldsymbol{\theta}_v}\sigma_{1-j}^2(\boldsymbol{\theta}_v)=\mathbf{0}_{d\times d}$ for $j=1,\dots,q$ and $\mathbf{e}_k$ is the $k=(2+2p+j)$th basis vector.

\begin{remark}\label{rem:computing_dirac}
The Dirac delta is a distribution (generalised function) rather than an ordinary function and is therefore not directly computationally feasible. For optimisation purposes, we replace the Heaviside function with a smooth logistic approximation $\Lambda_k(x) = \{1 + \exp(-kx)\}^{-1}$, whose steepness is controlled by $k$. Note that as $k \to \infty$, $\Lambda_k(x) \to H(x)$ pointwise, while for finite $k$ we obtain stable gradients for optimisation. Hence, to use the gradient and Hessian presented in \eqref{eq:TGARCH_gradient} and \eqref{eq:Hess_TGARCH}, we simply replace $\mathbf{G}(\mu)$, $\mathbf{G}'(\mu)$ and $\mathbf{G}''(\mu)$ with 

\begin{align}
    \mathbf{G}(\mu) &= (\Lambda(\mu - y_{t-1}), \dots, \Lambda(\mu - y_{t-p}))^{\top}, \nonumber\\
    \mathbf{G}'(\mu) &= (\Lambda'(\mu - y_{t-1}), \dots, \Lambda'(\mu - y_{t-p}))^{\top}, \nonumber\\
    \mathbf{G}''(\mu) &= (\Lambda''(\mu - y_{t-1}), \dots, \Lambda''(\mu - y_{t-p}))^{\top}, \nonumber
\end{align}
where $\Lambda_k'(x) = k\Lambda_k(x)(1-\Lambda_k(x))$ and $\Lambda_k''(x) = k^2\Lambda_k(x)(1-\Lambda_k(x))(1-2\Lambda_k(x))$.
\end{remark}

\subsection{Unconditional variance of the TGARCH model under symmetry}\label{supp:TGARCH_unconditional_var}
\cite{zakoian1994threshold} derives the unconditional variance of the TGARCH for a general case. The lemma that follows simplifies the expectation component involving the indicator function when $\mathcal{D}(0, 1)$ from \eqref{eq:measurement_plus_deterministic_state_restated} is a symmetric distribution.
\begin{lemma}\label{lem:TGARCH_expectation_indicator}
    Suppose that $\mathbb{E}(\sigma^2_t(\boldsymbol{\theta}_v)) = \mathbb{E}(\sigma^2_{t-1}(\boldsymbol{\theta}_v)) = \dots =\mathbb{E}(\sigma^2_{1}(\boldsymbol{\theta}_v))$ and $\varepsilon_t \overset{\mathrm{ind}}{\sim} \mathcal{D}(0, 1)$, with $\mathcal{D}$ symmetric. Then, 

    $$
    \mathbb{E}(\gamma_i (y_{t-i} - \mu)^2\mathbbm{1}(y_{t-i} < \mu)) = \frac{1}{2}\gamma_i\mathbb{E}(\sigma^2_t(\boldsymbol{\theta}_v)).
    $$
\end{lemma}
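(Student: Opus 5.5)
We begin by rewriting the summand in terms of the innovation. By the model in \eqref{eq:measurement_plus_deterministic_state_restated}, $y_{t-i}-\mu=z_{t-i}=\sigma_{t-i}\varepsilon_{t-i}$ with $\sigma_{t-i}>0$, so the event $\{y_{t-i}<\mu\}$ coincides with $\{\varepsilon_{t-i}<0\}$. Pulling the constant $\gamma_i$ outside, the quantity to evaluate becomes $\gamma_i\,\mathbb{E}\bigl(\sigma_{t-i}^2\varepsilon_{t-i}^2\mathbbm{1}(\varepsilon_{t-i}<0)\bigr)$.

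The key step is to condition on $\mathcal{F}_{t-i-1}$ and apply the tower property. Since $\sigma^2_{t-i}=g(\boldsymbol{\theta}_v;\mathcal{F}_{t-i-1})$ is $\mathcal{F}_{t-i-1}$-measurable, it can be taken outside the inner conditional expectation, leaving $\mathbb{E}\bigl(\varepsilon_{t-i}^2\mathbbm{1}(\varepsilon_{t-i}<0)\mid\mathcal{F}_{t-i-1}\bigr)$. We will argue that $\varepsilon_{t-i}$ is independent of $\mathcal{F}_{t-i-1}$: the standardised shocks are independent across time, and $\mathcal{F}_{t-i-1}$ is generated by past returns, which are functions of $\varepsilon_s$ for $s\le t-i-1$ and the fixed initial values. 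Consequently the inner conditional expectation equals the unconditional $\mathbb{E}(\varepsilon^2\mathbbm{1}(\varepsilon<0))$ for $\varepsilon\sim\mathcal{D}(0,1)$.

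Next we use symmetry of $\mathcal{D}$. Since $\varepsilon\overset{d}{=}-\varepsilon$, we have $\mathbb{E}(\varepsilon^2\mathbbm{1}(\varepsilon<0))=\mathbb{E}(\varepsilon^2\mathbbm{1}(\varepsilon>0))$. The two pieces sum to $\mathbb{E}(\varepsilon^2)-\mathbb{E}(\varepsilon^2\mathbbm{1}(\varepsilon=0))=1$, because the last term is zero. Hence each piece equals $1/2$, using the unit variance of $\mathcal{D}(0,1)$.

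Collecting terms gives $\tfrac12\gamma_i\,\mathbb{E}(\sigma^2_{t-i})$. The assumed constancy of $\mathbb{E}(\sigma^2_s)$ across $s$ then replaces this with $\tfrac12\gamma_i\,\mathbb{E}(\sigma^2_t(\boldsymbol{\theta}_v))$. The main point requiring care is the independence and measurability step: one must state clearly that $\sigma_{t-i}$ is predictable and that $\varepsilon_{t-i}$ is independent of the past information set. Integrability of $\sigma^2_{t-i}$ is implicit in the finite-mean assumption.
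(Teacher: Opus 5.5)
Your proposal is correct and follows essentially the same route as the paper: rewrite the summand as $\sigma_{t-i}^2\varepsilon_{t-i}^2\mathbbm{1}(\varepsilon_{t-i}<0)$, condition on $\mathcal{F}_{t-i-1}$ to pull out the $\mathcal{F}_{t-i-1}$-measurable factor $\sigma^2_{t-i}$, use symmetry of $\mathcal{D}$ to obtain the factor $\tfrac12$, and finish with the constant-mean assumption. You are slightly more explicit than the paper on two points it uses only implicitly, namely that $\varepsilon_{t-i}$ is independent of $\mathcal{F}_{t-i-1}$ and that any atom of $\mathcal{D}$ at zero contributes nothing because $\varepsilon^2\mathbbm{1}(\varepsilon=0)=0$.
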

Using Lemma \ref{lem:TGARCH_expectation_indicator}, it follows that the unconditional variance of the TGARCH process is
\begin{align}
\mathbb{E}(\sigma^2_t(\boldsymbol{\theta}_v))
= \frac{\omega}{1 - \sum_{i = 1}^p \alpha_i - \frac{1}{2} \sum_{i = 1}^p \gamma_i- \sum_{j = 1}^q \beta_j }.    
\end{align}

\subsection{Reparameterisation example GARCH}\label{supp:reparam_ex_GARCH}
This section shows how to apply Lemma \ref{lem:reparam_grad_Hess_variance} to the GARCH model under specific parameter transformations. The natural transformation for all parameters in the GARCH model expect $\mu$ is $\phi=\log(\theta)\in \mathbb{R}$ so that $\theta = \exp\left(\phi\right) \in (0, \infty)$, and hence
$$\theta_i=h_i^{-1}(\phi_i) = \exp(\phi_i) \implies h^{-1\,'}_i(\phi_i) = \exp(\phi_i),\,\, h^{-1\,''}_i(\phi_i) = \exp(\phi_i),\,\, i=2,\dots, d_v.$$
Since $\mu$ is already unrestricted, $\phi_1 = \mu$, and $h_1$ and its inverse are the identity function, giving $h^{-1\,'}_i(\phi_i)=1$ and $h^{-1\,''}_i(\phi_i)=0$. Equations \eqref{eq:general_gradient_phi} and \eqref{eq:GARCH_gradient} give the gradient
\begin{align}\label{eq:GARCH_gradient_unconstrained}
\nabla_{\boldsymbol{\phi}} \sigma_t^2(\boldsymbol{\theta}) & = \mathrm{diag}\left(1,\exp(\boldsymbol{\phi}_{2:p})\right)\begin{pmatrix} -2\exp\left(\boldsymbol{\phi_{\boldsymbol{\alpha}}}\right)^\top\mathbf{z}_{t-1:t-p} \\ 1 \\ \mathbf{z}^{2}_{t-1:t-p} \\ {\boldsymbol{\sigma}_{t-1:t-q}^{2}} \end{pmatrix}  +  \sum_{j=1}^q \exp\left(\phi_{{\beta}_j} \right)  \nabla_{\boldsymbol{\phi}} \sigma_{t-j }^2(\boldsymbol{\phi}),
\end{align}
where $\boldsymbol{\phi}_{\boldsymbol{\alpha}}=(\log(\alpha_1), \dots ,\log(\alpha_p))^\top$ and $\phi_{\beta_j}=\log(\beta_j)$. The Hessian expression $\nabla_{\boldsymbol{\phi}} \sigma^2_t(\boldsymbol{\theta})$ is given by \eqref{eq:general_Hess_phi} together with \eqref{eq:Hess_GARCH}.

\subsection{Verifying the assumption in Lemma \ref{lem:taylor_remainder} for a GARCH($1,1$)}\label{supp:verify_Lemma2_ass}
We verify the assumption in Lemma \ref{lem:taylor_remainder} for a standard GARCH($1,1$) model with Gaussian measurement errors, under a stationary variance process and an unrestricted parameterisation, conditioning on a bounded realised sample $\mathbf{y}_{0:T}$. Although the GARCH($1,1$) specification is structurally simple, it remains one of the most empirically successful volatility models \citep{hansen2005GARCH}.

Specifically, we consider
\begin{align}\label{eq:model_to_verify}
    y_t & = \phi_{\mu} + z_t, \,\, z_t = \sigma_t \varepsilon_t, \,\, \varepsilon_t \overset{\mathrm{ind}}{\sim} \mathcal{N}(0, 1),  \,\, \sigma_t > 0, \nonumber \\
    \sigma^2_t(\boldsymbol{\phi}_v) & = \exp(\phi_\omega) + \exp(\phi_{\alpha_1}) (y_{t-1} - \phi_{\mu})^2 +  \exp(\phi_{\beta_1}) \sigma^2_{t-1}(\boldsymbol{\phi}_v), 
\end{align}
where $$\boldsymbol{\phi}_v = (\phi_{\mu}, \phi_\omega, \phi_{\alpha_1}, \phi_{\beta_1})^\top \in \mathbb{R}^4,$$
with the reparameterisation $$\phi_\mu=\mu, \,\, \phi_{\omega}=\log(\omega), \,\, \phi_{\alpha_1}=\log(\alpha_1), \,\, \phi_{\beta_1}=\log(\beta_1),$$
which map the original (constrained) parameters  $\mu\in \mathbb{R},\omega > 0, \alpha_1 \geq 0, \beta_1 \geq 0$ to an unrestricted parameter space. We assume that the implied GARCH parameters satisfy the stationary condition $\alpha_1  + \beta_1 < 1$.

The assumption states that there exist constants $r >0$ and $K>0$ such that $$\mathop{\max\vphantom{\sup}}\limits_{1 \le t \le T}\;
\mathop{\sup\vphantom{\max}}\limits_{\|\boldsymbol{\phi} -\boldsymbol{\phi}^\star\|\leq r} \| \nabla^3_{\boldsymbol{\phi}}\ell_t(\boldsymbol{\phi})\| \leq K.$$
This is verified in three steps: 
\begin{enumerate}
    \item Establishing the existence of a radius $r>0$ such that the ball $\{\boldsymbol{\phi}: \|\boldsymbol{\phi} - \boldsymbol{\phi}^\star \|\leq r \}$ lies strictly within the stationarity region of the GARCH parameters implied by the unrestricted parameterisation. 
    \item On the ball in Step 1, we show that the conditional variance process $\sigma^2_t(\boldsymbol{\phi}_v)$ is uniformly positive and bounded for all $t=1,\dots,T$ and all $\boldsymbol{\phi}_v$ in the ball.
    \item Using that the Gaussian log-density is three times continuously differentiable with respect to $(\phi_\mu,\sigma_t^2)$ for $\sigma_t^2>0$, together with the boundedness of $\sigma^2_t(\boldsymbol{\phi}_v)$ and its derivatives up to order three on the ball in Step 1, we conclude that there exists a constant $K$ such that $\| \nabla^3_{\boldsymbol{\phi}}\ell_t(\boldsymbol{\phi})\|$ is uniformly bounded.
\end{enumerate}

\paragraph{Step 1: Establishing the existence of $r$ such that the ball lies strictly within the stationarity region.}
Assume that $\boldsymbol{\phi}^{\star}$ satisfies the stationarity inequality, i.e.\
$$\alpha^\star_1 = \exp(\phi^\star_{\alpha_1}),\,\, \beta^\star_1 = \exp(\phi^\star_{\beta_1}),$$
such that $\alpha_1^{\star} + \beta_1^{\star} < 1$. Define the continuous mapping $$g(\boldsymbol{\phi}) = \exp(\phi_{\alpha_1}) + \exp(\phi_{\beta_1}),$$
and let
$$\epsilon = (1 - g(\boldsymbol{\phi}^\star))/2,$$
with $\epsilon > 0 $ since $g(\boldsymbol{\phi}^\star) <1$. Because $g$ is continuous at $\boldsymbol{\phi}^\star$, there exists an $r>0$ such that
$$\| \boldsymbol{\phi} - \boldsymbol{\phi}^\star\| \leq r \implies |g(\boldsymbol{\phi}) - g(\boldsymbol{\phi}^\star)| < \epsilon.$$
Thus, for all $\boldsymbol{\phi}$ satisfying $\| \boldsymbol{\phi} - \boldsymbol{\phi}^{\star}\|\leq r$, $$g(\boldsymbol{\phi}) < g(\boldsymbol{\phi}^\star) + \epsilon = 1 - \epsilon,$$  establishing the existence of a ball on which the stationarity condition holds uniformly.

\paragraph{Step 2: Showing $\sigma^2_t(\boldsymbol{\phi}_v)$ is uniformly positive and bounded on the ball.}
For uniform positivity, note that the function $\exp(\phi_{\omega})$ is continuous and strictly positive on the compact ball
$$\{\boldsymbol{\phi}: \| \boldsymbol{\phi}-\boldsymbol{\phi}^\star\| \leq r \}.$$
Hence it attains a strictly positive minimum
$$\underline{\omega} = \inf_{\| \boldsymbol{\phi} - \boldsymbol{\phi}^\star\| \leq r} \exp(\phi_\omega) > 0.$$
Since all remaining terms in the recursion are non-negative,
$$ \sigma^2_t(\boldsymbol{\phi}_v) \geq \exp(\phi_{\omega})\geq \underline{\omega}>0, \,\, t=1, \dots, T,$$
establishing uniform positivity over $\| \boldsymbol{\phi} - \boldsymbol{\phi}^\star\| \leq r$.

For uniform boundedness, because the ball is compact and the exponential function is continuous, the following upper bounds exist
\begin{align}\label{eq:upper_bounds_GARCH_params}
  \exp(\phi_\omega)  \leq \overline{\omega} &  = \sup_{\|\boldsymbol{\phi}-\boldsymbol{\phi}^\star\| \leq r}\exp(\phi_\omega)  \nonumber\\
  \exp(\phi_{\alpha_1}) \leq \overline{\alpha}_1 & =\sup_{\|\boldsymbol{\phi}-\boldsymbol{\phi}^\star\|\leq r}\exp(\phi_{\alpha_1}) \nonumber \\
  \exp(\phi_{\beta_1}) \leq \overline{\beta}_1 & =\sup_{\|\boldsymbol{\phi}-\boldsymbol{\phi}^\star\|\leq r}\exp(\phi_{\beta_1})  < 1, 
\end{align}
where the strict inequality follows from Step 1, which establishes that the stationarity condition
\begin{align}\label{eq:stationarity_cond}
    \exp(\phi_{\alpha_1}) + \exp(\phi_{\beta_1})\leq 1 - \epsilon
\end{align}
holds uniformly on the ball. Furthermore, on the compact ball, there exists
\begin{align}\label{eq:bound_mu}
  \overline{\mu} & = \sup_{\|\boldsymbol{\phi}-\boldsymbol{\phi}^\star\|\leq r} |\phi_\mu|, 
\end{align}
hence, for $t=1,\dots, T$,
$$(y_{t-1} - \phi_\mu)^2 \leq (|y_{t-1}| + |\phi_\mu|)^2 \leq (|y_{t-1}| + \overline{\mu})^2.$$
Moreover, since we condition on a bounded realised sample $\mathbf{y}_{0:T}$, there exists an $M<\infty$ such that 
\begin{align}\label{eq:bound_data}
  \max_{0\leq t \leq T}|y_{t}| \leq M  
\end{align}
and thus $(|y_{t-1}| + \overline{\mu})^2 \leq (M+\overline{\mu})^2$. Using this in the GARCH recursion in \eqref{eq:model_to_verify},
\begin{align*}
    \sigma^2_t(\boldsymbol{\phi}_v)  & = \exp(\phi_\omega) + \exp(\phi_{\alpha_1}) (y_{t-1} - \phi_{\mu})^2 +  \exp(\phi_{\beta_1}) \sigma^2_{t-1}(\boldsymbol{\phi}_v) \\
     & \leq \overline{\omega} + \overline{\alpha}_1(M + \overline{\mu})^2 + \overline{\beta}_1\sigma_{t-1}^2(\boldsymbol{\phi}_v).
\end{align*}
Iterating the recursion from any finite start value $\sigma^2_0(\boldsymbol{\phi}_v)<\infty$ yields
\begin{align}\label{eq:bound1_sigma_t}
\sigma^2_t(\boldsymbol{\phi}_v) & \leq \overline{\beta}_1^{\,t}\sigma^2_0(\boldsymbol{\phi}_v) +  (\overline{\omega} + \overline{\alpha}_1(M + \overline{\mu})^2)\sum_{k=0}^{t-1} \overline{\beta}_1^{\,k}.
\end{align}
The geometric sum 
$$\sum_{k=0}^{t-1} \overline{\beta}_1^{\,k} = \frac{1 - \overline{\beta}_1^{\,t}}{1-\overline{\beta}_1}<\frac{1}{1-\overline{\beta}_1},$$
where the inequality follows from $\overline{\beta}_1 < 1$ in \eqref{eq:upper_bounds_GARCH_params}. In particular, by \eqref{eq:stationarity_cond}, $1-\overline{\beta}_1 \geq \epsilon>0$, and thus
\begin{align}\label{eq:bound_geometric_sum}
    \sum_{k=0}^{t-1} \overline{\beta}_1^{\,k} & \leq \frac{1}{\epsilon}.
\end{align}
Then, from \eqref{eq:bound1_sigma_t}, it follows that 
\[
\mathop{\max\vphantom{\sup}}\limits_{1 \le t \le T}\;
\mathop{\sup\vphantom{\max}}\limits_{\|\boldsymbol{\phi} -\boldsymbol{\phi}^\star\|\leq r}
\sigma^2_t(\boldsymbol{\phi}_v) \leq \mathop{\vphantom{\max}\sup}\limits_{\|\boldsymbol{\phi} -\boldsymbol{\phi}^\star\|\leq r}
\sigma^2_0(\boldsymbol{\phi}_v) + (\overline{\omega} +\overline{\alpha}_1(M + \overline{\mu})^2)\frac{1}{\epsilon} =\overline{\sigma}^{\,2} < \infty.
\]
Together with the lower bound $\underline{\omega}$, this shows that $$0 < \underline{\omega} \leq \sigma^2_t(\boldsymbol{\phi}_v) \leq \overline{\sigma}^{\,2},$$
uniformly for all $t=1,\dots, T$ and all $\boldsymbol{\phi}$ such that $\|\boldsymbol{\phi} - \boldsymbol{\phi}^\star\|\leq r$.

\paragraph{Step 3: Establishing the existence of $K$ such that $\| \nabla^3_{\boldsymbol{\phi}}\ell_t(\boldsymbol{\phi})\|$ is uniformly bounded on the ball.}
For the Gaussian measurement density in \eqref{eq:model_to_verify}, the log-likelihood contribution at time $t$ is
\begin{align*}
  \ell_t(\boldsymbol{\phi}) & = -\frac{1}{2}\log(2\pi) - \frac{1}{2}\log(\sigma^2_t(\boldsymbol{\phi}_v)) - \frac{1}{2\sigma^2_t(\boldsymbol{\phi}_v)}(y_t - \phi_\mu)^2.
\end{align*}
From Step 2, on the ball $$\{\boldsymbol{\phi}: \|\boldsymbol{\phi} - \boldsymbol{\phi}^\star \|\leq r \},$$
there exist constants $\underline{\omega}$ and $\overline{\sigma}^{\,2}$ such that
$$\underline{\omega} \leq \sigma^2_t(\boldsymbol{\phi}_v)\leq \overline{\sigma}^{\,2}, \,\, t= 1, \dots, T.$$
Moreover, by \eqref{eq:bound_mu} and \eqref{eq:bound_data}, 
$$|y_t - \phi_\mu| \leq M + \overline{\mu}, \,\, t= 1, \dots, T.$$
These bounds imply that, uniformly for $t=1,\dots,T$ and $\boldsymbol{\phi}$ such that $\|\boldsymbol{\phi}-\boldsymbol{\phi}^\star \|\leq r$,
\begin{align}\label{eq:compact_domain}
(u, v) = (y_t -\phi_\mu, \sigma^2_t(\boldsymbol{\phi}_v)) \in [-(M + \overline{\mu}), M + \overline{\mu}] \times [\underline{\omega},\overline{\sigma}^{\,2}],    
\end{align}
which is a compact subset of $\mathbb{R}\times (0, \infty)$. The function 
\begin{align}\label{eq:function_u_v}
  (u,v) \mapsto -\frac{1}{2}\log v - \frac{1}{2}u^2/v  
\end{align}
is three times continuously differentiable for $v>0$, with bounded partial derivatives up to order three on the compact subset of $\mathbb{R}\times (0,\infty)$ defined in \eqref{eq:compact_domain}. By the chain rule, boundedness of third-order derivatives of $\ell_t(\boldsymbol{\phi})$ follows once the derivatives up to order three of $u=y_t-\phi_{\mu}$ and $v=\sigma_t^2(\boldsymbol{\phi}_v)$ are bounded on the ball $\{\boldsymbol{\phi}: \|\boldsymbol{\phi}-\boldsymbol{\phi}^\star\|\leq r\}$. This is immediate for $u$. For $v=\sigma^2(\boldsymbol{\phi}_v)$, let $\partial \sigma^2(\boldsymbol{\phi}_v)$ denote an arbitrary partial derivative of $\sigma^2(\boldsymbol{\phi}_v)$, of total order at most three (including mixed partial derivatives), with respect to the components of $\boldsymbol{\phi}_v$, for example
$$\frac{\partial\sigma^2(\boldsymbol{\phi}_v)}{\partial \phi_{\beta_1}}, \,\, \frac{\partial^2\sigma^2(\boldsymbol{\phi}_v)}{\partial \phi_{\beta_1}\partial\phi_{\alpha_1}},\,\, \frac{\partial^3\sigma^2(\boldsymbol{\phi}_v)}{\partial^2 \phi_{\beta_1}\partial\phi_{\omega}}.$$
Recall the GARCH($1,1$) recursion,
\begin{align*}
    \sigma^2_t(\boldsymbol{\phi}_v)  & = \exp(\phi_\omega) + \exp(\phi_{\alpha_1}) (y_{t-1} - \phi_{\mu})^2 +  \exp(\phi_{\beta_1}) \sigma^2_{t-1}(\boldsymbol{\phi}_v).
\end{align*}
Any partial derivative can be written as a recursion
\begin{align}\label{eq:partial_sigma2_recursion}
    \partial\sigma^2_t(\boldsymbol{\phi}_v)  & = \exp(\phi_{\beta_1}) \partial\sigma^2_{t-1}(\boldsymbol{\phi}_v) + F_t(\boldsymbol{\phi}),
\end{align}
where $F_t(\boldsymbol{\phi})$ collects all remaining terms arising from differentiation of the exponential coefficients and the quadratic term $(y_{t-1}-\phi_\mu)^2$, with the data assumed bounded. On the compact ball, all derivatives up to order three of the exponential coefficients are bounded, and any terms arising from repeated application of the product rule are finite sums of products of such bounded quantities. Hence, there exists a constant $D$ such that $$\sup_{\|\boldsymbol{\phi} - \boldsymbol{\phi}^\star \|\leq r} |F_t(\boldsymbol{\phi})|\leq D, \,\, t= 1, \dots, T.$$
Iterating the recursion in \eqref{eq:partial_sigma2_recursion} from any finite start value $\partial\sigma^2_0(\boldsymbol{\phi}_v)<\infty$ and taking the supremum yields
\begin{align}\label{eq:bound1_partial_sigma_t}
\sup_{\|\boldsymbol{\phi} - \boldsymbol{\phi}^\star \|\leq r} |\partial \sigma^2_t(\boldsymbol{\phi}_v)| & \leq \overline{\beta}_1^{\,t}\sup_{\|\boldsymbol{\phi} - \boldsymbol{\phi}^\star \|\leq r} |\partial\sigma^2_0(\boldsymbol{\phi}_v)| +  D\sum_{k=0}^{t-1} \overline{\beta}_1^{\,k}.
\end{align}
The bound of the geometric term is given in \eqref{eq:bound_geometric_sum}, and since from Step 2, $\overline{\beta}_1<1$, we conclude that
\[
\mathop{\max\vphantom{\sup}}\limits_{1 \le t \le T}\;
\mathop{\sup\vphantom{\max}}\limits_{\|\boldsymbol{\phi} -\boldsymbol{\phi}^\star\|\leq r}| \partial
\sigma^2_t(\boldsymbol{\phi}_v)| \leq \mathop{\vphantom{\max}\sup}\limits_{\|\boldsymbol{\phi} -\boldsymbol{\phi}^\star\|\leq r}
|\partial \sigma^2_0(\boldsymbol{\phi}_v)| + D\frac{1}{\epsilon}  < \infty.
\]
Thus, combining with the bounded partial derivatives of \eqref{eq:function_u_v}, we conclude that there exists a $K$ such that,
$$\mathop{\max\vphantom{\sup}}\limits_{1 \le t \le T}\;
\mathop{\sup\vphantom{\max}}\limits_{\|\boldsymbol{\phi} -\boldsymbol{\phi}^\star\|\leq r}\| \nabla^3_{\boldsymbol{\phi}}\ell_t(\boldsymbol{\phi})\| \leq K,$$
which verifies that the assumption in Lemma \ref{lem:taylor_remainder} holds for the GARCH($1,1$) model with Gaussian measurement errors under the unrestricted parameterisation.

\subsection{Gradient and Hessian of the log-density measurements}
In deriving the general structure of the gradient and Hessian of $\ell_t$, we use the more explicit notation $$\ell_t(\boldsymbol{\theta}) = \ell_t(\sigma^2_t(\boldsymbol{\theta}_v), \mu, \eta_\varepsilon)$$ for clarity,  with
\begin{align*}
    \sigma^2_t : & \, \mathbb{R}^{\dim_{\boldsymbol{\theta}_v}} \rightarrow (0,\infty).
\end{align*}
With this notation, the gradient with respect to the parameter vector $\boldsymbol{\theta}=(\boldsymbol{\theta}_v^\top, \theta_\varepsilon)^\top$ decomposes as
\begin{align}\label{eq:gradient_log_dens_vartheta}
\nabla_{\boldsymbol{\theta}}\ell_t(\boldsymbol{\theta}) & = \left(\nabla_{\boldsymbol{\theta}_v}\ell_t(\sigma^2_t(\boldsymbol{\theta}_v), \mu, \theta_\varepsilon)^\top, \frac{\partial \ell_t(\sigma^2_t(\boldsymbol{\theta}_v), \mu, \theta_\varepsilon)}{\partial\theta_\varepsilon}\right)^\top,  
\end{align}
whereby the chain rule, the gradient with respect to $\boldsymbol{\theta}_v$ ($\theta_\varepsilon$ does not depend on $\boldsymbol{\theta}_v$)
\begin{align}\label{eq:gradient_theta}
\nabla_{\boldsymbol{\theta}_v}\ell_t(\sigma^2_t(\boldsymbol{\theta}_v), \mu, \theta_\varepsilon) & = \frac{\partial \ell_t}{\partial\sigma^2_t}\nabla_{\boldsymbol{\theta}_v}\sigma^2_t(\boldsymbol{\theta}_v) + \frac{\partial \ell_t}{\partial\mu} \nabla_{\boldsymbol{\theta}_v}\mu \nonumber \\
 & = \frac{\partial \ell_t}{\partial\sigma^2_t}\nabla_{\boldsymbol{\theta}_v}\sigma^2_t(\boldsymbol{\theta}_v) + \frac{\partial \ell_t}{\partial\mu} \mathbf{e}_1, 
\end{align}
where $\mathbf{e}_1$ is the first canonical basis vector (assuming $\mu$ appears as the first element in $\boldsymbol{\theta}_v$), and using the shorthands
$$\frac{\partial \ell_t}{\partial\sigma^2_t} = \frac{\partial \ell_t(\sigma^2_t(\boldsymbol{\theta}_v), \mu, \theta_\varepsilon)}{\partial\sigma^2_t(\boldsymbol{\theta}_v)} \,\, \text{ and } \,\, \frac{\partial \ell_t}{\partial\mu} = \frac{\partial \ell_t(\sigma^2_t(\boldsymbol{\theta}_v), \mu, \theta_\varepsilon)}{\partial\mu}.$$

The Hessian with respect to the parameter vector has the following structure
\begin{align}
\nabla^2_{\boldsymbol{\theta}}\ell_t(\boldsymbol{\theta}) = & 
\begin{bmatrix}
\nabla^2_{\boldsymbol{\theta}_v}\ell_t(\sigma^2_t(\boldsymbol{\theta}_v), \mu, \theta_\varepsilon) & \nabla_{\boldsymbol{\theta}_v}\frac{\partial \ell_t(\sigma^2_t(\boldsymbol{\theta}_v), \mu, \theta_\varepsilon)}{\partial\theta_\varepsilon} \\
\left(\nabla_{\boldsymbol{\theta}_v}\frac{\partial \ell_t(\sigma^2_t(\boldsymbol{\theta}_v), \mu, \theta_\varepsilon)}{\partial\theta_\varepsilon}\right)^\top & \frac{\partial^2 \ell_t(\sigma^2_t(\boldsymbol{\theta}), \mu, \theta_\varepsilon)}{\partial\theta_\varepsilon^2}
\end{bmatrix},
\end{align}
which simplifies to the top-left element for the Gaussian density (recall $\theta_\varepsilon$ is void for the Gaussian case).

For the Hessian with respect to $\boldsymbol{\theta}_v$, we apply the product rule for a scalar times a vector,
\begin{align}\label{eq:Hess_term1}
\nabla_{\boldsymbol{\theta}_v}\left(\frac{\partial \ell_t}{\partial\sigma^2_t} \nabla_{\boldsymbol{\theta}}\sigma^2_t(\boldsymbol{\theta}_v)\right) & = \left(\nabla_{\boldsymbol{\theta}_v} \frac{\partial \ell_t}{\partial\sigma^2_t} \right) \nabla_{\boldsymbol{\theta}_v}\sigma^2_t(\boldsymbol{\theta}_v)^\top  +  \frac{\partial \ell_t}{\partial\sigma^2_t} \nabla^2_{\boldsymbol{\theta}_v}\sigma^2_t(\boldsymbol{\theta}_v),
\end{align}
for the first term in \eqref{eq:gradient_theta},
and,
\begin{align}\label{eq:Hess_term2}
\nabla_{\boldsymbol{\theta}_v}\left(\frac{\partial \ell_t}{\partial\mu} \mathbf{e}_1\right) & = \left(\nabla_{\boldsymbol{\theta}_v}\frac{\partial \ell_t}{\partial\mu} \right)\mathbf{e}_1^\top.
\end{align}
for the second term in \eqref{eq:gradient_theta}. The Hessian $\nabla^2_{\boldsymbol{\theta}_v} \sigma_t^2(\boldsymbol{\theta}_v)$ of the variance model is derived in the corresponding subsection.
Now, in \eqref{eq:Hess_term1},
\begin{align}\label{eq:Hess_term1_term}
\nabla_{\boldsymbol{\theta}_v} \frac{\partial \ell_t}{\partial\sigma^2_t} & = \frac{\partial^2 \ell_t}{\partial(\sigma^2_t)^2}\nabla_{\boldsymbol{\theta}_v}\sigma^2_t(\boldsymbol{\theta}_v) + \frac{\partial^2 \ell_t}{\partial\mu\partial\sigma^2_t} \nabla_{\boldsymbol{\theta}_v}\mu \nonumber \\    & = \frac{\partial^2 \ell_t}{\partial(\sigma^2_t)^2}\nabla_{\boldsymbol{\theta}_v}\sigma^2_t(\boldsymbol{\theta}_v) + \frac{\partial^2 \ell_t}{\partial\mu\partial\sigma^2_t}\mathbf{e}_1,
\end{align}
and in \eqref{eq:Hess_term2},
\begin{align}\label{eq:Hess_term2_term}
\nabla_{\boldsymbol{\theta}_v}\frac{\partial \ell_t}{\partial\mu}  & =\frac{\partial^2 \ell_t}{\partial\sigma^2_t\partial\mu}\nabla_{\boldsymbol{\theta}_v}\sigma^2_t(\boldsymbol{\theta}_v) + \frac{\partial^2 \ell_t}{\partial\mu^2} \nabla_{\boldsymbol{\theta}_v}\mu \nonumber \\    & = \frac{\partial^2 \ell_t}{\partial\sigma^2_t\partial\mu} \nabla_{\boldsymbol{\theta}_v}\sigma^2_t(\boldsymbol{\theta}_v) + \frac{\partial^2 \ell_t}{\partial\mu^2} \mathbf{e}_1.
\end{align}
Collecting all terms gives
\begin{align*} %\label{eq:Hess_theta}
\nabla_{\boldsymbol{\theta}_v}^2\ell_t(\sigma^2_t(\boldsymbol{\theta}_v), \mu, \theta_\varepsilon) & = \frac{\partial^2 \ell_t}{\partial(\sigma^2_t)^2}\nabla_{\boldsymbol{\theta}_v}\sigma^2_t(\boldsymbol{\theta}_v)\nabla_{\boldsymbol{\theta}_v}\sigma^2_t(\boldsymbol{\theta}_v)^\top + \frac{\partial^2 \ell_t}{\partial\mu\partial\sigma^2_t} \mathbf{e}_1\nabla_{\boldsymbol{\theta}_v}\sigma^2_t(\boldsymbol{\theta}_v)^\top +  \frac{\partial \ell_t}{\partial\sigma^2_t} \nabla^2_{\boldsymbol{\theta}_v}\sigma^2_t(\boldsymbol{\theta}_v) \\
 & + \frac{\partial^2 \ell_t}{\partial\sigma^2_t\partial\mu} \nabla_{\boldsymbol{\theta}_v}\sigma^2_t(\boldsymbol{\theta}_v)\mathbf{e}_1^\top + \frac{\partial^2 \ell_t}{\partial\mu^2} \mathbf{e}_1\mathbf{e}_1^\top,
\end{align*}
simplifying to
\begin{align} \label{eq:Hess_theta_log_density}
\nabla_{\boldsymbol{\theta}_v}^2\ell_t(\sigma^2_t(\boldsymbol{\theta}_v), \mu, \theta_\varepsilon) & = \frac{\partial^2 \ell_t}{\partial(\sigma^2_t)^2}\nabla_{\boldsymbol{\theta}_v}\sigma^2_t(\boldsymbol{\theta}_v)\nabla_{\boldsymbol{\theta}_v}\sigma^2_t(\boldsymbol{\theta}_v)^\top + \frac{\partial^2 \ell_t}{\partial\sigma^2_t\partial\mu}\left( \nabla_{\boldsymbol{\theta}_v}\sigma^2_t(\boldsymbol{\theta}_v)\mathbf{e}_1^\top + \mathbf{e}_1\nabla_{\boldsymbol{\theta}_v}\sigma^2_t(\boldsymbol{\theta}_v)^\top \right) \nonumber   \\
 & +  \frac{\partial \ell_t}{\partial\sigma^2_t} \nabla^2_{\boldsymbol{\theta}_v}\sigma^2_t(\boldsymbol{\theta}_v) + \frac{\partial^2 \ell_t}{\partial\mu^2} \mathbf{e}_1\mathbf{e}_1^\top.
\end{align}

For the Student-$t$ measurement error, we also need the off-diagonal elements \begin{align}\label{eq:hess_of_diag}
\nabla_{\boldsymbol{\theta}_v}\frac{\partial\ell_t(\sigma^2_t(\boldsymbol{\theta}), \mu, \theta_\varepsilon)}{\partial \theta_\varepsilon} & = \frac{\partial^2 \ell_t}{\partial\sigma^2_t\partial\theta_\varepsilon}\nabla_{\boldsymbol{\theta}_v}\sigma^2_t(\boldsymbol{\theta}_v) + \frac{\partial^2 \ell_t }{\partial\mu\partial \theta_\varepsilon}  \mathbf{e}_1.
\end{align}

The partial derivatives of $\ell_t$ are measurement model-specific and given in Propositions \ref{supp:prop_partial_Gauss} and \ref{supp:prop_partial_Studentt}, whereas the gradient $\nabla_{\boldsymbol{\theta}}\sigma^2_t(\boldsymbol{\theta})$ and Hessian $\nabla^2_{\boldsymbol{\theta}}\sigma^2_t(\boldsymbol{\theta})$ are conditional variance process specific and given in Propositions \ref{prop:grad_Hess_GARCH} and \ref{prop:grad_Hess_TGARCH}.

\subsection{Additional results and model extension for the variational Bayes application to the Dow Jones GARCH($1,1$) model}{\label{subsec:additional_results_App1}}
\subsubsection{Additional figures}
Figure \ref{fig:App1_VB_normal_ELBO} shows the 
convergence of the upper median run (third out of $n_{\mathrm{rep}}=6$ runs, ranked by ELBO) for each method. The first 500 iterations use the Euclidean gradient, after which the natural gradient is employed. The ELBO converges for all methods, although occasional large spikes are visible due to the stochastic nature of the algorithms.
\begin{figure}[H]
    \centering    \includegraphics[width=0.95\textwidth]{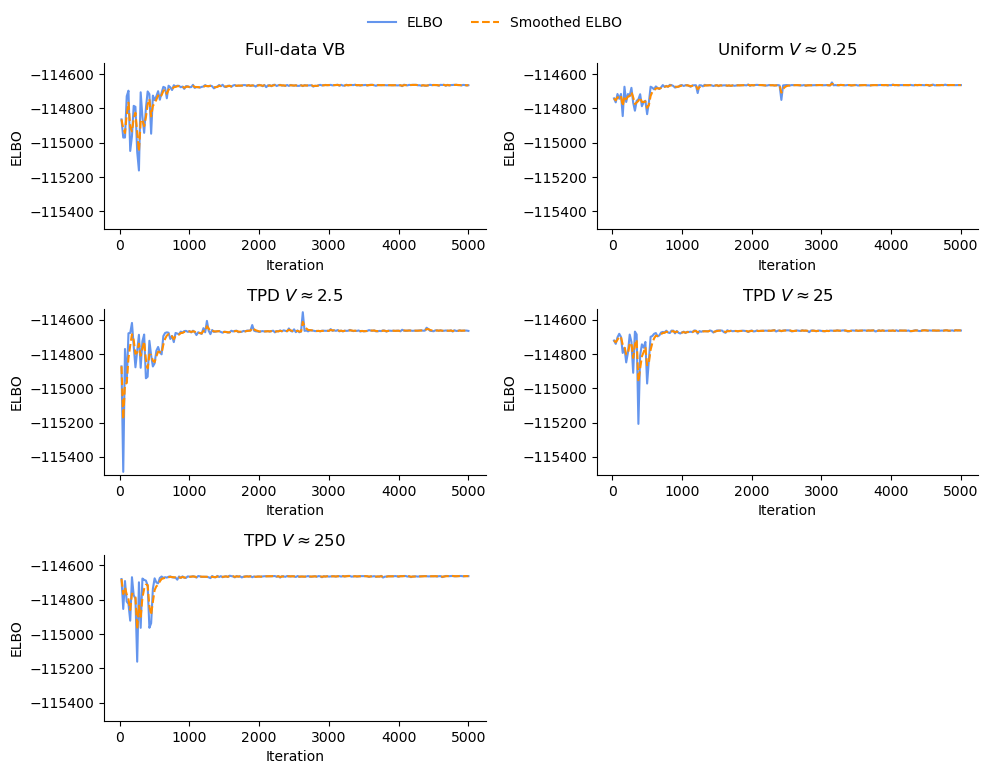}
    \caption{Additional results for the GARCH($1,1$) with normal errors in Section \ref{sec:VB_application}. Evidence lower bound (ELBO) (blue line) and smoothed ELBO (orange dashed line) versus iteration for the upper median run (third out of $n_{\mathrm{rep}}=6$ runs, ranked by ELBO) for each method in the application in Section \ref{sec:VB_application}. The smoothed ELBO is obtained using an exponential moving average of the ELBO estimates across iterations with smoothing constant 0.5.}
    \label{fig:App1_VB_normal_ELBO}
\end{figure}

Figure \ref{fig:u_max} shows the largest subsample index $u_{\max}$ across optimisation iterations together with the full-data VB cost as a benchmark. The figure illustrates that subsampling substantially reduces the computational burden, particularly for large variance tolerances, which allow more aggressive decay rates, i.e.\ lower $c$ (equivalently, larger $\gamma$).

\begin{figure}[H]
    \centering    \includegraphics[width=0.95\textwidth]{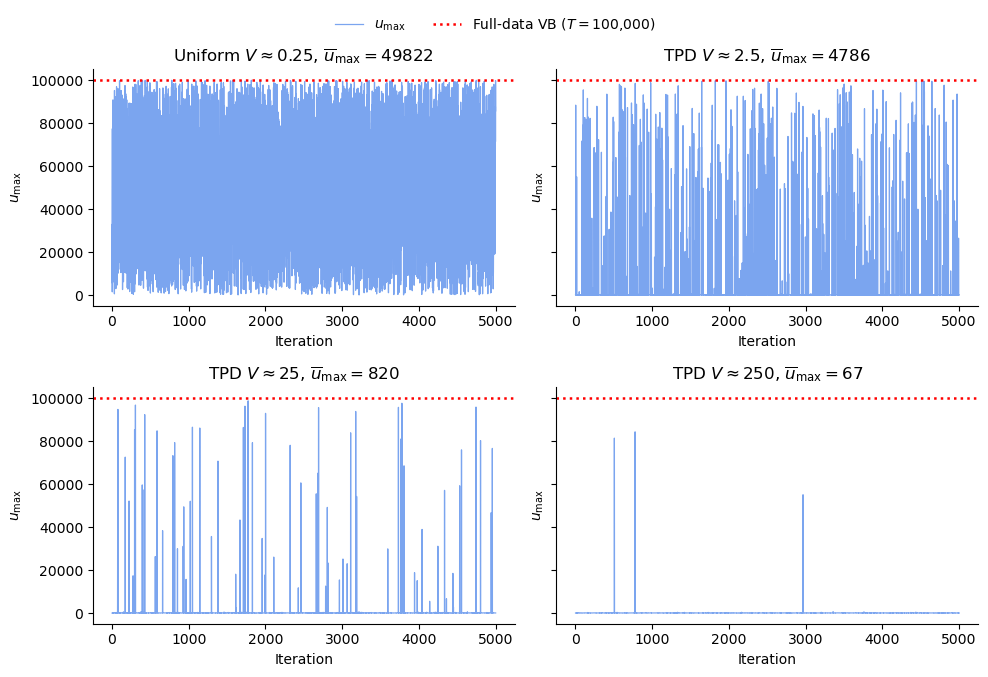}
    \caption{Additional results for the GARCH($1,1$) with normal errors in Section \ref{sec:VB_application}. Largest subsample index $u_{\max}$ (blue line) versus iterations for the subsampling methods in the application in Section \ref{sec:VB_application}. The subsampling variational Bayes (VB) method with truncated power-law decaying (TPD) sampling probabilities is shown for $R_{\max}=10, 100, 1000$, corresponding to variance tolerances $V\approx 2.5, 25, 250$. The uniform sampling case corresponds to $R_{\max}=1$ with $V \approx 0.25$. The figure also shows the full-data VB cost a as benchmark (red dotted line). The quantity $\overline{u}_{\max}$ denotes the average of $u_{\max}$ over the optimisation iterations.}
    \label{fig:u_max}
\end{figure}

\subsubsection{Accommodating outliers with Student-$t$ errors}\label{subsub:settings_Student-t}
To account for heavy-tails and extreme observations, we also consider a Student-$t$ error distribution. We use the unconstrained parameterisation
$$\boldsymbol{\phi}=(\phi_\mu,\phi_\omega,\phi_{\alpha_1} ,\phi_{\beta_1},\phi_{\varepsilon})^\top,$$
where
\begin{equation}\label{eq:transformations}
\phi_\mu=\mu, \,\, \phi_{\omega}=\log(\omega), \,\, \phi_{\alpha_1}=\log(\alpha_1), \,\, \phi_{\beta_1}=\log(\beta_1), \,\, \phi_\varepsilon=\log(\nu -2),
\end{equation}
which map the original parameters $\mu\in \mathbb{R},\omega > 0, \alpha_1 \geq 0, \beta_1 \geq 0$ and $\nu>2$ to an unrestricted space. 

We use the same priors as in Section \ref{subsec:spec_settings_App1} for $\mu,\omega,\alpha_1$, and $\beta_1$, with the joint prior supported on the stationary region
$\alpha_1 + \beta_1 < 1$. The degrees-of-freedom parameter $\nu$ is assigned the shifted Gamma prior $\nu - 2 \sim \text{Gamma}(2,1),$ which implies $\nu>2$.

We follow the settings in Section \ref{subsec:spec_settings_App1}. However, for the optimisation we do not use basin-hopping, as we initialise the parameters using estimates from the corresponding normal error model, which are assumed to be close to a good mode for the GARCH variance parameters. For the truncated decaying sampling probability weights, the constrained objective function for $R_{\max}=1000,100,10$ is shown in Figure \ref{fig:expected_cost_var_constraint_Student_t}. The figure is similar to Figure \ref{fig:expected_cost_var_constraint}, but for the Student-$t$ error model. This setting exhibits lower variability, which is likely due to more accurate control variates, resulting in less extreme control variate residuals.

\begin{figure}[h]
    \centering
    \includegraphics[width=1.0\textwidth]{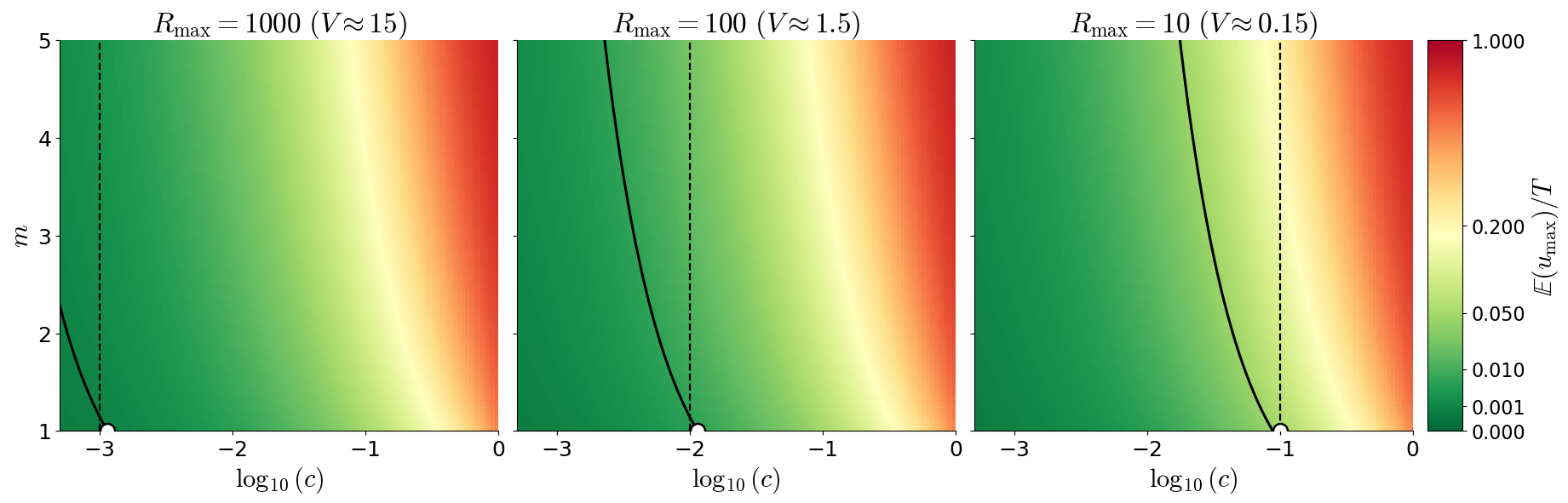}\caption{Results for the Student-$t$ error extension. Constrained objective function $\mathbb{E}(u_{\max})$ defined in \eqref{eq:miminises_compute_subject_variance} (normalised by $T$, with $T=100{,}000$) and described in Section \ref{subsec:balancing_variance_and_computational_cost}, shown as a function of $c$ and $m$ (where $m$ is treated as continuous for visualisation). The black curve marks the variance constraint, and the dashed vertical line marks the safeguard lower bound constraint. Results are shown for three choices of $R_{\max}=1/c_{\min}$ (see panel titles), with $b=100$ and $t^\star=1{,}000$. The white circles indicate the optima for each setting. The colour scale represents the expected computational cost ratio, where red indicates no compute savings relative to the full dataset of size $T$, and green indicates increasing compute savings.}
    \label{fig:expected_cost_var_constraint_Student_t}
\end{figure}

We consider subsampling variational Bayes with $R_{\max}=100$, the recommended default value in Section \ref{subsec:results_VB_app1}. Figure \ref{fig:student_t_panels}(a) shows the convergence of the resulting ELBO and the smoothed ELBO as a function of the iterations. The ELBO increases steadily and stabilises after the initial iterations. The subsampling VB optimisation, including its tuning, takes approximately $563$ seconds to run, compared to approximately $31{,}589$ seconds for the full-data Markov chain Monte Carlo (MCMC), corresponding to an approximately 56-fold speed-up. Figure \ref{fig:student_t_panels}(b) confirms the small realised values of $u_{\max}$ (on average 608) over the optimisation iterations, explaining the substantial computational speed-up.

\begin{figure}[H]
\centering

\begin{subfigure}{0.48\linewidth}
\centering
\includegraphics[width=\linewidth]{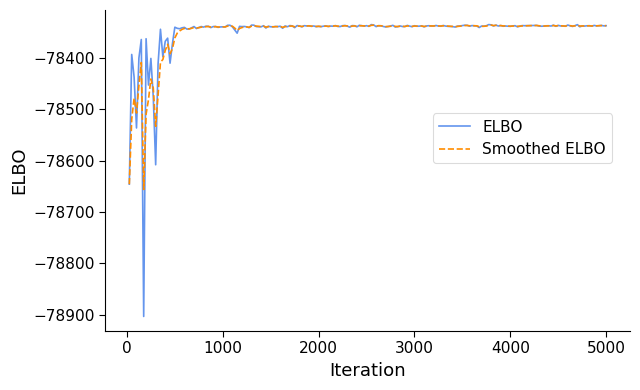}
\caption{ELBO and smoothed-ELBO.}
\end{subfigure}
\hfill
\begin{subfigure}{0.48\linewidth}
\centering
\includegraphics[width=\linewidth]{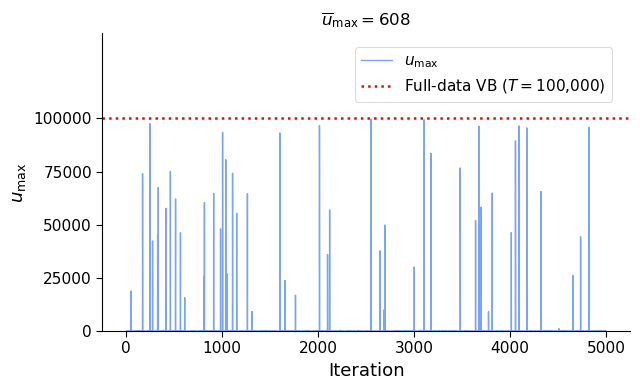}
\caption{$u_{\max}$.}
\end{subfigure}

\caption{(a) Results for the Student-$t$ error extension. Evidence lower bound (ELBO) and smoothed ELBO versus optimisation iteration for the Student-$t$. The smoothed ELBO is obtained using an exponential moving average of the ELBO estimates across iterations with smoothing constant 0.5. (b) Largest subsample index $u_{\max}$ across optimisation iterations. The red dotted line shows the full-data VB benchmark $T=100{,}000$. The quantity $\overline{u}_{\max}$ denotes the average of $u_{\max}$ over the optimisation iterations. Both panels correspond to subsampling variational Bayes using truncated power-law weights tuned with $R_{\max}=100$.}
\label{fig:student_t_panels}
\end{figure}

Finally, Figure \ref{fig:App1_Student_t_VB_posterior_kde_comparison} compares the marginal posterior estimates produced by the variational approximation with the full-data MCMC. As noted in Section \ref{sec:VB_application}, discrepancies are likely due to the restricted variational family. The next section considers subsampling Markov chain Monte Carlo, which achieves substantially improved posterior accuracy.

\begin{figure}[h]
\centering
\includegraphics[width=\linewidth]{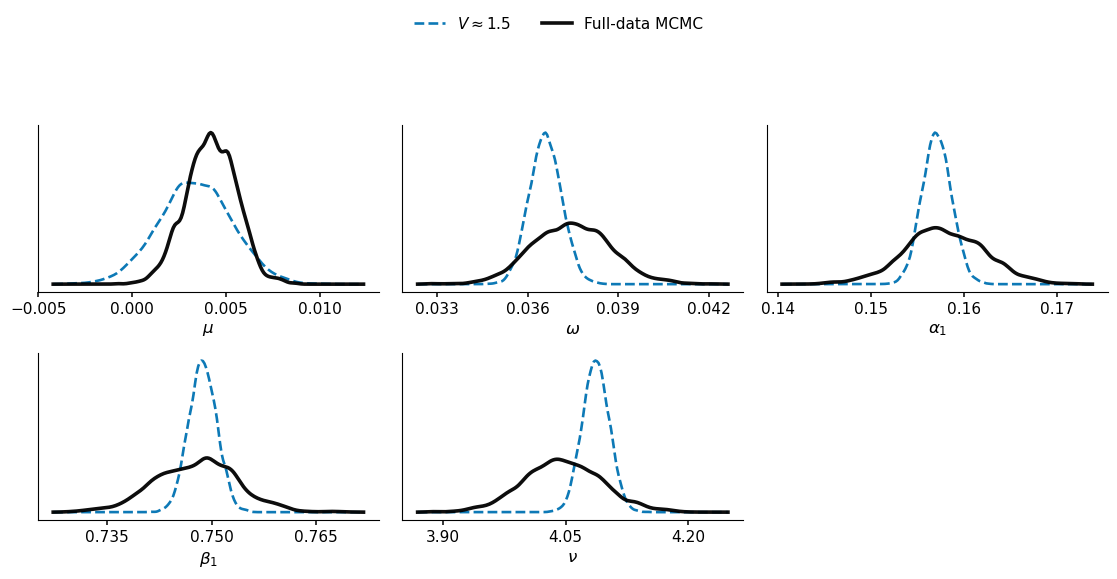}
\caption{Results for the Student-$t$ error extension. Posterior marginal distributions under the original parameterisation $\boldsymbol{\theta}$. The dashed blue curves correspond to subsampling variational Bayes with truncated decaying sampling probabilities tuned using $R_{\max}=100$, while the solid black curves show the full-data Markov chain Monte Carlo posterior for reference.}
\label{fig:App1_Student_t_VB_posterior_kde_comparison}
\end{figure}

\section{Additional empirical example: Subsampling Markov chain Monte Carlo}\label{sec:MCMC_application}

\subsection{Background, model, and data}
Our second and more voluminous application considers estimating, via a subsampling-based Markov chain Monte Carlo method, GARCH models and a threshold extension thereof. There are many subsampling-based Markov chain Monte Carlo approaches, for example, \cite{korattikara2014austerity,bardenet2014towards, bardenet2017markov, chen2014stochastic, quiroz2019speeding, prado2026metropolis}. \cite{johndrow2020no} show that subsampling-based Markov chain Monte Carlo methods face inherent limitations in their ability to scale to large datasets. However, they also identify efficient control variates as an important exception, under which meaningful practical computational speed-ups may still be achievable (see also \citealp{rudolf2024perturbations}). We consider the so-called subsampling Markov chain Monte Carlo (MCMC) approach developed in \cite{quiroz2019speeding}. Several extensions of this approach have been proposed, including \cite{quiroz2018delay, dang2019hamiltonian,salomone2020spectral, Villani2024spectral, goodwin2026spectral}. See \cite{quiroz2018subsampling, quiroz2023wiley} for introductory surveys and overviews to this class of subsampling-based Markov chain Monte Carlo methods.

We consider the following four candidate volatility models for Apple Inc log-returns described below: GARCH($1, 1$) with normal errors, GARCH($1, 1$) with Student-$t$ errors, TGARCH($1, 1$) with Student-$t$ errors, and TGARCH($1, 2$) with Student-$t$ errors. The goal is to select the model that provides the best fit based on out-of-sample predictive performance. Ideally, this would be conducted using a fully sequential out-of-sample evaluation based on a rolling window, which requires repeated reestimation of the model as new observations become available. However, Markov chain Monte Carlo methods, including subsampling variants, are not well suited for sequential estimation. For this reason, practitioners often rely on sequential Monte Carlo methods in online settings \citep{chopin2002sequential}. We leave incorporating our estimators into subsampling sequential Monte Carlo methods, such as \cite{gunawan2020subsampling}, for future work. 

In this application, we instead restrict attention to a single out-of-sample test set consisting of $T_{\mathrm{test}}=10{,}000$ observations, covering the period 2024-05-06 to 2024-06-28. As in the main paper, we use $T=100{,}000$ observations to estimate the posterior distribution, but here these observations span the period 2023-01-30 to 2024-05-06. Threshold effects can be difficult to identify at an aggregate level, such as in the Dow Jones Industrial Average data in Section \ref{sec:VB_application}. We therefore focus on a single stock in this application, namely Apple Inc. (AAPL), shown in Figure \ref{fig:AppleInc}.

\begin{figure}[H]
    \centering    \includegraphics[width=0.65\textwidth]{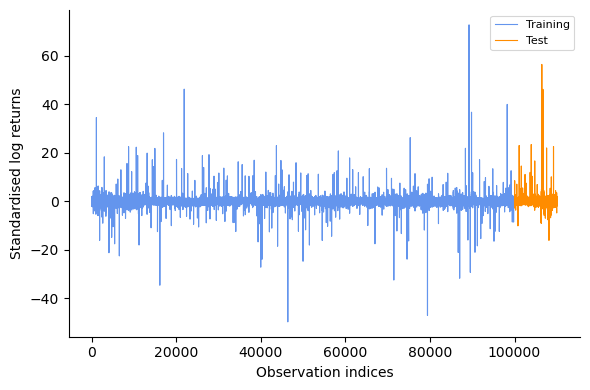}
    \caption{Observations $y_t$ for the Apple Inc stock (AAPL), constructed as one-minute log-returns over the period 2023-01-30 to 2024-06-28 and rescaled to have unit sample standard deviation. The blue line corresponds to the training data ($T=100{,}000$), while the orange line corresponds to the test data ($T_{\mathrm{test}} = 10{,}000$).}
    \label{fig:AppleInc}
\end{figure}

\subsection{Subsampling Markov chain Monte Carlo}
The subsampling MCMC method in 
\cite{quiroz2019speeding} constructs an unbiased estimator of the log-likelihood using uniform sampling weights and the difference estimator in \eqref{eq:diff_estimator}. We follow the same framework, but replace the uniform sampling weights with truncated power-law decaying sampling probabilities and use the weighted difference estimator in \eqref{eq:weighted_diff_estimator}. We note at the outset that, in the tuning regime of recursive likelihood models, uniform subsampling may perform poorly; Section \ref{subsec:limitations_uniform_recursive} examines this in detail.

To use the log-likelihood estimator within a pseudo-marginal MCMC algorithm, which requires an estimate of the likelihood in the ordinary scale, \cite{quiroz2019speeding} convert the log-likelihood estimator into a likelihood estimator by applying a bias-correction as in \cite{ceperley1999penalty}. In our setting, the likelihood estimator takes the form
\begin{align}\label{eq:bias-corrected-likelihood-estimator}
\exp\left(\widehat{\ell}_{\mathrm{WDE}}(\boldsymbol{\phi})-\frac{1}{2}\mathbb{V}(\widehat{\ell}_{\mathrm{WDE}}(\boldsymbol{\phi}))\right),
\end{align}
which is an unbiased estimator of $p(\mathbf{y}_{1:T}|\boldsymbol{\phi})$ if $\widehat{\ell}_{\mathrm{WDE}}$ is normally distributed \citep{ceperley1999penalty,quiroz2019speeding}. 

In practice, however, the variance term in \eqref{eq:bias-corrected-likelihood-estimator} must itself be estimated and $\widehat{\ell}_{\mathrm{WDE}}(\boldsymbol{\phi})$ may not be normally distributed, as is the case in our setting. Under these conditions, \cite{quiroz2019speeding} show that the resulting pseudo-marginal algorithm targets a perturbed posterior that is within $\mathcal{O}(m^{-2}T^{-1})$ of the true posterior in total variation norm, under some regularity conditions \cite[Assumption 2]{quiroz2019speeding}. These regularity conditions are those imposed in \cite{chen1985asymptotic} to obtain a Bernstein-von Mises result (asymptotic normality of the posterior). In Section \ref{subsec:WDE}, we argue that these conditions are unnecessarily strong for our theoretical analysis of the stabilised weighted estimator, which primarily requires establishing the asymptotic order of the control variate error. Accordingly, Lemma \ref{lem:taylor_remainder} is developed under substantially milder assumptions, which we verify in Section \ref{supp:verify_Lemma2_ass} for the GARCH model considered in the main paper. To utilise the perturbation result in \cite{quiroz2019speeding} for the present application, however, the full Bernstein-von Mises assumptions would need to be verified. Establishing these conditions for GARCH or TGARCH models is beyond the scope of our paper. Instead, for the model specifications where reliable posterior exploration is achieved, we verify empirically that the posterior approximation obtained via subsampling MCMC is close to that obtained from full-data MCMC.

\subsection{Specifications and settings}\label{subsec:spec_settings_App2}
We use the same transformations to an unrestricted space and prior specifications as in Sections \ref{subsec:spec_settings_App1} and \ref{subsub:settings_Student-t} for $\mu$, $\omega$, $\phi_{\alpha_1}, \dots, \phi_{\alpha_p}$, $\phi_{\beta_1} \dots \phi_{\beta_q}$, and $\phi_{\varepsilon}$. For the additional threshold parameters we use the transformation $\phi_{\gamma_i}=\log(\gamma_i)$ for $i=1,\dots,p$ and assign independent marginal priors
$$\gamma_i \sim \text{Half-Normal}(0.2),\,\, i=1,\dots p.$$
Similar to the GARCH model, the joint prior is restricted to the stationary region for threshold GARCH models. In this case, the stationarity condition is given by
\begin{align}\label{eq:stationary_cond_TGARCH}    
\sum_{i=1}^p\alpha_i + \frac{1}{2}\sum_{i=1}^p\gamma_i + \sum_{j=1}^q\beta_j & < 1.
\end{align}
Our reparameterisation $\boldsymbol{\phi}$ ensures that the positivity constraints are respected, but does not enforce the stationarity condition. In the application considered in the main paper, this did not pose a problem, as the posterior of $\alpha_1 + \beta_1$ lay far from the non-stationary boundary when estimated using the Dow Jones data (posterior mean from full-data MCMC: 0.9474). The AAPL data, however, exhibits substantially higher persistence (corresponding posterior mean: 0.9998). In this setting, constructing a random walk proposal in the $\boldsymbol{\phi}$-space leads to a high rejection rate, since proposed values frequently violate the stationarity constraint. See Section \ref{subsec:stationary_constrained_proposal} for an efficient stationary-constrained random-walk proposal.

For our stabilised weighted subsampling methodology, we follow the tuning strategy described in Section \ref{subsec:spec_settings_App1}, with two modifications. First, we impose $m^{\star}\geq 2$, which is required for the unbiased estimator of the variance term in \eqref{eq:bias-corrected-likelihood-estimator} to be well-defined, as it has $m^{\star}-1$ in the denominator. Second, in the pilot-based calibration of the variance tolerance $V$, we replace the $m=1$ reference with $m=2$, ensuring that the reference variance remains well-defined. All other aspects of the tuning procedure remain unchanged. Figure \ref{fig:expected_cost_var_constraint_subsampling_application} shows $\mathbb{E}(u_{\max})/T$ as a function of $c$ and $m$ for the four models, with the feasible region determined jointly by the variance tolerance obtained under the default recommendation $R_{\max}=100$ (implying $c_{\min}=0.01$)  and the safeguard lower bound $c\geq c_{\min}$; see the caption for details. The optimal tuning parameters (white circles) are indicated in each panel. In two cases, the variance constraint is binding at the optimum, while in the other two the safeguard lower bound on $c$ is binding. We observe substantial variation in the variance tolerance $V$ across models, while all settings admit significant potential computational savings, with the optima lying well within the low-cost (green) region.

\begin{figure}[H]
    \centering
    \includegraphics[width=1.0\textwidth]{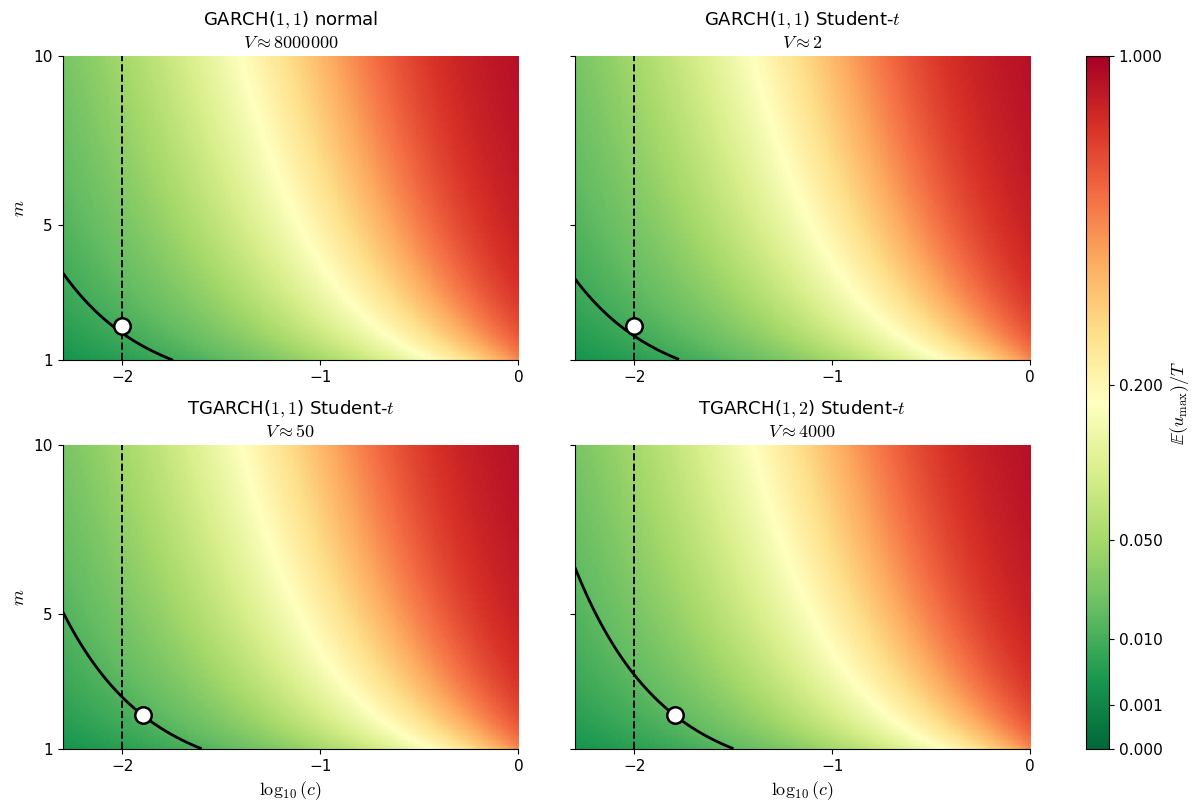}\caption{Constrained objective function $\mathbb{E}(u_{\max})$ defined in \eqref{eq:miminises_compute_subject_variance} (normalised by $T$, with $T=100{,}000$) and described in Section \ref{subsec:balancing_variance_and_computational_cost}, shown as a function of $c$ and $m$ (where $m$ is treated as continuous for visualisation). The black curve marks the variance constraint, and the dashed vertical line marks the safeguard lower bound constraint. Results are shown for four models (see panel titles) in the subsampling Markov chain Monte Carlo  application, all tuned with $R_{\max}=100$, $b=100$ and $t^\star=1{,}000$. The white circles indicate the optima for each setting. The colour scale represents the expected computational cost ratio, where red indicates no compute savings relative to the full dataset of size $T$, and green indicates increasing compute savings.}
\label{fig:expected_cost_var_constraint_subsampling_application}
\end{figure}

We implement subsampling MCMC with truncated sampling probabilities determined by the tuning procedure described above. Using a standard random-walk proposal preconditioned by a scaled covariance matrix from a Laplace approximation, we obtain $10{,}000$ samples after a burn-in period of $2{,}000$ iterations. To assess the accuracy of the perturbed posterior distribution arising from subsampling, we also run full-data Markov chain Monte Carlo algorithms for the model specifications where reliable posterior exploration is achieved, using the same settings as in Section \ref{subsec:spec_settings_App1}, including an adaptive proposal.  All chains start from the MAP estimate.

To evaluate the out-of-sample performance, we compute the posterior predictive distribution as follows. Let $\mathcal{F}_T$ denote the information set available at the forecasting origin $T$. Under GARCH-type models, the conditional variance process $\sigma^2_t$ evolves deterministically from $\mathcal{F}_T$ given $\boldsymbol{\theta}$. The posterior predictive distribution is therefore

\vspace{0.5em}
\resizebox{0.96\linewidth}{!}{$
\begin{aligned}
  p(y_{T+1}, \sigma^2_{T+1}, \dots, y_{T+T_{\mathrm{test}}}, \sigma^2_{T+T_{\mathrm{test}}}|\mathcal{F}_T)& = \int p(y_{T+1}, \sigma^2_{T+1}, \dots, y_{T+T_{\mathrm{test}}}, \sigma^2_{T+T_{\mathrm{test}}}|\boldsymbol{\theta},\mathcal{F}_T)\pi(\boldsymbol{\theta}|\mathcal{F}_T)d\boldsymbol{\theta}  \\
  & = \int \left( \prod_{j=1}^{T_{\mathrm{test}}} p(y_{T+j}|\boldsymbol{\theta},\sigma^2_{T+j}, \mathcal{F}_{T+j-1})\right)\pi(\boldsymbol{\theta}|\mathcal{F}_T)d\boldsymbol{\theta},
\end{aligned}
$}
\vspace{0.5em}

\noindent where $p(y_{T+j}|\boldsymbol{\theta},\sigma^2_{T+j}, \mathcal{F}_{T+j-1})$ denotes the one-step-ahead predictive density of the observed test observation $y_{T+j}$ conditional on $\boldsymbol{\theta}$, and $\pi(\boldsymbol{\theta}|\mathcal{F}_T)$ is the posterior distribution based on the training data. The training data are rescaled using their standard deviation, and the same scaling is applied to the test set. This ensures that predictive densities are evaluated on the same scale as that used for model estimation. To evaluate the out-of-sample performance, we use the logarithmic scoring rule for the observed test sample (expressed on the training data scale), also known as the log predictive density score (LPDS). For the test sequence $\mathbf{y}_{T+1:T+T_{\mathrm{test}}}$, this is defined as
$$\mathrm{LPDS} = \log p(y_{T+1}, \sigma^2_{T+1}, \dots, y_{T+T_{\mathrm{test}}}, \sigma^2_{T+T_{\mathrm{test}}}|\mathcal{F}_T).$$
This quantity is estimated by Monte Carlo using MCMC draws $\{\boldsymbol{\theta}^{(m)}\}_{m=1}^M\sim \pi(\boldsymbol{\theta}|\mathcal{F}_T)$, with $M=100$. The MCMC chain is obtained by thinning, for computational reasons, a post-burn-in chain of length $10{,}000$. The resulting estimate is
$$\widehat{\mathrm{LPDS}} \approx \log\left(\frac{1}{M}\sum_{m=1}^M \left( \prod_{j=1}^{T_{\mathrm{test}}} p(y_{T+j}|\boldsymbol{\theta}^{(m)},\sigma^{2\,(m)}_{T+j}, \mathcal{F}_{T+j-1})\right) \right),$$ 
where $\sigma^{2\,(m)}_{T+j}$ is obtained by running the GARCH-type recursion forward under $\boldsymbol{\theta}^{(m)}$. Finally, an approximate confidence interval for $\widehat{\mathrm{LPDS}}$ is obtained via a delta-method approximation applied to the logarithm of the sample mean, where the Monte Carlo variance of the latter is estimated using the integrated autocorrelation time of the transformed Markov chain.

\subsection{Stationary-constrained proposals}\label{subsec:stationary_constrained_proposal}
Recall that in the GARCH$(p,q)$ model, the parameter vector $\boldsymbol{\theta}$ is subject to positivity and stationarity constraints, in particular
$$
\omega > 0, \,\, \alpha_i \ge 0, \,\, \beta_j \ge 0, \,\, \sum_i \alpha_i + \sum_j \beta_j < 1.
$$
We now outline a proposal distribution that ensures positivity and stationarity constraints by reparameterising the model in an unconstrained space $\boldsymbol{\psi}$ such that, when mapped back, the resulting $\alpha_i\geq 0$ and $\beta_j \geq 0$ always lie within the stationary region. This eliminates unnecessary rejections due to constraint violations and improves sampling efficiency.

Let $\boldsymbol{\theta}$ denote the original parameter vector, and let $\Psi$ be a one-to-one transformation
$$
\boldsymbol{\psi} = \Psi(\boldsymbol{\theta}) \in \mathbb{R}^{d_{\boldsymbol{\theta}}},
$$
such that $\Psi$ is a bijection from the stationary parameter space to $\mathbb{R}^{d_{\boldsymbol{\theta}}}$. In particular, the transformation enforces positivity and stationarity by construction. We denote the inverse transformation by $\boldsymbol{\theta} = \Psi^{-1}(\boldsymbol{\psi})$. Specifically, let
$$
\boldsymbol{\psi} = \left(\psi_{\mu}, \psi_{\omega}, \psi_{\alpha_1}, \dots, \psi_{\alpha_p}, \psi_{\beta_1}, \dots, \psi_{\beta_q}\right) \in \mathbb{R}^{2+p+q}.
$$
We define the transformation $\boldsymbol{\theta} = \Psi^{-1}(\boldsymbol{\psi})$ as
$$
\mu = \psi_{\mu} ,\,\, \omega = \exp(\psi_{\omega}),
$$
and
$$
w_i = \exp(\psi_{\alpha_i}), \quad v_j = \exp(\psi_{\beta_j}).
$$
Moreover, 
$$
S = \sum_{i=1}^p w_i + \sum_{j=1}^q v_j,
$$
such that the GARCH coefficients are given by
$$
\alpha_i = \frac{w_i}{1 + S}, \quad \beta_j = \frac{v_j}{1 + S}.
$$
Then, by construction, $\alpha_i \ge 0$, $\beta_j \ge 0$, and 
$$\sum_{i=1}^p \alpha_i + \sum_{j=1}^q \beta_j = \frac{S}{1+S} < 1.$$
In addition, we consider the working parameterisation $\boldsymbol{\phi} = h(\boldsymbol{\theta})$  used in the main paper, so that the posterior density is expressed as a function of $\boldsymbol{\phi}$, while the proposals are constructed in the unconstrained $\boldsymbol{\psi}$. The mapping between spaces is therefore
$$
\boldsymbol{\psi} \;\longleftrightarrow\; \boldsymbol{\theta} \;\longleftrightarrow\; \boldsymbol{\phi}.
$$

For sampling, the Markov chain is defined on the $\boldsymbol{\phi}$-space, but proposals are generated in the unconstrained $\boldsymbol{\psi}$-space via a Gaussian random walk, i.e.\
$$
\boldsymbol{\psi}' = \boldsymbol{\psi} + \boldsymbol{\varepsilon}, \,\, \boldsymbol{\varepsilon} \sim \mathcal{N}(\mathbf{0}, \boldsymbol{\Sigma}_{\mathrm{prop}}).
$$
The proposal is then mapped back to $\boldsymbol{\phi}$ via
$$
\boldsymbol{\theta}' = \Psi^{-1}(\boldsymbol{\psi}'), \,\, \boldsymbol{\phi}' = h(\boldsymbol{\theta}'),
$$
and the Metropolis--Hastings ratio for the full-data case is (with a symmetric proposal in $\boldsymbol{\psi}$-space)
$$
\alpha_{\mathrm{MH}}(\boldsymbol{\phi},\boldsymbol{\phi}') = \min\!\left\{1,\;
\frac{\pi_T(\boldsymbol{\phi}')}{\pi_T(\boldsymbol{\phi})}
\cdot
\frac{\left| \det \frac{\partial \boldsymbol{\psi}}{\partial \boldsymbol{\phi}}(\boldsymbol{\phi}) \right|}
     {\left| \det \frac{\partial \boldsymbol{\psi}}{\partial \boldsymbol{\phi}}(\boldsymbol{\phi}') \right|}
\right\}.
$$
The Jacobian term arises from the change of variables between $\boldsymbol{\phi}$ and $\boldsymbol{\psi}$, and is computed via the chain rule as
$$
\left| \det \frac{\partial \boldsymbol{\psi}}{\partial \boldsymbol{\phi}} \right|
=
\left| \det \frac{\partial \boldsymbol{\psi}}{\partial \boldsymbol{\theta}} \right| \cdot
\left| \det \frac{\partial \boldsymbol{\theta}}{\partial \boldsymbol{\phi}} \right|.
$$

To summarise, this construction ensures that all proposed parameters automatically satisfy the GARCH constraints, in particular stationarity, while retaining the simplicity of a Gaussian random-walk proposals in an unconstrained space. 

\begin{remark}
    A similar proposal can be constructed for TGARCH models to satisfy the stationarity condition in \eqref{eq:stationary_cond_TGARCH}. The threshold coefficients $\gamma_i$ can be incorporated into the normalisation by defining $u_i = \exp(\psi_{\gamma_i})$ and setting
$$
\alpha_i = \frac{w_i}{1 + S}, \quad \gamma_i = \frac{2 u_i}{1 + S}, \quad \beta_j = \frac{v_j}{1 + S},
$$
with
$$
S = \sum_{i=1}^p w_i + \sum_{i=1}^p u_i + \sum_{j=1}^q v_j.
$$
Then $$\alpha_i,\gamma_i,\beta_j\geq 0,$$ and
$$
\sum_{i=1}^p \alpha_i + \frac{1}{2} \sum_{i=1}^p \gamma_i + \sum_{j=1}^q \beta_j < 1.
$$
The resulting mapping again defines a bijection from the stationary region to $\mathbb{R}^{2+2p+q}$, and the proposal mechanism proceeds identically.
\end{remark}

\begin{remark}
    The $\boldsymbol{\psi}$ parameterisation above can also be used as an alternative to the variational family $\mathcal{Q}_{\boldsymbol{\lambda}}(\boldsymbol{\phi})$ in Section \ref{subsec:GVA}. The reparameterisation trick can then be applied in $\boldsymbol{\psi}$-space, ensuring that all transformed samples satisfy the positivity and stationarity conditions.
\end{remark}

\subsection{Results}\label{subsec:results_subsampling}

Pseudo-marginal chains are well known to exhibit a tendency to become stuck. To assess the stability of the subsampling MCMC algorithm, we run $n_{\mathrm{rep}}=6$ independent chains using the initialisation described above (the MAP estimate). This allows us to detect potential issues such as poor mixing or chains becoming stuck. To this end, we define the statistic longest immobility streak (LIS), which is the maximum number of consecutive iterations (out of $12{,}000$) during which the chain does not move. Table \ref{tab:mcmc_behaviour_App2} shows that none of the chains exhibit pseudo-marginal-problematic behaviour, with the maximum LIS across the $n_{\mathrm{rep}}=6$ runs equal to 64, attained for the most complex model, TGARCH($1,2$) with Student-$t$ errors. Table \ref{tab:mcmc_behaviour_App2} also reports measures of computational cost, which we return to below. 

\begin{table}[H]
\centering
\small
\setlength{\tabcolsep}{4pt}
\caption{Results for models fitted to the AAPL data. The table shows results for subsampling Markov chain Monte Carlo (MCMC) tuned with $R_{\max}=100$ (equivalently $c_{\min}=0.01$) and run for $12{,}000$ iterations. For each model, $n_{\mathrm{rep}}=6$ independent chains are generated from a common initialisation. The table shows the median (over $n_{\mathrm{rep}}=6$) longest immobility streak (LIS), defined as the maximum number of consecutive iterations (out of $12{,}000$) during which the chain does not move, as well as wall-clock time (CPU, in seconds), compute fraction (CF), defined as the number of log-density evaluations relative to full-data MCMC, and relative CPU (RCPU), defined as the ratio of full-data  CPU time to subsampling CPU time. For each quantity, the minimum and maximum (over \(n_{\mathrm{rep}}=6\)) are reported in parentheses. For reference, the table also shows the CPU time for a single full-data MCMC chain.}
\label{tab:mcmc_behaviour_App2}
\begin{tabular}{lccccc}
\toprule
Model and error & \makecell{Median \\ LIS} & \makecell{Median \\ CPU} & \makecell{Median \\ CF} & \makecell{Full-data \\ CPU} & \makecell{Median \\ RCPU} \\
\midrule
GARCH(1,1), normal & \makecell{22.0 \\ {(20, 26)}} & \makecell{850.1 \\ {(794.1, 956.7)}} & \makecell{0.019 \\ {(0.018, 0.020)}} & 69039.4 & \makecell{81.2 \\ {(72.2, 86.9)}} \\
GARCH(1,1), Student-$t$ & \makecell{16.5 \\ {(15, 19)}} & \makecell{507.8 \\ {(473.1, 528.8)}} & \makecell{0.019 \\ {(0.018, 0.020)}} & 31792.4 & \makecell{62.6 \\ {(60.1, 67.2)}} \\
TGARCH(1,1), Student-$t$ & \makecell{18.0 \\ {(15, 19)}} & \makecell{673.2 \\ {(661.4, 703.9)}} & \makecell{0.022 \\ {(0.021, 0.023)}} & 39697.2 & \makecell{59.0 \\ {(56.4, 60.0)}} \\
TGARCH(1,2), Student-$t$ & \makecell{28.5 \\ {(19, 64)}} & \makecell{769.0 \\ {(714.4, 875.5)}} & \makecell{0.025 \\ {(0.024, 0.026)}} & 34979.4 & \makecell{45.5 \\ {(40.0, 49.0)}} \\
\bottomrule
\end{tabular}
\end{table}

While none of the chains become stuck, further inspection of the traceplots reveals that posterior exploration is more challenging for the most complex model; see Figure \ref{fig:TGARCH12_traceplots_app2}. In particular, the Markov chains for $\beta_1$ and $\beta_2$ exhibit noticeable between-chain variation, suggesting the presence of a ridge in the posterior that is difficult to traverse using a random-walk proposal once entered. In contrast, the remaining three models exhibit good mixing behaviour. This is illustrated in Figure \ref{fig:GARCH11_traceplots_app2} for the GARCH($1,1$) model with Student-$t$ errors, with the other two models showing similar behaviour (not shown). This suggests that the observed difficulty is driven by the posterior geometry of the model rather than by the pseudo-marginal algorithm itself.

\begin{figure}[H]
    \centering \includegraphics[width=0.95\textwidth]{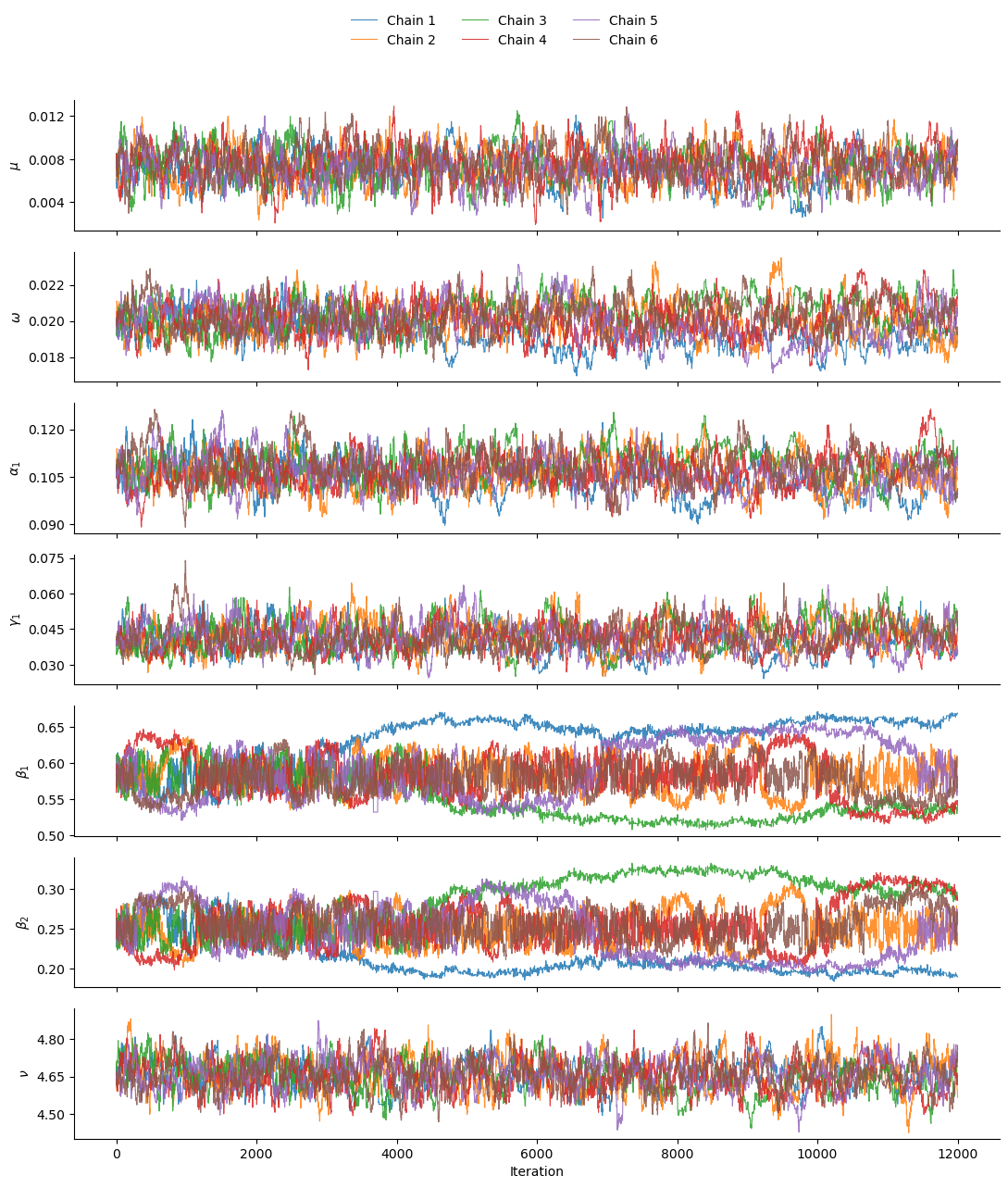}
    \caption{Results for the TGARCH($1,2$) model with Student-$t$ errors fitted to the AAPL data. Traceplots for the $n_{\mathrm{rep}}=6$ chains for each parameter.}
    \label{fig:TGARCH12_traceplots_app2}
\end{figure}

\begin{figure}[H]
    \centering \includegraphics[width=0.95\textwidth]{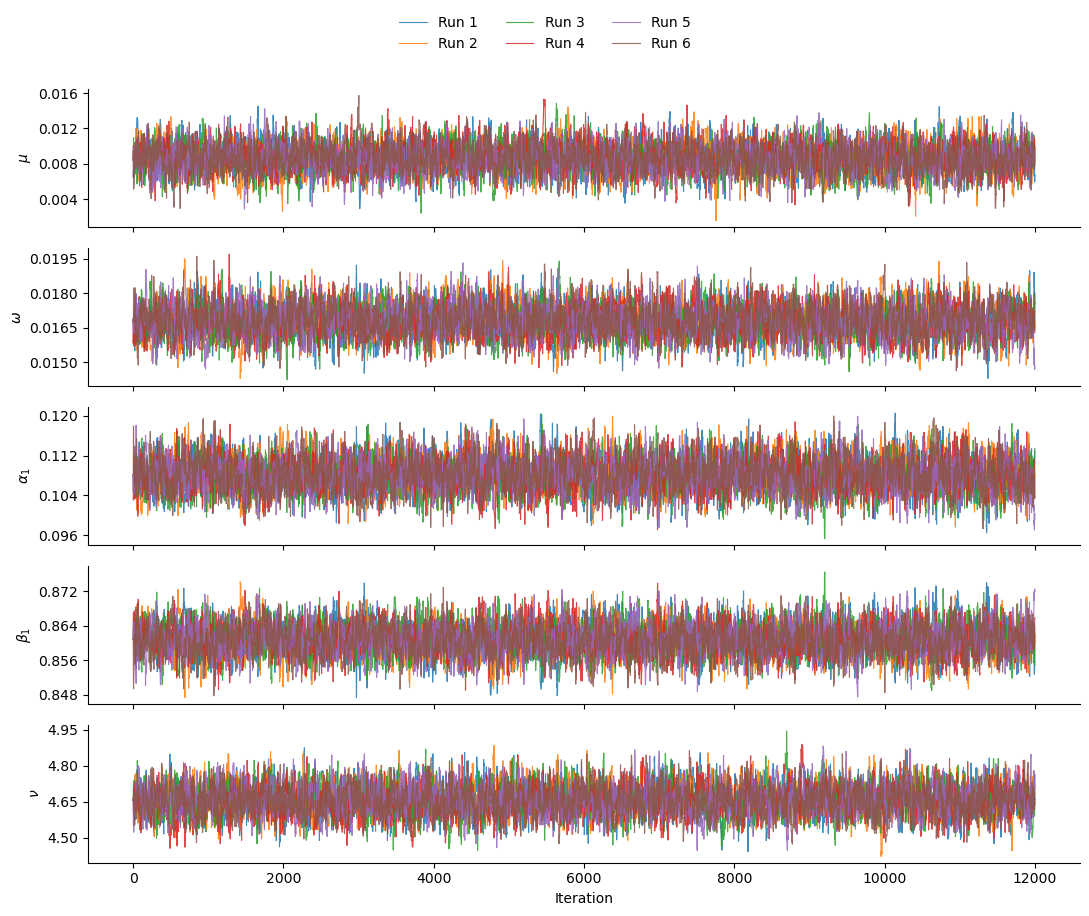}
    \caption{Results for the GARCH($1,1$) model with Student-$t$ errors fitted to the AAPL data. Traceplots for the $n_{\mathrm{rep}}=6$ chains for each parameter.}
    \label{fig:GARCH11_traceplots_app2}
\end{figure}

Returning to computational cost, Table \ref{tab:mcmc_behaviour_App2} shows that subsampling MCMC with truncated power-law decaying weights yields substantial computational gains, with speed-ups of around $45$--$80\times$ in CPU time (including tuning and control variate construction) while only using $1.5$--$2.6\%$ of the log-density evaluations required by full-data MCMC. Figure \ref{fig:u_max_app2} further illustrates the behaviour of the largest subsample index $u_{\max}$ over the course of the Markov chain, confirming that the computational cost remains well below that of full-data MCMC for all models, in line with the theoretical analysis.

\begin{figure}[H]
    \centering \includegraphics[width=0.95\textwidth]{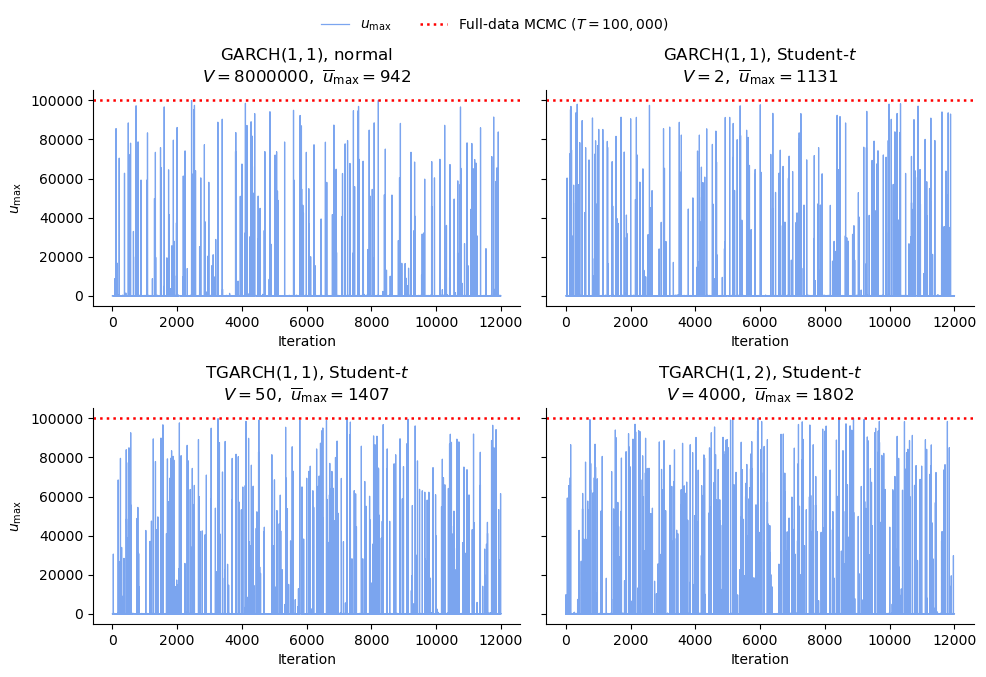}
    \caption{Results for models fitted to the AAPL data. Largest subsample index $u_{\max}$ (blue line) versus iterations for subsampling Markov chain Monte Carlo (MCMC) with truncated power-law decaying (TPD) sampling probabilities, tuned with $R_{\max}=100$, corresponding to the variance tolerances $V$ shown in the panel titles. The figure also shows the full-data MCMC cost as benchmark (red dotted line). The quantity $\overline{u}_{\max}$ denotes the average of $u_{\max}$ over the MCMC iterations.}
    \label{fig:u_max_app2}
\end{figure}

We next assess the accuracy of the subsampling-based posterior approximations by comparing them to the corresponding full-data posteriors. We focus on the three models for which all chains exhibit satisfactory mixing behaviour, as discussed above. For the TGARCH($1,2$) model, differences across posterior approximations may arise from chains exploring distinct ridge-like regions of the parameter space, rather than from the subsampling procedure itself. We therefore exclude this model from the accuracy comparison. Figures \ref{fig:App2_MCMC_GARCH_11_norm_kde}--\ref{fig:App2_MCMC_TGARCH_11_Student-t_kde} show kernel density estimates of the marginal posteriors obtained using subsampling MCMC and full-data MCMC. 

\begin{figure}[H]
    \centering    \includegraphics[width=0.8\textwidth]{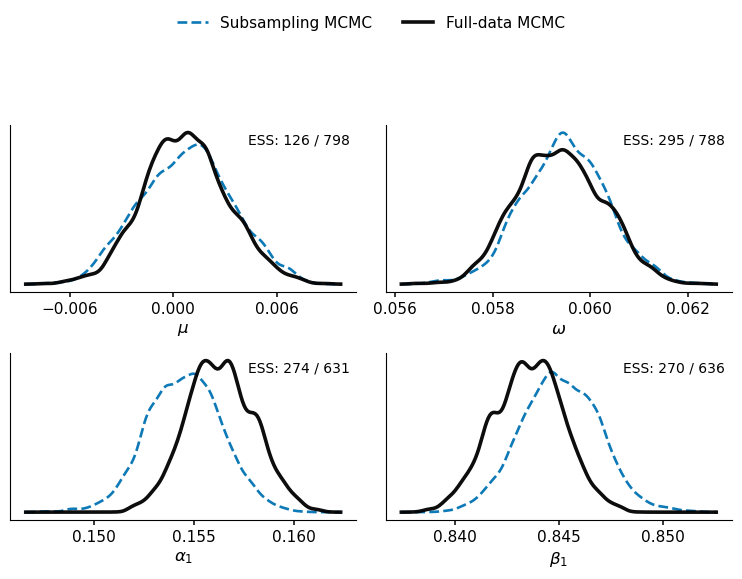}
    \caption{Posterior marginal distributions under the original parameterisation $\boldsymbol{\theta}$ for the GARCH($1,1$) model with normal errors fitted to the AAPL data. The subsampling Markov chain Monte Carlo (MCMC) method with truncated power-law decaying (TPD) sampling probabilities is tuned with $R_{\max}=100$, corresponding to a variance tolerance $V\approx 8{,}000{,}000$. The solid black curves represent the full-data MCMC posterior for reference. Each panel also reports the effective sample size (ESS), based on $10{,}000$ post-burn-in samples, with values shown for subsampling MCMC (first) and full-data MCMC (second).}
    \label{fig:App2_MCMC_GARCH_11_norm_kde}
\end{figure}

Across all three models, the two approaches are in close agreement, indicating that the subsampling-based method provides an accurate approximation to the full-data posterior. While the overall agreement is very strong, a small discrepancy (note the scale) can be observed for $\alpha_1$ and $\beta_1$ in Figure \ref{fig:App2_MCMC_GARCH_11_norm_kde}. This corresponds to the setting with the largest tuned value of the variance tolerance, $V \approx 8{,}000{,}000$, which is exceptionally large. In addition, subsampling MCMC yields a more faithful representation of posterior uncertainty than the variational approximations in Section \ref{sec:VB_application}.

\begin{figure}[H]
    \centering    \includegraphics[width=0.9\textwidth]{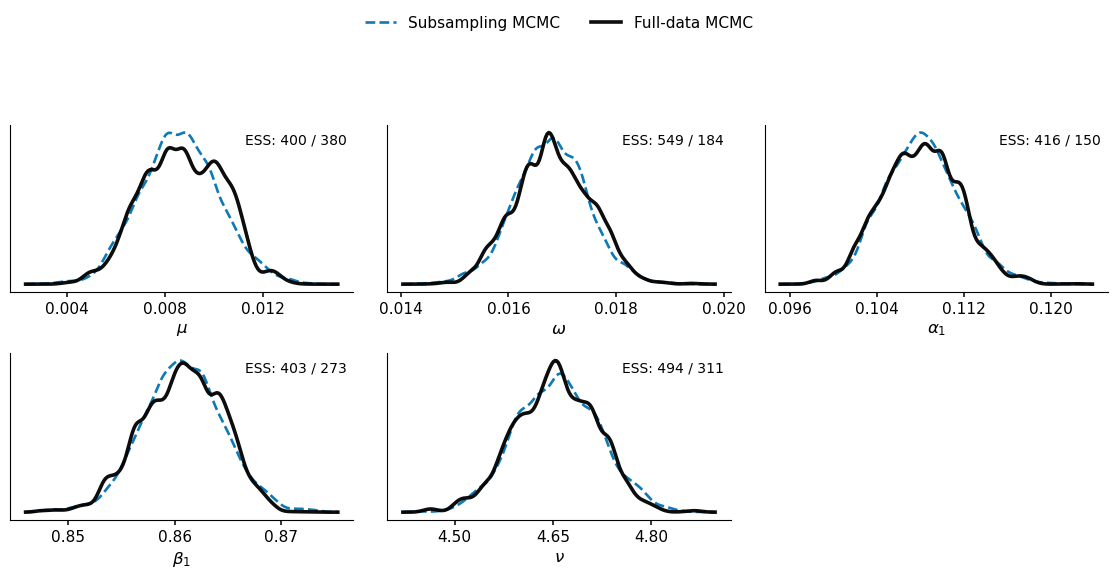}
    \caption{Posterior marginal distributions under the original parameterisation $\boldsymbol{\theta}$ for the GARCH($1,1$) model with Student-$t$ errors fitted to the AAPL data. The subsampling Markov chain Monte Carlo (MCMC) method with truncated power-law decaying (TPD) sampling probabilities is tuned with $R_{\max}=100$, corresponding to a variance tolerance $V\approx 2$. The solid black curves represent the full-data MCMC posterior for reference. Each panel also reports the effective sample size (ESS), based on $10{,}000$ post-burn-in samples, with values shown for subsampling MCMC (first) and full-data MCMC (second).}
    \label{fig:App2_MCMC_GARCH_11_Student-t_kde}
\end{figure}

\begin{figure}[h]
    \centering    \includegraphics[width=0.9\textwidth]{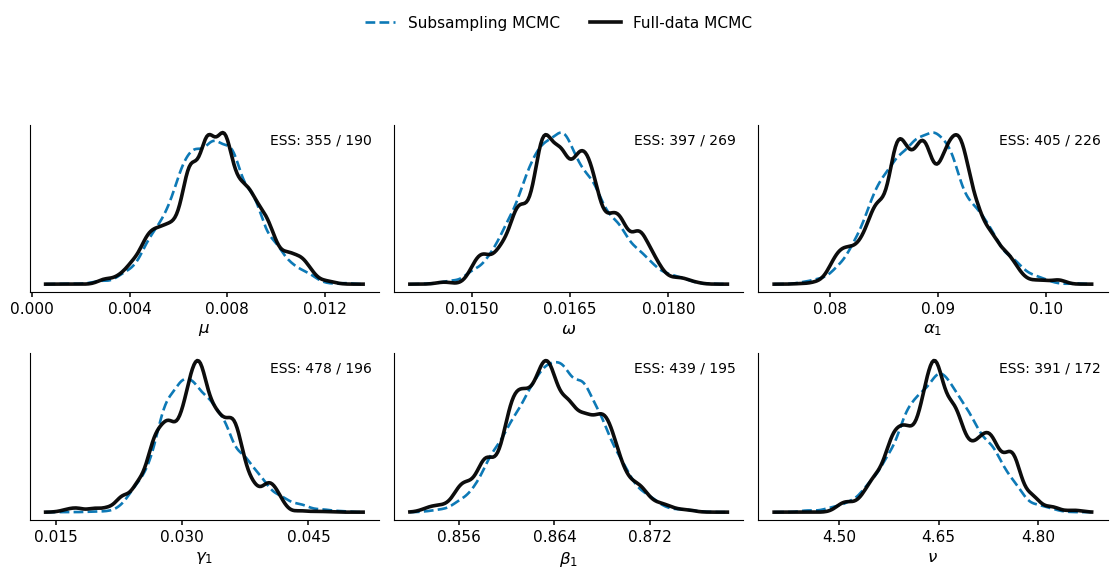}
    \caption{Posterior marginal distributions under the original parameterisation $\boldsymbol{\theta}$ for the TGARCH($1,1$) model with Student-$t$ errors fitted to the AAPL data. The subsampling Markov chain Monte Carlo (MCMC) method with truncated power-law decaying (TPD) sampling probabilities is tuned with $R_{\max}=100$, corresponding to a variance tolerance $V\approx 50$. The solid black curves represent the full-data MCMC posterior for reference. Each panel also reports the effective sample size (ESS), based on $10{,}000$ post-burn-in samples, with values shown for subsampling MCMC (first) and full-data MCMC (second).}
    \label{fig:App2_MCMC_TGARCH_11_Student-t_kde}
\end{figure}

%continue

Figures \ref{fig:App2_MCMC_GARCH_11_norm_kde}--\ref{fig:App2_MCMC_TGARCH_11_Student-t_kde} also report the effective sample sizes (ESS) for each approach. In two of the three models, subsampling MCMC yields higher ESS than full-data MCMC. The difference is likely attributable to the use of different proposal mechanisms. In particular, we observe that it is comparatively easier to construct an efficient random-walk proposal for the model parameters in the extended pseudo-marginal formulation than for the corresponding full-data formulation with an exact likelihood. This phenomenon has not been reported in previous subsampling MCMC studies, likely because the models considered there are substantially simpler than the GARCH and threshold variants analysed here. Such models can typically be sampled efficiently using simple preconditioned random-walk proposals. This explains our experimental choice of a simple preconditioned random-walk proposal for the subsampling MCMC, whereas a more sophisticated adaptive proposal is required for the full-data MCMC. We conjecture that the noise in the likelihood estimator may partially smooth ridges in the GARCH posterior, thereby making the resulting posterior easier to explore using a simple random-walk proposal. We emphasise that this effect is not uniform across models, for example in the TGARCH($1,2$) case, the ridge structure remains difficult to traverse as discussed above (see Figure \ref{fig:TGARCH12_traceplots_app2}).

 To address the empirical goal of this application, namely identifying the model with the best out-of-sample performance, Table \ref{tab:sub_sampling_mcmc_LPDS} reports the log predictive density scores and the corresponding confidence intervals for both subsampling and full-data MCMC. The two approaches yield very similar predictive scores across the first three models, leading to the same model ranking. The most substantial improvement is observed when a Student-$t$ error is used in place of a normal error, which is likely explained by the presence of pronounced outliers in Figure \ref{fig:AppleInc}. There is also clear evidence of a threshold effect, indicating an asymmetric impact of shocks on volatility. For the TGARCH($1,2$) model, full-data MCMC suggests an improvement in predictive performance relative to TGARCH($1,1$). However, as discussed above, posterior exploration for this model is more challenging due to the presence of ridge-like structures, which can affect the reliability of the resulting estimates. In contrast, subsampling MCMC yields predictive scores that are very close to those of TGARCH($1,1$). Taking these considerations into account, and in the interest of parsimony, we adopt the TGARCH($1,1$) model with Student-$t$ errors as our preferred specification for the AAPL data.

\begin{table}[H]
\centering
\small
\setlength{\tabcolsep}{4pt}
\caption{Results for models fitted to the AAPL data. The table shows the estimated log predictive density score (LPDS) for both subsampling Markov chain Monte Carlo (MCMC) and full-data MCMC, together with $95\%$ approximate confidence intervals computed via a delta-method approximation, where the variance in the approximation accounts for the integrated autocorrelation time. All results are based on $100$ posterior samples, with subsampling MCMC tuned using $R_{\max}=100$. The table also reports the effective sample size (ESS).
}
\label{tab:sub_sampling_mcmc_LPDS}
\begin{tabular}{llcc}
\toprule
Model and error & Method & $\widehat{\mathrm{LPDS}}$ & ESS \\
\midrule
GARCH($1,1$), normal & Subsampling MCMC & \makecell{-14630.85 \\ {(-14631.50, -14630.19)}} & 97 \\
 & Full-data MCMC & \makecell{-14631.27 \\ {(-14631.75, -14630.80)}} & 95 \\
\addlinespace
GARCH($1,1$), Student-$t$ & Subsampling MCMC & \makecell{-9225.27 \\ {(-9226.73, -9223.81)}} & 100 \\
 & Full-data MCMC & \makecell{-9225.90 \\ {(-9226.65, -9225.16)}} & 100 \\
\addlinespace
TGARCH($1,1$), Student-$t$ & Subsampling MCMC & \makecell{-9205.64 \\ {(-9206.44, -9204.84)}} & 100 \\
 & Full-data MCMC & \makecell{-9205.79 \\ {(-9206.77, -9204.81)}} & 100 \\
\addlinespace
TGARCH($1,2$), Student-$t$ & Subsampling MCMC & \makecell{-9204.55 \\ {(-9205.79, -9203.32)}} & 77 \\
 & Full-data MCMC & \makecell{-9201.88 \\ {(-9203.51, -9200.26)}} & 100 \\
\addlinespace
\bottomrule
\end{tabular}
\end{table}

\subsection{Limitations of uniform subsampling in the tuning regime for subsampling MCMC with recursive likelihood models}\label{subsec:limitations_uniform_recursive}

This section presents a striking empirical finding: in the tuning regime with small $m$ (to keep computational cost low), uniform subsampling, despite exhibiting provably lower variance in the log-likelihood estimator, can lead to substantially inferior subsampling MCMC performance compared to truncated power-law decaying (TPD) sampling in recursive likelihood settings. We illustrate this using the GARCH(1,1) model with Student-$t$ errors fitted to the AAPL data above. Increasing the subsample size $m$ stabilises the mixing of the subsampling MCMC chain (not shown); however, the expected computational cost quickly approaches that of full-data evaluation and is therefore not of practical interest.

We consider $R_{\max}=100$ for TPD and $R_{\max}=1$ for uniform with $m=2$, resulting in variance tolerances $V\approx 2$ and $V \approx 0.02$, respectively. The failure of uniform is counterintuitive from a variance-based perspective; however, examining both the distributional properties of the estimator and the structure of its bias-corrected form provides further insight. Figure \ref{fig:bulk_empirical1} shows the bulk of the distribution of the log-likelihood error and related quantities evaluated at the tuning reference value $\boldsymbol{\phi}^{\dagger}$ in \eqref{eq:phi_tune}, for the bias-corrected estimator in \eqref{eq:bias-corrected-likelihood-estimator} under the different sampling schemes. The behaviour in the tails is complemented by Table \ref{tab:bulk_empirical1}, which reports additional distributional summaries. 

\begin{figure}[H]
    \centering
    \includegraphics[width=1\textwidth]{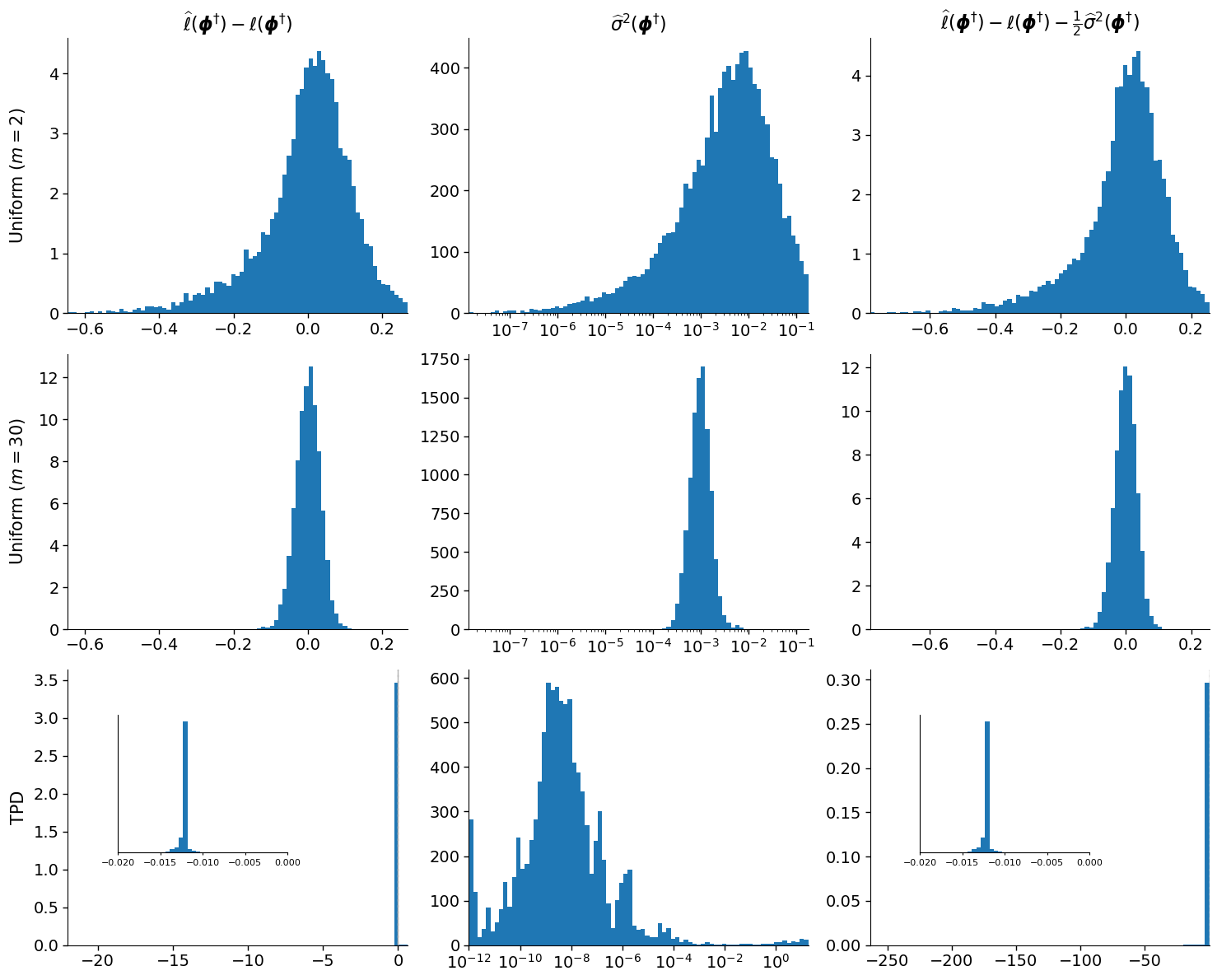}
    \caption{Results for the GARCH(1, 1) model with Student-$t$ errors fitted to the AAPL data. The figure displays the bulk of the empirical distributions of the log-likelihood error (left), the variance estimator (middle), and the bias-corrected error in \eqref{eq:bias-corrected-likelihood-estimator} (right), evaluated at $\boldsymbol{\phi}^{\dagger}$ in \eqref{eq:phi_tune}, under uniform sampling ($m=2$ top row, $m=30$ middle row) and truncated power-law decaying (TPD) sampling (bottom row). The inset panels provide a zoom of the central region.}
    \label{fig:bulk_empirical1}
\end{figure}

Figure \ref{fig:bulk_empirical1} shows that uniform sampling with $m=2$ produces a more dispersed distribution of the log-likelihood error $\mathrm{err}(\boldsymbol{\phi}^{\dagger})=\widehat{\ell}(\boldsymbol{\phi}^{\dagger})-\ell(\boldsymbol{\phi}^{\dagger})$, whereas the TPD scheme yields errors that are tightly concentrated within the bulk. This is also reflected in Table \ref{tab:bulk_empirical1}, where the interquantile range under uniform sampling with $m=2$ is substantially wider than under TPD, despite the latter having a larger variance. A more consequential difference concerns the sign and asymmetry of the error distribution. By unbiasedness, the expectation of the error is zero. Under uniform sampling, the empirical error distribution is approximately centred around zero, implying substantial probability mass on both sides of zero, including on the positive side; see Table \ref{tab:bulk_empirical1}. 

\begin{table}[H]
\centering
\small
\setlength{\tabcolsep}{5pt}
\caption{Results for the GARCH(1, 1) model with Student-$t$ errors fitted to the AAPL data. Summary statistics of the empirical distributions (complementing Figure \ref{fig:bulk_empirical1}) evaluated at $\boldsymbol{\phi}^{\dagger}$ in \eqref{eq:phi_tune}.}
\begin{tabular}{lccccccccc}
\toprule
 & \multicolumn{7}{c}{$\mathrm{err}(\boldsymbol{\phi}^{\dagger})=\widehat{\ell}(\boldsymbol{\phi}^{\dagger})-\ell(\boldsymbol{\phi}^{\dagger})$} & \multicolumn{2}{c}{$\widehat{\sigma}^2(\boldsymbol{\phi}^{\dagger})$}
 \\
\cmidrule(lr){2-8} \cmidrule(lr){9-10}
 & min & 0.01 & 0.05 & 0.95 & 0.99 & max & $\Pr(\mathrm{err}>0)$ & median & max \\
\midrule
Uniform &  &  &  &  &  &  &  &  &  \\
$\,\,m=2$ & -0.793 & -0.416 & -0.245 & 0.174 & 0.267 & 1.03 & 0.557 & 0.004 & 1.17 \\
$\,\,m=30$ & -0.135 & -0.079 & -0.055 & 0.052 & 0.075 & 0.121 & 0.514 & 0.001 & 0.008 \\
TPD & -38.2 & -0.030 & -0.013 & -0.012 & 0.608 & 19.7 & 0.014 & 0.000 & 1460 \\
\bottomrule
\end{tabular}
\label{tab:bulk_empirical1}
\end{table}

In contrast, the TPD scheme produces predominantly negative errors, with a distribution that is concentrated near zero but can generate occasional large tail events in both directions. When $m=30$ (which is not optimal in a recursive likelihood setting) for uniform sampling, the behaviour changes markedly. As expected from the central limit theorem, the sampling distribution becomes more symmetric around zero and substantially more concentrated; see Figure \ref{fig:bulk_empirical1} and Table \ref{tab:bulk_empirical1}. Consequently, the interquantile range is much closer to that of TPD, and large positive errors are effectively absent in this regime. 

Positive errors are particularly problematic in the pseudo-marginal setting, since they correspond to regions with artificially inflated posterior mass and can induce severe sticky behaviour in the Markov chain. The bias-correction term $-\widehat{\sigma}^2/2$ mitigates this effect by penalising large positive deviations; however, its impact differs markedly across the two schemes. Figure \ref{fig:scatter_error_sigma2_hat_empirical1} shows that, under TPD, large positive errors are typically accompanied by substantial corrections and are therefore strongly penalised, whereas under uniform sampling with $m=2$ positive deviations are only weakly corrected. 

\begin{figure}[H]
    \centering
    \includegraphics[width=0.8\textwidth]{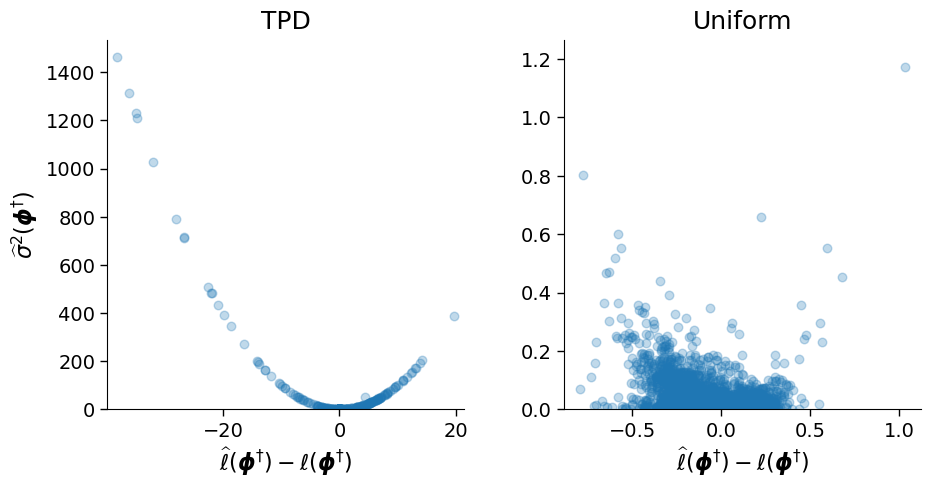}
    \caption{Results for the GARCH(1, 1) model with Student-$t$ errors fitted to the AAPL data. Scatter plots of log-likelihood errors $\mathrm{err}(\boldsymbol{\phi}^{\dagger})=\widehat{\ell}(\boldsymbol{\phi}^{\dagger})-\ell(\boldsymbol{\phi}^{\dagger})$ against variance estimates $\widehat{\sigma}^2(\boldsymbol{\phi}^{\dagger})$, evaluated at $\boldsymbol{\phi}^{\dagger}$ in \eqref{eq:phi_tune},  for truncated power-law decaying (TPD) sampling (left) and uniform sampling (right).}
    \label{fig:scatter_error_sigma2_hat_empirical1}
\end{figure}

This is further supported by Table \ref{tab:sigma_dependence}, which shows that overestimation is attenuated to a much greater extent under TPD than under uniform sampling with $m=2$. 

\begin{table}[H]
\centering
\caption{Results for the GARCH(1, 1) model with Student-$t$ errors fitted to the AAPL data. 
Summary statistics of the relationship between the log-likelihood errors $\mathrm{err}=\widehat{\ell}(\boldsymbol{\phi}^{\dagger})-\ell(\boldsymbol{\phi}^{\dagger})$ and the variance estimates $\widehat{\sigma}^2(\boldsymbol{\phi}^{\dagger})$ (complementing Figure \ref{fig:scatter_error_sigma2_hat_empirical1}) evaluated at $\boldsymbol{\phi}^{\dagger}$ in \eqref{eq:phi_tune}. 
}
\begin{tabular}{lcccc}
\toprule
Scheme 
& $\mathrm{Corr}\left(|\mathrm{err}|,\widehat{\sigma}^2\right)$ 
& $\mathbb{E}\left(\widehat{\sigma}^2 | \mathrm{err} > 0\right)$ 
& $\mathbb{E}\left(\widehat{\sigma}^2 | \mathrm{err} \le 0\right)$
& $\Pr(\mathrm{err} > 0)$ \\
\midrule
TPD 
& 0.895 
& 32.14 
& 1.40 
& 0.014 \\
Uniform 
& 0.546 
& 0.009 
& 0.026 
& 0.557 \\
\bottomrule
\end{tabular}\label{tab:sigma_dependence}
\end{table}

To summarise, the key difference is that under TPD, large positive errors are both rare and strongly penalised, whereas under uniform sampling with $m=2$, positive errors are common and only weakly attenuated. We emphasise that increasing the subsample size (e.g., $m=30$) removes problematic positive errors, but at a  computational cost that renders a subsampling approach effectively obsolete. The experiment above explains why we were unable to obtain usable pseudo-marginal chains under uniform sampling with $m=2$ in any of the examples in Section \ref{subsec:results_subsampling}. It also sheds light on why the TPD scheme can perform well even under extremely large variance tolerances, for example, in the GARCH(1, 1) with normal errors in Section \ref{subsec:results_subsampling}, with $V=8{,}000{,}000$.

\subsection{Conclusion and future research}

We consider an additional application of our methodology based on a posterior sampling approach (Markov chain Monte Carlo), complementing the optimisation-based approach (variational Bayes) framework in the main paper. Specifically, we consider subsampling MCMC to estimate GARCH models and a threshold extension thereof for modelling conditional volatility in the Apple Inc log-returns, with the aim of model selection via log predictive density scores.

The dataset in this application exhibits substantially higher persistence than that considered in the main paper, which introduces additional challenges for sampling. To address this, we construct the proposal in a reparameterisation that enforces positivity and stationarity constraints.

Our experiments show that the stabilised weighted subsampling approach can deliver substantial computational gains while maintaining accurate posterior inference in challenging recursive likelihood settings. While the variational Bayes approach considered in the main paper also offers substantially faster inference than its full-data counterpart, it provides a less accurate approximation of the posterior, particularly in terms of uncertainty quantification. In contrast, subsampling MCMC yields a more faithful characterisation of posterior uncertainty.

Moreover, we find that uniform sampling in the tuning regime for recursive likelihoods, which encourages a small subsample $m$ for computational reasons, renders the pseudo-marginal algorithm ineffective, as the resulting chains become stuck. We develop an empirical understanding of this behaviour. This stands in contrast to the variational Bayes setting, where uniform subsampling remains viable, albeit often producing poorer approximations and only modest computational gains; see Section \ref{sec:VB_application}.

We conclude by noting that the insights from the experiment in Section \ref{subsec:limitations_uniform_recursive} suggest that variance alone is not sufficient to characterise the behaviour of pseudo-marginal approaches in subsampling MCMC, such as \cite{quiroz2019speeding}. Previous work on tuning pseudo-marginal algorithms \citep{pitt2012some, doucet2015efficient} proposes variance targets in the range $1$-–$3$. These recommendations are derived by linking the variance of the estimator to the objective of interest under specific modelling and algorithmic assumptions. In particular, they rely on settings in which the estimator is approximately normally distributed and its variance is inversely proportional to computational effort. In our setting, however, these assumptions are not satisfied. The expected computational cost does not admit such a direct relationship with estimator variance, and the distribution of the estimator exhibits clear deviations from normality in our tuning regime, including asymmetry and heavy tails. As a result, variance alone does not adequately capture the behaviour of the pseudo-marginal algorithm. Developing pseudo-marginal tuning strategies tailored to the TPD scheme remains an open problem, which we leave for future research.

\newpage

\section{Comparison with stochastic gradient Markov chain Monte Carlo}\label{sec:comparison_SG_MCMC}

\subsection{Background, model, and data}\label{subsec:data_SG_MCMC}
This section compares the subsampling Markov chain Monte Carlo framework developed in \cite{quiroz2019speeding}, extended with the truncated sampling probabilities proposed in this paper, with stochastic gradient Markov chain Monte Carlo (SG-MCMC) methods \citep{chen2014stochastic,nemeth2021stochastic}, recently extended to general state space models by \cite{aicher2025stochastic}. SG-MCMC methods follow the inferential speed-up paradigm considered here, as they use data subsamples to construct stochastic approximations of full-data posterior quantities. However, they differ fundamentally in how this approximation is introduced, replacing exact gradient evaluations with noisy gradient estimates within the Markov chain dynamics and typically omitting a Metropolis--Hastings correction.

The SG-MCMC method in \cite{aicher2025stochastic} is developed for non-linear state space models and provides estimates of the static model parameters and the latent states. In contrast, our recursive likelihood formulation is more restricted, as it focuses on static parameter inference. Specifically, it applies to models for which the likelihood admits a recursive representation, including both models arising from latent-variable formulations with tractable marginal likelihoods (e.g., linear state space models via the Kalman filter) and models defined directly through recursive likelihoods, such as GARCH-type models. However, our method does not provide estimates of the latent states. We therefore restrict the comparison to the estimation of static parameters.

\cite{aicher2025stochastic} consider a latent-variable formulation of GARCH($1,1$), which they term GARCH with noise model. Our specification does not include this noise component; we reconcile the two formulations in Section \ref{subsec:reconcile_models}. In particular, we show that as the noise component tends to zero, their model coincides with ours. We use simulated data generated from the latent-variable GARCH formulation, with GARCH parameters calibrated to those obtained from fitting the model to the Dow Jones Industrial Average data in Section \ref{sec:VB_application}. The unconditional mean is set to zero. The parameter controlling the noise component is set to a sufficiently small level so that the noise-free GARCH full-data posterior is practically indistinguishable between datasets simulated with and without the latent noise component, holding all other sources of randomness fixed. This justifies using the noise-free full-data posterior as a reliable proxy for the corresponding posterior under the GARCH with noise likelihood, which is not computed in \cite{aicher2025stochastic}.

\subsection{Reconciling the model formulations}\label{subsec:reconcile_models}
The model formulation in \cite{aicher2025stochastic} differs from ours in that they obtain a GARCH specification from a latent state space representation, which they refer to as a GARCH with noise model, whereas we formulate it directly as a conditional variance model. We now reconcile the two model formulations to enable a direct comparison of the corresponding sampling algorithms using existing implementations of both approaches\footnote{SG-MCMC code available at \url{https://github.com/aicherc/sgmcmc_ssm_code/tree/master/sgmcmc_ssm}.}.

\cite{aicher2025stochastic} formulate a GARCH($1,1$) model with normal errors as a latent state space model. To connect their specification to ours, we first state our formulation as
\begin{align}\label{eq:GARCH11norm}
    y_t | \mathcal{F}_{t-1} & \sim \mathcal{N}(\mu,\sigma^2_t), \nonumber \\
    \sigma^2_t & = \omega + \alpha_1(y_{t-1}-\mu)^2 + \beta_1\sigma^2_{t-1}.
\end{align}
\cite{aicher2025stochastic} instead introduce a latent state $x_t$ and specify their model as
\begin{align}\label{eq:GARCH11norm_Aicher}
    y_t | x_t & \sim \mathcal{N}(x_t, \tau^2), \nonumber \\
    x_t | x_{t-1} &\sim \mathcal{N}(0,\sigma_t^2)  \nonumber \\
    \sigma^2_t & = \omega + \alpha_1x_{t-1}^2 + \beta_1\sigma^2_{t-1}. 
\end{align} 
The latent state $x_t$ corresponds to the centred return $y_t-\mu$ in our formulation. Integrating out $x_t$ in \eqref{eq:GARCH11norm_Aicher} yields the marginal observation model
\begin{align}\label{eq:GARCH11norm_Aicher_collapsed}
    y_t | \mathcal{F}_{t-1} & \sim \mathcal{N}(0, \sigma_t^2 + \tau^2), \nonumber \\
    \sigma^2_t & = \omega + \alpha_1x_{t-1}^2 + \beta_1\sigma^2_{t-1}, 
\end{align}
where $\mathcal{F}_{t-1}$ denotes the information set under the state space formulation. When $\tau^2\rightarrow 0$, the state space formulation reduces to the standard GARCH specification in \eqref{eq:GARCH11norm}, and the information sets coincide.

The connection with our formulation is immediate. Both models share the same conditional variance recursion, while the state space formulation introduces an additive observation variance component $\tau^2$. In the limit $\tau^2\rightarrow0$, the marginal observation model in \eqref{eq:GARCH11norm_Aicher_collapsed} reduces to $$y_t|\mathcal{F}_{t-1}\sim \mathcal{N}(0,\sigma^2_t),$$
which coincides with our specification in \eqref{eq:GARCH11norm} when $\mu=0$. In practical applications, when the posterior samples of $\tau^2$ are small relative to the unconditional variance of the process $\mathbb{E}(\sigma^2_t)$, the marginalised model in \eqref{eq:GARCH11norm_Aicher_collapsed} closely approximates the standard conditional GARCH model in \eqref{eq:GARCH11norm}, and posterior inference on $\omega,\alpha_1,\beta_1$ is therefore approximately comparable across the two model specifications.

To obtain a full-data posterior benchmark for comparison with \citet{aicher2025stochastic}, we exploit that the noise-free GARCH posterior is insensitive to small levels of latent noise in the data-generating process. Specifically, we simulate datasets from \eqref{eq:GARCH11norm_Aicher} for varying values of the noise component $\tau^2$, while holding all other sources of randomness fixed, and perform inference using the full-data posterior based on the (noise free) likelihood in \eqref{eq:GARCH11norm}. Figure \ref{fig:posterior_3tau2} shows that, for small noise levels, the posterior distributions under the noise-free GARCH specification are practically indistinguishable, while larger noise levels lead to visible deviations. This justifies using the noise-free full-data posterior as a ground truth benchmark for the comparison.

\begin{figure}[h]
    \centering
    \includegraphics[width=0.98\textwidth]{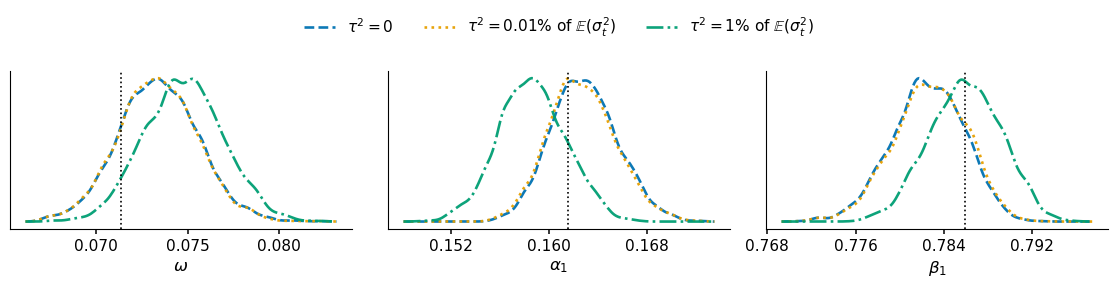}
    \caption{Marginal posterior distributions for the GARCH($1,1$) parameters obtained from the full-data posterior based on the likelihood in \eqref{eq:GARCH11norm}, using three simulated datasets. The datasets differ only in the noise component $\tau^2$, which is specified as a percentage of the unconditional variance of the observations $y_t$, i.e., $\mathbb{E}(\sigma^2_t)$. The vertical lines indicate the true parameter values.}
    \label{fig:posterior_3tau2}
\end{figure}

Finally, although the priors specified in \cite{aicher2025stochastic} are different than ours, the large sample size in the comparison ($T=100{,}000$) ensures that posterior inference is driven primarly by the likelihood, rendering the prior specification practically negligible.

\subsection{Specifications and settings}\label{subsec:specifications_SG_MCMC}
We use the dataset corresponding to the small noise level illustrated in Figure \ref{fig:posterior_3tau2}, i.e.,\ $\tau^2 = 0.01\%$ of the unconditional variance $\mathbb{E}(\sigma^2_t)=\omega/(1-\alpha_1-\beta_1)$. We use the same prior specifications as in Sections \ref{sec:VB_application} and \ref{sec:MCMC_application}, however, with an identity transformation for all parameters.

We run subsampling MCMC with truncated sampling probabilities determined by the tuning procedure in Section \ref{subsec:balancing_variance_and_computational_cost} with $R_{\max}=100$. We use a standard random-walk proposal preconditioned by a scaled covariance matrix from a Laplace approximation. For stochastic gradient MCMC, we use the implementation and default tuning choices provided by \citet{aicher2025stochastic}, fixing the subsequence length to $40$, buffer length to $10$, and number of particles to $1{,}000$. Their implementation uses the particle filtering methodology in \cite{poyiadjis2011particle}. Following the experiments in \cite{aicher2025stochastic}, we vary only the stochastic gradient Langevin dynamics step size $h$ over the grid $\{1, 0.1, 0.01, 0.001\}$. To assess the accuracy of the resulting posterior distributions, we also run a full-data Markov chain Monte Carlo algorithm using the same settings as in Section \ref{subsec:spec_settings_App1}, including an adaptive proposal. All chains are initialised at the MAP estimate and run for $10{,}000$ samples after a burn-in period of $2{,}000$ iterations.
\subsection{Results}
Figure \ref{fig:kde_SG_MCMC_subsampling_full_data} shows kernel density estimates from subsampling MCMC and SG-MCMC for several values of $h$, together with full-data MCMC estimates as a ground truth benchmark. Subsampling MCMC provides accurate posterior approximations for all parameters. In contrast, SG-MCMC does not accurately capture the posterior distribution in this setting. The irregular behaviour of the kernel density estimates appears to be driven by poor mixing of the underlying Markov chains, as illustrated in Figure \ref{fig:traceplots_SG_MCMC_subsampling_full_data} for $\alpha_1$ (with similar behaviour observed for $\omega$ and $\beta_1$, not shown), which do not efficiently explore the parameter space.

\begin{figure}[h]
    \centering
    \includegraphics[width=0.98\textwidth]{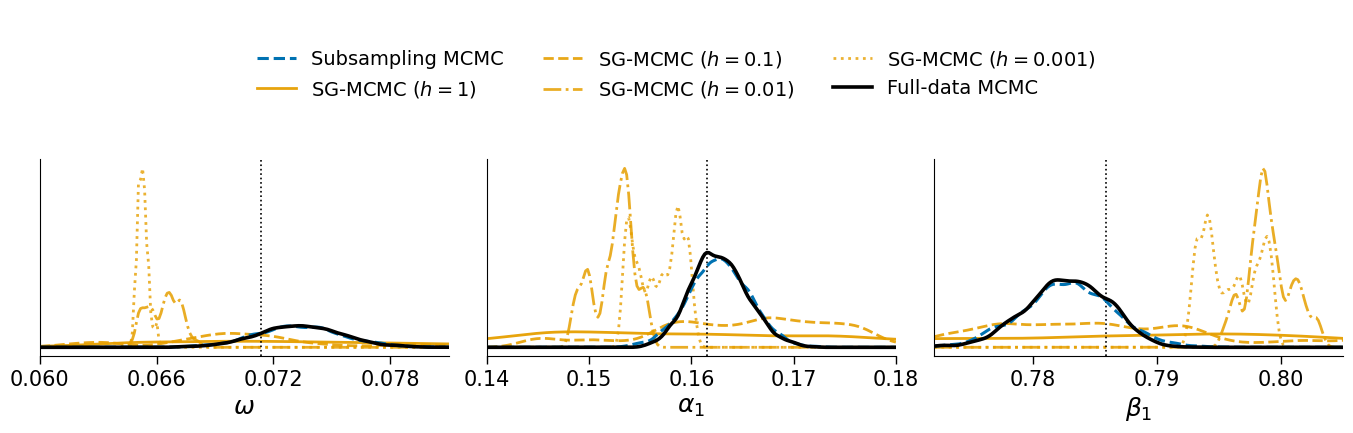}
    \caption{Posterior marginal distributions for the GARCH($1,1$) model fitted to simulated data calibrated to the Dow Jones Industrial Average data. The subsampling Markov chain Monte Carlo (MCMC) method with truncated decaying sampling probabilities is shown for $R_{\max}=100$, corresponding to a variance tolerance $V\approx 0.20$. Stochastic gradient MCMC (SG-MCMC) is shown for several values of the step size $h$. The solid black curves represent the full-data MCMC posterior for reference, and the vertical lines indicate the true parameter values.}
    \label{fig:kde_SG_MCMC_subsampling_full_data}
\end{figure}

\begin{figure}[h]
    \centering
    \includegraphics[width=0.95\textwidth]{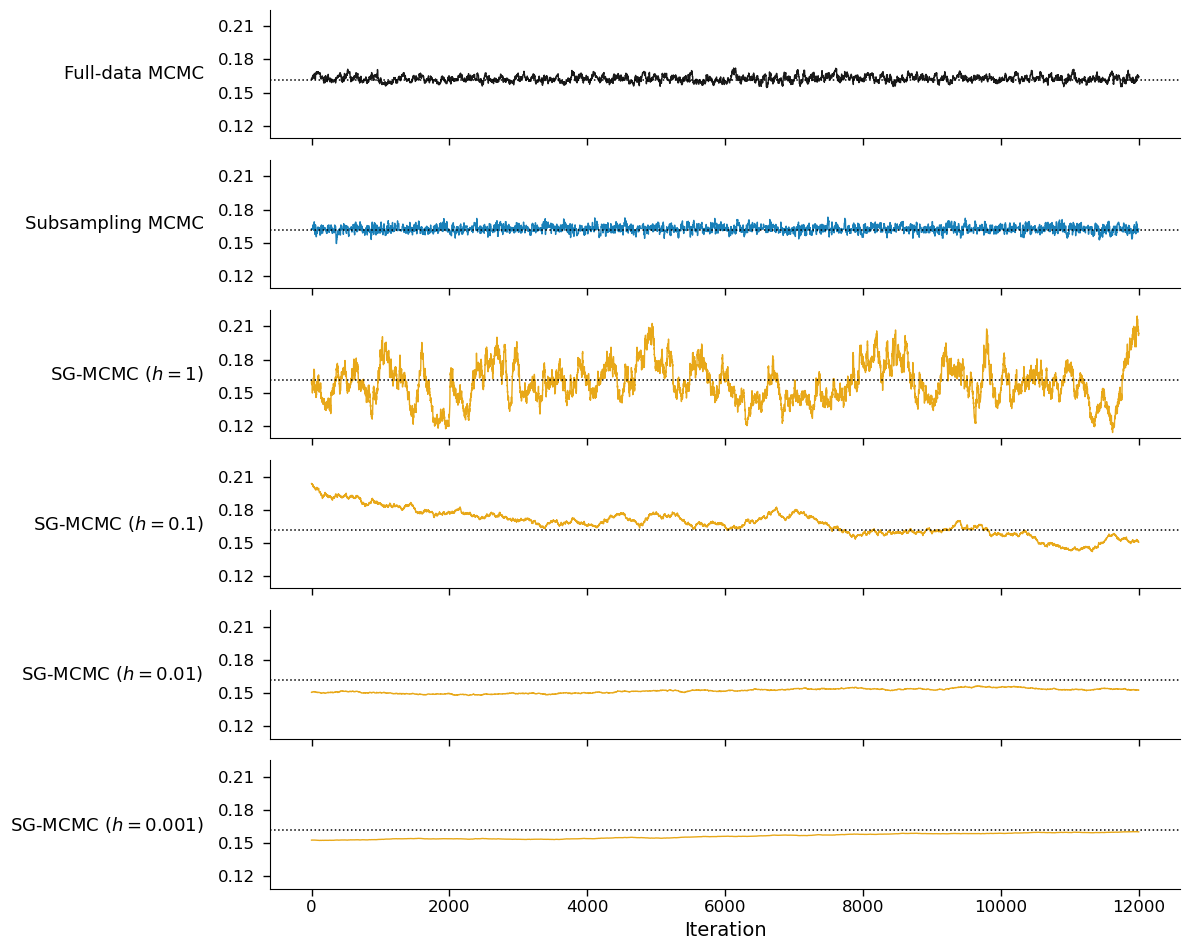}
    \caption{Traceplots of Markov chains for the GARCH($1,1$) model fitted to simulated data calibrated to the Dow Jones Industrial Average data. Results are shown for the parameter $\alpha_1$, and the horisontal lines indicate its true value. The $y$-axis limits are the same across panels.}
    \label{fig:traceplots_SG_MCMC_subsampling_full_data}
\end{figure}

Finally, we follow \cite{aicher2025stochastic} and assess sample quality using the kernel Stein discrepancy (KSD; \citealp{gorham2017measuring}). The discrepancy is estimated using the same implementation, employing the inverse multiquadratic kernel. We use $500$ samples obtained by thinning the post burn-in output (burn-in of $2{,}000$ iterations, thinning factor $20$). For all methods, the score function is estimated following \cite{aicher2025stochastic}, using a particle-based approximation with $10{,}000$ particles, subsequence length $1{,}000$, and buffer length $10$. This score is based on the latent-variable formulation and therefore does not align exactly with our model formulation. As such, the evaluation is, if anything, favourable to the SG-MCMC approach. Table \ref{tab:SG_MCMC_vs_subsampling} shows the estimated KSD values for each method, together with the running time to generate $12{,}000$ iterations. As expected, full-data MCMC provides the highest sample quality, but at a substantially higher computational cost. Subsampling MCMC achieves KSD values very close to those of full-data MCMC, while reducing computational cost by roughly a factor of 80. The SG-MCMC methods are substantially faster still, requiring only around one quarter of the computational time of subsampling MCMC, but at the expense of noticeably larger KSD values, indicating a poorer approximation of the target posterior (as also illustrated in Figure \ref{fig:kde_SG_MCMC_subsampling_full_data}).

\begin{table}[h]
\centering
\caption{Comparison of sample quality measured using the kernel Stein discrepancy (KSD) on the $\log_{10}$ scale. The log$_{10}$ KSD is reported as mean (with standard deviation in parentheses) and range over $n_{\mathrm{rep}} = 6$ independent evaluations, using common random seeds across methods. The table also shows the computational cost of running $12{,}000$ iterations of each algorithm, measured as wall-clock time (CPU, in seconds). Methods are ordered by increasing $\log_{10}$ KSD (lower values indicate higher sample quality).}
\begin{tabular}{lccc}
\toprule
Method & log$_{10}$ KSD & log$_{10}$ range & CPU \\
\midrule
Full-data MCMC & \makecell{1.055 \\ (0.280)} & [0.534, 1.398] & 69750.5 \\
Subsampling MCMC & \makecell{1.090 \\ (0.227)} & [0.654, 1.359] & 865.1 \\
SG-MCMC $h = $0.1 & \makecell{1.513 \\ (0.137)} & [1.340, 1.722] & 232.5 \\
SG-MCMC $h = $1 & \makecell{1.619 \\ (0.103)} & [1.521, 1.784] & 235.7 \\
SG-MCMC $h = $0.001 & \makecell{1.738 \\ (0.081)} & [1.616, 1.841] & 233.2 \\
SG-MCMC $h = $0.01 & \makecell{1.950 \\ (0.054)} & [1.882, 2.024] & 232.9 \\
\bottomrule
\end{tabular}\label{tab:SG_MCMC_vs_subsampling}

\end{table}

Overall, these results indicate that subsampling MCMC provides an accurate and robust approximation to the full-data posterior, while SG-MCMC, although faster, is sensitive to tuning and does not reliably approximate the target posterior in this setting.

\clearpage

%\newpage

\section{Comparison with divide-and-conquer Markov chain Monte Carlo}\label{sec:comparison_divide-and-conquer_MCMC}

\subsection{Background, model, and data}
This section compares the subsampling Markov chain Monte Carlo framework developed in \cite{quiroz2019speeding}, extended with the truncated sampling probabilities proposed in this paper, with divide-and-conquer Markov chain Monte Carlo methods \cite{scott2017consensus, scott2022bayes}, recently extended to time series data by \cite{ou2025scalable} through the divide-and-conquer Bayesian time series (DC-BATS) algorithm. Divide-and-conquer methods follow a substantially different inferential speed-up paradigm. Rather than using data subsamples to construct stochastic approximations of full-data posterior quantities, the data are partitioned across shards, where a tempered posterior is fitted independently  on each shard, and the resulting draws are subsequently merged to approximate samples from the full-data posterior. Under suitable regularity conditions, \cite{ou2025scalable} establish that the resulting approximation is asymptotically consistent as the shard size $b = T/K$ increases, with the approximation error decreasing at a rate of order $\mathcal{O}(1/b)$, where $T$ denotes the total number of observations and $K$ the number of shards.

To ensure comparability with the stochastic-gradient Markov chain Monte Carlo example in Section \ref{sec:comparison_SG_MCMC}, we use the same model and dataset, that is, a GARCH($1,1$) model with normal errors and a simulated dataset from a GARCH($1,1$) with noise model calibrated to the Dow Jones Industrial Average; see Section \ref{subsec:data_SG_MCMC} for details.

\subsection{Specifications and settings}
For full-data MCMC and subsampling MCMC, we use the same settings as in the comparison with the stochastic-gradient Markov chain Monte Carlo algorithm in Section \ref{subsec:specifications_SG_MCMC}. 

For the DC-BATS implementation we partition the data, with $T = 100{,}000$, into $K$ consecutive shards of equal size and fit a powered posterior on each shard using Hamiltonian Monte Carlo as implemented in \texttt{CmdStanR}, the \texttt{R} interface to \texttt{CmdStan} \citep{cmdstanr}. The likelihood within each shard is raised to the power $K$, following \cite{ou2025scalable}\footnote{DC-BATS code available at \url{https://github.com/astfalckl/dcbats/}.}. DC-BATS uses the same prior as in subsampling MCMC and full-data MCMC. We use $4$ chains with $5{,}000$ warm-up iterations and $5{,}000$ post-warm-up draws per chain, initialised at the MAP estimate obtained from the full-data posterior. Since the data are simulated, each shard is initialised using the true initial values for the recursion (pre-sample) to ensure consistency across shards. The shard posterior draws are subsequently aggregated using a Wasserstein barycentre constructed from marginal quantile functions, from which $10{,}000$ samples are generated for posterior inference to match our post burn-in $10{,}000$ samples.

We consider $K = 10, 20, 80$. The first two settings align with the experimental setup in \cite{ou2025scalable}. The final setting is included as a stress scenario, chosen to approximately match, under idealised conditions without overhead costs, the computational speed-up of DC-BATS relative to full-data MCMC ($K=1$) with the speed-up achieved by subsampling MCMC relative to full-data MCMC under our implementation. Table \ref{tab:SG_MCMC_vs_subsampling} shows that subsampling MCMC is roughly $80$ times faster than full-data MCMC.

\subsection{Results}
Figure \ref{fig:kde_DC_BATS_subsampling_full_data} shows kernel density estimates from subsampling MCMC and DC-BATS for several values of $K$, together with full-data MCMC estimates as a ground truth benchmark. Both methods are substantially more accurate than SG-MCMC; recall Figure \ref{fig:kde_SG_MCMC_subsampling_full_data}. Subsampling MCMC is virtually indistinguishable from the full-data posterior across all parameters. DC-BATS yields a very accurate posterior when $K=10$ and, as expected from the theoretical results in \cite{ou2025scalable}, the bias increases as $K$ increases. The deterioration is most noticeable for the persistence parameter $\beta_1$, which reflects the long memory in the volatility process and is therefore particularly sensitive to the partitioning of the time series. However, the bias remains small in absolute value (see the scale of the $x$-axis in Figure \ref{fig:kde_DC_BATS_subsampling_full_data}).

\begin{figure}[H]
    \centering
    \includegraphics[width=0.80\textwidth]{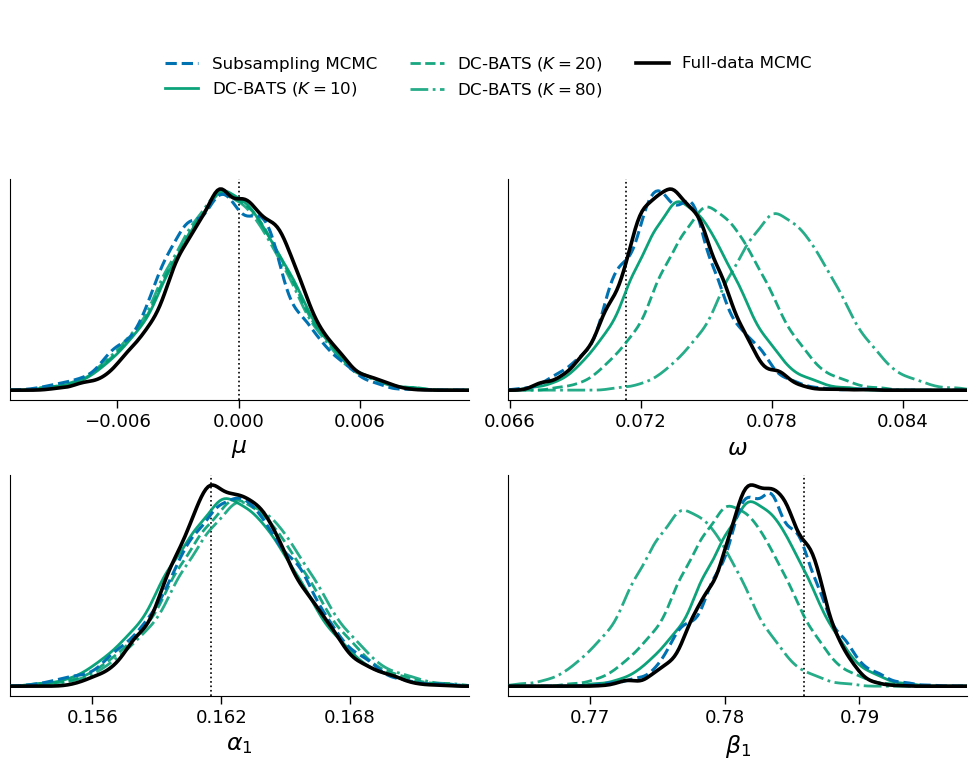}
    \caption{Posterior marginal distributions for the GARCH($1,1$) model fitted to simulated data calibrated to the Dow Jones Industrial Average data. The subsampling Markov chain Monte Carlo (MCMC) method with truncated decaying sampling probabilities is shown for $R_{\max}=100$, corresponding to a variance tolerance $V\approx 0.20$. Divide-and-conquer Bayesian time series (DC-BATS) MCMC is shown for several values of the number of shards $K$. The solid black curves represent the full-data MCMC posterior for reference, and the vertical lines indicate the true parameter values.}
    \label{fig:kde_DC_BATS_subsampling_full_data}
\end{figure}

Finally, we comment on the potential computational speed-up of each algorithm, together with the associated accuracy, relative to its corresponding full-data implementation. The DC-BATS implementation exhibits approximately linear computational savings in $K$ under idealised conditions. In particular, posterior approximations for $K=10$ and $K=20$, the regimes considered in \cite{ou2025scalable}, can be obtained approximately $10$ and $20$ times faster than the corresponding full-data implementation ($K=1$). Subsampling MCMC achieves a reduction in CPU time of roughly $80$ times relative to full-data MCMC under our implementation, while retaining an accurate approximation. Achieving a comparable speed-up with DC-BATS would require a value of $K=80$, which may result in a deterioration in accuracy, as observed above.

\putbib
\end{bibunit}
\end{document}